  \let\oldparagraph\paragraph
  \renewcommand{\paragraph}{
    \@ifstar
      \xxxParagraphStar
      \xxxParagraphNoStar
  }
  \newcommand{\xxxParagraphStar}[1]{\oldparagraph*{#1}\mbox{}}
  \newcommand{\xxxParagraphNoStar}[1]{\oldparagraph{#1}\mbox{}}
  \let\oldsubparagraph\subparagraph
  \renewcommand{\subparagraph}{
    \@ifstar
      \xxxSubParagraphStar
      \xxxSubParagraphNoStar
  }
  \newcommand{\xxxSubParagraphStar}[1]{\oldsubparagraph*{#1}\mbox{}}
  \newcommand{\xxxSubParagraphNoStar}[1]{\oldsubparagraph{#1}\mbox{}}
\patchcmd\longtable{\par}{\if@noskipsec\mbox{}\fi\par}{}{}
\def\maxwidth{\ifdim\Gin@nat@width>\linewidth\linewidth\else\Gin@nat@width\fi}
\def\maxheight{\ifdim\Gin@nat@height>\textheight\textheight\else\Gin@nat@height\fi}
\def\fps@figure{htbp}
  \renewcommand*\contentsname{Table of contents}
  \newcommand\contentsname{Table of contents}
  \renewcommand*\listfigurename{List of Figures}
  \newcommand\listfigurename{List of Figures}
  \renewcommand*\listtablename{List of Tables}
  \newcommand\listtablename{List of Tables}
  \renewcommand*\figurename{Figure}
  \newcommand\figurename{Figure}
  \renewcommand*\tablename{Table}
  \newcommand\tablename{Table}
\newcommand{\anon}{1}
\definecolor{darkgreen}{rgb}{0,0.5,0}
\definecolor{bestgreen}{HTML}{00A651}
\definecolor{secondgreen}{HTML}{A8E6B0}
\newcommand{\best}[1]{\cellcolor{bestgreen}{\textbf{#1}}}
\newcommand{\second}[1]{\cellcolor{secondgreen}{#1}}
\newcommand{\kibitz}[2]{\ifnum\Comments=1\textcolor{#1}{#2}\fi}
\newcommand{\kate}[1]{\kibitz{red}{[KS: #1]}}
\newcommand{\norm}[1]{\left\lVert#1\right\rVert}
\newtheorem{thm}{Theorem}
\newtheorem{prop}{Proposition}
\newtheorem{remark}{Remark}
\newtheorem{lem}{Lemma}
\numberwithin{equation}{section}
\begin{document}

\def\spacingset#1{\renewcommand{\baselinestretch}%
{#1}\small\normalsize} \spacingset{1}

% Tighten line spacing inside formal statement blocks; body prose stays at 1.8.
% Each \spacingset here is local to the environment group and restores on \end.
\AtBeginEnvironment{thm}{\spacingset{1.2}}
\AtBeginEnvironment{prop}{\spacingset{1.2}}
\AtBeginEnvironment{lem}{\spacingset{1.2}}
\AtBeginEnvironment{remark}{\spacingset{1.2}}

\if1\anon
{
  \title{\bf A Nonlinear Target-Factor Model with Attention Mechanism for Mixed-Frequency Data}
  \author{
    Alessio Brini\\
    Pratt School of Engineering, Duke University\\
    and\\
    Ekaterina Seregina\\
    Department of Economics, Colby College}
  \maketitle
} \fi

\if0\anon
{
  \bigskip\bigskip\bigskip
  \begin{center}
    {\LARGE\bf A Nonlinear Target-Factor Model with Attention Mechanism for Mixed-Frequency Data}
  \end{center}
  \medskip
} \fi

\begin{abstract}
We propose the Mixed-Panels-Transformer Encoder (MPTE), a framework for estimating factor models in panels with mixed frequencies and nonlinear signals. Classical factor models rely on linear signal extraction and homogeneous sampling frequencies, limiting their use when variables arrive at different frequencies. MPTE instead uses Transformer-style attention to construct context-aware signals, replacing fixed linear combinations with adaptive reweighting. We extend principal component analysis to accommodate general temporal and cross-sectional attention operators, so the model learns to aggregate information across frequencies without manual alignment. Under linear activations, we establish consistency and asymptotic normality of factor and loading estimators, show that the framework nests classical factor models as a special case, and obtain efficiency gains through transfer learning across auxiliary panels. A Transformer architecture handles the nonlinear case, which we assess through simulations and an empirical application. In simulations, MPTE outperforms linear benchmarks under nonlinear designs. On 13 quarterly U.S. macroeconomic targets drawn from 48 monthly and quarterly FRED series, it remains competitive with established benchmarks. By averaging learned attention across variables and time, we recover target-specific variable importance and lag relevance, and ablations quantify the contribution of each model component.
\end{abstract}

\noindent%

{\it Keywords:} Macroeconomic forecasting, Representation learning, Transfer learning, Transformers
\vfill

\newpage
\spacingset{1.55}

\section{Introduction}

Panel datasets with large cross-sectional and temporal dimensions are common in macroeconomics, finance, and other social sciences, and are often well characterized by an approximate factor structure in which a few latent factors capture most of the comovement across units. Traditionally, these latent factors are estimated directly from the panel of primary interest, the \textit{target data}. In modern empirical settings, however, additional or \textit{auxiliary panels} often share relevant information or common factors with the target panel. Integrating information across panels can improve factor estimation for the target data and benefit downstream tasks such as forecasting. This idea is closely related to \textit{transfer learning} \citep{weiss2016survey} in a factor-analytic sense: the panels share a common latent factor structure, and efficiency gains come from borrowing strength across it.

A prominent example, which we examine empirically, is mixed-frequency data: many macroeconomic variables, such as Gross Domestic Product (GDP), are observed at quarterly or lower frequencies, whereas others are available at higher frequencies. When the factors driving the higher-frequency indicators are correlated with macroeconomic dynamics, the high-frequency data can be exploited to infer higher-frequency factors and forecast the target series \citep{forni2001generalized, ghysels2007midas}.

Classical factor models and principal component analysis (PCA), however, are typically designed for homogeneous sampling frequencies and linear signal extraction. In mixed-frequency settings, high-frequency information is therefore collapsed into coarse aggregates, such as quarterly averages, before estimation, discarding heterogeneity and yielding factors that can miss short-term dynamics and asynchronous adjustments across units.

In this paper, we propose the Mixed-Panels-Transformer Encoder (MPTE), which addresses these limitations. First, we use attention mechanisms to enable context-aware signal construction through input-dependent weighting. Second, we embed these attention mechanisms within a Transformer encoder architecture, allowing the model to capture nonlinear signals through stacked attention and feedforward layers\footnote{By ``feedforward layer'', we refer to the position-wise fully connected subnetwork commonly used in Transformer models, also known as a multilayer perceptron (MLP).}.
Third, for the case of linear activation functions, we develop an inferential theory and establish consistency and asymptotic normality for factor and loading estimators based on attended inputs, together with efficiency gains relative to target-only estimation; simulations with known ground truth validate these results and map the boundary of the assumptions under estimated attention operators.

%%%%%%
Across econometrics and machine learning, many methods apply selective weighting conceptually analogous to attention, and MPTE contributes to three such literatures. The first is large-dimensional factor models. Cross-sectional schemes such as Projected PCA \citep{fan2016projected}, kernel-weighted PCA \citep{connor2012efficient}, and autoencoder asset-pricing models \citep{kelly2021autoencoder} reweight units by similarity, while Target PCA \citep{pelger2024target_pca}, which applies PCA to a scalar-weighted average of the target and auxiliary panels' second-moment matrices, is the linear special case we generalize. These block-based methods treat units and periods symmetrically, whereas MPTE reweights them by relevance through learned cross-sectional and temporal operators, so that auxiliary information delivers first-order efficiency gains. MPTE also exploits the link between linear autoencoders and PCA \citep[e.g.,][]{baldi1989neural} to build a theory under linear activations while retaining nonlinear signals in implementation.

The second literature is nonparametric estimation and attention. Temporal schemes such as Mixed Data Sampling (MIDAS) \citep{ghysels2007midas}, kernel smoothing \citep{nadaraya1964estimating}, and regime-switching dynamic factor models \citep{stock2016dynamic} emphasize the most informative periods, while kernel and tensor factor models weight both dimensions jointly \citep{wang2019factor, chen2022factor}. Recent work interprets attention itself as a normalized kernel smoother \citep{tsai-etal-2019-transformer}, under which MPTE mitigates the curse of dimensionality faced by fully nonparametric estimators, and \cite{coulombe2025ordinary} shows that ordinary least squares is a special case of a query-key-value attention structure. To our knowledge, this is the first work to study the theoretical properties of PCA under ``attended'' inputs.

The third literature is mixed-frequency data, where MPTE is a nonlinear, context-aware generalization of MIDAS that embeds the high- and low-frequency series in a unified sequence and learns cross-frequency aggregation through attention. Unlike recent nonlinear MIDAS extensions \citep{lin2024ijf, dai2024freqformer}, its encoder-only architecture is designed explicitly for mixed-frequency panels and avoids MIDAS-style preprocessing and decoder-based generative structures.

MPTE thus shifts the focus from \textit{averaging} information, as in classical factor models, to \textit{attending} to it. We provide an extensive ablation of each component (attention, nonlinearity, high-frequency information, and temporal encoding) in both simulations and applications. To our knowledge, we are also the first to disentangle and interpret the cross-sectional and temporal aggregation patterns that a Transformer learns in macroeconomic forecasting: averaging attention weights across variables and across time recovers target-specific measures of variable importance and lag relevance. Beyond improving forecast accuracy, attention-based models provide a transparent framework for assessing which variables and horizons matter most, with direct implications for policy monitoring.

The rest of the paper is organized as follows. Section~\ref{Sec:methodology} introduces the MPTE framework and Section~\ref{Sec:inferential_theory} develops its inferential theory under linear activations. Section~\ref{Sec:simulations} extends the framework to nonlinear signals and mixed frequencies and presents the simulation evidence, first validating the inferential theory in a controlled linear design and then evaluating forecasting performance under nonlinearity. Section~\ref{Sec:empirical} presents the empirical evidence, and Section~\ref{Sec:conclusions} concludes.

\paragraph*{Notation.} For a vector $v$, $\|v\|$ denotes its Euclidean norm. For a matrix $M$, $\|M\|_{\mathrm{op}}$ and $\|M\|_{F}$ denote its spectral and Frobenius norms, $\operatorname{tr}(M)$ its trace, and $\lambda_{\min}(M)$ and $\lambda_{\max}(M)$ its smallest and largest eigenvalues; $M\succ0$ means $M$ is symmetric positive definite. For positive sequences, $a_n\asymp b_n$ means $a_n/b_n$ is bounded away from $0$ and $\infty$. We write $I_n$ for the $n\times n$ identity matrix, $\operatorname{diag}(\cdot)$ for a block-diagonal matrix, and $C$ for a generic positive constant whose value may change across expressions.

%%%%%%%%%%%%%%%%%%%%%%%%%%%%%%%%%%%%%% hidden part
\begin{hidden}
\textbf{Things to outline}:
\begin{itemize}
    \item our framework vs target-PCA
    \item the role of attention mechanism
\end{itemize}

\textbf{Our framework vs target-PCA}: Learning low-dimensional representation of high-dimensional panels has been widely used in social sciences for forecasting and inference. The panel dataset of interest is referred to as target data. Factor models are a popular and successful tool for learning a low-dimensional representation. Specifically, principal component analysis (PCA) is applied to the covariance matrix of the target data to learn factors and factor loadings. In addition to the information extracted directly from the target data, there often exist auxiliary datasets that contain relevant information and can contribute to learning signals in the target panel. Mixed-frequency data is a prominent example of the interplay between target and auxiliary datasets. For example, macroeconomic and financial panels have variables that are only available at a lower or higher frequency (e.g., industrial production is available at a lower frequency than stock returns). \kate{saving more discussion on mixed-freq literature for a bit later}

\cite{pelger2024target_pca} propose target-PCA that augments standard PCA and jointly estimates signals of target and auxiliary panels. They apply PCA to the weighted average of the second moment matrices of both panels, where the optimal weighting coefficient is selected to ensure consistent and efficient estimation of target factors. \kate{here goes some reasoning why linearity assumption in target-PCA is restrictive, esp for mixed-freq applications}

This paper generalizes target-PCA using models from the autoencoder family. Autoencoders can be thought of as a nonlinear neural network counterpart to PCA. Both methods attempt to find a lower-dimensional embedding of the input data: PCA constructs a linear embedding, whereas autoencoders allow for nonlinear ones. We show that target-PCA with $K$ latent factors is a special case of our method with one hidden layer of $K$ neurons.\\

\kate{Not sure about this attention part yet, need to agree if we want to use both cross-sectional and temporal attention} \textbf{Attention mechanism:} We enhance the model with two additional features: cross-sectional attention and temporal attention. The former allows the model to learn the context by using contemporaneous information about all series, while the latter helps model learn temporal context by using information from all past embeddings.\\

\kate{In the setup below I've focused on the signal+noise framework rather than mixed frequency (MF), bc I am thinking this would help (1) establish connection with the framework studied in \cite{pelger2024target_pca}, and (2) will differentiate our model from \cite{lin2024ijf} since signal+noise structure is more general than MF only} 

%Literature Review
Papers to be included:
\begin{itemize}
    \item  \cite{pelger2024target_pca}: like discussed later in the methodology section, our paper is an extension of theirs allowing for more general choices of temporal and cross-sectional attention matrices. They study consistency, asymptotic distribution, and estimation of the common component
    \item \cite{catano2019wavelet}: use frequency domain and time-varying estimation ideas which align with our attention matrix $B$. Even though they do not use $B$ explicitly, their estimation method applied time-domain filters which are conceptually similar to row-weighting with $B$.
    \item \cite{fan2016projected} (and some other of Fan's papers on factor models with large cross sections): while they don't include $A$ or $B$, their framework allows for heteroscedastic and correlated idiosyncratic terms. They also discuss weighted estimation (GLS-type weighting) which, to some extent, can be interpreted as related to $A$. Also, check some work they did on supervised factor selection -- this motivates choosing $A$ non-trivially
\item \cite{chudik2011infinite} apply weighted lags and spatial filters, which is similar to applying matrices $A$ and $B$. 
\item \cite{zhang2024attention}: introduce an Attention-PCA approach that integrates attention mechanisms with PCA to enhance forecasting performance. The approach enables greater focus on predictor variables with superior forecasting capabilities, demonstrating the potential benefits of combining attention mechanisms with dimensionality reduction techniques like PCA.
\end{itemize}
\end{hidden}
%%%%%%%%%%%%%%%%%%%%%%%%%%%%%%%%%%%%%% end hidden part

%%%%%%%%%%%%%%%%%%%%%%%%%%%%%%%%%%%%%%%%%%%%%
\section{Methodology}\label{Sec:methodology}
This section presents MPTE and its associated estimator, which integrates target and auxiliary datasets through cross-sectional and temporal attention operators. We discuss how Target PCA arises as a special case of our framework.

%%%%%%%%%%%%%%%%%%%%%%%%%%%%%%%%%%%%%%%%%%%%%
%%%%%%%%%%%%%%%%%%%%%%%%%%%%%%%%%%%%%%%%%%%%%
\subsection{Model setup} 
The target panel dataset $Y \in \mathbb{R}^{T_y\times N_y}$, with $T_y$ time periods and $N_y$ cross-sectional units, has the following signal-plus-noise structure:
\begin{equation} \label{target_mat}
    \underbrace{Y}_{T_y\times N_y} = \Psi_y + e_y,
\end{equation}
where, for each unit $j$ at time $t$, the entry $Y_{tj}$ decomposes into a common (signal) component $(\Psi_{y})_{tj}$ and an idiosyncratic (noise) component $(e_y)_{tj}$. In addition, an auxiliary panel dataset $X \in \mathbb{R}^{T_x\times N_x}$ is available, with $T_x$ time periods and $N_x$ cross-sectional units, and the analogous structure
\begin{equation} \label{aux_mat}
    \underbrace{X}_{T_x\times N_x} = \Psi_x + e_x.
\end{equation}

\cite{pelger2024target_pca} use a linear factor model for signal components of $Y$ and $X$, with the number of latent factors denoted $k_y$ and $k_x$, respectively. Specifically, they assume $\Psi_y = F_y(\Lambda_{y}^{Y})^{\top}$ and $\Psi_x = F_x(\Lambda_{x}^{X})^{\top}$, where $F_y$ and $F_x$ are $T_y \times k_y$ and $T_x \times k_x$ matrices of latent factors, and $\Lambda_{y}^Y$ and $\Lambda_{x}^X$ are $N_y \times k_y$ and $N_x \times k_x$ matrices of factor loadings. Under linear signal structures and assuming $T_x=T_y=T$, equations \eqref{target_mat} and \eqref{aux_mat} become:
\begin{equation} \label{linear_signals}
\begin{aligned}
&Y = \underbrace{F_y}_{T \times k_y} \underbrace{(\Lambda_{y}^{Y})^{\top}}_{k_y \times N_y} + e_y\\
&X = \underbrace{F_x}_{T \times k_x} \underbrace{(\Lambda_{x}^{X})^{\top}}_{k_x \times N_x} + e_x
\end{aligned}
\end{equation}

They estimate the union of the two panels' factors, $F\in\mathbb{R}^{T\times k}$ with $\max(k_x,k_y)\le k\le k_x+k_y$ (and $k<k_x+k_y$ when factors are shared), by applying PCA to a $\gamma$-weighted average of the two panels' second-moment matrices:
\begin{equation}\label{pelger_tpca}
    \min_{F,\Lambda_x, \Lambda_y} \underbrace{\sum_{i=1}^{N_x} \sum_{t=1}^{T_x} \Big(X_{ti}-(F_t)^\top(\Lambda_x)_{i}^\top \Big)^2}_{\text{auxiliary error}} + \gamma \cdot \underbrace{\sum_{j=1}^{N_y} \sum_{t=1}^{T_y} \Big(Y_{tj}-(F_t)^\top(\Lambda_y)_{j}^\top  \Big)^2}_{\text{target error}},
\end{equation}
where $\Lambda_x\in\mathbb R^{N_x\times k}$ and $\Lambda_y\in\mathbb R^{N_y\times k}$ are the loadings on the union factors, with zero columns for factors that do not enter a panel in \eqref{linear_signals}, and the weight $\gamma$ is chosen for consistent and efficient estimation of the target factors. Equivalently, with $Z^{\gamma}\equiv[X\ \sqrt{\gamma}\,Y]$, \eqref{pelger_tpca} applies PCA to $(Z^{\gamma})^\top Z^{\gamma}$.

 We augment the framework in \eqref{target_mat}, \eqref{aux_mat}, and \eqref{pelger_tpca} in three respects: (i) we use cross-sectional and temporal attention mechanisms that reweight the inputs according to their relevance; (ii) we allow nonlinear signals $\Psi_y$ and $\Psi_x$; and (iii) we relax the assumption $T_x=T_y=T$ imposed in \eqref{pelger_tpca}.
 
 % In the following two sub-sections, we add attention mechanism and nonlinear signals which are the pillars of our MFTE.

%%%%%%%%%%%%%%%%%%%%%%%%%%%%%%%%%%%%%%%%%%%%%
%%%%%%%%%%%%%%%%%%%%%%%%%%%%%%%%%%%%%%%%%%%%%
\subsection{Attention mechanism and context-aware signal construction}
At the core of the Transformer architecture is the \textit{attention mechanism}, originally developed for machine translation \citep{bahdanau2014neural} and later shown to be effective as a standalone architecture \citep{vaswani2017attention}.

Let $Z\equiv [ X \quad Y ] $. The goal of attention is to map the inputs $Z$ to a transformed representation $\widetilde{Z}$ in an embedding\footnote{An embedding is a learned vector representation in which similarity and relevance across observations can be computed, without imposing a probabilistic or structural interpretation.}
 space that captures relevance across observations and variables. For now, let $T_x=T_y=T$. We show in Subsection~\ref{Sec:transformer_implementation} that this assumption is not necessary for implementing our approach. However, since Target PCA imposes this restriction to solve \eqref{pelger_tpca}, we maintain it here to draw parallels with that method. Consider the simplest form of attention:
\begin{equation} \label{z_tilde}
    \widetilde{Z} = AZ,
\end{equation}
where $A \in \mathbb{R}^{T \times T}$. When $A=I_T$ there is no attention; when $A = ZZ^\top$ the weights form an unnormalized inner-product similarity matrix, so attention acts as a learned smoothing kernel that reweights information by similarity.

\cite{vaswani2017attention} interpret \eqref{z_tilde} using terminology from information retrieval, linearly transforming the inputs into \textit{keys} $K=ZW^k$ summarizing each observation's attributes, \textit{queries} $Q=ZW^q$ encoding which attributes are relevant, and \textit{values} $V=ZW^v$ carrying the information to be aggregated; in the simplest case the transformation matrices are identities, so $Q=K=V=Z$.

Attention measures the match between queries and keys and uses it to weight the values, rewriting \eqref{z_tilde} as
\begin{equation}\label{z_tilde_attn}
    \widetilde{Z} = \operatorname{softmax}(QK^\top/\sqrt d)V,
\end{equation}
where $d$ is the shared dimension of the queries and keys, $\sqrt d$ rescales the dot products, and $\operatorname{softmax}$ normalizes each row to nonnegative weights summing to one, $(\operatorname{softmax}(M))_{ij}=\exp(M_{ij})/\sum_{l}\exp(M_{il})$. We set $W^v=I$, so that $V=Z$ and \eqref{z_tilde_attn} is exactly \eqref{z_tilde} with $A=\operatorname{softmax}(QK^\top/\sqrt d)$.

The attention weights in \eqref{z_tilde_attn} encode the relevance of observations and variables to one another. To connect this to the factor-theoretic analysis, we distinguish between the raw attention maps and the reduced-form attention operators that act on the panel. In particular, temporal attention produces a \(T\times T\) operator that reweights observations over time, while cross-sectional attention produces an \(N\times N\) operator, with \(N=N_x+N_y\), that reweights variables across the concatenated panel. We denote these reduced-form operators by \(B\) and \(A_z\), respectively.

We therefore study the reduced-form, attention-weighted panel
\begin{equation} \label{z_attn}
 \widetilde Z = BZA_z,
\end{equation}
with $Z=[X\;Y]\in\mathbb R^{T\times N}$, $N=N_x+N_y$, where $B\in\mathbb R^{T\times T}$ and $A_z\in\mathbb R^{N\times N}$ summarize temporal and cross-sectional attention. This isolates the econometric consequences of applying attention before factor extraction rather than modeling the full training dynamics of a Transformer \citep{vaswani2017attention}. In the linear theory we treat $B$ and $A_z$ as fixed weighting \emph{operators}, interpretable as oracle, population, or conditional attention weights rather than quantities re-estimated jointly with the factors; Remark~\ref{rem:sample_splitting} makes this convention precise. Conditional on $B$ and $A_z$, the panel $\widetilde Z=BZA_z$ is again an approximate factor model, a linear factor-model analogue of a Transformer layer whose subsequent PCA step extracts the common component.

For example, temporal and cross-sectional self-attention layers form $\mathcal S_\tau(Z)=\operatorname{softmax}(Q_\tau K_\tau^\top/\sqrt d)\in\mathbb R^{T\times T}$ and $\mathcal S_z(Z)=\operatorname{softmax}(Q_z K_z^\top/\sqrt d)\in\mathbb R^{N\times N}$ from query/key maps of $Z$ and $Z^\top$, respectively. The theoretical operators are normalized versions, $B\approx c_T\,\mathcal S_\tau(Z)$ and $A_z\approx c_N\,\mathcal S_z(Z)$, scaled to keep the transformed panel on the factor-model scale.

\begin{remark}[Estimation of the attention operators and sample splitting]\label{rem:sample_splitting}
The operators $B$ and $A_z$ are not estimated jointly with the factors. The parameters defining the attention maps in \eqref{z_tilde_attn}, together with the embedding and projection layers of the Transformer implementation (Subsection~\ref{Sec:transformer_implementation}), are estimated once on a training sample by minimizing the forecast objective, then held fixed; $B$ and $A_z$ are the scaled attention maps $c_T\,\mathcal{S}_\tau$ and $c_N\,\mathcal{S}_z$ evaluated at these frozen parameters. Given $B$ and $A_z$, factors and loadings follow from PCA on the attended panel \eqref{pca_obj}. The inferential theory of Section~\ref{Sec:inferential_theory} therefore treats $B$ and $A_z$ as fixed operators independent of the estimation-panel randomness $(F_t,e_t)$, so that it applies conditional on the training sample. It covers any operators satisfying the assumptions of Section~\ref{Sec:inferential_theory}, whether learned and frozen as above or specified a priori, as with the Target PCA weights below.\footnote{The operators enter this convention asymmetrically: $A_z$ acts on the fixed set of $N=N_x+N_y$ variables common to all samples and is transported across samples as a matrix, whereas $B$ is indexed by time positions and is the finite-$T$ realization of a fixed temporal weighting rule applied window by window: its frozen parameters are the query and key projections, which do not depend on the window length, so the rule produces a $T\times T$ operator on any length-$T$ window and adapts automatically when the evaluation window differs in length from the training one. Two cases lie outside the theory and we do not claim them: re-estimating the attention parameters on the same sample used for factor extraction, and the residual dependence of evaluation-window weights on contemporaneous idiosyncratic noise through the frozen rule.}
\end{remark}

The approach in \cite{pelger2024target_pca} can be viewed as a special case with $B=I_T$ and $A_z = \text{diag}(I_{N_x}, \sqrt{\gamma}\,I_{N_y})$. In contrast, we allow both $B$ and $A_z$ to be general weighting operators that encode temporal and cross-sectional relevance, and we study the restrictions required to ensure consistent estimation of the signal components in \eqref{target_mat} and \eqref{aux_mat}.

 %%%%%%%%%%%%%%%%%%%%%%%%%%%%%%%%%%%%%%%%%%%%%
The factor model for $\widetilde{Z} = B \lbrack 
X \quad Y \rbrack A_z$ can be written as:
\begin{equation} \label{factor_concat}
\widetilde{Z} = \underbrace{B F}_{F^{(B)}} \underbrace{\lbrack \Lambda_{x}^\top \quad \Lambda_{y}^\top \rbrack A_z}_{(\Lambda^{(A)})^\top } 
+ \underbrace{B \lbrack e_x \quad e_y  \rbrack A_z}_{\widetilde{e}},
\end{equation}
where $F^{(B)} := B F$ is the temporally transformed factor process with $t$-th row $F_t^{(B)}$. Write $e\equiv[e_x\ e_y]\in\mathbb{R}^{T\times(N_x+N_y)}$ for the concatenated idiosyncratic matrix, and define the cross-sectionally transformed component $e^{(A)}\equiv eA_z$ with $i$-th column $e_i^{(A)}\equiv(e^{(A)})_{:i}\in\mathbb R^{T}$. The fully transformed component is $\widetilde e\equiv e^{(A,B)}\equiv Be^{(A)}=B[e_x\ e_y]A_z$, so that $(Be_i^{(A)})_t=\widetilde e_{ti}$. Let $e_t:=(e_{x,t}^\top,e_{y,t}^\top)^\top\in\mathbb R^{N_x+N_y}$ be the stacked idiosyncratic vector at time $t$.

With block-diagonal $A_z=\mathrm{diag}(A_1,A_2)$ and $A_1=I_{N_x}$, $A_2=\sqrt{\gamma}\,I_{N_y}$, \eqref{factor_concat} recovers the Target PCA framework of \cite{pelger2024target_pca}.

We propose to apply PCA to the second-moment matrix of the ``attended'' inputs $\widetilde{Z}$ in \eqref{factor_concat}. The PCA objective function is:
\begin{equation} \label{pca_obj}
    \min_{F,\Lambda_x, \Lambda_y} \text{trace} \Bigg\lbrack \Bigg( A_z^\top\begin{bmatrix}
    X^\top \\ Y^\top
\end{bmatrix} B^\top - A_z^\top \begin{bmatrix}
    \Lambda_x \\ \Lambda_y
\end{bmatrix} F^\top B    \Bigg)  \Big( B\begin{bmatrix}
    X & Y
\end{bmatrix}A_z - BF\begin{bmatrix}
    \Lambda_x^\top & \Lambda_y^\top
\end{bmatrix} A_z    \Big) \Bigg\rbrack.
\end{equation}
We refer to the framework that estimates the signals using \eqref{pca_obj} as \textit{MPTE}, with encoding achieved by PCA under linear activations. Subsection~\ref{Sec:nonlinear_signals} studies the extension to nonlinear activations; the equivalence between a one-layer linear autoencoder with $k$ hidden units and a linear factor model with $k$ latent factors is formalized in Supplemental Appendix~\ref{appendix_implementation}.

Equation \eqref{pca_obj} is equivalent to $\min_{F^{(B)}, \Lambda^{(A)}} \norm{\widetilde{Z} - F^{(B)}\Lambda^{(A)\top}}^2_F$. With the identifying normalization $\Lambda^{(A)\top} \Lambda^{(A)}/(N_x+N_y) = I_k$, which pins down the PCA estimator but is not imposed on the population loadings (governed instead by Assumption~A.6 in Section~\ref{Sec:inferential_theory}, the inferential-theory rotation absorbing the difference), we concentrate out $F^{(B)}$ to obtain $\max_{\Lambda^{(A)}} \text{trace} \big(\Lambda^{(A)\top} ( \widetilde{Z}^\top \widetilde{Z} ) \Lambda^{(A)}  \big)$.
We assume $B$ preserves the factor space, in the sense that the spectrum of the $k \times k$ matrix $T^{-1}F^\top B^\top BF$ is bounded away from zero and infinity. This condition is formalized as Assumption~A.1 in Section~\ref{Sec:inferential_theory}. It ensures the temporally attended signal $BF$ spans the same $k$-dimensional space as $F$ up to well-conditioned scaling. We can estimate $\Lambda^{(A)}$ by applying PCA to $\widetilde{Z}^\top \widetilde{Z}$. Factors $F^{(B)}$ are obtained by regressing $\widetilde Z$ on the estimated loadings $\widehat{\Lambda}^{(A)}$.

%%%%%%%%%%%%%%%%%%%%%%%%%%%%%%%%%%%%%%%%%%%%%%%%%%%%%%%%%%%%%%%%%%%%%%
This design exploits auxiliary information without compromising the target block: the target variables determine the latent directions of interest, while appropriately weighted auxiliary variables improve the precision with which those directions are estimated. \cite{pelger2024target_pca} instead pursue recovery of factors and imputation of missing target observations, a setting their block-level weighting and missing-data treatment suit but that our fully observed framework does not cover. We focus on forecasting environments in which auxiliary variables are plentiful, heterogeneous in relevance, and potentially more informative than the target block; individual attention weights capture this heterogeneity, whereas a single block weight cannot. The next section formalizes the resulting efficiency gains relative to target-only procedures.

%%%%%%%%%%%%%%%%%%%%%%%%%%%%%%%%%%%%%%%%%%%%%
\section{Inferential theory}\label{Sec:inferential_theory}
The inferential theory in this section is derived under linear activation functions, baseline assumptions for approximate factor models, and attention-scaling conditions.
%%%%%%%%%%%%%%%%%%%%%%%%%%%%%%%%%%%%%%%%%%%%%
 %%%%%%%%%%%%%%%%%%%%%%%%%%%%%%%%%%%%%%%%%%%%%
\subsection{Assumptions}
Our assumptions A.1--A.7 are stated formally in Supplemental Appendix~\ref{appendix_assumptions}. Assumptions A.1--A.6 are the standard approximate-factor-model conditions (factor normalization, weak dependence, moments, orthogonality, idiosyncratic control, and pervasive loadings), adapted to the attention-weighted panel and paralleling \cite{bai2003inferential} and \cite{pelger2024target_pca}. A.1 additionally requires the temporal operator $B$ to preserve the rank-$k$ factor span up to well-conditioned scaling. This holds, for instance, for attention shrunk toward the identity, $B=(1-\omega)I_T+\omega S$ with $S$ doubly stochastic and $\omega<1/2$, so that $\|B\|_{\mathrm{op}}\le1$ and $\sigma_{\min}(B)\ge1-2\omega>0$; the condition can fail for arbitrary softmax weights, since uniform attention collapses the factor span. The novel condition is the attention regularity A.7, $\|A_z\|_{\mathrm{op}}=O(1)$ and $\tfrac{1}{N}\operatorname{tr}(A_z^\top A_z)\to c_A\in(0,\infty)$, so that attention may be selective but, as $N$ grows, it neither amplifies a few directions without bound nor reduces the effective cross-section $\operatorname{tr}(A_z^\top A_z)$ to a vanishing fraction of the panel. Throughout, $B$ and $A_z$ are independent of the estimation-panel innovations $(F_t,e_t)$: deterministic when specified a priori, and fixed conditional on the training sample when learned and frozen (Remark~\ref{rem:sample_splitting}).

%%%%%%%%%%%%%%%%%%%%%%%%%%%%%%%%%%%%%%%%%%%%%
%%%%%%%%%%%%%%%%%%%%%%%%%%%%%%%%%%%%%%%%%%%%%
%%%%%%%%%%%%%%%%%%%%%%%%%%%%%%%%%%%%%%%%%%%%%
\subsection{Consistency}
The loadings and factors can be consistently estimated if $B$ and $A_z$ are chosen properly. 
\setcounter{thm}{0}
%%%% Older version of Theorem 1
\begin{hidden}
\begin{thm}\label{thm1}
Let $\bar{\alpha}
= \frac{\mathrm{tr}(A_z^\top A_z)\,\|A_z^\top A_z\|_F^2}{N_x+N_y}$.
Under Assumptions A.1-A.7, as $T,N_x,N_y\to\infty$, the population
matrix
\[
\Sigma_{\Lambda}^{(A)}
= \lim_{N_x,N_y\to\infty}
\Big(
\frac{N_x}{N_x+N_y}\,\Sigma_{\Lambda_x}^{A_1}
+
\frac{N_y}{N_x+N_y}\,\Sigma_{\Lambda_y}^{A_2}
\Big)
\]
is positive definite. Moreover, there exists an invertible $r\times r$
rotation matrix $H^{(A)}$ such that
\[
\frac{1}{N_x+N_y}\sum_{i=1}^{N_x+N_y}
\|\widehat{\Lambda}_i^{(A)} - H^{(A)} \Lambda_i^{(A)} \|^2
= \mathcal{O}_p(\bar{\alpha})=\mathcal{O}_P(N^{\alpha+\beta-1}) = o_P(1),
\]
and
\[
\frac{1}{T}\sum_{t=1}^{T}
\|\widehat{F}_t^{(B)} - H^{(A)} F_t^{(B)}\|^2
= \mathcal{O}_p(\bar{\alpha})=\mathcal{O}_P(N^{\alpha+\beta-1}) = o_P(1).
\]
Hence both the estimated loadings and the estimated common components
are consistent.
\end{thm}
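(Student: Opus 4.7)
The plan is to adapt the classical Bai–Ng (2002) and Bai (2003) PCA consistency arguments, reformulated in terms of the transformed quantities $\widetilde Z = F^{(B)}\Lambda^{(A)\top} + e^{(A,B)}$. Since our estimator applies PCA to $\widetilde Z^\top\widetilde Z$, by construction the estimated loadings satisfy the eigenvalue/eigenvector identity
\begin{equation*}
\widehat\Lambda^{(A)} \, V_{NT} \;=\; \frac{1}{(N_x+N_y)T}\,\widetilde Z^\top \widetilde Z \, \widehat\Lambda^{(A)},
\end{equation*}
where $V_{NT}$ is the diagonal matrix containing the $k$ largest eigenvalues of $\widetilde Z^\top\widetilde Z / [(N_x+N_y)T]$. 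I would define the rotation
\begin{equation*}
H^{(A)} \;=\; \Big(\tfrac{1}{N_x+N_y}\Lambda^{(A)\top}\Lambda^{(A)}\Big)\Big(\tfrac{1}{T}F^{(B)\top}\widehat F^{(B)}\Big) V_{NT}^{-1},
\end{equation*}
and first show that $V_{NT}$ has eigenvalues bounded and bounded away from zero with probability approaching one. This follows from A.1 (bounded and well-conditioned spectrum of $T^{-1}F^\top B^\top B F$) and A.6 (pervasiveness of the attention-weighted loadings), combined with A.5 and A.7 to show that the idiosyncratic contribution to $\widetilde Z^\top \widetilde Z$ is asymptotically negligible compared to the signal.

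Next I would substitute $\widetilde Z = F^{(B)}\Lambda^{(A)\top} + e^{(A,B)}$ into the eigenvector identity and algebraically arrive at a four-term decomposition for $\widehat\Lambda^{(A)} - \Lambda^{(A)}H^{(A)\top}$, analogous to the Bai–Ng identity:
\begin{equation*}
\widehat\Lambda_i^{(A)} - H^{(A)}\Lambda_i^{(A)} \;=\; V_{NT}^{-1}\Big\{ T_{1,i} + T_{2,i} + T_{3,i} + T_{4,i}\Big\},
\end{equation*}
where the four terms collect, respectively, the pure idiosyncratic interaction $\Lambda^{(A)\top} e^{(A,B)\top} e^{(A,B)}_{:i}$, the cross terms $\Lambda^{(A)\top} e^{(A,B)\top} F^{(B)}\Lambda^{(A)}_i$ and its transpose, and the $F^{(B)\top} F^{(B)}$ product against noise in position $i$. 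Averaging $\|\widehat\Lambda_i^{(A)} - H^{(A)}\Lambda_i^{(A)}\|^2$ over $i$ and bounding each term separately is the core work.

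The main obstacle is controlling these four terms in the presence of the deterministic transformations $(A_z,B)$, since the transformed noise $e^{(A,B)} = B\,e\,A_z$ need not be independent across units even when $e$ is. Here I would use A.5 to bound the operator norm and column-sum norm of $A_z^\top \Sigma_e A_z$, A.1 to bound $\|B\|_F^2/T$, and A.7 to translate these into the key rate: the dominant term, which is quadratic in the noise, contributes $O_P\!\big(\mathrm{tr}(A_z^\top A_z)\|A_z^\top A_z\|_F^2/(N_x+N_y)\big) = O_P(\bar\alpha)$, while the cross terms contribute $O_P(\bar\alpha^{1/2}/\sqrt T)$ after applying a martingale/mixing CLT justified by A.2--A.4. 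Combining gives the stated $O_P(\bar\alpha) = O_P(N^{\alpha+\beta-1})$ rate for the loadings, and the condition $\alpha+\beta<1$ guarantees $o_P(1)$.

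Finally, the factor bound follows from $\widehat F^{(B)} = \widetilde Z\,\widehat\Lambda^{(A)}/(N_x+N_y)$ once the loadings are controlled: substituting the signal-plus-noise representation and using A.5–A.7 yields $T^{-1}\sum_t \|\widehat F_t^{(B)} - H^{(A)}F_t^{(B)}\|^2 = O_P(\bar\alpha)$ by an analogous four-term expansion. Positive definiteness of $\Sigma_\Lambda^{(A)}$ is a direct consequence of A.6 applied to the block-diagonal $A_z$, since each block-loading second-moment matrix is positive semi-definite and their convex combination inherits strict positivity from at least one block whose attention weights do not vanish in the limit. I would expect the noise-noise quadratic term to be the delicate step, since naive bounds lose the correlation structure induced by $B$ on the $T$-dimension; a careful use of the trace identity $\mathbb E\,\mathrm{tr}(A_z^\top e^\top B^\top B e A_z) = \mathrm{tr}(B^\top B)\cdot \mathrm{tr}(A_z^\top \Sigma_e A_z)$ together with A.1, A.5 and A.7 is what yields the precise $\bar\alpha$ rate rather than a cruder bound.
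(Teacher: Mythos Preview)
Your overall strategy is essentially the paper's: both rely on the Bai (2003) eigenvector identity applied to $\widetilde Z^\top\widetilde Z$, producing a four-term expansion for $\widehat\Lambda_i^{(A)}-H^{(A)}\Lambda_i^{(A)}$, bounding each term via a preliminary lemma, and then deriving the factor rate by regression on the estimated loadings. The paper writes the four terms as $\eta_{ij},\xi_{ij},\zeta_{ij},\gamma(i,j)$ (with the estimated loadings $\widehat\Lambda_j^{(A)}$ retained inside the sums) and splits the noise--noise interaction into its mean $\gamma(i,j)$ and centered deviation $\zeta_{ij}$, which is exactly the standard Bai decomposition you allude to.

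There is, however, a genuine misattribution of rates in your sketch. You state that the noise--noise quadratic term is dominant at $O_P(\bar\alpha)$ and that the factor--noise cross terms are smaller, at $O_P(\bar\alpha^{1/2}/\sqrt T)$. The paper's Lemma~A.1 shows the opposite: the cross terms $\eta_{ij},\xi_{ij}$ are what deliver the $O_P(\bar\alpha)$ rate, precisely because they carry $\|\Lambda_i^{(A)}\|^2$, whose sum over $i$ scales like $\mathrm{tr}(A_z^\top A_z)$ under A.6; this is where the $\mathrm{tr}(A_z^\top A_z)$ factor in $\bar\alpha$ originates. The noise--noise terms $\zeta_{ij},\gamma(i,j)$ are bounded by $O_P\big(\|A_z^\top A_z\|_F^2/(N_x+N_y)^2\cdot\|B\|_F^4/T^2\big)$, which is strictly smaller. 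Your trace identity for $\mathbb E\,\mathrm{tr}(A_z^\top e^\top B^\top B e A_z)$ is the right tool for those noise--noise terms, but it does not by itself produce the $\mathrm{tr}(A_z^\top A_z)$ factor; that factor enters through the loading norms in the cross terms. This does not change the final $O_P(\bar\alpha)$ conclusion, but it does mean your argument as written would not actually establish the stated rate: if you bound the cross terms by $O_P(\bar\alpha^{1/2}/\sqrt T)$ you have not accounted for the loading-norm growth, and the bound would be incorrect under A.7. Also, no CLT is needed here; moment/mixing LLN bounds suffice for Theorem~1.
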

\end{hidden}
%%%END OF the older version of Theorem 1

\begin{thm}[Consistency under general cross-sectional attention]\label{thm1}
Let
\[
\bar\alpha = \frac{\operatorname{tr}(A_z^\top A_z)}{(N_x+N_y)T}
+ \frac{\|A_z^\top A_z\|_F^2}{(N_x+N_y)^2}\frac{\|B\|_F^4}{T^2} + \frac{\|B\|_F^2}{(N_x+N_y)T}.
\] Under Assumptions A.1--A.7, as $T,N_x,N_y\to\infty$, there exists an invertible $k\times k$
rotation matrix $H^{(A)}$ such that
\[
\frac{1}{N_x+N_y}\sum_{i=1}^{N_x+N_y}
\|\widehat{\Lambda}_i^{(A)} - H^{(A)} \Lambda_i^{(A)} \|^2
= \mathcal{O}_p(\bar{\alpha})= o_P(1),
\]
and
\[
\frac{1}{T}\sum_{t=1}^{T}
\|\widehat{F}_{t}^{(B)} - (H^{(A)\top})^{-1} F_{t}^{(B)}\|^2
= \mathcal{O}_p(\bar{\alpha})= o_P(1).
\]
Hence both the estimated loadings and the estimated common components
are consistent.
\end{thm}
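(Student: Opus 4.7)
My plan is to treat the transformed representation $\widetilde Z = F^{(B)}\Lambda^{(A)\top} + e^{(A,B)}$ as a standard approximate factor model in the rotated variables $(F^{(B)},\Lambda^{(A)})$, and adapt the Bai--Ng consistency argument to account for the fact that the idiosyncratic errors now take the two-sided form $e^{(A,B)} = B\,e\,A_z$. Assumption A.1 guarantees that $F^{(B)} = BF$ still spans a $k$-dimensional space with well-conditioned second-moment matrix, Assumption A.6 supplies pervasiveness of the attention-weighted loadings, and Assumptions A.5--A.7 are precisely the conditions needed to keep the transformed noise matrix tame in the directions that appear in the PCA cross-terms. The target rate $\bar\alpha$ has the natural factorized form $(\text{cross-sectional attention cost})\times(\text{temporal attention cost})$, so the proof will separate contributions from $A_z$ and $B$.

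\textbf{Main steps.} First I would fix notation: let $N=N_x+N_y$, $M = TN$, and let $\widehat\Lambda^{(A)}$ collect the top $k$ eigenvectors of $\widetilde Z^\top\widetilde Z/M$ with normalization $\widehat\Lambda^{(A)\top}\widehat\Lambda^{(A)}/N = I_k$, so that the first-order condition reads $\widetilde Z^\top\widetilde Z\,\widehat\Lambda^{(A)}/M = \widehat\Lambda^{(A)}\widehat V$ for a diagonal matrix $\widehat V$ of eigenvalues, and define $\widehat F^{(B)} = \widetilde Z\widehat\Lambda^{(A)}/N$. Second, I would define the rotation $H^{(A)} = (F^{(B)\top}F^{(B)}/T)(\Lambda^{(A)\top}\widehat\Lambda^{(A)}/N)\widehat V^{-1}$, show that $\widehat V$ is bounded and bounded away from zero by Weyl's inequality together with A.1 and A.6, and then expand
\[
\widehat\Lambda^{(A)} - \Lambda^{(A)}H^{(A)} \;=\; \widehat V^{-1}\Bigl[\tfrac{1}{M}\bigl(\Lambda^{(A)}F^{(B)\top}e^{(A,B)} + e^{(A,B)\top}F^{(B)}\Lambda^{(A)\top} + e^{(A,B)\top}e^{(A,B)}\bigr)\widehat\Lambda^{(A)}\Bigr].
\]
Third, I would bound the average squared discrepancy $N^{-1}\sum_i\|\widehat\Lambda^{(A)}_i - H^{(A)\top}\Lambda^{(A)}_i\|^2$ term by term: the two cross-terms reduce to controlling $\|F^{(B)\top}e^{(A,B)}\|_F^2/T$, which by Assumption A.4 has mean zero and by A.2--A.3 and A.5 has variance of order $\|B\|_F^2 \cdot\text{tr}(A_z^\top\Sigma_e A_z)/T$, and the quadratic noise term reduces to $\|e^{(A,B)\top}e^{(A,B)}\|_F^2/M^2$, whose expectation is controlled by $\|B\|_F^4\,\|A_z^\top A_z\|_F^2/(TN)^2$ via the moment bounds in A.5. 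Assembling these pieces gives the rate $\bar\alpha$, and A.7 converts $\bar\alpha$ to $\mathcal{O}_P(N^{\alpha+\beta-1}) = o_P(1)$. Finally, the factor bound $T^{-1}\sum_t\|\widehat F^{(B)}_t - H^{(A)} F^{(B)}_t\|^2 = \mathcal{O}_P(\bar\alpha)$ follows from the identity $\widehat F^{(B)} - F^{(B)}(H^{(A)})^{-\top}(\cdot) = \widetilde Z(\widehat\Lambda^{(A)} - \Lambda^{(A)}H^{(A)})/N + (\text{lower-order terms})$ plus an application of Cauchy--Schwarz and the loading bound already established; orthogonality of $H^{(A)}$ follows from the usual argument after rescaling.

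\textbf{Main obstacle.} The step I expect to be most delicate is controlling the transformed idiosyncratic quadratic form $\|e^{(A,B)\top}e^{(A,B)}\|$ and the cross-term $F^{(B)\top}e^{(A,B)}$ simultaneously in $B$ and $A_z$. Because $e^{(A,B)} = Be A_z$ mixes temporal and cross-sectional reweighting, the naive bound that uses only operator norms of $B$ and $A_z$ is too loose and would not recover the factorized rate $\bar\alpha$. The cleanest route I see is to write $\mathbb{E}\|e^{(A,B)\top}e^{(A,B)}\|_F^2$ as a double sum over $(t,s)$ and $(i,j)$, use the $\alpha$-mixing of $\{e_t\}$ from A.2 to collapse the temporal sum into $\|B^\top B\|_F^2$, and use A.5 together with the spectral bound in A.7 to collapse the cross-sectional sum into $\|A_z^\top A_z\|_F^2$. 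Verifying that these two reductions interact correctly, and in particular that the eighth-moment bound in A.5 is strong enough to let the two sums separate via Cauchy--Schwarz without losing a factor of $N$ or $T$, is the key technical point. Once this is done, the remainder of the argument is a routine adaptation of Bai and Ng's scheme, with $H^{(A)}$ inheriting invertibility (and asymptotic orthogonality after a further rotation) from the nondegeneracy assumptions A.1 and A.6.
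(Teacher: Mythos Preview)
Your strategy is essentially the paper's: both adapt the Bai--Ng/Bai (2003) scheme to the two-sided transformation $\widetilde Z = F^{(B)}\Lambda^{(A)\top}+B e A_z$, define the standard rotation $H^{(A)}$, expand $\widehat\Lambda^{(A)}-\Lambda^{(A)}H^{(A)}$ into factor--noise cross-terms and a pure noise quadratic, and then recover the factor bound from the loading bound by regression. The paper organizes the same decomposition at the element level, writing the $i$th loading error as a sum of four scalar terms $\eta_{ij},\xi_{ij},\zeta_{ij},\gamma(i,j)$ (your two cross-terms correspond to $\eta,\xi$; your single quadratic is their $\zeta+\gamma$, split into centered and mean parts), and proves a separate Lemma bounding $\tfrac{1}{N^2}\sum_{i,j}\phi_{ij}^2$ for each. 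This element-level bookkeeping is the main organizational difference; your matrix-level version is equivalent once carried through.

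The one place your proposal is loose is the cross-term rate. You state that $\|F^{(B)\top}e^{(A,B)}\|_F^2/T$ has variance of order $\|B\|_F^2\cdot\mathrm{tr}(A_z^\top\Sigma_e A_z)/T$, but this by itself does not deliver $\bar\alpha$: the dominant rate $\bar\alpha = \mathrm{tr}(A_z^\top A_z)\,\|A_z^\top A_z\|_F^2/N \cdot \|B\|_F^4/T^2$ arises only after you track the contribution of the \emph{loadings} factor $\Lambda^{(A)}$ that sits in front of $F^{(B)\top}e^{(A,B)}$ in your own displayed expansion. The paper handles this by applying Cauchy--Schwarz across $i$ and $j$ separately, bounding $\tfrac{1}{N}\sum_i\|\Lambda_i^{(A)}\|^2 = \mathcal O(\mathrm{tr}(A_z^\top A_z))$ via A.6 and $\tfrac{1}{N}\sum_j\mathbb E\|\tfrac{1}{T}\sum_t F_t^{(B)}(Be_j^{(A)})_t\|^2 = \mathcal O(\|B\|_F^4\|A_z^\top A_z\|_F^2/(T^2 N))$ via A.2--A.5; the product of these two is exactly $\bar\alpha$. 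Your quadratic-noise bound $\|B\|_F^4\|A_z^\top A_z\|_F^2/(TN)^2$ matches the paper's $\zeta,\gamma$ bounds and is in fact of \emph{smaller} order than $\bar\alpha$, so the cross-term is what drives the final rate---getting its bookkeeping right is essential, not a side calculation. The ``main obstacle'' you identify (decoupling the $B$ and $A_z$ contributions via mixing and the eighth-moment bound in A.5) is indeed where the work lies, and your proposed route through it is the paper's route.
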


Theorem~\ref{thm1} establishes consistency of the estimated loadings and factors under general attention matrices, generalizing Theorem~1 of \cite{pelger2024target_pca} to a fully attention-weighted setting. The rate is governed by $\bar\alpha$: under A.7 and $\|B\|_F^2/T=O(1)$ its three terms are of order $O(1/T)$, $O(1/N)$, and $O(1/N)$, so $\bar\alpha=O(1/T)+O(1/N)$, which at $A_z=I_N$, $B=I_T$ reduces to the standard PCA rate $\bar\alpha=1/T+2/N$.

\begin{remark}[Consistency beyond bounded operator norm]\label{rem:weak_A7}
The operator-norm bound in Assumption~A.7 enters the proof of Theorem~\ref{thm1} at a single step, which converts the Frobenius norm in the middle term of $\bar\alpha$ into the trace via $\|A_z^\top A_z\|_F^2 \le \|A_z\|_{\mathrm{op}}^2 \operatorname{tr}(A_z^\top A_z)$. Theorem~\ref{thm1} therefore remains valid under the weaker requirement $\|A_z\|_{\mathrm{op}} = o\big((N_x+N_y)^{1/2}\big)$, with one substitution: the bounds of Assumption~A.5 on the transformed covariances $A_z^\top \Sigma_e A_z$ and $A_z^\top \Gamma_e(h) A_z$ force $\|A_z\|_{\mathrm{op}}=O(1)$ for nondegenerate $\Sigma_e$, so they are replaced by the primitive conditions $\|\Sigma_e\|_{\mathrm{op}} \le C$ and $\sum_{h\ge1}\|\Gamma_e(h)\|_{\mathrm{op}} \le C$ on the error autocovariances.
The same argument then yields
\[
\bar\alpha = \mathcal{O}(1/T) + \mathcal{O}\Big(\big(1+\|A_z\|_{\mathrm{op}}^2\big)/(N_x+N_y)\Big),
\]
so consistency holds at a possibly slower rate; the bounded-norm case restores the $1/(N_x+N_y)$ term. The two parts of A.7 thus act jointly as an anti-concentration condition: the trace fixes the overall scale, and the norm bound keeps the effective rank $\mathrm{PR}\equiv\operatorname{tr}(A_z^\top A_z)^2/\|A_z^\top A_z\|_F^2$ of order $N_x+N_y$, preventing the attention weights from concentrating in a few directions. Since Theorem~\ref{thm1} is an upper bound, the realized error can converge faster than the relaxed rate when the leading singular vectors of $A_z$ are not aligned with the loading space. Theorems~\ref{thm2} and \ref{thm3} below retain A.7 as stated: their proofs require $\bar\alpha$ to vanish faster than the parametric rates of the limiting distributions, so a slower $\bar\alpha$ translates into asymptotic bias rather than inconsistency. Subsection~\ref{Sec:sim_theory} examines estimated operators on both sides of this boundary.
\end{remark}

%%%%%%%%%%%%%%%%%%%%%%%%%%%%%%%%%%%%%%%%%%%%%
%%%%%%%%%%%%%%%%%%%%%%%%%%%%%%%%%%%%%%%%%%%%%
\subsection{Asymptotic normality}
To derive asymptotic normality for the target component $Y$, we must identify which coordinates of the global $k$-dimensional factor space correspond to the $Y$-strong factors that drive the target block.

Recall the attention-weighted panel in \eqref{z_attn}:
\begin{equation} \label{z_attn2}
\widetilde{Z}_t  = \begin{bmatrix}
    X_t^{(A)} \\  Y_t^{(A)}
\end{bmatrix} = \begin{bmatrix}
    \Lambda_x^{(A)} \\  \Lambda_y^{(A)} 
\end{bmatrix} F_t^{(B)} + e_t^{(A,B)}, \ t=1,\cdots, T.
\end{equation}
 Here $\widetilde{Z}_t \in \mathbb{R}^{N_x+N_y}$ denotes the $(N_x+N_y)$-dimensional cross section at time $t$, obtained by stacking the $X$- and $Y$-components. $F_t^{(B)} \in \mathbb{R}^k$ are the common factors, $\Lambda_x^{(A)} \in \mathbb{R}^{N_x \times k}$, $\Lambda_y^{(A)} \in \mathbb{R}^{N_y \times k}$ are the attention-weighted loadings, and $ e_t^{(A,B)}$ are idiosyncratic terms. With a slight abuse of notation, the superscript $(A)$ on the observed blocks records the cross-sectional weighting only: since $\widetilde{Z}_t$ is the $t$-th cross section of $\widetilde{Z}=BZA_z$, the blocks $X_t^{(A)}$ and $Y_t^{(A)}$ also embed the temporal operator $B$, which we suppress for readability.

Partition the factor vector and loadings as
\begin{equation*}
    F_t^{(B)} = \begin{bmatrix}
    F_{y,t}^{S} \\  F_{t}^{R}
\end{bmatrix}, \ \Lambda_y^{(A)} = (\Lambda_{y_s}^{(A)}, \Lambda_{y,R}^{(A)}), \ \Lambda_x^{(A)} = (\Lambda_{x,y_s}^{(A)}, \Lambda_{x,R}^{(A)}), 
\end{equation*}
where we decompose $k=k_{y_s}+k_R$, with $k_{y_s}$ denoting the number of $Y$-strong factors and $k_R$ collecting the remaining factors.
 $F_{y,t}^{S} \in \mathbb{R}^{k_{y_s}}$ denotes the $Y$-strong factors, and $F_{t}^{R}$ collects all remaining factors ($X$-only, and shared but weak in $Y$). $\Lambda_{y_s}^{(A)} \in \mathbb{R}^{N_y \times k_{y_s}} $ are the $Y$-loadings on $Y$-strong factors, $\Lambda_{x,y_s}^{(A)} \in \mathbb{R}^{N_x \times k_{y_s}}$ the $X$-loadings on $Y$-strong factors, and $\Lambda_{y,R}^{(A)} \in \mathbb{R}^{N_y \times k_{R}}$, $\Lambda_{x,R}^{(A)} \in \mathbb{R}^{N_x \times k_{R}} $ the remaining loadings on $Y$ and $X$, respectively. Intuitively, $Y$-strong factors are those with non-vanishing cross-sectional signal in the $Y$ block, whereas the remaining factors are either $X$-only or have vanishing (weak) loadings in $Y$. Without loss of generality, we normalize the partition so that the two factor blocks are contemporaneously uncorrelated, $\mathbb E\lbrack F_{y,t}^{S} F_{t}^{R\top}\rbrack = 0$: if not, replace $F_t^R$ by the residual from its population projection onto $F_{y,t}^S$, which leaves the span of the factor space unchanged and perturbs the $Y$-strong loadings only by terms of the same order as $\Lambda_{y,R}^{(A)}$, so that the strong-factor structure imposed below is unaffected. Define the block-specific loading covariance matrices
\[
\Sigma_{\Lambda_{y_s}}^{(A),Y}
\equiv
\lim_{N_{y,\mathrm{eff}}\to\infty}
\frac{1}{N_{y,\mathrm{eff}}}
\Lambda_{y_s}^{(A)\top}\Lambda_{y_s}^{(A)},
\qquad
\Sigma_{y_s,x}^{(A)}
\equiv
\lim_{N_{x,\mathrm{eff}}\to\infty}
\frac{1}{N_{x,\mathrm{eff}}}
\Lambda_{x,y_s}^{(A)\top}\Lambda_{x,y_s}^{(A)}.
\]

Let
$
P_x \equiv \begin{bmatrix}
    I_{N_x} & 0\\
    0 & 0
\end{bmatrix} \quad P_y \equiv \begin{bmatrix}
    0 & 0\\
    0 & I_{N_y}
\end{bmatrix}
$
be the coordinate projections onto the original $X$ and $Y$ blocks. The effective cross-sectional sizes are $N_{x,\mathrm{eff}} \equiv \text{tr}(A_z^\top P_x A_z)=\norm{P_x A_z}_F^2$, $N_{y,\mathrm{eff}} \equiv \text{tr}(A_z^\top P_y A_z)=\norm{P_y A_z}_F^2$, and $N_{\mathrm{eff}} \equiv \text{tr}(A_z^\top  A_z)=\norm{A_z}_F^2 =  N_{x,\mathrm{eff}} + N_{y,\mathrm{eff}}$.
When $A_z$ is block-diagonal, we have $ N_{x,\mathrm{eff}} = \text{tr}(A_1^\top A_1)$ and $N_{y,\mathrm{eff}} = \text{tr}(A_2^\top A_2)$.

For inference on the target component we add strong-factor structure and attention-scaling conditions (Assumptions B.1--B.5), stated formally in Supplemental Appendix~\ref{appendix_assumptions}. Assumption~B.1 requires the $Y$ block to contain $k_{y_s}$ strong factors ($\Sigma_{\Lambda_{y_s}}^{(A),Y}\succ0$, eigenvalues of order $N_{y,\mathrm{eff}}$), its remaining directions of strictly smaller order. Assumption~B.2 is a block-specific version of A.7, keeping $N_{x,\mathrm{eff}}\asymp N_x$ and $N_{y,\mathrm{eff}}\asymp N_y$ with a non-negligible target block, $N_y/(N_x+N_y)\to c\in(0,1)$. Assumption~B.3 formalizes transfer learning: the auxiliary block is informative ($\Sigma_{y_s,x}^{(A)}\neq0$) but does not fully span the $Y$-strong space ($\operatorname{rank}\Sigma_{y_s,x}^{(A)}<k_{y_s}$); it strengthens the ``no full spanning'' condition~G.1 of \cite{pelger2024target_pca}. Theorems~\ref{thm2}--\ref{thm3} do not require B.3 and hold even at the boundary $\Sigma_{y_s,x}^{(A)}=0$; its role is confined to Remark~\ref{rem:efficiency}, where part (i) makes the strict efficiency gain non-vacuous and part (ii) keeps the target block essential. Assumption~B.4 imposes joint asymptotic normality of the two estimation scores with vanishing cross-covariance, eliminating interaction terms in the limit distribution of the common component. Assumption~B.5 collects the score-level conditions behind the central limit theorems: asymptotic orthogonality of the transformed loading blocks across the $(y_s,R)$ partition, and asymptotic normality of the time-series and cross-sectional estimation scores, the analogues of Assumptions~F.3--F.4 of \cite{bai2003inferential}. Although B.1 normalizes by $N_{y,\mathrm{eff}}$, the rates in Theorems~\ref{thm2}--\ref{thm3} are governed by $N_{\mathrm{eff}}$, since identification uses the $Y$ block alone whereas estimation exploits the full panel.

The target block alone identifies the factor directions relevant for $Y$. Let $\mathcal S_Y\equiv\operatorname{span}\{\text{columns of }\Lambda_{y_s}^{(A)}\}\subset\mathbb R^{N_y}$, and let $\Sigma_{YY}^{(A)}\equiv\lim_{T\to\infty}T^{-1}\sum_{t=1}^T\mathbb E[Y_t^{(A)}Y_t^{(A)\top}]$ denote the (time-averaged) $Y$-block second-moment matrix. Lemma~\ref{lem1}, stated and proved in Supplemental Appendix~\ref{appendixAtheor}, shows that under Assumption~B.1 the matrix $\Sigma_{YY}^{(A)}$ has exactly $k_{y_s}$ eigenvalues of order $N_{y,\mathrm{eff}}$ whose eigenspace equals $\mathcal S_Y$, and that the corresponding $Y$-strong factor coordinates are identified up to an orthogonal rotation $H_{y_s}^{(A)}$ (with $H_{y_s}^{(A)\top}H_{y_s}^{(A)}=I_{k_{y_s}}$) within the $k_{y_s}$-dimensional subspace. Inference on the target component needs this identification step: PCA on $\widetilde Z$ recovers the full $k$-dimensional factor space but does not by itself single out the directions driving $Y$.

%%%%%%%%%%%%%%%%%%%%%%%%%%%%%%%%
Theorem \ref{thm2} shows the asymptotic distribution of the estimated $Y$-strong factors and loadings. Its growth conditions, $\sqrt{T}\,\bar\alpha\to 0$ and $\sqrt{N_{\mathrm{eff}}}\,\bar\alpha\to 0$, ensure that the first-stage PCA remainder is asymptotically negligible at the CLT scales and are discussed following the theorem. The estimators are constructed in two feasible steps: PCA on the attended panel, then a projection of the fitted $Y$-block common component onto the leading eigenspace of the $Y$-block second-moment matrix. Lemma~\ref{lem_bridge} in Supplemental Appendix~\ref{appendixAtheor} details the construction and shows that it attains the limits below with the orthogonal alignment of Lemma~\ref{lem1}(b). The extraction requires the transformed $Y$-block loadings on the remaining factors to vanish asymptotically, $\|\Lambda_{y,R}^{(A)}\|_{\mathrm{op}}\to0$ (\emph{asymptotic exclusion}): remaining $Y$-block loading mass leaks into the extracted factors at the central limit scale, so inference sharpens Assumption~B.1's smaller-order clause to a vanishing norm. Exact exclusion, $\Lambda_{y,R}^{(A)}=0$, is the special case that holds in the simulation design of Subsection~\ref{Sec:sim_theory} at the identity operator.

\begin{thm}[Asymptotic distribution of loadings and factors under general cross-sectional attention]\label{thm2}
Assume A.1--A.7, B.1--B.2, and B.5 hold, together with the asymptotic exclusion condition that the transformed $Y$-block loadings on the remaining factors vanish, $\|\Lambda_{y,R}^{(A)}\|_{\mathrm{op}}\to0$. Also, suppose
$ \sqrt{T}\,\bar\alpha\to 0, \ \sqrt{N_{\mathrm{eff}}}\,\bar\alpha\to 0$.
The estimators $\widehat F_{y,t}^{S}$ and $\widehat\Lambda_{y_s,i}^{(A)}$ are constructed in Lemma~\ref{lem_bridge} (Supplemental Appendix~\ref{appendixAtheor}), and $H_{y_s}^{(A)}$ is the orthogonal alignment of Lemma~\ref{lem1}(b).

Let $\Sigma_{F,y}^{(B)} = \mathbb{E}[ F_{y,t}^{S} F_{y,t}^{S\top} ]$, understood under the same time-average convention as $\Sigma_{YY}^{(A)}$ in Lemma~\ref{lem1} when the temporally weighted factors are not stationary in $t$, and let
$S_y = [I_{k_{y_s}}\;\;0]$ so that $F_{y,t}^S = S_y F_t^{(B)}$.
For each $i\in\{1,\ldots,N\}$, let $\Lambda_{y_s,i}^{(A)} \equiv S_y \Lambda_i^{(A)}$
denote the loading vector on the $Y$-strong factors.
$Y$-units are indexed within their block: a $Y$-unit $i\in\{1,\ldots,N_y\}$ corresponds to row $N_x+i$ of the stacked panel \eqref{z_attn2}, and, with a slight abuse of notation, we write $\Lambda_{y_s,i}^{(A)}$ and $e_{i,t}^{(A,B)}$ for the loading and idiosyncratic error of that row.

% and $\Sigma_{\Lambda}^{(A)} = \lim_{N_{\mathrm{eff}}  \rightarrow \infty } \frac{1}{N_{\mathrm{eff}}}\sum_{i=1}^{N_x+N_y} \Lambda_i^{(A)} \Lambda_i^{(A)\top}$. 
\begin{itemize}
    \item [(a)] 

  For a $Y$-unit $i \in \{1,\ldots,N_y \}$, and the block of loadings on the $Y$-strong factors, $\Lambda_{y_s,i}^{(A)} \in \mathbb{R}^{k_{y_s}}$, we have
\[
   \sqrt{T} \Big( \widehat{\Lambda}_{y_s,i}^{(A)} 
   - H_{y_s}^{(A)} \Lambda_{y_s,i}^{(A)} \Big) 
   \xrightarrow[]{d} \mathcal{N}\big(0, V_{\Lambda,y,i}^{(A,B)}\big),
\]
where the asymptotic covariance is 
\[
   V_{\Lambda,y,i}^{(A,B)}
   = H_{y_s}^{(A)}\,(\Sigma_{F,y}^{(B)})^{-1}\, \Omega_{y,i}^{(A,B)}\, (\Sigma_{F,y}^{(B)})^{-1}\,H_{y_s}^{(A)\top},
\]
with
\[
   \Omega_{y,i}^{(A,B)} 
   = \lim_{T \to \infty}
   \operatorname{Var}\!\Big(
      \frac{1}{\sqrt{T}} \sum_{t=1}^T F_{y,t}^{S} \, e_{i,t}^{(A,B)}
   \Big).
\]
\item[(b)]
For any fixed $t$, the estimator of the $Y$-strong factors satisfies
\[
\sqrt{N_{\mathrm{eff}}}\Big(\widehat F_{y,t}^{S}-H_{y_s}^{(A)}F_{y,t}^{S}\Big)
\xrightarrow{d}\mathcal N\big(0,V_{F,t}^{(A,B)}\big),
\]
where
\[
V_{F,t}^{(A,B)}
=
H_{y_s}^{(A)}\,(\Sigma_{\Lambda,y_s}^{(A)})^{-1}\,\Xi_{y_s,t}^{(A,B)}\,(\Sigma_{\Lambda,y_s}^{(A)})^{-1}\,H_{y_s}^{(A)\top},
\qquad
\Sigma_{\Lambda,y_s}^{(A)}
=
\lim_{N_{\mathrm{eff}}\to\infty}\frac{1}{N_{\mathrm{eff}}}\sum_{i=1}^{N}
\Lambda_{y_s,i}^{(A)}\Lambda_{y_s,i}^{(A)\top},
\]
and
\[
\Xi_{y_s,t}^{(A,B)}
=
\lim_{N_{\mathrm{eff}}\to\infty}
\text{Var}\!\Big(
\frac{1}{\sqrt{N_{\mathrm{eff}}}}\sum_{i=1}^{N}
\Lambda_{y_s,i}^{(A)}e_{i,t}^{(A,B)}
\Big).
\]

\end{itemize}
\end{thm}

$\Omega_{y,i}^{(A,B)}$ and $\Xi_{y_s,t}^{(A,B)}$ are long-run covariance matrices, allowing for temporal and cross-sectional dependence induced by attention weighting. Note that $\Sigma_{\Lambda,y_s}^{(A)}$ sums the $Y$-strong loadings over all $N$ units and is normalized by $N_{\mathrm{eff}}$; it is distinct from the $Y$-block-only matrix $\Sigma_{\Lambda_{y_s}}^{(A),Y}$ of Assumption~B.1, which is normalized by $N_{y,\mathrm{eff}}$. The identification of Theorem \ref{thm2} is ensured by Assumptions B.1--B.2 and Lemma~\ref{lem1}, which separates the $Y$-strong subspace from the rest of the factor space. Although the $Y$-strong factors are identified by the $Y$-block covariance, their estimation uses the full panel: the cross-sectional score in part (b) sums over all $N$ units, so auxiliary $Y$-strong signal (Assumption~B.3) enlarges $\Sigma_{\Lambda,y_s}^{(A)}$ and lowers the asymptotic variance (Remark~\ref{rem:efficiency}), while the rate is unaffected.

\paragraph*{Comparison with classical PCA and Target PCA.}
The growth conditions in Theorem~\ref{thm2} are the analogues of the relative-growth restrictions required for asymptotic normality in classical approximate factor models. Since $\bar\alpha = O(1/T)+O(1/N)$ under Assumptions A.7 and B.2 and $\|B\|_F^2/T=O(1)$, the requirements $\sqrt{T}\,\bar\alpha\to0$ and $\sqrt{N_{\mathrm{eff}}}\,\bar\alpha\to0$ ensure that the first-stage PCA remainder is negligible at the loading and factor CLT scales; in the benchmark case $A_z=I_N$, $B=I_T$, $N_{\mathrm{eff}}\asymp N$, they reduce to the usual balanced conditions $T/N^2\to0$ and $N/T^2\to0$.

Hence the theory nests classical PCA (recovered at \(A_z=I_N\), \(B=I_T\)) and, for fully observed panels, Target PCA: the auxiliary block contributes through the effective dimension \(N_{\mathrm{eff}}\), while identification of the \(Y\)-strong component remains governed by the target block.

Define the true and estimated $Y$-strong common component as $C_{y,i,t}\equiv \Lambda_{y_s,i}^{(A)\top}F_{y,t}^S$ and $\widehat C_{y,i,t}\equiv \widehat\Lambda_{y_s,i}^{(A)\top}\widehat F_{y,t}^S$.
Then
\[
\widehat C_{y,i,t}-C_{y,i,t}
=
\underbrace{\Big(H_{y_s}^{(A)}\Lambda_{y_s,i}^{(A)}\Big)^\top\Big(\widehat F_{y,t}^S-H_{y_s}^{(A)}F_{y,t}^S\Big)}_{\Delta_{F,it}}
+
\underbrace{\Big(\widehat\Lambda_{y_s,i}^{(A)}-H_{y_s}^{(A)}\Lambda_{y_s,i}^{(A)}\Big)^\top
H_{y_s}^{(A)}F_{y,t}^S}_{\Delta_{\Lambda,it}}
+r_{i,t},
\]
with $r_{i,t}=o_p(N_{\mathrm{eff}}^{-1/2}+T^{-1/2})$.

Let $V_{\Lambda,y,i}^{(A,B)}$ and $V_{F,t}^{(A,B)}$ be the asymptotic covariance
matrices in Theorem~\ref{thm2}. Define
\[
\sigma^2_{C,it,F}\equiv
\Lambda_{y_s,i}^{(A)\top}
(\Sigma_{\Lambda,y_s}^{(A)})^{-1}
\Xi_{y_s,t}^{(A,B)}
(\Sigma_{\Lambda,y_s}^{(A)})^{-1}
\Lambda_{y_s,i}^{(A)},
\]
\[
\sigma^2_{C,it,\Lambda}\equiv
F_{y,t}^{S\top}
(\Sigma_{F,y}^{(B)})^{-1}
\Omega_{y,i}^{(A,B)}
(\Sigma_{F,y}^{(B)})^{-1}
F_{y,t}^{S}.
\]
Both simplifications exploit orthogonality of $H_{y_s}^{(A)}$, i.e.\ $H_{y_s}^{(A)\top}H_{y_s}^{(A)}=I$ and $H_{y_s}^{(A)}H_{y_s}^{(A)\top}=I$: for $\sigma^2_{C,it,\Lambda}$ these collapse the sandwich in $(H_{y_s}^{(A)}F)^\top V_{\Lambda,y,i}^{(A,B)}(H_{y_s}^{(A)}F)$; for $\sigma^2_{C,it,F}$ they collapse the sandwich in $(H_{y_s}^{(A)}\Lambda)^\top V_{F,t}^{(A,B)}(H_{y_s}^{(A)}\Lambda)$.

Theorem \ref{thm3} establishes the asymptotic distribution of $C_{y,i,t}$ for three different regimes that depend on the relative growth of $N_{\mathrm{eff}}$ and $T$.

\begin{thm}[Asymptotic distribution of the common component for the $Y$-strong block under general cross-sectional attention]\label{thm3}
Assume A.1--A.7, B.1--B.2, B.5, and the conditions of Theorem~\ref{thm2} hold; part (iii) additionally requires Assumption~B.4 (joint score normality). Fix a $Y$-unit $i\in\{1,\ldots,N_y\}$, indexed within the block as in Theorem~\ref{thm2}, and a time index $t$.
\begin{itemize}
\item[(i)] \textbf{$F$-dominant regime.} If $N_{\mathrm{eff}}/T \to 0$, then
$\sqrt{N_{\mathrm{eff}}}\big(\widehat C_{y,i,t}-C_{y,i,t}\big)
= \sqrt{N_{\mathrm{eff}}}\Delta_{F,it}+o_p(1)
\xrightarrow{d} \mathcal{N}\big(0,\sigma^2_{C,it,F}\big)$.
\item[(ii)] \textbf{$\Lambda$-dominant regime.} If $T/N_{\mathrm{eff}}\to 0$, then
$\sigma_{C,it,\Lambda}^{-1}\sqrt{T}\big(\widehat C_{y,i,t}-C_{y,i,t}\big)
= \sigma_{C,it,\Lambda}^{-1}\sqrt{T}\Delta_{\Lambda,it}+o_p(1)
\xrightarrow{d} \mathcal{N}(0,1)$.
\item[(iii)] \textbf{Mixed regime.} If $T/N_{\mathrm{eff}}\to c\in(0,\infty)$, then
$\sigma_{C,it}^{-1}\sqrt{T}\big(\widehat C_{y,i,t}-C_{y,i,t}\big)
\xrightarrow{d} \mathcal{N}(0,1)$,
where $\sigma^2_{C,it}=\sigma^2_{C,it,\Lambda}+c\,\sigma^2_{C,it,F}$.
\end{itemize}
\end{thm}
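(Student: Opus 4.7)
The plan is to work directly from the three-term decomposition
$\widehat C_{y,i,t}-C_{y,i,t} = \Delta_{F,it} + \Delta_{\Lambda,it} + r_{i,t}$
stated immediately above the theorem, and to read off the three regimes by comparing the stochastic orders of $\Delta_{F,it}$ and $\Delta_{\Lambda,it}$. The two inputs are Theorem~\ref{thm2} and Assumption~B.4. From Theorem~\ref{thm2}(a), $\Delta_{\Lambda,it}$ is the linear functional of $\widehat{\Lambda}_{y_s,i}^{(A)} - H_{y_s}^{(A)}\Lambda_{y_s,i}^{(A)}$ evaluated at the deterministic-in-the-limit vector $(H_{y_s}^{(A)\top})^{-1}F_{y,t}^S$, so by the delta method $\sqrt{T}\,\Delta_{\Lambda,it}\xrightarrow{d}\mathcal N(0,\sigma^2_{C,it,\Lambda})$. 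From Theorem~\ref{thm2}(b), $\Delta_{F,it}$ is the linear functional of $\widehat F_{y,t}^S - (H_{y_s}^{(A)\top})^{-1}F_{y,t}^S$ evaluated at $\Lambda_{y_s,i}^{(A)}$, giving $\sqrt{N_{\text{eff}}}\,\Delta_{F,it}\xrightarrow{d}\mathcal N(0,\sigma^2_{C,it,F})$. The remainder satisfies $r_{i,t}=o_p(N_{\text{eff}}^{-1/2}+T^{-1/2})$, hence is negligible under any scaling not faster than $\sqrt{\min(N_{\text{eff}},T)}$.

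Regimes (i) and (ii) then reduce to bookkeeping. In the $F$-dominant case ($N_{\text{eff}}/T\to0$), write $\sqrt{N_{\text{eff}}}\,\Delta_{\Lambda,it}=\sqrt{N_{\text{eff}}/T}\cdot\sqrt{T}\,\Delta_{\Lambda,it}=o_p(1)\cdot O_p(1)=o_p(1)$, and $\sqrt{N_{\text{eff}}}\,r_{i,t}=o_p(1)+o_p(\sqrt{N_{\text{eff}}/T})=o_p(1)$. Slutsky's lemma then delivers the stated limit from the marginal CLT for $\sqrt{N_{\text{eff}}}\,\Delta_{F,it}$. The $\Lambda$-dominant case is symmetric: $T/N_{\text{eff}}\to0$ kills $\sqrt{T}\,\Delta_{F,it}$ and the remainder, and the marginal CLT for $\sqrt{T}\,\Delta_{\Lambda,it}$ does the rest.

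Regime (iii) is the substantive step. With $N_{\text{eff}}/T\to c\in(0,\infty)$, both error channels contribute at the same stochastic order, so I rescale by $\sqrt{T}$ to obtain
\[
\sqrt{T}\,(\widehat C_{y,i,t}-C_{y,i,t})
=\sqrt{T/N_{\text{eff}}}\cdot\sqrt{N_{\text{eff}}}\,\Delta_{F,it}+\sqrt{T}\,\Delta_{\Lambda,it}+o_p(1),
\]
and establish \emph{joint} convergence of $(\sqrt{N_{\text{eff}}}\,\Delta_{F,it},\,\sqrt{T}\,\Delta_{\Lambda,it})$ to a bivariate normal. By Cram\'er--Wold, any linear combination of these two quantities equals, up to $o_p(1)$, the corresponding linear combination of the two underlying scores in Assumption~B.4: a cross-sectional sum $N_{\text{eff}}^{-1/2}\sum_j \Lambda_{j,y_s}^{(A)} e_{j,t}^{(A,B)}$ (already handled by a cross-sectional CLT in the proof of Theorem~\ref{thm2}(b)) and a time-series sum $T^{-1/2}\sum_s F_{y,s}^S e_{i,s}^{(A,B)}$ (already handled by a mixing CLT in the proof of Theorem~\ref{thm2}(a)). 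Assumption~B.4 forces the off-diagonal entry of the joint limiting covariance to be zero, so continuous mapping yields a Gaussian limit whose variance is the appropriate $c$-weighted sum of $\sigma^2_{C,it,\Lambda}$ and $\sigma^2_{C,it,F}$, matching the claimed expression.

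The main obstacle is precisely this joint CLT. Each marginal was handled in Theorem~\ref{thm2}, but the two scores are indexed differently (one summed over the cross-section $j$ at a fixed $t$, the other over time $s$ at a fixed $i$) and can in principle share dependence through the transformed idiosyncratic field $e^{(A,B)}$. Assumption~B.4 is exactly the covariance-vanishing condition that makes the limiting variance block-diagonal, but using it requires two intermediate checks: first, that the higher-order terms from the rotation $H_{y_s}^{(A)}$ of Lemma~\ref{lem1}(b) and from the plug-in replacement of $F_{y,t}^S$ and $\Lambda_{y_s,i}^{(A)}$ by their estimates are $o_p(1)$ after the $\sqrt{T}$ scaling (which follows because the rotation is itself $O_p(1)$ and the estimation errors are $o_p(1)$ under the growth conditions of Theorem~\ref{thm2}); second, that the bilinear cross term of order $T^{-1/2}N_{\text{eff}}^{-1/2}\sum_{s,j}(\cdot)$ vanishes in probability, which is exactly the content of Assumption~B.4. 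Everything else is a Slutsky and continuous-mapping exercise built on the two marginal CLTs already established.
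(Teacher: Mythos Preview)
Your proposal is correct and follows essentially the same route as the paper: the same three-term decomposition, the same use of Theorem~\ref{thm2}(a)--(b) to obtain the marginal limits $\sqrt{T}\,\Delta_{\Lambda,it}\Rightarrow\mathcal N(0,\sigma^2_{C,it,\Lambda})$ and $\sqrt{N_{\mathrm{eff}}}\,\Delta_{F,it}\Rightarrow\mathcal N(0,\sigma^2_{C,it,F})$, the same Slutsky bookkeeping for regimes (i) and (ii), and the same Cram\'er--Wold argument with Assumption~B.4 supplying the zero off-diagonal covariance in regime (iii). If anything, your discussion of the joint CLT obstacle and the role of B.4 is slightly more explicit than the paper's own proof.
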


The three regimes reflect the relative informativeness of the cross-sectional and time-series dimensions after attention weighting: factor estimation error dominates when the time dimension is abundant (regime~i), loading estimation error dominates when the effective cross section is abundant (regime~ii), and both contribute in the balanced case (regime~iii), with $c=\lim T/N_{\mathrm{eff}}$ scaling the factor-estimation term. Parts (ii) and (iii) are stated in studentized form because $\sigma^2_{C,it,\Lambda}$ depends on the factor realization $F_{y,t}^{S}$ at the fixed time index and is therefore random; the feasible confidence intervals of Subsection~\ref{Sec:sim_theory} implement this studentization with plug-in variance estimates.

\begin{remark}[Efficiency gains from transfer learning]\label{rem:efficiency}
For a fixed $Y$-unit $i$, the leading variance of the $Y$-strong common component $C_{y,i,t}=\Lambda_{y_s,i}^{(A)\top}F_{y,t}^S$ in the factor-dominant and mixed regimes of Theorem~\ref{thm3} is $\sigma^2_{C,it,F}$. The joint estimator based on $(X,Y)$ and a $Y$-only analogue converge at rates $\sqrt{N_{\mathrm{eff}}}$ and $\sqrt{N_{y,\mathrm{eff}}}$, so the meaningful comparison is between per-sample variances. Write $\Pi^{Y}=\sum_{i\in Y}\Lambda_{y_s,i}^{(A)}\Lambda_{y_s,i}^{(A)\top}$ and $\Pi^{X}=\sum_{i\in X}\Lambda_{y_s,i}^{(A)}\Lambda_{y_s,i}^{(A)\top}$ for the block signal matrices. Under the homogeneity conditions of \cite{pelger2024target_pca}, the per-sample variances reduce to $\sigma_t^2(\Pi^{Y}+\Pi^{X})^{-1}$ for the joint estimator and $\sigma_t^2(\Pi^{Y})^{-1}$ for the $Y$-only one, so that
\[
N_{\mathrm{eff}}^{-1}\,\sigma^2_{C,it,F}(\text{joint})
\;\le\;
N_{y,\mathrm{eff}}^{-1}\,\sigma^2_{C,it,F}(\text{$Y$-only}),
\]
with strict inequality whenever $(\Pi^{Y})^{-1}\Lambda_{y_s,i}^{(A)}$ has a nonzero projection onto $\mathcal S_X\equiv\mathrm{Range}(\Sigma_{y_s,x}^{(A)})$, the subspace spanned by the auxiliary signal (Assumption~B.3). The gain decomposes into a rate effect and a signal effect: outside $\mathcal S_X$ the faster rate and the diluted second moment offset, so uninformative auxiliary units neither help nor hurt. Directions in $\mathcal S_X$, by contrast, receive a genuine signal contribution of order $N_{x,\mathrm{eff}}$. Even without full spanning, transfer learning thus delivers strict efficiency gains on the target-signal components that overlap the auxiliary information, while identification of the $Y$-strong factors is preserved. The full derivation is given in Supplemental Appendix~\ref{appendixAtheor}.
\end{remark}
\begin{remark}[Relation to Target PCA]\label{rem:duan}
The closest analogue of Theorem~\ref{thm3} in \cite{pelger2024target_pca} is their common-component limit theory (their Theorem~2). A term-by-term comparison is unavailable: Target PCA allows target entries missing at random, so its variances split into ``observed'' and ``missing'' components, whereas our panels are fully observed. Setting $B=I_T$ and $A_z=\mathrm{diag}(I_{N_x},\sqrt{\gamma}\,I_{N_y})$ reduces $\widetilde Z$ to the scalar-weighted stacking $Z^{\gamma}=[X\ \sqrt{\gamma}\,Y]$ of \eqref{pelger_tpca}, and specializing their result to fully observed data makes the two limits coincide. The two frameworks generalize this shared core differently. Target PCA accommodates missing data under fixed two-block scalar weighting; Theorem~\ref{thm3} allows nontrivial $B\neq I_T$ and general $A_z$, with regimes stated in terms of $N_{\mathrm{eff}}$ rather than raw $N$, and the efficiency gains of Remark~\ref{rem:efficiency} play the role of the weight $\gamma$ (their Proposition~2).
\end{remark}

%%%%%%%%%%%%%%%%%%%%%%%%%%%%%%%%%%%%%%%%%%%%%
\section{Nonlinear estimation and simulation evidence}\label{Sec:simulations}
This section embeds the linear estimator of Sections~\ref{Sec:methodology} and \ref{Sec:inferential_theory} within a more general nonlinear architecture and describes its implementation for mixed-frequency data (Subsection~\ref{Sec:nonlinear_signals}). It then presents two complementary simulation exercises: Subsection~\ref{Sec:sim_theory} validates the inferential theory of Section~\ref{Sec:inferential_theory} in a controlled linear design and examines the boundary of its assumptions, and Subsection~\ref{Sec:sim_forecasting} evaluates forecasting performance under nonlinearity and mixed frequencies.

\subsection{Nonlinear signals and mixed frequencies}\label{Sec:nonlinear_signals}\label{Sec:transformer_implementation}
% Sections~\ref{Sec:methodology} and \ref{Sec:inferential_theory} focused on linear signals in equations \eqref{target_mat} and \eqref{aux_mat}. However, multiple studies suggest that nonlinearities are an important feature for many applications, including macroeconomic and financial forecasting. \cite{cheng2015forecasting} note that ``further forecast improvements will need to come from models with nonlinearities and/or time variation”; \cite{kelly2021autoencoder} point out that ``all leading theoretical asset pricing models predict nonlinearities in return dynamics as a function of state variables"; \cite{goulet2025panel} state that ``nonlinearity is the true game changer for macroeconomic prediction". 

Sections~\ref{Sec:methodology} and \ref{Sec:inferential_theory} cast linear signal extraction from attention-weighted panels as a representation learning problem, projecting the attended data $\widetilde Z$ onto a low-dimensional latent space that captures its dominant common components. Because a growing literature documents substantial nonlinearities in macroeconomic and financial forecasting \citep{cheng2015forecasting, kelly2021autoencoder, goulet2025panel}, we embed this linear framework within a more general representation learning architecture that allows nonlinear signal extraction while preserving the same latent-variable interpretation.

% A growing body of evidence, however, indicates that nonlinearities play a central role in many empirical settings, particularly in macroeconomic and financial forecasting. For example, \cite{cheng2015forecasting} argue that further gains in forecasting performance are likely to require models that incorporate nonlinear and/or time-varying dynamics. Similarly, \cite{kelly2021autoencoder} emphasize that leading theoretical asset pricing frameworks imply inherently nonlinear relationships between returns and state variables. More recently, \cite{goulet2025panel} highlight nonlinearity as a key driver of improvements in macroeconomic prediction. Motivated by this evidence, we extend the framework from Section~\ref{Sec:methodology} to allow for nonlinear signal structures.

An encoder with $M$ layers maps each attended cross section $\widetilde Z_t$ from \eqref{z_attn} to a $k$-dimensional representation through the \emph{encoding map} $\mathcal E_{\theta}:\mathbb{R}^{N_x+N_y}\to\mathbb{R}^k$,
\begin{equation}\label{encoder_map}
\mathcal E_{\theta}(\widetilde Z_t) = (\phi_{M,\theta}\circ\cdots\circ\phi_{1,\theta})(\widetilde Z_t),
\qquad
\phi_{m,\theta}(u) = g\big(b^{(m-1)} + W^{(m-1)} u\big), \quad m=1,\ldots,M,
\end{equation}
where $g(\cdot)$ is an elementwise activation function, $W^{(m-1)}$ and $b^{(m-1)}$ are weight matrices and bias vectors of appropriate dimensions with $k$ units in the final layer, and $\theta$ collects all parameters, estimated by minimizing the squared error of reconstructing $\widetilde Z$ from the $k$-dimensional representation. With identity activation and a single layer ($g(x)=x$, $M=1$), the encoder reduces to a linear autoencoder whose global minimizers coincide with the attention-weighted PCA estimator of Section~\ref{Sec:methodology} up to an invertible rotation (Proposition~\ref{prop3} in Supplemental Appendix~\ref{appendix_implementation}, which adapts the argument of \cite{kelly2021autoencoder}). The linear specification is therefore the estimator whose inferential theory Section~\ref{Sec:inferential_theory} develops, and the nonlinear specification differs from it only in the activation $g(\cdot)$; Figure~\ref{fig:linear_vs_nonlinear} in Supplemental Appendix~\ref{appendix_implementation} visualizes this nesting. We implement the nonlinear encoding map with a Transformer-based architecture \citep{vaswani2017attention, ouyang2022training}, in which attention mechanisms capture cross-sectional and temporal interactions and the feedforward layers $\phi_{m,\theta}$ compose them into nonlinear signals; the layer-by-layer recursion and the formal statement of the autoencoder-PCA equivalence are given in Supplemental Appendix~\ref{appendix_implementation}.

We do not develop a full inferential theory for the nonlinear encoder, since the analysis of deep nonlinear architectures remains challenging; instead, the nonlinear model directly generalizes the attention-weighted linear factor model, sharing its inputs, latent dimension, and reconstruction-based objective.

We now describe how MPTE is implemented in finite samples for the simulation and empirical analyses. To handle mixed frequencies without resampling or frequency-specific preprocessing, we relax the theory's restriction $T_x=T_y=T$ by embedding both panels into a single calendar-time-ordered sequence of variable-time pairs $\{(v_\ell,t_\ell)\}_{\ell=1}^{L}$, which a Transformer encoder \citep{vaswani2017attention} maps to the nonlinear encoding map \eqref{encoder_map}. A single learned attention operator now plays jointly the role of the temporal ($B$) and cross-sectional ($A_z$) operators of the theory, and when $g(\cdot)$ is the identity the estimator reduces, up to rotation, to linear MPTE. Forecasts are formed by a linear head on $\mathcal E_\theta(Z)$. We estimate all parameters, including the attention maps, on the training sample and hold them fixed thereafter (Remark~\ref{rem:sample_splitting}), and we select architectural and training hyperparameters by Bayesian optimization. The standardization, embedding, projection, and sinusoidal-encoding equations are given in Supplemental Appendix~\ref{appendix_implementation}. Figure~\ref{fig:wide_vs_long} in Supplemental Appendix~\ref{appendix_wide_vs_long} contrasts this long-sequence representation with the wide panel of the theory, and hyperparameter procedures and selected values are reported in Supplemental Appendix~\ref{appendixC_hyperparams}. The formal guarantees of Sections~\ref{Sec:methodology} and \ref{Sec:inferential_theory} do not extend to this mixed-frequency representation, but the resulting estimator remains well defined.

%%%%%%%%%%%%%%%%%%%%%%%%%%%%%%%%%%%%%%%%%%%%%
%%%%%%%%%%%%%%%%%%%%%%%%%%%%%%%%%%%%%%%%%%%%%

%%%%%%%%%%%%%%%%%%%%%%%%%%%%%%%%%%%%%%%%%%%%%
%%%%%%%%%%%%%%%%%%%%%%%%%%%%%%%%%%%%%%%%%%%%%

%%%%%%%%%%%%%%%%%%%%%%%%%%%%%%%%%%%%%%%%%%%%%
\subsection{Finite-sample validation of the inferential theory}\label{Sec:sim_theory}
Before turning to forecasting performance, we validate the inferential theory of Section~\ref{Sec:inferential_theory} on a controlled linear design with known ground truth. The data-generating process (DGP) is the union-factor model of Section~\ref{Sec:methodology}: the concatenated panel $Z=[X \;\; Y]$ has $k=4$ common factors, of which $k_{y_s}=2$ are $Y$-strong and $k_R=2$ load only on the $X$ block. The factors follow two independent VAR(1) blocks with autoregressive parameter $0.5$ and innovations scaled to unit unconditional variance. We draw the $Y$ loadings $\mathcal N(0,1)$ on the two $Y$-strong factors and set them to zero on the rest, matching Assumption~B.1 and, at the identity operator, the exact form of the asymptotic exclusion condition of Theorem~\ref{thm2}; we draw the $X$ loadings $\mathcal N(0,1)$ on the first $Y$-strong factor and on the two remaining factors, so the two blocks overlap through a single shared direction. Idiosyncratic noise is i.i.d.\ $\mathcal N(0,2)$, giving roughly an even signal-to-noise split. The baseline dimensions are $N_x=100$, $N_y=50$, and $T=200$, with experiment-specific grids described below; every reported quantity averages over $2{,}000$ Monte Carlo replications, and the full specification is in Supplemental Appendix~\ref{appendix_simulation_design}.
% CONFIRM with code: factor innovation variance (0.75 for unit unconditional variance at persistence 0.5), noise variance 2, baseline T=200.

The consistency experiment runs under three attention-operator configurations that together span the assumption set of Section~\ref{Sec:inferential_theory}; the inference experiment uses the oracle and learned configurations, together with deterministic transformations of the learned operators introduced below. The \emph{oracle} configuration sets $A_z=I_N$ and $B=I_T$, under which the estimator reduces to standard PCA and Assumptions A.1--A.7 hold exactly; it verifies the simulation and variance machinery, so departures in the other configurations are attributable to the operators rather than the setup. The \emph{parameter-free} configuration uses the simplest attention of Section~\ref{Sec:methodology}, setting $Q=K=Z$ in \eqref{z_tilde_attn} so that the operators carry no trained parameters beyond the scaling $c_N$, $c_T$. The \emph{learned} configuration reads the attention operators off a model trained on an independent panel from the same DGP, then freezes and trace-normalizes them so that $\operatorname{tr}(A_z^\top A_z)/N$ is fixed, following the sample-splitting convention of Remark~\ref{rem:sample_splitting}. The two data-driven configurations are not constructed to satisfy Assumption~A.7, and they do not: the trained attention concentrates on a small set of hub variables, so $\|A_z\|_{\mathrm{op}}$ grows with $N$ and both configurations fall outside the theory's sufficient conditions. The concentration is not imposed; it emerges from training and mirrors the empirical attention maps in Subsection~\ref{Sec:attention_patterns}. The three configurations therefore probe both sides of the boundary discussed in Remark~\ref{rem:weak_A7}.

\paragraph*{Consistency.} The first experiment tracks estimation error as the panel grows along $N=T\in\{80,160,320,640,1000\}$. We report the relative error of the common component, $\|\widehat C - C\|_F^2/\|C\|_F^2$, together with the analogous quantities for factors and loadings; the relative form is invariant to the operator-dependent scale of the attended panel and thus comparable across configurations. Figure~\ref{fig:e1_rate} shows all three quantities declining on log-log axes for every operator configuration: the common component, the rotation-free object, declines with slopes between $-0.86$ and $-1.08$ across configurations, consistent with the parametric rate $\bar\alpha = 1/T + 2/N$ of Theorem~\ref{thm1} at identity operators, while the factor and loading slopes deviate from $-1$ individually, in offsetting directions consistent with the configuration-dependent rotation of the factor-loading split. The fourth panel shows $\|A_z\|_{\mathrm{op}}$ growing steadily for the two data-driven configurations.
Convergence at the full rate despite the growing operator norm reflects Remark~\ref{rem:weak_A7}: the term of $\bar\alpha$ carrying the operator norm is conservative when the leading singular vectors of $A_z$ are not aligned with the loading space. The fitted slopes and a companion plot of the error against each configuration's realized $\bar\alpha$ are reported in Supplemental Appendix~\ref{appendix_theory_validation}.

\begin{figure}[h]
\centering
\includegraphics[width=0.8\linewidth]{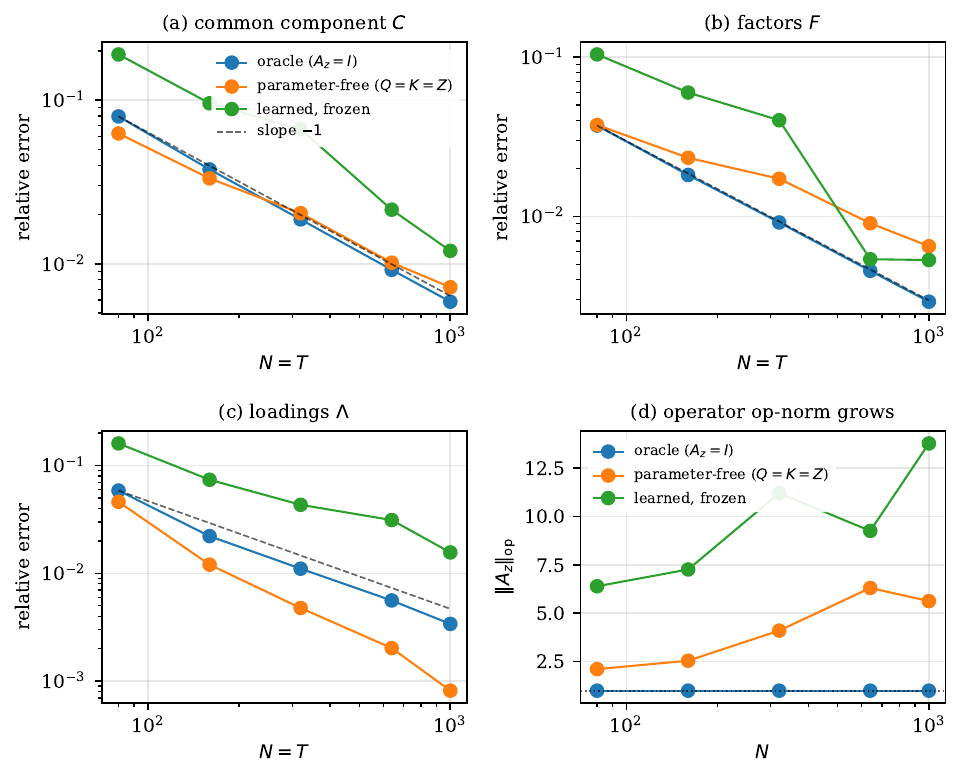}
\caption{Consistency under the three operator configurations. Panels (a) to (c): relative estimation error of the common component, factors, and loadings against $N=T$ on log-log axes, with a slope $-1$ reference line; panel (d): the cross-sectional operator norm $\|A_z\|_{\mathrm{op}}$, which grows for the data-driven configurations while the rates in (a) to (c) are unaffected. Averages over $2{,}000$ Monte Carlo replications.}
\label{fig:e1_rate}
\end{figure}

\paragraph*{Inference.} The second experiment examines the distributional theory of Theorems~\ref{thm2} and \ref{thm3}: empirical coverage of confidence intervals for the $Y$-strong common component $C_{y,i,t}$, studentized with the plug-in variance of Theorem~\ref{thm3}(iii), in three $(N,T)$ regimes chosen so that the factor term dominates $(N,T)=(50,400)$, the loading term dominates $(400,50)$, or both matter $(200,200)$. Under the oracle configuration, where the assumptions hold exactly, coverage is near nominal in all three regimes.
Under the learned configuration the i.i.d.-collapse plug-in undercovers sharply: once the panel is passed through a concentrated $A_z$, the attended noise $B e A_z$ is no longer i.i.d., so intervals built on that collapse have the wrong width. A feasible general plug-in that rebuilds the variance with the known operator structure recovers much of the missing coverage in the $F$- and $\Lambda$-dominant regimes, though far less in the mixed regime, and coverage plateaus below nominal in all three (Table~\ref{tab:e2_gate}): the concentrated operator keeps the effective cross-sectional dimension bounded as $N$ grows, so a finite-sample bias of the common component shrinks no faster than the standard error and their ratio does not vanish. The growth conditions of Theorems~\ref{thm2} and \ref{thm3} exclude this failure mode, which is why they retain Assumption~A.7 while consistency does not: the same learned configuration converges at the full rate above. A block-restricted configuration that zeroes the cross-blocks of $A_z$, so that the asymptotic exclusion condition of Theorem~\ref{thm2} holds exactly, does not restore coverage in any variance channel (Table~\ref{tab:e2_gate}): what binds is concentration, not cross-block leakage. A \emph{blend} configuration that shrinks the frozen attention toward the identity (a deterministic transformation of the learned operators) restores the effective rank Assumption~A.7 requires; coverage comes within two percentage points of nominal in all three regimes, within about a point of the oracle configuration's own finite-sample coverage. Sweeping the shrinkage, coverage rises toward nominal as the operators re-enter Assumption~A.7's domain (Figure~\ref{fig:e2_delta} in Supplemental Appendix~\ref{appendix_theory_validation}), yielding a practical recipe for inference under estimated attention: shrink the learned operators toward the identity. The restored configuration is not close to the identity: the blend keeps an operator norm near $3$ and $\mathrm{PR}/N$ near $0.35$ (Table~\ref{tab:e2_gate}), so the sweep varies the concentration that Assumption~A.7 restricts, not the presence of attention. Coverage throughout is for the common component of the frozen design in use, fixed on the training sample before inference (Remark~\ref{rem:sample_splitting}). Detailed coverage tables, QQ plots, bias diagnostics, and the projected-operator constructions are collected in Supplemental Appendix~\ref{appendix_theory_validation}.

\begin{table}[!tbp]
\centering
% Rows derived from figs/tabs/e2_gate7.tex (July 8, 2026), omitting the clipped configuration; if e2_gate7.tex is regenerated, sync these numbers.
{\scriptsize\setlength{\tabcolsep}{4pt}
\begin{tabular}{l l c c c c c}
\toprule
Regime & Operator configuration & $\|A_z\|_{\mathrm{op}}$ & PR/$N$ & iid & general & MC \\
\midrule
$F$-dominant & oracle ($A_z{=}I$) & 1.0 & 1.000 & 0.936 & 0.936 & -- \\
$F$-dominant & learned, raw & 4.5 & 0.104 & 0.241 & 0.879 & 0.949 \\
$F$-dominant & learned, block-restricted & 3.8 & 0.140 & 0.285 & 0.860 & 0.949 \\
$F$-dominant & learned, blend & 2.7 & 0.334 & 0.542 & 0.932 & 0.954 \\
\midrule
$\Lambda$-dominant & oracle ($A_z{=}I$) & 1.0 & 1.000 & 0.941 & 0.941 & -- \\
$\Lambda$-dominant & learned, raw & 10.1 & 0.029 & 0.515 & 0.825 & 0.939 \\
$\Lambda$-dominant & learned, block-restricted & 8.7 & 0.035 & 0.606 & 0.787 & 0.922 \\
$\Lambda$-dominant & learned, blend & 3.1 & 0.363 & 0.834 & 0.930 & 0.948 \\
\midrule
Mixed & oracle ($A_z{=}I$) & 1.0 & 1.000 & 0.947 & 0.947 & -- \\
Mixed & learned, raw & 8.8 & 0.028 & 0.170 & 0.433 & 0.697 \\
Mixed & learned, block-restricted & 8.5 & 0.032 & 0.132 & 0.405 & 0.643 \\
Mixed & learned, blend & 2.8 & 0.359 & 0.718 & 0.935 & 0.945 \\
\bottomrule
\end{tabular}}
\caption{Empirical coverage of the $95\%$ confidence interval for the $Y$-strong common component by $(N,T)$ regime and operator configuration, under the i.i.d.-collapse plug-in, the feasible general plug-in, and the infeasible Monte Carlo width, whose residual shortfall isolates finite-sample bias; $\|A_z\|_{\mathrm{op}}$ and $\mathrm{PR}/N$ are the Assumption~A.7 diagnostics. The block-restricted configuration zeroes the $X$-to-$Y$ cross-blocks of $A_z$; the blend configuration shrinks the frozen operators toward the identity (Supplemental Appendix~\ref{appendix_simulation_design}). Monte Carlo standard errors are about $0.5$ percentage points. Table~\ref{tab:e2_gate_full} adds the clipped configuration.}
\label{tab:e2_gate}
\end{table}

\paragraph*{Efficiency gains from transfer.} The third experiment quantifies the value of the auxiliary panel for the target block (Remark~\ref{rem:efficiency}). Holding $N_y=50$ and $T=200$ fixed, we vary $N_x\in\{25,50,100,200,400\}$ and compare the mean squared error of $\widehat C_{y,i,t}$ from the joint $(X,Y)$ estimator against a $Y$-only estimator. The MSE ratio ($Y$-only over joint) rises with $N_x$ and saturates around $1.15$, consistent with the design: the two blocks share a single factor direction, so the auxiliary cross-section can sharpen estimation of that direction only. Once that direction is well estimated, further growth in $N_x$ adds nothing. At the smallest auxiliary sizes the ratio falls below one, to $0.91$ at $N_x=25$, a $9\%$ efficiency loss: the joint estimator must also fit the two $X$-only factors, and when $N_x$ is small that estimation burden outweighs the shared-direction information; the gains turn positive once $N_x$ exceeds $N_y$. The full table is reported in Supplemental Appendix~\ref{appendix_theory_validation}.

\paragraph*{Nonlinear estimation on linear truth.} The final experiment trains the nonlinear encoder of Subsection~\ref{Sec:nonlinear_signals} with a single-target forecast objective on data from the linear DGP, where linear MPTE is correctly specified, freezes it, and compares its latent state and the linear PCA factors, both evaluated on out-of-sample panels, against the true attended factors $F^{(B)}=BF$ by block-wise canonical correlations at $N=T=300$ (Table~\ref{tab:e4_block} in Supplemental Appendix~\ref{appendix_theory_validation}). The linear estimator recovers the full factor space, with canonical correlations between $0.896$ and $0.991$ across blocks, as Theorem~\ref{thm1} implies for the estimator the encoder nests. The nonlinear latent recovers the leading target-relevant direction, with a first $Y$-strong canonical correlation of $0.98$ but a second of only $0.59$, in line with the prediction that a single-target objective identifies the target's loading combination of the $Y$-strong factors rather than the full block; the remaining latent dimensions are not pinned down by the objective and align with the panel's other common variation, with a first rest-block canonical correlation of $0.92$.

%%%%%%%%%%%%%%%%%%%%%%%%%%%%%%%%%%%%%%%%%%%%%
\subsection{Forecasting under nonlinearity and mixed frequencies}\label{Sec:sim_forecasting}
To assess the finite-sample performance of MPTE under controlled forms of nonlinearity and mixed-frequency information, we consider a DGP inspired by \cite{lin2024ijf}. Latent factors $F_t\in\mathbb{R}^q$ follow a stable linear VAR(2) generated at the high frequency, and each low-frequency index $t'$ is associated with the high-frequency time $t=rt'$, where $r$ is the ratio between the high- and low-frequency sampling intervals (e.g.\ $r=3$ for monthly and quarterly frequencies). Conditional on $\{F_t\}$, we generate a high-frequency panel $\{X_t\}$ following a VAR($L_x$) and a low-frequency panel $\{Y_{t'}\}$ following a VAR($L_y$), each loading on (possibly nonlinear transformations of) the latent factors through distributed lags:
\begin{equation}
\begin{aligned}
X_t &= \sum_{\ell=1}^{L_x} A_\ell X_{t-\ell}
      + \sum_{j=0}^{q_{fx}} \Lambda_{x,j}\, h(F_{t-j}) + \eta_t,
      && \eta_t \sim t_\nu(0,\,\Sigma_\eta), \  X_t \in \mathbb{R}^{N_x} \\[0.4em]
Y_{t'} &= \sum_{\ell=1}^{L_y} C_\ell Y_{t'-\ell}
      + \sum_{j=0}^{q_{fy}} \Lambda_{y,j}\, h(F_{rt'-j}) + \xi_{t'},
      && \xi_{t'} \sim \mathcal{N}(0,\,\Sigma_\xi), \  Y_{t'} \in \mathbb{R}^{N_y}.
\end{aligned}
\end{equation}
We rescale the autoregressive matrices for stability, the loading matrices $\Lambda_{x,j}$ and $\Lambda_{y,j}$ use Almon polynomial lag weights to induce smoothly decaying factor effects, and we scale the noise variances for comparable signal-to-noise ratios across designs. The full specification, including the latent factor VAR(2), is given in Supplemental Appendix~\ref{appendix_simulation_design}.

To control the degree of nonlinearity in the factor-observable relationship, we specify the latent transformation $h(\cdot)$ as the identity map $h(F_t)=F_t$ in the linear case, and in the nonlinear case as the vector of standardized radial basis function (RBF) features
\begin{equation}
h(F_t) = \Big(\exp\!\big(-\rho \lVert F_t - c_j \rVert^2\big)\big/\sigma_j\Big)_{j=1}^{J},
\label{Eq:rbf}
\end{equation}
where $\sigma_j$ standardizes the $j$-th RBF feature to unit variance; the centers $\{c_j\}_{j=1}^J$ and bandwidth $\rho$ are specified in Supplemental Appendix~\ref{appendix_simulation_design}.

We consider three simulation regimes that share the same DGP and differ only in $h(\cdot)$. The linear design sets $h(\cdot)$ to the identity map. The mildly and highly nonlinear designs use the RBF specification in \eqref{Eq:rbf} with $J=6$ and $J=12$ basis functions, respectively, one per RBF center.

Across all regimes, we generate 5{,}000 high-frequency observations, using the first 4{,}000 after burn-in for training, reserving 10\% of it for validation, and the remaining 1{,}000 for out-of-sample evaluation. We report results for the first low-frequency target $Y_1$. Because the other targets share the same DGP and differ only in their factor loadings and idiosyncratic shocks, forecasting them yields qualitatively similar results\footnote{The implementation used for the simulation exercises in this section, as well as for the empirical analysis in the next section, is based on a unified codebase\if1\anon{ available at \url{https://github.com/Alessiobrini/mixed-panels-transformer-encoder}}\fi\if0\anon{ whose repository link is withheld to preserve anonymity during peer review}\fi.}.
% on GitHub. The link will be disclosed upon publication because of double blind peer review purpose.}

We compare MPTE against two benchmarks: (i) a univariate autoregressive model (AR) with order selected by the Bayesian Information Criterion (BIC), and (ii) an unrestricted linear MIDAS specification, in which high-frequency predictors enter through unrestricted distributed lags without nonlinear transformations or parametric lag-weighting functions. We also evaluate ablation variants of MPTE. In the main text we report the three that bear directly on the paper's claims: \textbf{AB1} removes the nonlinear transformations, leaving feedforward layers without activation functions, and is thus the linear specification nested by Subsection~\ref{Sec:nonlinear_signals}; \textbf{AB2} removes the attention mechanism, reducing the encoder to a stack of feedforward layers, and thus isolates the contribution of the attention operators; \textbf{AB3} retains only the low-frequency block, excluding high-frequency inputs, and thus isolates the contribution of the high-frequency panel. Because the theory-validation experiments of Subsection~\ref{Sec:sim_theory} hold the sampling frequency fixed, AB3 carries the direct evidence on the value of mixed-frequency information. Two further ablations, AB4 (the joint removal of attention and nonlinearity) and AB5 (the removal of temporal encoding), are defined and discussed in Supplemental Appendix~\ref{appendix_sim_ablations}.
We report forecasting performance using root mean squared error (RMSE), mean absolute error (MAE), and directional accuracy (DA), defined as the fraction of periods in which the predicted and actual values of the target series share the same sign. We train all MPTE specifications and ablation variants using automated hyperparameter optimization, with the optimization procedure and search spaces detailed in Supplemental Appendix~\ref{appendixC_hyperparams}.

Table~\ref{Tab:evals_simulation} summarizes forecasting performance across the three designs, with each entry averaged over 100 replications. In the linear setting, MPTE already attains slightly lower RMSE and MAE than MIDAS, and the linear ablation AB1 performs comparably, so the nonlinear components are not essential when the DGP is linear. In the mildly and highly nonlinear designs, MPTE delivers clear RMSE and MAE gains over both MIDAS and AR. The directional-accuracy gains in the highly nonlinear regime are more modest, with some ablations comparable. Overall, combining attention-based signal extraction with mixed-frequency information helps most when the data are nonlinear: the RMSE and MAE gains over MIDAS are largest in the two nonlinear designs, and MPTE's advantage over its linear ablation widens monotonically as nonlinearity increases.

% Auto-generated by src/evaluation/aggregate_replications.py
% Mean over 100 Monte Carlo replications, with SD on the second line per cell.
\begin{table}[!tbp]
\centering
{\scriptsize\setlength{\tabcolsep}{4pt}\begin{tabular}{lccccccccc}
\toprule
& \multicolumn{3}{c}{\textbf{Linear}} & \multicolumn{3}{c}{\textbf{Mildly Nonlinear}} & \multicolumn{3}{c}{\textbf{Highly Nonlinear}} \\
\midrule
& RMSE & MAE & DA & RMSE & MAE & DA & RMSE & MAE & DA \\
\midrule
MPTE
& \best{1.1778} & \best{0.9415} & 0.6008 & \best{1.2226} & \best{0.9750} & \best{0.6069} & \best{1.3682} & \best{1.0952} & 0.6179 \\
& {\scriptsize (0.0903)} & {\scriptsize (0.0716)} & {\scriptsize (0.1152)} & {\scriptsize (0.1379)} & {\scriptsize (0.1107)} & {\scriptsize (0.0836)} & {\scriptsize (0.2474)} & {\scriptsize (0.2026)} & {\scriptsize (0.0858)} \\
\addlinespace
AR
& 1.3669 & 1.0892 & 0.0959 & 1.3413 & 1.0727 & 0.0805 & 1.5430 & 1.2385 & 0.0892 \\
& {\scriptsize (0.3908)} & {\scriptsize (0.3004)} & {\scriptsize (0.1388)} & {\scriptsize (0.1951)} & {\scriptsize (0.1643)} & {\scriptsize (0.1136)} & {\scriptsize (0.3025)} & {\scriptsize (0.2537)} & {\scriptsize (0.1228)} \\
\addlinespace
MIDAS
& 1.2238 & 0.9777 & 0.5980 & 1.3020 & 1.0367 & 0.5609 & 1.4212 & 1.1348 & 0.6015 \\
& {\scriptsize (0.0798)} & {\scriptsize (0.0649)} & {\scriptsize (0.0871)} & {\scriptsize (0.1292)} & {\scriptsize (0.1034)} & {\scriptsize (0.0578)} & {\scriptsize (0.2290)} & {\scriptsize (0.1869)} & {\scriptsize (0.0610)} \\
\addlinespace
AB1
& 1.1924 & 0.9516 & 0.5941 & 1.2402 & 0.9884 & 0.5967 & 1.4011 & \second{1.1165} & \second{0.6192} \\
& {\scriptsize (0.0912)} & {\scriptsize (0.0728)} & {\scriptsize (0.1137)} & {\scriptsize (0.1345)} & {\scriptsize (0.1088)} & {\scriptsize (0.0800)} & {\scriptsize (0.2407)} & {\scriptsize (0.1947)} & {\scriptsize (0.0639)} \\
\addlinespace
AB2
& 1.1902 & 0.9507 & \second{0.6087} & 1.2834 & 1.0250 & 0.5467 & \second{1.3974} & 1.1179 & \best{0.6210} \\
& {\scriptsize (0.0965)} & {\scriptsize (0.0767)} & {\scriptsize (0.1048)} & {\scriptsize (0.1684)} & {\scriptsize (0.1373)} & {\scriptsize (0.0798)} & {\scriptsize (0.2854)} & {\scriptsize (0.2366)} & {\scriptsize (0.0740)} \\
\addlinespace
AB3
& \second{1.1803} & \second{0.9416} & \best{0.6267} & \second{1.2328} & \second{0.9839} & \second{0.5976} & 1.4334 & 1.1498 & 0.5986 \\
& {\scriptsize (0.1026)} & {\scriptsize (0.0806)} & {\scriptsize (0.1007)} & {\scriptsize (0.1475)} & {\scriptsize (0.1223)} & {\scriptsize (0.0860)} & {\scriptsize (0.3092)} & {\scriptsize (0.2573)} & {\scriptsize (0.0763)} \\
\addlinespace
\bottomrule
\end{tabular}}
\caption{Forecasting accuracy for the first low-frequency target $Y_1$ across linear, mildly nonlinear, and highly nonlinear simulation designs, each with 30 high-frequency regressors and 5 low-frequency targets, $Y_1$ predicted from its own lags and the remaining regressors. Each entry averages 100 replications of the DGP under different random seeds, with standard deviations in parentheses. Dark green indicates the best and light green the second-best within each column, among the models shown; the complete table including the remaining ablations is in Supplemental Appendix~\ref{appendix_sim_ablations}.}
\label{Tab:evals_simulation}
\end{table}

The three reported ablations isolate the components most relevant to the paper's claims. Removing nonlinear transformations (AB1) raises errors in the nonlinear designs but has little systematic effect in the linear one, so nonlinear feature construction helps primarily when the DGP warrants it. Disabling attention (AB2) is nearly costless in the linear design but raises RMSE and MAE by about five percent in the mildly nonlinear design, where directional accuracy also falls by six percentage points, and by two percent under strong nonlinearity, so the attention operators contribute precisely where nonlinear signal extraction is at work. Restricting the model to low-frequency inputs (AB3) is nearly costless in the linear and mildly nonlinear designs but opens a clear gap under strong nonlinearity, so the high-frequency predictors that share latent factors with the target contribute mainly when the signal is genuinely nonlinear. The remaining ablations, reported in Supplemental Appendix~\ref{appendix_sim_ablations}, show that no single component is uniformly decisive, with removing both attention and nonlinearity degrading accuracy beyond removing either alone, and that the temporal encoding matters in every design.

%%%%%%%%%%%%%%%%%%%%%%%%%%%%%%%%%%%%%%%%%%%%%%%%%%%%%%%%%%%%
\section{Empirical evidence from U.S. macroeconomic data}\label{Sec:empirical}
To evaluate MPTE in a realistic forecasting environment, we apply it to the macroeconomic database of \cite{DataMcCracken} (FRED-QD and FRED-MD), a large panel of U.S.\ macroeconomic series observed at quarterly and monthly frequencies. This mixed-frequency structure is typical of macroeconomic applications, where key aggregates are observed at low frequency while many indicators are available at higher frequency.

We study the period from 1959:Q1 to 2025:Q1, corresponding to 1959:M1 to 2025:M3, using the current vintage as of 2025:M3. From this database, we select thirteen quarterly variables as forecast targets: GDPC1, GPDIC1, PCECC96, DPIC96, OUTNFB, UNRATE, PCECTPI, PCEPILFE, CPIAUCSL, CPILFESL, FPIx, EXPGSC1, and IMPGSC1. We report the complete list of monthly and quarterly regressors used in the empirical analysis, together with brief descriptions and their category, in Supplemental Appendix~\ref{appendixD_regressors}.

We conduct the forecasting exercise at a quarterly frequency, predicting the next quarterly realization of each target. For each target observation, the model receives all monthly and quarterly observations within a fixed two-year \emph{context window}, the finite history of lagged observations on which the model conditions when forming a prediction. Because monthly indicators are released ahead of the quarterly target, this window reaches into the forecasted quarter, including the within-quarter monthly observations already available when the forecast is made, namely the first two months of the quarter (the high-frequency lead), while the quarterly series are observed only through the preceding quarter. MPTE and MIDAS exploit this within-quarter information directly, whereas the quarterly-only benchmarks (AR, OLS, XGB, and NN) condition on data through the most recent completed quarter. The fixed window is the finite-sample counterpart of the time domain on which the temporal operator $B$ acts.

As in the simulations, we split the sample sequentially: the first 80\% of observations for training, with the final 10\% of the training sample reserved for validation, and the remaining 20\% held out for out-of-sample evaluation. We estimate all attention parameters on the training sample and hold them fixed during evaluation (Remark~\ref{rem:sample_splitting}), and we select hyperparameters by automated optimization run separately for each target and model specification (Supplemental Appendix~\ref{appendixC_hyperparams}).

We compare MPTE with the same AR\footnote{We generate the AR forecasts recursively from the end of the training sample, so over the evaluation window they revert toward the unconditional mean; their directional accuracy is therefore not directly comparable to the one-step forecasts of the other models, and we retain AR as a parsimonious univariate benchmark.} and unrestricted linear MIDAS benchmarks used in the simulation study, and we add three competing models: Ordinary Least Squares (OLS), eXtreme Gradient Boosting (XGB) \citep{chen2016xgboost}, and a feedforward neural network (NN). Because OLS, XGB, and NN do not natively support mixed-frequency inputs, we estimate them on quarterly predictors only, aggregating each monthly variable to the quarterly frequency by its last within-quarter value, so that the monthly information is retained rather than discarded and all three are trained on a common quarterly information set.

We evaluate forecasting performance using the same metrics as in Subsection~\ref{Sec:sim_forecasting}: RMSE, MAE, and DA. The full evaluation period runs from June 2012 to March 2025. To assess stability around the COVID-19 period, we date each forecast by its target timestamp and split the sample into a pre-COVID subsample, covering forecasts dated up to June 2019, and a COVID and post-COVID subsample, covering forecasts dated after June 2019, applied consistently across all models and metrics. We also report pairwise Diebold--Mariano comparison tests \citep{diebold_test} and Model Confidence Set (MCS) results in Supplemental Appendix~\ref{appendixB_DM_MCS}.

Table~\ref{Tab:empirical1} reports out-of-sample performance at the individual target level for the series on which MPTE attains the lowest full-sample RMSE: GPDIC1, OUTNFB, PCECTPI, PCEPILFE, CPIAUCSL, and CPILFESL, spanning real-activity and price-related series. The relative RMSE ranking is broadly stable across subsamples, with MPTE's advantage concentrated in the full sample and the post-COVID period.

We collect the complementary results in Supplemental Appendix~\ref{appendix_additional_empirical}: the per-series performance for targets on which competing models attain a lower RMSE than MPTE (Table~\ref{Tab:empirical2}), a count of the best-performing model across all thirteen targets and subsamples (Table~\ref{Tab:empirical3}), representative out-of-sample forecast paths (Figure~\ref{Fig:preds}), and the empirical ablation analysis. The ablations show that the relevance of each architectural component is highly target dependent, with no single element uniformly decisive across series, a heterogeneity we explore next by inspecting the model's learned attention patterns.

%%%TABLE 1 (with best/second-best highlighting)
% Auto-generated by build_empirical_tables.py -- do not edit by hand.
\begin{table}[!ht]
\centering
{\scriptsize\setlength{\tabcolsep}{4pt}\begin{tabular}{l ccc ccc ccc}
\toprule
& \multicolumn{3}{c}{\textbf{Full}} & \multicolumn{3}{c}{\textbf{Pre-COVID}} & \multicolumn{3}{c}{\textbf{Post-COVID}} \\
\midrule
& RMSE & MAE & DA & RMSE & MAE & DA & RMSE & MAE & DA \\
\midrule
\textbf{GPDIC1} & & & & & & & & & \\ \midrule
MPTE & \best{0.0286} & \second{0.0212} & \best{0.7059} & 0.0199 & 0.0160 & \best{0.7857} & \best{0.0367} & \best{0.0278} & \best{0.6364} \\
AR & \second{0.0366} & \best{0.0210} & 0.1961 & \best{0.0148} & \best{0.0122} & 0.3571 & \second{0.0524} & \second{0.0322} & 0.0000 \\
MIDAS & 0.0838 & 0.0448 & 0.5294 & 0.0381 & 0.0312 & 0.5000 & 0.1186 & 0.0619 & 0.5455 \\
OLS & 0.2714 & 0.1578 & \second{0.5882} & 0.1290 & 0.1103 & \second{0.6071} & 0.3815 & 0.2177 & 0.5455 \\
XGB & 0.0385 & 0.0223 & 0.5294 & \second{0.0167} & \second{0.0138} & 0.4643 & 0.0547 & 0.0330 & \second{0.5909} \\
NN & 0.0954 & 0.0548 & 0.5098 & 0.0349 & 0.0302 & 0.4643 & 0.1379 & 0.0859 & 0.5455 \\
\midrule
\textbf{OUTNFB} & & & & & & & & & \\ \midrule
MPTE & \best{0.0186} & 0.0087 & 0.5200 & \best{0.0049} & \second{0.0039} & 0.5000 & \best{0.0278} & 0.0151 & \second{0.5238} \\
AR & 0.0205 & \second{0.0083} & 0.2800 & \best{0.0049} & \best{0.0038} & 0.5000 & 0.0307 & \second{0.0141} & 0.0000 \\
MIDAS & 0.0202 & 0.0114 & \best{0.5600} & 0.0103 & 0.0082 & \second{0.5357} & \second{0.0285} & 0.0157 & \best{0.5714} \\
OLS & 0.1537 & 0.0832 & \second{0.5400} & 0.0682 & 0.0532 & 0.5000 & 0.2205 & 0.1226 & \best{0.5714} \\
XGB & \second{0.0191} & \best{0.0081} & 0.5200 & \second{0.0052} & 0.0044 & \best{0.5714} & \second{0.0285} & \best{0.0130} & 0.4762 \\
NN & 0.0956 & 0.0635 & 0.5000 & 0.0487 & 0.0380 & \best{0.5714} & 0.1344 & 0.0970 & 0.3810 \\
\midrule
\textbf{PCECTPI} & & & & & & & & & \\ \midrule
MPTE & \best{0.0017} & \best{0.0012} & \best{0.8824} & \best{0.0011} & \best{0.0009} & \best{0.9286} & \best{0.0022} & \best{0.0015} & \second{0.8182} \\
AR & \second{0.0036} & \second{0.0028} & 0.5882 & \second{0.0031} & \second{0.0025} & \second{0.7143} & \second{0.0041} & \second{0.0032} & 0.4545 \\
MIDAS & 0.0040 & 0.0030 & \second{0.7647} & 0.0032 & 0.0026 & 0.6786 & 0.0048 & 0.0036 & \best{0.8636} \\
OLS & 0.0325 & 0.0192 & 0.4902 & 0.0138 & 0.0109 & 0.5357 & 0.0463 & 0.0298 & 0.4091 \\
XGB & 0.0068 & 0.0041 & 0.6863 & 0.0042 & 0.0027 & 0.6429 & 0.0091 & 0.0059 & 0.7727 \\
NN & 0.1070 & 0.0719 & 0.6471 & 0.0513 & 0.0369 & 0.6786 & 0.1501 & 0.1162 & 0.5909 \\
\midrule
\textbf{PCEPILFE} & & & & & & & & & \\ \midrule
MPTE & \best{0.0022} & \best{0.0015} & \best{0.7451} & \best{0.0013} & \best{0.0011} & \best{0.7500} & \best{0.0030} & \best{0.0021} & \best{0.7273} \\
AR & \second{0.0024} & \second{0.0018} & 0.2941 & \best{0.0013} & \second{0.0012} & 0.4643 & \second{0.0034} & \second{0.0025} & 0.0909 \\
MIDAS & 0.0075 & 0.0034 & \second{0.6471} & 0.0020 & 0.0016 & \second{0.6786} & 0.0110 & 0.0056 & \second{0.6364} \\
OLS & 0.0260 & 0.0146 & 0.5098 & 0.0107 & 0.0086 & 0.5714 & 0.0372 & 0.0221 & 0.4545 \\
XGB & 0.0025 & \second{0.0018} & 0.5490 & \second{0.0014} & \best{0.0011} & 0.4643 & \second{0.0034} & \second{0.0025} & \second{0.6364} \\
NN & 0.1618 & 0.0991 & 0.5490 & 0.0687 & 0.0540 & 0.5357 & 0.2307 & 0.1561 & 0.5455 \\
\midrule
\textbf{CPIAUCSL} & & & & & & & & & \\ \midrule
MPTE & \best{0.0046} & \best{0.0037} & \best{0.7843} & \best{0.0038} & \best{0.0032} & \best{0.8214} & \best{0.0056} & \best{0.0043} & \second{0.7273} \\
AR & \second{0.0054} & \second{0.0039} & 0.5686 & \second{0.0043} & \second{0.0034} & 0.6429 & \second{0.0065} & \second{0.0045} & 0.5000 \\
MIDAS & 0.0062 & 0.0048 & 0.7059 & 0.0050 & 0.0037 & 0.6429 & 0.0075 & 0.0061 & \best{0.7727} \\
OLS & 0.0379 & 0.0220 & 0.5686 & 0.0156 & 0.0135 & 0.5714 & 0.0543 & 0.0326 & 0.5455 \\
XGB & 0.0117 & 0.0064 & \second{0.7647} & 0.0068 & 0.0037 & \second{0.7857} & 0.0158 & 0.0098 & \best{0.7727} \\
NN & 0.1425 & 0.0948 & 0.5882 & 0.0934 & 0.0800 & 0.6071 & 0.1869 & 0.1134 & 0.5455 \\
\midrule
\textbf{CPILFESL} & & & & & & & & & \\ \midrule
MPTE & \best{0.0033} & \second{0.0020} & \best{0.6471} & 0.0017 & 0.0013 & \best{0.6071} & \best{0.0045} & \best{0.0030} & \best{0.7273} \\
AR & 0.0036 & 0.0021 & 0.4902 & \second{0.0015} & \second{0.0012} & \second{0.5714} & 0.0052 & \second{0.0033} & 0.3636 \\
MIDAS & 0.0094 & 0.0051 & \second{0.6078} & 0.0038 & 0.0031 & 0.5357 & 0.0135 & 0.0076 & \best{0.7273} \\
OLS & 0.0197 & 0.0109 & 0.4314 & 0.0080 & 0.0062 & 0.3571 & 0.0282 & 0.0169 & \second{0.5000} \\
XGB & \second{0.0034} & \best{0.0019} & \best{0.6471} & \best{0.0014} & \best{0.0011} & \second{0.5714} & \second{0.0048} & \best{0.0030} & \best{0.7273} \\
NN & 0.1249 & 0.0921 & 0.5098 & 0.0945 & 0.0761 & \best{0.6071} & 0.1549 & 0.1122 & 0.4091 \\
\bottomrule
\end{tabular}}
\caption{Out-of-sample forecasting performance for target series where MPTE achieves the lowest RMSE over the full sample. The table reports RMSE, MAE, and DA for MPTE and competing models over the full evaluation period, as well as the pre-COVID and post-COVID subsamples. Dark green marks the best method and light green the second-best in each column.}
\label{Tab:empirical1}
\end{table}

%%%%%%%%%%%%%%%%%%%%%%%%%%%%%%%%%%%%%%%%%%%%%
%%%%%%%%%%%%%%%%%%%%%%%%%%%%%%%%%%%%%%%%%%%%%
\subsection{Attention-based aggregation patterns}\label{Sec:attention_patterns}
Motivated by the target-specific nature of the empirical results, we study how MPTE allocates information across variables and time, assessing whether differences in predictive performance are associated with systematic differences in cross-sectional and temporal aggregation.

We form two complementary summaries of the attention matrix: averaging over variables yields a distribution over time indices within the context window, an empirical analogue of the temporal weighting operator $B$, while averaging over time yields variable-specific importance weights corresponding to the cross-sectional aggregation induced by $A_z$. Both summaries average attention over all input sequences in the out-of-sample evaluation period and, when the optimized model has multiple attention heads, across heads.

Figure~\ref{Fig:Az_heatmaps_OUTNFB} reports heatmaps of the cross-sectional attention matrix $A_z$ for OUTNFB, a target on which MPTE attains the lowest full-sample RMSE among the competing models (Table~\ref{Tab:empirical1}). We compare the full MPTE specification with the AB1 ablation, which removes nonlinear transformations from the encoder while preserving the attention mechanism. Introducing nonlinear transformations substantially alters the cross-sectional aggregation structure learned by the model. Under the full specification, attention is distributed across a few series, with the largest weights on payroll employment (PAYEMS), real personal consumption expenditures (DPCERA3M086SBEA), and headline consumer prices (CPIAUCSL). This suggests that nonlinear interactions among real-activity, consumption, and price signals carry incremental information for nonfarm business output. Removing nonlinear transformations (AB1) concentrates almost all of the attention weight on a single dominant predictor, a closely related labor-market series (CES0600000007, average weekly hours in goods-producing industries). The linear encoder thus effectively reduces the cross-section to the strongest individual indicator rather than combining several.

A complementary pattern emerges for GDPC1, where the full specification concentrates attention on payroll employment and equity prices, and removing nonlinear transformations shifts the weight toward interest-rate and liquidity variables; the heatmaps and full discussion are in Supplemental Appendix~\ref{appendix_Az_GDPC1}.

These allocations are consistent with the literature on state-dependent real-activity dynamics, in which employment, consumption, and asset-price signals exhibit regime-dependent and threshold-type effects \citep[e.g.,][]{hamilton1989new,stock1999forecasting,stock2003forecasting,giannone2008nowcasting}, whereas interest rates and monetary aggregates summarize average financial conditions well approximated linearly, explaining their prominence once nonlinearities are removed.

Across both targets, the estimated $A_z$ matrix exhibits structured, not diffuse, attention, and the variables it emphasizes remain economically meaningful across ablations. Removing nonlinear transformations either concentrates the weight on a single dominant predictor or shifts it toward variables whose link to the target is more nearly linear.

%%%% Ax plots
\begin{figure}[!htbp]
    \centering
    \begin{minipage}{0.48\textwidth}
        \centering
        \includegraphics[width=0.77\textwidth]{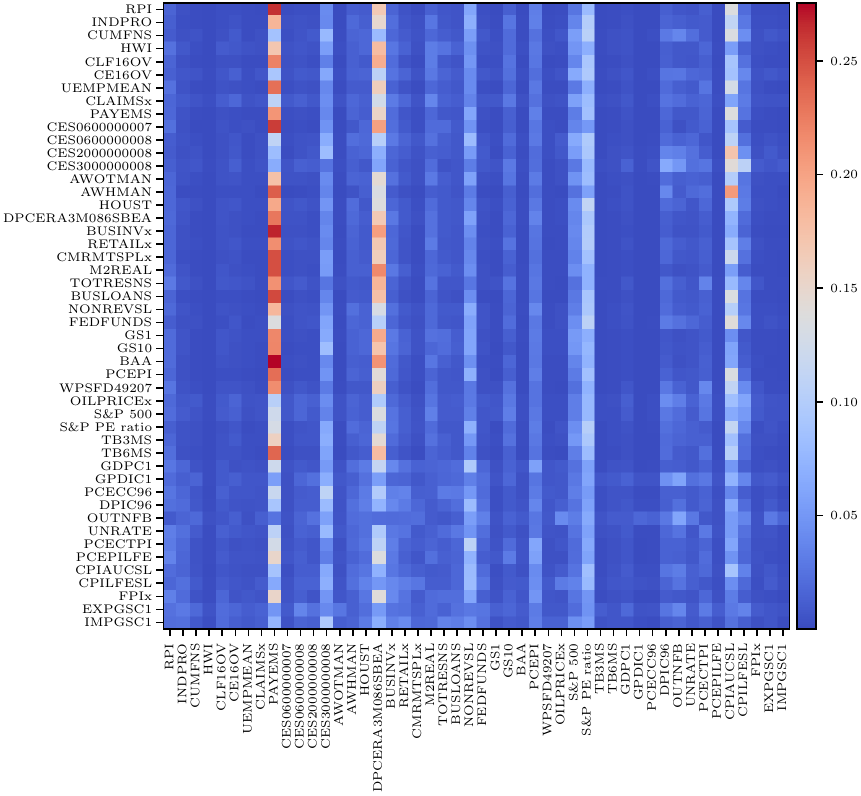}
    \end{minipage}\hfill
    \begin{minipage}{0.48\textwidth}
        \centering
        \includegraphics[width=0.77\textwidth]{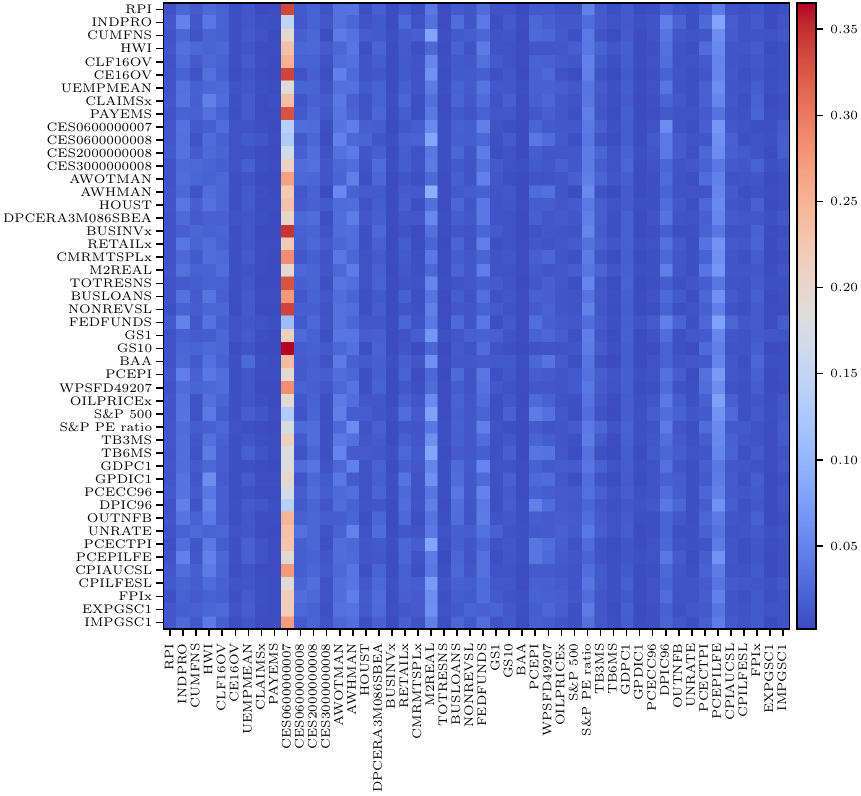}
    \end{minipage}

    \caption{Cross-sectional attention ($A_z$) heatmaps for OUTNFB: the left panel reports MPTE and the right panel the AB1 ablation. The horizontal axis indexes attending variables, while the vertical axis indexes attended variables. Corresponding GDPC1 heatmaps: Supplemental Appendix~\ref{appendix_Az_GDPC1}.}
    \label{Fig:Az_heatmaps_OUTNFB}
\end{figure}

We also inspect the temporal attention matrices $B$, obtained by averaging the learned weights over variables. For both GDPC1 and OUTNFB the full MPTE concentrates temporal attention on the most recent lags, the timeliest within-quarter observations admitted by the high-frequency lead. Removing nonlinear transformations (AB1) spreads this weight toward intermediate lags, and removing temporal encoding (AB5) flattens it almost uniformly. MPTE thus learns the relevant horizon and the lag weighting endogenously, instead of imposing the parametric lag polynomials or fixed truncation of MIDAS. The estimated $B$ heatmaps (Figure~\ref{Fig:B_heatmaps_GDP_OUT}) and the full discussion are in Supplemental Appendix~\ref{appendix_temporal_attention}.

%%%%%%%%%%%%%%%%%%%%%%%%%%%%%%%%%%%%%%%%%%%%%%%%%%%%
\section{Conclusions and discussion}\label{Sec:conclusions}
This paper introduces MPTE, a factor-model framework that accommodates mixed-frequency data and nonlinear signals through attention mechanisms, relaxing the reliance of classical methods on homogeneous sampling frequencies and linear signal extraction. In the linear case, we establish consistency and asymptotic normality of the factor, loading, and common-component estimators under general cross-sectional and temporal attention. We recover classical factor models as the special case in which the operators reduce to fixed block-level projections. Because attention weights units and periods by their relevance, the effective sample size reflects that relevance rather than raw dimension, and auxiliary data sharpen first-order estimation precision without altering identification of the target factor space.

In simulations with known ground truth, the estimators attain the predicted convergence rates, and confidence intervals attain close to nominal coverage where the assumptions hold, with inference degrading only when estimated attention concentrates beyond them, and recovering once the operators are shrunk back inside the assumption set. In forecasting simulations, MPTE attains the lowest RMSE across all three designs, with its gains over MIDAS widening as nonlinearity increases. In a macroeconomic forecasting exercise over 13 quarterly targets drawn from 48 monthly and quarterly FRED series, it remains competitive with established benchmarks and stable across evaluation periods and metrics. It is also interpretable: averaging attention separately across variables and time recovers target-specific variable importance and effective memory length, yielding a transparent link between predictive gains and their economic sources.

Several extensions remain open. A theory for nonlinear activation functions would provide formal guarantees for the full framework; attention operators that adapt to structural breaks or regime changes would help in nonstationary settings; and replacing point forecasts with probabilistic deep-learning methods would deliver full predictive distributions. Beyond macroeconomics, the framework applies naturally wherever mixed-frequency data are common, such as financial risk management, climate, and energy forecasting.

\begin{spacing}{1}
\setlength{\bibsep}{2pt plus 0.5ex}
\bibliography{biblio}
\end{spacing}

%% APPENDIX %%%%%%%%%%%%%%%%%%%%%%%%%%%%%%%%%%%%%%%%%%%%%%%%%%%%%%%%%%%%%%%%%%%%%%%%%%%%
%%%%%%%%%%%%%%%%%%%%%%%%%%%%%%%%%%%%%%%%%%%%%%%%%%%%%%%%%%%%%%%%%%%%%%%%%%%%%%%%%%%%%%%%
\cleardoublepage
\renewcommand{\appendixpagename}{}
\renewcommand{\thesubsection}{A.\arabic{subsection}}
\vspace{-35pt}
\begin{appendices} 
        \renewcommand{\thelem}{\Alph{section}.\arabic{lem}} 
        \renewcommand{\theprop}{\Alph{section}.\arabic{prop}}
	\renewcommand{\thesection}{\Alph{section}}
	\renewcommand{\thesubsection}{\Alph{section}.\arabic{subsection}}
	%	\makeatletter  \def\@seccntformat{\Alph{section}. \quad} \makeatother
	\renewcommand{\theequation}{\Alph{section}.\arabic{equation}}
	\captionsetup{%
		figurewithin=section,
		tablewithin=section
	}
	\begin{spacing}{1}
\begin{center}
\Large{\textsc{Supplemental Appendix to \\ ``A Nonlinear Target-Factor Model with Attention Mechanism for Mixed-Frequency Data"}}	
\end{center}
\end{spacing}
\spacingset{1}
\fontsize{11}{13.2}\selectfont
\section{Assumptions}\label{appendix_assumptions}
This appendix collects the formal assumptions underlying the inferential theory of Section~\ref{Sec:inferential_theory}. We refer to the first group as \textit{baseline assumptions}.
\begin{itemize}
\item[A.1] (Normalization and temporal attention regularity)
$\{F_t\}$ satisfies $\mathbb{E}\lbrack F_t \rbrack=0$ and
$\mathbb{E}\lbrack F_tF_t^\top\rbrack=\Sigma_F\succ0$ (strict stationarity is imposed in A.2). The temporal attention matrix $B \in \mathbb{R}^{T\times T}$
satisfies
\[
\|B\|_{\mathrm{op}} = \mathcal{O}(1), \qquad \|B\|_F^2 / T = \mathcal{O}(1),
\]
and there exist constants $0<c<C<\infty$ such that
\[
c \le \lambda_{\min}\!\left( T^{-1} F^\top B^\top B F \right)
\le \lambda_{\max}\!\left( T^{-1} F^\top B^\top B F \right) \le C .
\]
    \item[A.2] (Temporal dependence) The sequence $\{(F_t, e_t)\}$ is strictly stationary and $\alpha$-mixing with coefficients $\alpha(k)$ satisfying
    $\sum_{k=1}^{\infty}\alpha(k)^{\delta/(2+\delta)} < \infty$ for some $\delta>0$.
    \item[A.3] (Moments) $\sup_t\mathbb{E}\norm{F_t}^{4+\delta} <\infty$ and $\sup_{t,i}\mathbb{E}|e_{it}|^{4+\delta} <\infty$.
    \item[A.4] (Orthogonality) $\mathbb{E}\lbrack F_u e_t^\top \rbrack=0$ for all $u,t$: since the temporal operator $B$ mixes time indices, orthogonality is required at all leads and lags; it holds, for instance, when $\{F_t\}$ and $\{e_t\}$ are independent.
    \item[A.5] (Idiosyncratic control) Let $\Sigma_e = \mathbb{E}\lbrack e_te_t^\top \rbrack$ and $\Gamma_e(h) = \mathbb{E}\lbrack e_te_{t+h}^\top \rbrack$, so $\Gamma_e(0)=\Sigma_e$. Require $\norm{A_z^\top\Sigma_eA_z}\leq C$, summability of the transformed autocovariances in operator norm, $\sum_{h=1}^{\infty}\norm{A_z^\top\Gamma_e(h)A_z}_{\mathrm{op}}\leq C$, \newline $\sup_i \sum_j |(A_z^\top \Sigma_e A_z)_{ij}|\leq C$, and the uniform 8th-moment bound for the transformed idiosyncratic entries $\sup_{i,t} \mathbb{E}|(Be_i^{(A)})_t |^8 < \infty$. In addition, the following high-level bounds hold uniformly in $i,j$:
    \[
    \mathbb{E}\,\Big\| \frac{1}{\sqrt{T}}\sum_{t=1}^{T} F_t^{(B)} (Be_j^{(A)})_t \Big\|^2 \le C\,(A_z^\top A_z)_{jj},
    \qquad
    \operatorname{Var}\!\big( e_i^{(A)\top} B^\top B\, e_j^{(A)} \big) \le C\,\norm{B^\top B}_F^2\,(A_z^\top A_z)_{ii}(A_z^\top A_z)_{jj}.
    \]
    These are the analogues of the score and fourth-moment conditions of \cite{bai2003inferential} and of Assumptions G3.3(e) and G3.4 of \cite{pelger2024target_pca}, and hold, for instance, when $\{e_t\}$ is independent of $\{F_t\}$ with autocovariances summable in operator norm; for serially independent errors, $\Gamma_e(h)=0$ for $h\ge1$ and the summability condition is immediate. $A_z$ and $B$ are fixed operators (deterministic, or learned and frozen conditional on the training sample as in Remark~\ref{rem:sample_splitting}) with at most polynomial growth so that applying them preserves mixing/moment conditions.
    \item[A.6] (Loadings) $\frac{1}{N_x+N_y}\Lambda^{(A)\top} \Lambda^{(A)} \rightarrow \Sigma_{\Lambda}^{(A)} \succ 0$ and $\sup_i \norm{\Lambda_i^{(A)}} < \infty$, where $\Lambda^{(A)} = A_z^\top \lbrack \Lambda_x^\top \quad \Lambda_y^\top \rbrack^\top$ as in \eqref{factor_concat}.
    \item[A.7] (Attention regularity) The cross-sectional attention matrix satisfies\[ \|A_z\|_{\mathrm{op}}=O(1),
\qquad \frac{1}{N}\operatorname{tr}( A_z^\top A_z)\to c_A\in(0,\infty).\]
\end{itemize}

We refer to the second group as \textit{strong-factor structure and attention scaling conditions}.
\begin{itemize}
 \item[B.1] ($Y$-strong factors in the $Y$ block)
The block-specific loading covariance matrix $\Sigma_{\Lambda_{y_s}}^{(A),Y}$ satisfies
\[
\Sigma_{\Lambda_{y_s}}^{(A),Y} \succ 0,
\]
so that the $Y$ block alone identifies the $k_{y_s}$-dimensional $Y$-strong
factor subspace.
Moreover, any remaining factor directions within the $Y$ block generate
eigenvalues of strictly smaller order than $N_{y,\mathrm{eff}}$.
\item[B.2] (Block-specific attention scaling and relative block growth)

(i) For \(g\in\{x,y\}\),
\[
\frac{1}{N_g}\operatorname{tr}(A_z^\top P_g A_z)
\to c_{A,g}\in(0,\infty),
\]
so that
\[
N_{x,\mathrm{eff}}\asymp N_x,
\qquad
N_{y,\mathrm{eff}}\asymp N_y.
\]

(ii) The target block is not asymptotically negligible:
\[
\frac{N_y}{N_x+N_y}\to c\in(0,1).
\]
Equivalently, under part (i),
\[
\frac{N_{y,\mathrm{eff}}}{N_{\mathrm{eff}}}\to c_y\in(0,1).
\]
\item[B.3] (Auxiliary $X$ is informative for, but does not fully span, the $Y$-strong factor space)
\begin{enumerate}
\item[(i)] (\textit{Auxiliary informativeness})
$\Sigma_{y_s,x}^{(A)}$ exists and is nonzero:
\[
\Sigma_{y_s,x}^{(A)} \neq 0,
\]
so the $X$ block carries non-negligible signal in at least one
$Y$-strong direction.

\item[(ii)] (\textit{No full spanning by $X$})
\[
\mathrm{rank}\!\big(\Sigma_{y_s,x}^{(A)}\big) < k_{y_s},
\]
so the $X$ block does not fully span the $Y$-strong factor space.

\end{enumerate}
Assumption~B.3 is not used in the proofs of Theorems~\ref{thm2}--\ref{thm3}, which hold without it: it delimits the transfer-learning regime, and it enters the theory only through the efficiency comparison of Remark~\ref{rem:efficiency} and its derivation in Supplemental Appendix~\ref{appendixAtheor}, where part (i) makes the auxiliary signal subspace $\mathcal S_X$ nonzero and part (ii) makes it a proper subspace.
\item[B.4] (\textit{Joint score normality and orthogonality})
For the fixed pair $(i,t)$ in Theorem~\ref{thm3}(iii), the cross-sectional score driving factor estimation and the time-series score driving loading estimation are jointly asymptotically normal with vanishing cross-covariance:
\[
\begin{pmatrix}
\dfrac{1}{\sqrt{N_{\mathrm{eff}}}}\displaystyle\sum_{j=1}^{N}\Lambda_{y_s,j}^{(A)}e_{j,t}^{(A,B)}\\[8pt]
\dfrac{1}{\sqrt{T}}\displaystyle\sum_{s=1}^{T}F_{y,s}^{S}e_{i,s}^{(A,B)}
\end{pmatrix}
\xrightarrow{d}
\mathcal N\!\left(0,\;
\begin{pmatrix}
\Xi_{y_s,t}^{(A,B)} & 0\\
0 & \Omega_{y,i}^{(A,B)}
\end{pmatrix}\right).
\]
Assumption~B.4 strengthens the marginal convergences of Assumption~B.5(ii)--(iii) below to a joint statement for this pair, and is stated on the scores themselves, without reference to any rotation. In the i.i.d.\ benchmark the cross-covariance consists of the single overlapping summand $(N_{\mathrm{eff}}T)^{-1/2}\,\Lambda_{y_s,i}^{(A)}\,\mathbb E\lbrack e_{i,t}^{(A,B)2}\rbrack\,F_{y,t}^{S\top}\to0$, and the two scores are asymptotically independent, as in the proof of Theorem~3 of \cite{bai2003inferential}; general fixed operators can couple the two sums through $B$ and $A_z$, and the joint convergence is then a high-level condition of the same character as Assumption~B.5. The convergence is understood to hold conditionally on the fixed-$t$ factor realization $F_{y,t}^{S}$ as well, so that Theorem~\ref{thm3}(iii) can be studentized by the random variance $\sigma^2_{C,it,\Lambda}$; this is immediate when $\{F_t\}$ and $\{e_t\}$ are independent, and in particular in the i.i.d.\ benchmark above.
\item[B.5] (\textit{Loading-block orthogonality and score central limit theorems})
\begin{enumerate}
\item[(i)] (\textit{Loading-block orthogonality}) The transformed loading blocks are asymptotically orthogonal across the $(y_s,R)$ partition:
\[
\frac{1}{N_{\mathrm{eff}}}\sum_{i=1}^{N}\Lambda_{y_s,i}^{(A)}\Lambda_{R,i}^{(A)\top}\to 0,
\qquad \Lambda_{R,i}^{(A)} \equiv S_R\Lambda_i^{(A)},
\]
where $S_R$ is the selection matrix complementary to $S_y$, so that the limit loading covariance $\Sigma_{\Lambda}^{(A)}$ of Assumption~A.6 is block-diagonal with respect to this partition. Because the factor normalization already orthogonalizes $F^{S}_{y,t}$ and $F^{R}_t$, this is a substantive joint restriction on the loadings and the cross-sectional operator: it holds, for instance, when the loading blocks are generated independently with zero cross-moments and $A_z$ does not mix the blocks asymptotically, and it rules out attention weighting that creates asymptotic correlation between the $Y$-strong and remaining loading directions.
\item[(ii)] (\textit{Time-series score}) For each $Y$-unit $i$,
\[
\frac{1}{\sqrt{T}}\sum_{t=1}^T F_{y,t}^{S}\, e_{i,t}^{(A,B)} \xrightarrow{d}\mathcal N\big(0,\,\Omega_{y,i}^{(A,B)}\big),
\]
where the long-run covariance $\Omega_{y,i}^{(A,B)} = \lim_{T\to\infty}\operatorname{Var}\big(\frac{1}{\sqrt{T}}\sum_{t=1}^T F_{y,t}^{S} e_{i,t}^{(A,B)}\big)$ exists and is positive definite.
\item[(iii)] (\textit{Cross-sectional score}) For each $t$,
\[
\frac{1}{\sqrt{N_{\mathrm{eff}}}}\sum_{i=1}^{N}\Lambda_{y_s,i}^{(A)}\, e_{i,t}^{(A,B)}\xrightarrow{d}\mathcal N\big(0,\,\Xi_{y_s,t}^{(A,B)}\big),
\]
where $\Xi_{y_s,t}^{(A,B)} = \lim_{N_{\mathrm{eff}}\to\infty}\operatorname{Var}\big(\frac{1}{\sqrt{N_{\mathrm{eff}}}}\sum_{i=1}^{N}\Lambda_{y_s,i}^{(A)} e_{i,t}^{(A,B)}\big)$ exists and is positive definite.
\end{enumerate}
Parts (ii)--(iii) are the analogues of Assumptions~F.3--F.4 of \cite{bai2003inferential}, which likewise impose asymptotic normality of the two estimation scores at a high level rather than deriving it from primitives. Part (ii) holds, for instance, at $B=I_T$ when $\{F_{y,t}^S e_{i,t}^{(A)}\}$ satisfies a strong-mixing central limit theorem with summable autocovariances (Assumptions~A.2--A.3), and part (iii) holds at cross-sectionally independent idiosyncratic errors under a Lindeberg condition; for general fixed operators they are high-level conditions on the transformed scores. In the i.i.d.\ benchmark of the simulations they reduce to $\Omega_{y,i}^{(A,B)}=\sigma^2\Sigma_{F,y}^{(B)}$ and $\Xi_{y_s,t}^{(A,B)}=\sigma^2\Sigma_{\Lambda,y_s}^{(A)}$.
\end{itemize}

%%%%%%%%%%%%%%%%%
\section{Proofs} \label{appendixAtheor}
%%%%%%%%%%%%%%%%%%%%%%%%%%%%%
Let $\widehat{D}^{(A)} \coloneqq \text{diag}(\lambda_{1}^{(A)},\ldots,\lambda_{k}^{(A)})$ be the diagonal matrix of the top $k$ eigenvalues of $\widetilde{Z}^\top\widetilde{Z}/\lbrack(N_x+N_y)T\rbrack$; relative to the main text, we include the additional $1/T$ scaling of \cite{bai2003inferential}, under which the top $k$ eigenvalues are bounded away from zero and infinity in probability, so that $\norm{(\widehat{D}^{(A)})^{-1}} = \mathcal{O}_P(1)$. The rotation matrix is $H^{(A)} \coloneqq \frac{1}{T(N_x+N_y)}\,(\widehat{D}^{(A)})^{-1}\,\widehat{\Lambda}^{(A)\top}\Lambda^{(A)}\,F^\top B^\top BF$, the analogue of the rotation matrix of \cite{bai2003inferential} for the attended panel, and is invertible $k \times k$ with probability approaching one;
in \eqref{a1} below, $H_i^{(A)}$ denotes its unit-specific version, which can differ across units only under unbalanced or partially observed panels. Under the balanced-panel assumption $T_x=T_y=T$ maintained here, $H_i^{(A)}=H^{(A)}$ for all $i$, as verified in the proof of Theorem 1 below. Theorem 1 is based on the identity used in \cite{bai2003inferential} and \cite{pelger2024target_pca}:
\begin{equation} \label{a1}
    \widehat{\Lambda}_{i}^{(A)} - H_{i}^{(A)}\Lambda_{i}^{(A)} = \frac{1}{N_x+N_y} (\widehat{D}^{(A)} )^{-1} \sum_{j=1}^{N_x+N_y} \Big( \widehat{\Lambda}_{j}^{(A)} \eta_{ij} + \widehat{\Lambda}_{j}^{(A)}\xi_{ij} + \widehat{\Lambda}_{j}^{(A)}\zeta_{ij} + \widehat{\Lambda}_{j}^{(A)} \gamma (i,j) \Big),
\end{equation}
where 
\begin{equation}
    \begin{aligned}
     &\eta_{ij}=\frac{1}{T}\sum_t (\Lambda_{i}^{(A)})^\top F_{t}^{(B)}(Be_{j}^{(A)})_t, &&\xi_{ij}=\frac{1}{T}\sum_t (\Lambda_{j}^{(A)})^\top F_{t}^{(B)}(Be_{i}^{(A)})_t,\\
     &\gamma (i,j)=\frac{1}{T}\sum_t \mathbb{E}\lbrack (Be_{i}^{(A)})_t (Be_{j}^{(A)})_t\rbrack, &&\zeta_{ij}=\frac{1}{T}\sum_t  (Be_{i}^{(A)})_t (Be_{j}^{(A)})_t - \gamma (i,j).
    \end{aligned}
\end{equation}

%%%%%%%%%%%%%%%%%%%%%%%%%%%%%
To analyze each term above, we need the following lemma.
\setcounter{lem}{0}
\begin{lem} \label{lem_a1}
Under Assumptions A.1 - A.7, as $T, N_x, N_y \rightarrow \infty$:
\begin{enumerate}
    \item $\frac{1}{(N_x+N_y)^2}\sum_{i,j}\eta_{ij}^2 = O_P\!\big(\tfrac{\operatorname{tr}(A_z^\top A_z)}{(N_x+N_y) \cdot T}\big)$;
    \item $\frac{1}{(N_x+N_y)^2}\sum_{i,j}\xi_{ij}^2 = O_P\!\big(\tfrac{\operatorname{tr}(A_z^\top A_z)}{(N_x+N_y) \cdot T}\big)$;
    \item $\frac{1}{(N_x+N_y)^2}\sum_{i,j=1}^{N_x+N_y} \zeta_{ij}^{2} =\mathcal{O}_P\Big( \frac{\norm{B^\top B}_F^2}{T^2} \cdot  \frac{\operatorname{tr}(A_z^\top A_z)^2}{(N_x+N_y)^2}\Big) = \mathcal{O}_P\Big(\frac{\operatorname{tr}(A_z^\top A_z)}{(N_x+N_y)\, T}\Big)$, where the second equality uses $\norm{B^\top B}_F^2 \le \norm{B}_{\mathrm{op}}^2 \norm{B}_F^2 = \mathcal{O}(T)$ (Assumption A.1) and $\operatorname{tr}(A_z^\top A_z) \asymp N_x+N_y$ (Assumption A.7);
    \item $\frac{1}{(N_x+N_y)^2}\sum_{i,j=1}^{N_x+N_y} \gamma^2(i,j) =  \mathcal{O}_P\Big( \frac{\norm{B}_F^4}{T^2} \cdot  \frac{\norm{A_z^\top A_z}_F^2}{(N_x+N_y)^2}\Big)$.
\end{enumerate}
\end{lem}
\begin{proof}
 1.  Since the loadings are deterministic, the Cauchy--Schwarz inequality gives
\begin{equation*}
\begin{aligned}
    \mathbb{E}\Bigg[\frac{1}{(N_x+N_y)^2} \sum_{i,j=1} ^{N_x+N_y} \eta_{ij}^2  \Bigg] &= \frac{1}{(N_x+N_y)^2} \sum_{i,j=1} ^{N_x+N_y} \mathbb{E}\Bigg[  (\Lambda_{i}^{(A)})^\top \frac{1}{T}\sum_t F_t^{(B)} (Be_{j}^{(A)})_t \Bigg]^2\\
    &\leq \underbrace{\frac{1}{N_x+N_y} \sum_{i=1} ^{N_x+N_y} \norm{\Lambda_{i}^{(A)}}^2}_{=\, \frac{1}{N_x+N_y}\norm{\Lambda^{(A)}}_F^2 \to \operatorname{tr}(\Sigma^{(A)}_\Lambda) = \mathcal{O}(1) \text{ by A.6}} \cdot\, \frac{1}{(N_x+N_y)\,T} \sum_{j=1} ^{N_x+N_y} \mathbb{E} \norm{\frac{1}{\sqrt{T}}\sum_t F_t^{(B)} (Be_{j}^{(A)})_t}^2\\
    &\leq \mathcal{O}(1) \cdot \frac{C}{(N_x+N_y)\,T} \sum_{j=1}^{N_x+N_y} (A_z^\top A_z)_{jj}
    =\mathcal{O} \Big( \frac{\operatorname{tr} (A_z^\top A_z)}{(N_x + N_y) \cdot T} \Big),
\end{aligned}
\end{equation*}
where the second inequality applies the factor--error score bound in Assumption A.5. The claim follows by Markov's inequality.

2. By the same arguments, we can show that $\frac{1}{(N_x+N_y)^2}\sum_{i,j=1}^{N_x+N_y} \xi_{ij}^{2} = \mathcal{O}_P(  \frac{\operatorname{tr} (A_z^\top A_z)}{(N_x + N_y) \cdot T} )$.

3. Note that $\sum_t (Be_{i}^{(A)})_t (Be_{j}^{(A)})_t = e_i^{(A)\top} B^\top B\, e_j^{(A)}$, so that $\zeta_{ij} = \frac{1}{T}\big( e_i^{(A)\top} B^\top B\, e_j^{(A)} - \mathbb{E}\lbrack e_i^{(A)\top} B^\top B\, e_j^{(A)} \rbrack \big)$. By the quadratic-form variance bound in Assumption A.5,
\begin{equation*}
 \mathbb{E}\Bigg[\frac{1}{(N_x+N_y)^2} \sum_{i,j=1} ^{N_x+N_y} \zeta_{ij}^2  \Bigg] =  \frac{1}{(N_x+N_y)^2\, T^2} \sum_{i,j=1} ^{N_x+N_y} \operatorname{Var}\!\big( e_i^{(A)\top} B^\top B\, e_j^{(A)} \big)
 \leq \frac{C \norm{B^\top B}_F^2}{T^2} \cdot \frac{\operatorname{tr}(A_z^\top A_z)^2}{(N_x+N_y)^2}.
\end{equation*}
By the bounds recorded in the lemma statement (Assumptions A.1 and A.7), the right-hand side is $\mathcal{O}\big( \tfrac{\operatorname{tr}(A_z^\top A_z)}{(N_x+N_y)\,T} \big)$, and the claim follows by Markov's inequality.

4. Using assumptions A.1 and A.5, we have
%%%%
\begin{equation*}
    \begin{aligned}
\frac{1}{(N_x+N_y)^2}\sum_{i,j=1}^{N_x+N_y} \gamma^2(i,j) &= \frac{1}{(N_x+N_y)^2} \sum_{i,j=1} ^{N_x+N_y}  \frac{1}{T^2} \Big( \sum_t \mathbb{E}\lbrack (Be_{i}^{(A)})_t (Be_{j}^{(A)})_t\rbrack \Big)^2\\
 &= \frac{1}{(N_x+N_y)^2} \frac{1}{T^2} \norm{A_z^\top\, \mathbb{E} \lbrack e^\top B^\top Be \rbrack\, A_z}_F^2\\
 &=\mathcal{O}\Big( \frac{\norm{B}_F^4}{T^2} \cdot  \frac{\norm{A_z^\top A_z}_F^2}{(N_x+N_y)^2}\Big).
    \end{aligned}
\end{equation*}
For the last equality, let $M \equiv \mathbb{E}\lbrack e^\top B^\top Be \rbrack$, which is positive semi-definite with $M = \sum_h \big( \sum_t (B^\top B)_{t,t+h} \big) \Gamma_e(h)$, where $\Gamma_e(h) = \mathbb{E}\lbrack e_t e_{t+h}^\top \rbrack$. Since $|\sum_t (B^\top B)_{t,t+h}| \le \norm{B}_F^2$ for every lag $h$ by the Cauchy--Schwarz inequality, and the transformed autocovariances are summable in operator norm by Assumption~A.5, we have $\norm{A_z^\top M A_z}_{\mathrm{op}} \le \sum_h \big|\sum_t (B^\top B)_{t,t+h}\big|\, \norm{A_z^\top \Gamma_e(h) A_z}_{\mathrm{op}} \le C \norm{B}_F^2$. Since $A_z^\top M A_z$ is positive semi-definite of dimension $N_x+N_y$, $\norm{A_z^\top M A_z}_F \le \sqrt{N_x+N_y}\,\norm{A_z^\top M A_z}_{\mathrm{op}} \le C\norm{B}_F^2\sqrt{N_x+N_y} \le C'\norm{B}_F^2 \norm{A_z^\top A_z}_F$, where the last step uses $\norm{A_z^\top A_z}_F \ge \operatorname{tr}(A_z^\top A_z)/\sqrt{N_x+N_y} \ge c\sqrt{N_x+N_y}$ by Assumption~A.7.
%%%%%
\end{proof}
% \kate{Double check if need analogues of Lemma 2 and 3 from Target PCA for deriving the proof of our Theorem 1}
\subsection{Proof of Theorem 1}
\begin{proof}
 (1)   Let us first prove $\frac{1}{N_x+N_y} \sum_{i=1}^{N_x+N_y} \norm{\widehat{\Lambda}_{i}^{(A)} - H^{(A)} \Lambda_{i}^{(A)} }^2
   = \mathcal{O}_P(\bar{\alpha})$. We have
\begin{equation}\label{load}
   \begin{aligned}
 \frac{1}{N_x+N_y} \sum_{i=1}^{N_x+N_y} \norm{\widehat{\Lambda}_{i}^{(A)} - H^{(A)} \Lambda_{i}^{(A)} }^2 &\leq  \frac{1}{N_x+N_y} \sum_{i=1}^{N_x+N_y} \norm{\widehat{\Lambda}_{i}^{(A)} - H_i^{(A)} \Lambda_{i}^{(A)} }^2\\ &+ \frac{1}{N_x+N_y} \sum_{i=1}^{N_x+N_y} \norm{\Big(H_i^{(A)} - H^{(A)} \Big) \Lambda_{i}^{(A)} }^2.
   \end{aligned} 
\end{equation}
Under the balanced-panel assumption $T_x = T_y = T$, the relevant factor second moment $\tfrac{1}{T} F^{(B)\top} F^{(B)}$ is common across units, so the rotation is unit-invariant, $H_i^{(A)} = H^{(A)}$, and the second term $\tfrac{1}{N_x+N_y}\sum_{i=1}^{N_x+N_y}\big\|(H_i^{(A)} - H^{(A)})\Lambda_i^{(A)}\big\|^2$ vanishes identically. A per-unit rotation, as in \cite{pelger2024target_pca}, is required only under unbalanced or partially observed panels.

 To bound the first term on the RHS of \eqref{load} we use decomposition \eqref{a1} to get:
 \begin{equation*}
 \begin{aligned}
&\frac{1}{N_x+N_y} \sum_{i=1}^{N_x+N_y} \norm{\widehat{\Lambda}_{i}^{(A)} - H^{(A)} \Lambda_{i}^{(A)} }^2\\
&\leq 4\norm{(\widehat{D}^{(A)})^{-1}}^2 \cdot \frac{1}{N_x+N_y} \sum_{i=1}^{N_x+N_y} \frac{1}{(N_x+N_y)^2} \Bigg(\norm{\sum_{j=1}^{N_x+N_y}\widehat{\Lambda}_{j}^{(A)}\eta_{ij} }^2 + \norm{ \sum_{j=1}^{N_x+N_y}\widehat{\Lambda}_{j}^{(A)}\xi_{ij} }^2\\
&+\norm{ \sum_{j=1}^{N_x+N_y}\widehat{\Lambda}_{j}^{(A)}\zeta_{ij} }^2  + \norm{ \sum_{j=1}^{N_x+N_y}\widehat{\Lambda}_{j}^{(A)}\gamma(i,j) }^2  \Bigg)
\end{aligned}
\end{equation*} 
Let $\phi_{ij} = \eta_{ij}, \xi_{ij}, \zeta_{ij}, \gamma(i,j)$. We bound each term on the RHS using
 \begin{equation*}
 \begin{aligned}
\frac{1}{N_x+N_y} \sum_{i=1}^{N_x+N_y} \frac{1}{(N_x+N_y)^2} \norm{\sum_{j=1}^{N_x+N_y}\widehat{\Lambda}_{j}^{(A)}\phi_{ij} }^2 \leq \underbrace{\frac{1}{N_x+N_y} \sum_{j=1}^{N_x+N_y} \norm{\widehat{\Lambda}_{j}^{(A)} }^2}_{\mathcal{O}_P(1 )  } \cdot
\frac{1}{(N_x+N_y)^2} \sum_{i,j=1}^{N_x+N_y}\phi_{ij}^2.
\end{aligned}
\end{equation*}
Hence, using Lemma~\ref{lem_a1} and $\norm{(\widehat{D}^{(A)})^{-1}} = \mathcal{O}_P(1)$, we obtain
 \begin{equation*}
 \begin{aligned}
\frac{1}{N_x+N_y} &\sum_{i=1}^{N_x+N_y} \norm{\widehat{\Lambda}_{i}^{(A)} - H^{(A)} \Lambda_{i}^{(A)} }^2  = \mathcal{O}(1) \cdot \Big( \mathcal{O}_P\Big(\frac{\text{tr}( A_z^\top A_z )}{(N_x+N_y)\,T} \Big)\\
&+ \mathcal{O}_P\Big(\frac{\norm{A_z^\top A_z}_F^2}{(N_x+N_y)^2} \cdot \frac{\|B\|_F^4}{T^2}\Big)\Big)
= \mathcal{O}_P(\bar\alpha).
 \end{aligned}
\end{equation*}
%  To bound the second term on the RHS of \eqref{load} we use decomposition \eqref{a1} to get:
%  \begin{equation*}
%  \begin{aligned}
% &\frac{1}{N_x+N_y} \sum_{i=1}^{N_x+N_y} \norm{\Big(H_i^{(A)} - H^{(A)} \Big) \Lambda_{i}^{(A)} }^2\\
% &= \frac{1}{N_x+N_y} \sum_{i=1}^{N_x+N_y} \norm{D^{(A)})^{-1} \sum_{j=1}^{N_x+N_y} \widehat{\Lambda}_j^{(A)} (\Lambda_j^{(A)})^T \frac{1}{T}\Big(\sum_t F_t^{(B)} (F_t^{(B)})^T - F_t (F_t)^\top  \Big) \Lambda_j^{(A)}}^2\\
% &\leq \underbrace{\norm{(D^{(A)})^{-1}}^2}_{\mathcal{O}(1)} \cdot \underbrace{\Big(\frac{1}{N_x+N_y} \sum_{j=1}^{N_x+N_y} \norm{\widehat{\Lambda}_j^{(A)}}^2  \Big)}_{\mathcal{O}_P(1) } \cdot\\ &\underbrace{\frac{1}{(N_x+N_y)^2} \sum_{i,j=1}^{N_x+N_y} \norm{\Lambda_i^{(A)}}^2 \norm{\Lambda_j^{(A)}}^2
% \cdot \norm{\frac{1}{T}\Big(\sum_t F_t^{(B)} (F_t^{(B)})^\top - F_t (F_t)^\top  \Big)}}_{\mathcal{O}_P\Big(\frac{\text{tr}( A_z^\top A_z )\norm{A_z^\top A_z}_F^2}{(N_x+N_y)} \frac{\norm{B}_F^4}{T^2} \Big)}.
%  \end{aligned}
% \end{equation*}
% Hence, we conclude that $\frac{1}{N_x+N_y} \sum_{i=1}^{N_x+N_y} \norm{\Big(H_i^{(A)} - H^{(A)} \Big) \Lambda_{i}^{(A)} }^2 = \mathcal{O}_P\Big(\frac{\text{tr}( A_z^\top A_z )\norm{A_z^\top A_z}_F^2}{(N_x+N_y)} \frac{\norm{B}_F^4}{T^2} \Big) = \mathcal{O}_P(\bar{\alpha})$.

Hence, we conclude that  $\frac{1}{N_x+N_y} \sum_{i=1}^{N_x+N_y} \norm{\widehat{\Lambda}_{i}^{(A)} - H^{(A)} \Lambda_{i}^{(A)} }^2
   = \mathcal{O}_P\Big(\bar{\alpha}\Big)$ as claimed.

Under Assumptions A.1 and A.7, $\operatorname{tr}(A_z^\top A_z) = \mathcal{O}(N_x+N_y)$ and $\|A_z^\top A_z\|_F^2 \le \|A_z\|_{\mathrm{op}}^2 \operatorname{tr}(A_z^\top A_z) = \mathcal{O}(N_x+N_y)$, so all three terms of $\bar\alpha$ vanish: $\bar\alpha = \mathcal{O}(1/T) + \mathcal{O}\big(1/(N_x+N_y)\big) = o(1)$. For $A_z = I$ and $B = I$ this recovers the classical rate of \cite{bai2003inferential}.  This is the only step of the proof that invokes the operator-norm bound of A.7: if instead $\|A_z\|_{\mathrm{op}} = o\big((N_x+N_y)^{1/2}\big)$ and the bounds of A.5 on $A_z^\top\Sigma_e A_z$ are replaced by $\|\Sigma_e\|_{\mathrm{op}} \le C$ (so that the bound in \eqref{Qt_bound} below carries an extra factor $1+\|A_z\|_{\mathrm{op}}^2$), the argument goes through unchanged and yields $\bar\alpha = \mathcal{O}(1/T) + \mathcal{O}\big((1+\|A_z\|_{\mathrm{op}}^2)/(N_x+N_y)\big)$, as discussed in Remark~\ref{rem:weak_A7}. The remaining idiosyncratic bounds in this proof are unaffected: the $\gamma$-term uses only $\|A_z^\top A_z\|_F$ with the autocovariance summability of A.2, and the uniform bound on $\frac{1}{N_x+N_y}\sum_i \widetilde Z_{ti}^2$ uses only $\operatorname{tr}(A_z^\top \Sigma_e A_z) \le \|\Sigma_e\|_{\mathrm{op}} \operatorname{tr}(A_z^\top A_z)$, which the trace condition of A.7 controls without the operator-norm bound.

(2) Let us now prove $\frac{1}{T} \sum_{t=1}^{T} \norm{\widehat{F}_{t}^{(B)} - (H^{(A)\top})^{-1} F_{t}^{(B)} }^2 = \mathcal{O}_p(\bar{\alpha})$. Consistent with Part~(1), where the loadings are recovered up to the rotation $H^{(A)}$, the factors are recovered up to the conjugate rotation $(H^{(A)\top})^{-1}$. The matrix $H^{(A)}$ is invertible but not in general orthogonal under Assumption~A.6, so the two rotations differ.

 The estimated factors $\widehat{F}_{t}^{(B)}$ are obtained by regressing $\widetilde{Z}_{ti}$ on the estimated loadings $\widehat{\Lambda}_{i}^{(A)}$:
 \begin{equation*}
\widehat{F}_{t}^{(B)} = \Bigg(\sum_{i=1}^{N_x+N_y} \widehat{\Lambda}_{i}^{(A)} (\widehat{\Lambda}_{i}^{(A)})^\top \Bigg)^{-1} \Bigg(\sum_{i=1}^{N_x+N_y} \widetilde{Z}_{ti} \widehat{\Lambda}_{i}^{(A)}  \Bigg), \ t=1,\cdots,T.
 \end{equation*}
Using the regression representation:
\begin{equation*}
    \widetilde{Z}_{ti} = \Lambda_i^{(A)\top} F_{t}^{(B)} + (B e_i^{(A)} )_t,
\end{equation*}
we define the empirical loadings covariance matrix
\begin{equation*}
\widehat{\Sigma}_{\Lambda}^{(A)} \equiv \frac{1}{N_x+N_y}\sum_{i=1}^{N_x+N_y} \widehat{\Lambda}_i^{(A)} \widehat{\Lambda}_i^{(A)\top}, \quad \Sigma_{\Lambda,N}^{(A)} \equiv \frac{1}{N_x+N_y}\sum_{i=1}^{N_x+N_y} \Lambda_i^{(A)} \Lambda_i^{(A)\top},
\end{equation*}
where $\Sigma_{\Lambda,N}^{(A)}$ is the finite-$N$ analogue of the limit $\Sigma_{\Lambda}^{(A)}$ in Assumption~A.6. By Part~(1) and the eigenvalue separation assumption:
\begin{equation*}
    \norm{\widehat{\Sigma}_{\Lambda}^{(A)} - H^{(A)} \Sigma_{\Lambda,N}^{(A)} H^{(A)\top} } = \mathcal{O}_p(\bar{\alpha}^{1/2}), \ \lambda_{\text{min}}(\Sigma_{\Lambda,N}^{(A)}) > c > 0,
\end{equation*}
so $\widehat{\Sigma}_{\Lambda}^{(A)}$ is invertible and its inverse is uniformly bounded in probability.

Using $\widehat{F}_{t}^{(B)} = (\widehat{\Sigma}_{\Lambda}^{(A)})^{-1}\,\frac{1}{N_x+N_y}\sum_{i=1}^{N_x+N_y}\widehat{\Lambda}_i^{(A)}\widetilde{Z}_{ti}$ together with $\widetilde{Z}_{ti} = \Lambda_i^{(A)\top}F_t^{(B)} + (Be_i^{(A)})_t$ and $\widehat{\Lambda}_i^{(A)} = H^{(A)}\Lambda_i^{(A)} + (\widehat{\Lambda}_i^{(A)} - H^{(A)}\Lambda_i^{(A)})$, we can write
\begin{equation*}
\widehat{F}_{t}^{(B)} - (H^{(A)\top})^{-1} F_{t}^{(B)} = M_1^{(A)} \cdot \frac{1}{N_x+N_y} \sum_{i=1}^{N_x+N_y}  \Lambda_{i}^{(A)} (B e_i^{(A)} )_t  +  M_2^{(A)} \cdot \frac{1}{N_x+N_y} \sum_{i=1}^{N_x+N_y} (\widehat{\Lambda}_{i}^{(A)} - H^{(A)} \Lambda_{i}^{(A)} ) \widetilde{Z}_{ti},
\end{equation*}
where $M_1^{(A)} = (\widehat{\Sigma}_{\Lambda}^{(A)})^{-1} H^{(A)}$ and $M_2^{(A)} = (\widehat{\Sigma}_{\Lambda}^{(A)})^{-1}$ satisfy $\norm{M_1^{(A)}} = \mathcal{O}_P(1)$, $\norm{M_2^{(A)}} = \mathcal{O}_P(1)$. The leading signal term reduces to $(H^{(A)\top})^{-1}F_t^{(B)}$ because $(\widehat{\Sigma}_{\Lambda}^{(A)})^{-1}H^{(A)}\Sigma_{\Lambda,N}^{(A)} = (H^{(A)\top})^{-1} + \mathcal{O}_P(\bar\alpha^{1/2})$, and the resulting remainder contributes $\mathcal{O}_P(\bar\alpha)$ to $\tfrac1T\sum_t\norm{\cdot}^2$ and is omitted.

Therefore,
\begin{equation}\label{factors}
  \norm{\widehat{F}_{t}^{(B)} - (H^{(A)\top})^{-1} F_{t}^{(B)}}  \leq C \Bigg( \norm{\frac{1}{N_x+N_y} \sum_{i=1}^{N_x+N_y} \Lambda_{i}^{(A)} (B e_i^{(A)} )_t  } + 
  \norm{ \frac{1}{N_x+N_y} \sum_{i=1}^{N_x+N_y} ( \widehat{\Lambda}_{i}^{(A)} - H^{(A)} \Lambda_{i}^{(A)}  ) \widetilde{Z}_{ti} }
  \Bigg),
\end{equation}
hence
\begin{equation*}
     \frac{1}{T} \sum_{t=1}^{T} \norm{\widehat{F}_{t}^{(B)} - (H^{(A)\top})^{-1} F_{t}^{(B)}}^2 \leq C (I_1 + I_2),
\end{equation*}
where 
\begin{equation*}
    I_1 \equiv \frac{1}{T} \sum_{t=1}^{T} \norm{\underbrace{\frac{1}{N_x+N_y} \sum_{i=1}^{N_x+N_y} \Lambda_{i}^{(A)} (B e_i^{(A)} )_t }_{Q_t} }^2, \ I_2 \equiv \frac{1}{T} \sum_{t=1}^{T} \norm{\frac{1}{N_x+N_y} \sum_{i=1}^{N_x+N_y} ( \widehat{\Lambda}_{i}^{(A)} - H^{(A)} \Lambda_{i}^{(A)}  ) \widetilde{Z}_{ti}  }^2.
\end{equation*}
We now bound $I_1$ and $I_2$ separately. Write $(Be_i^{(A)})_t = \sum_s B_{ts}(e_i^{(A)})_s$, so
\begin{equation*}
\mathbb{E}\|Q_t\|^2
= \frac{1}{(N_x+N_y)^2}\sum_{i,j}\big(\Lambda_i^{(A)\top}\Lambda_j^{(A)}\big)\,
   \mathbb{E}\big[(Be_i^{(A)})_t(Be_j^{(A)})_t\big].
\end{equation*}
Expand $\mathbb{E}[(Be_i^{(A)})_t(Be_j^{(A)})_t] = \sum_{s,r}B_{ts}B_{tr}\mathbb{E}[(e_i^{(A)})_s(e_j^{(A)})_r]$. By stationarity (Assumption~A.2), $\mathbb{E}\big[(e^{(A)})_s(e^{(A)})_r^\top\big]=A_z^\top\Gamma_e(r-s)A_z$, with $\Gamma_e(0)=\Sigma_e$, so the double sum over $(i,j)$ is a trace against the positive semidefinite matrix $\Lambda^{(A)}\Lambda^{(A)\top}$:
\[
\sum_{i,j}\big(\Lambda_i^{(A)\top}\Lambda_j^{(A)}\big)\,\mathbb{E}\big[(Be_i^{(A)})_t(Be_j^{(A)})_t\big]
=\sum_{s,r}B_{ts}B_{tr}\,\operatorname{tr}\!\big(A_z^\top\Gamma_e(r-s)A_z\,\Lambda^{(A)}\Lambda^{(A)\top}\big).
\]
Since $|\operatorname{tr}(MX)|\le\norm{M}_{\mathrm{op}}\operatorname{tr}(X)$ for any matrix $M$ and positive semidefinite $X$, each summand is bounded in absolute value by $|B_{ts}||B_{tr}|\,\norm{A_z^\top\Gamma_e(r-s)A_z}_{\mathrm{op}}\,\|\Lambda^{(A)}\|_F^2$. Using $\sum_{|s-r|=h}|B_{ts}||B_{tr}|\leq 2\|B_{t\cdot}\|^2$ for every lag $h\ge0$, $\norm{A_z^\top\Gamma_e(-h)A_z}_{\mathrm{op}}=\norm{A_z^\top\Gamma_e(h)A_z}_{\mathrm{op}}$, and the operator-norm summability of the transformed autocovariances (Assumption~A.5):
\begin{equation}\label{Qt_bound}
    \mathbb{E}\|Q_t\|^2 \leq \frac{\|B_{t\cdot}\|^2\,\|\Lambda^{(A)}\|_F^2}{(N_x+N_y)^2}\Big[\|A_z^\top\Sigma_e A_z\|_{\mathrm{op}} + 2\sum_{h=1}^\infty\norm{A_z^\top\Gamma_e(h)A_z}_{\mathrm{op}}\Big] = \mathcal{O}\!\Big(\frac{\|B_{t\cdot}\|^2}{N_x+N_y}\Big),
\end{equation}
where the final order uses $\|\Lambda^{(A)}\|_F^2=\mathcal O(N_x+N_y)$ by Assumption~A.6.
Hence,
\begin{equation*}
\mathbb{E}[I_1] = \frac1T\sum_{t}\mathbb{E}\|Q_t\|^2 = \mathcal{O}\!\Big(\frac{\|B\|_F^2}{(N_x+N_y)\,T}\Big)
= \mathcal{O}(\bar\alpha),
\end{equation*}
and therefore $I_1 = \mathcal{O}_P(\bar\alpha)$ by Markov's inequality.

For the second term in \eqref{factors}, $I_2$, apply the Cauchy-Schwarz inequality:
\begin{equation*}
    I_2 \leq  \frac{1}{T} \sum_{t=1}^{T} \Big[ \frac{1}{N_x+N_y} \sum_{i=1}^{N_x+N_y} \widetilde{Z}_{ti}^2 \Big]  \Big[ \frac{1}{N_x+N_y} \sum_{i=1}^{N_x+N_y} \norm{ \widehat{\Lambda}_{i}^{(A)} - H^{(A)} \Lambda_{i}^{(A)} }^2\Big].
\end{equation*}
By the factor structure with attention, the first bracket $\frac{1}{N_x+N_y}\sum_{i=1}^{N_x+N_y}\widetilde{Z}_{ti}^2$ has expectation bounded uniformly in $t$ under the moment and scaling conditions in Assumptions~A.1, A.3, and A.5--A.7, and is therefore $\mathcal{O}_P(1)$ uniformly in $t$. By Part~(1):
\begin{equation*}
  \frac{1}{N_x+N_y} \sum_{i=1}^{N_x+N_y} \norm{ \widehat{\Lambda}_{i}^{(A)} - H^{(A)} \Lambda_{i}^{(A)} }^2 = \mathcal{O}_P(\bar{\alpha}).  
\end{equation*}
Hence, we have $I_2 = \mathcal{O}_P(\bar{\alpha})$.

Combining the two bounds, we have 
\begin{equation*}
    \frac{1}{T} \sum_{t=1}^{T} \norm{\widehat{F}_{t}^{(B)} - (H^{(A)\top})^{-1} F_{t}^{(B)} }^2 \leq C (I_1+I_2) = \mathcal{O}_p(\bar{\alpha}),
\end{equation*}
which is the claim.
\end{proof}
 %%%%%%%%%%%%%%%%%%%
 %%%%%%%%%%%%%%%%%%%%%%%%%%%%%%%%%%%%%%%%%%%%%%%%%%%%%%%%%%%%%%%%%%%%%%%%%%%%%%%
 \subsection{Identification of the $Y$-strong subspace (Lemma~\ref{lem1})}
Recall $\mathcal S_Y\equiv\operatorname{span}\{\text{columns of }\Lambda_{y_s}^{(A)}\}\subset\mathbb R^{N_y}$.
\begin{lem}\label{lem1}
Let $\Sigma_{YY}^{(A)}\equiv \mathbb E[Y_t^{(A)}Y_t^{(A)\top}]$, understood as the time average $\lim_{T\to\infty}T^{-1}\sum_{t=1}^T\mathbb E[Y_t^{(A)}Y_t^{(A)\top}]$ when the temporally weighted panel is not stationary in $t$.
\begin{itemize}
\item[(a)] (\textit{$Y$-strong eigen-structure in the $Y$ block})
Under Assumption~B.1, $\Sigma_{YY}^{(A)}$ has exactly $k_{y_s}$ eigenvalues of
order $N_{y,\mathrm{eff}}$, and the associated eigenspace equals $\mathcal S_Y$.

\item[(b)] (\textit{Uniqueness of the $Y$-strong factor coordinates up to rotation})
Let $H^{(A)}\in\mathbb R^{k\times k}$ be the invertible rotation matrix from Theorem~\ref{thm1}; it need not be orthogonal. Write $F_t^{(B)}=(F_{y,t}^S,\,F_t^R)$ and
$\Lambda_y^{(A)}=(\Lambda_{y_s}^{(A)},\,\Lambda_{y,R}^{(A)})$, with
$k=k_{y_s}+k_R$. By part (a), the $Y$-strong subspace is identified as the leading eigenspace $\mathcal S_Y$ of $\Sigma_{YY}^{(A)}$, and its coordinates are identified only up to the choice of an orthonormal basis of this eigenspace. Let
\[
H_{y_s}^{(A)}\in\mathbb R^{k_{y_s}\times k_{y_s}},
\qquad
H_{y_s}^{(A)\top}H_{y_s}^{(A)}=I_{k_{y_s}},
\]
denote the orthogonal matrix aligning the estimated and population coordinates within the $Y$-strong subspace, constructed in the proof from orthonormal bases of the estimated and population eigenspaces; its orthogonality is therefore inherited from these bases rather than assumed of $H^{(A)}$ (in particular, $H_{y_s}^{(A)}$ is not a block row of $H^{(A)}$).

If $\widetilde H_{y_s}^{(A)}$ is the alignment associated with a different orthonormal basis of the estimated leading eigenspace, then there exists a $k_{y_s}\times k_{y_s}$ orthogonal matrix $Q$ such that
\[
\widetilde H_{y_s}^{(A)} = Q\,H_{y_s}^{(A)}.
\]
Equivalently, the $Y$-strong factor coordinates are unique up to rotation within the $k_{y_s}$-dimensional subspace.
\end{itemize}
\end{lem}
 \begin{proof}

\textit{Proof of part (a).}
Recall that
\[
Y_t^{(A)} = \Lambda_y^{(A)} F_t^{(B)} + e_{y,t}^{(A,B)},
\]
so that the $Y$-block covariance matrix admits the decomposition
\begin{equation}\label{eq:SigmaYY_decomp}
\Sigma_{YY}^{(A)}
\equiv
\mathbb{E}\!\big[Y_t^{(A)}Y_t^{(A)\top}\big]
=
\Lambda_y^{(A)} \Sigma_F^{(B)} \Lambda_y^{(A)\top}
+
\Sigma_{e,Y}^{(A,B)} .
\end{equation}
Partition $\Lambda_y^{(A)}=(\Lambda_{y_s}^{(A)},\,\Lambda_{y,R}^{(A)})$ and
$F_t^{(B)}=(F_{y,t}^S,\,F_t^R)$ conformably with $k=k_{y_s}+k_R$. Writing
$\Sigma_F^{(B)}$ in block form, with $Y$-strong block
$\Sigma_{F,y}^{(B)}=\mathbb E\big[F_{y,t}^S F_{y,t}^{S\top}\big]$ as in
Theorem~\ref{thm2}, yields
\begin{align}\label{eq:signal_expansion}
\Lambda_y^{(A)} \Sigma_F^{(B)} \Lambda_y^{(A)\top}
&=
\Lambda_{y_s}^{(A)} \Sigma_{F,y}^{(B)} \Lambda_{y_s}^{(A)\top}
+
\Lambda_{y_s}^{(A)} \Sigma_{F,yR} \Lambda_{y,R}^{(A)\top}
\nonumber\\
&\quad+
\Lambda_{y,R}^{(A)} \Sigma_{F,Ry} \Lambda_{y_s}^{(A)\top}
+
\Lambda_{y,R}^{(A)} \Sigma_{F,R} \Lambda_{y,R}^{(A)\top}.
\end{align}

\medskip
\noindent\textbf{Leading eigenvalues.}
Let
\[
G_y \equiv \Lambda_{y_s}^{(A)}(\Sigma_{F,y}^{(B)})^{1/2}.
\]
Then
\[
\Lambda_{y_s}^{(A)} \Sigma_{F,y}^{(B)} \Lambda_{y_s}^{(A)\top} = G_y G_y^\top,
\]
whose nonzero eigenvalues coincide with those of
\[
G_y^\top G_y
=
(\Sigma_{F,y}^{(B)})^{1/2}
\Lambda_{y_s}^{(A)\top}\Lambda_{y_s}^{(A)}
(\Sigma_{F,y}^{(B)})^{1/2}.
\]
By Assumption~B.1,
\[
\frac{1}{N_{y,\mathrm{eff}}}
\Lambda_{y_s}^{(A)\top}\Lambda_{y_s}^{(A)}
\;\to\;
\Sigma_{\Lambda_{y_s}}^{(A),Y} \succ 0,
\]
and $\Sigma_{F,y}^{(B)} \succ 0$, since it is a principal submatrix of the second-moment matrix of $F_t^{(B)}$, whose smallest eigenvalue is bounded away from zero by the spectrum condition in Assumption~A.1. Hence,
$G_y^\top G_y$ has exactly $k_{y_s}$ eigenvalues of order $N_{y,\mathrm{eff}}$,
and therefore so does
$\Lambda_{y_s}^{(A)} \Sigma_{F,y}^{(B)} \Lambda_{y_s}^{(A)\top}$.

\medskip
\noindent\textbf{Remainder terms.}
By Assumption~B.1, the remaining factor directions in the $Y$ block generate eigenvalues of strictly smaller order than $N_{y,\mathrm{eff}}$. Consequently,
the cross terms and the weak-factor term in \eqref{eq:signal_expansion} satisfy
\[
\big\|
\Lambda_{y_s}^{(A)} \Sigma_{F,yR} \Lambda_{y,R}^{(A)\top}
+
\Lambda_{y,R}^{(A)} \Sigma_{F,Ry} \Lambda_{y_s}^{(A)\top}
+
\Lambda_{y,R}^{(A)} \Sigma_{F,R} \Lambda_{y,R}^{(A)\top}
\big\|_{\mathrm{op}}
=
o(N_{y,\mathrm{eff}}).
\]
Under the block-orthogonality normalization of the partition, $\Sigma_{F,yR}=0$ and the cross terms vanish identically; the displayed bound covers the general case.
Moreover, by Assumptions~A.1, A.2, and A.5 (the operator-norm bound on $B$, summability of the autocovariances, and idiosyncratic control), the idiosyncratic covariance
$\Sigma_{e,Y}^{(A,B)}$ has eigenvalues uniformly bounded in $N$, and thus
\[
\|\Sigma_{e,Y}^{(A,B)}\|_{\mathrm{op}} = o(N_{y,\mathrm{eff}}).
\]

Combining with \eqref{eq:SigmaYY_decomp}, we obtain
\[
\Sigma_{YY}^{(A)}
=
\Lambda_{y_s}^{(A)} \Sigma_{F,y}^{(B)} \Lambda_{y_s}^{(A)\top}
+
R_Y,
\qquad
\|R_Y\|_{\mathrm{op}} = o(N_{y,\mathrm{eff}}).
\]

\medskip
\noindent\textbf{Eigenspace identification.}
It follows that
\[
\lambda_{k_{y_s}}\!\big(\Sigma_{YY}^{(A)}\big)
\asymp N_{y,\mathrm{eff}},
\qquad
\lambda_{k_{y_s}+1}\!\big(\Sigma_{YY}^{(A)}\big)
= o(N_{y,\mathrm{eff}}).
\]
By the Courant--Fischer min--max theorem, the eigenspace associated with the
$k_{y_s}$ largest eigenvalues of $\Sigma_{YY}^{(A)}$ coincides with
\[
\mathcal S_Y \equiv \operatorname{span}\!\big(\Lambda_{y_s}^{(A)}\big).
\]
This proves part (a).

\medskip
\noindent\textit{Proof of part (b).}
By part (a), $\mathcal S_Y$ is the eigenspace associated with the $k_{y_s}$
largest eigenvalues of $\Sigma_{YY}^{(A)}$, and these eigenvalues are separated
from the remaining ones by a gap of order $N_{y,\mathrm{eff}}$. Let
$\Gamma\in\mathbb R^{N_y\times k_{y_s}}$ collect an orthonormal basis of
$\mathcal S_Y$, and let $\widehat\Gamma\in\mathbb R^{N_y\times k_{y_s}}$ collect
an orthonormal basis of the eigenspace associated with the $k_{y_s}$ largest
eigenvalues of the sample $Y$-block second-moment matrix
$\widehat\Sigma_{YY}^{(A)}\equiv\frac{1}{T}\sum_{t=1}^T Y_t^{(A)}Y_t^{(A)\top}$.
Under Assumptions A.1--A.7,
$\|\widehat\Sigma_{YY}^{(A)}-\Sigma_{YY}^{(A)}\|_{\mathrm{op}}=o_p(N_{y,\mathrm{eff}})$,
so by Weyl's inequality the eigenvalue gap is preserved, and the Davis--Kahan
$\sin\Theta$ theorem yields
\[
\big\|\widehat\Gamma\widehat\Gamma^\top-\Gamma\Gamma^\top\big\|_{\mathrm{op}}
= o_p(1).
\]
Hence $\widehat\Gamma^\top\Gamma$ is invertible with probability approaching
one. Define $H_{y_s}^{(A)}$ as the orthogonal factor in the polar decomposition
\[
\widehat\Gamma^\top\Gamma = H_{y_s}^{(A)}\,P,
\qquad P \ \text{symmetric positive definite}.
\]
By construction, $H_{y_s}^{(A)\top}H_{y_s}^{(A)}=I_{k_{y_s}}$: the
orthogonality of $H_{y_s}^{(A)}$ follows from the orthonormality of the bases
$\Gamma$ and $\widehat\Gamma$, and does not involve the global rotation
$H^{(A)}$ of Theorem~\ref{thm1}, which is invertible but not in general
orthogonal.

Now let $\widehat\Gamma'=\widehat\Gamma O$ be another orthonormal basis of the
same estimated leading eigenspace, where $O$ is a $k_{y_s}\times k_{y_s}$
orthogonal matrix, and let $\widetilde H_{y_s}^{(A)}$ denote the associated
alignment. Then
\[
\widehat\Gamma'^{\top}\Gamma
= O^\top\widehat\Gamma^\top\Gamma
= \big(O^\top H_{y_s}^{(A)}\big)P,
\]
and since $O^\top H_{y_s}^{(A)}$ is orthogonal and $P$ is symmetric positive
definite, uniqueness of the polar decomposition gives
\[
\widetilde H_{y_s}^{(A)} = Q\,H_{y_s}^{(A)},
\qquad Q = O^\top \ \text{orthogonal}.
\]
Thus, the $Y$-strong factor coordinates are identified uniquely up to an orthogonal rotation within the $k_{y_s}$-dimensional subspace, completing the proof.
\end{proof}

\subsection{Feasible estimation of the $Y$-strong block (Lemma~\ref{lem_bridge})}\label{appendix_construction}
Theorems~\ref{thm2} and \ref{thm3} concern estimators of the $Y$-strong factor coordinates. This subsection constructs them from feasible quantities and establishes the alignment used in the proofs, which combines the global rotation $H^{(A)}$ of Theorem~\ref{thm1} with the within-block alignment $H_{y_s}^{(A)}$ of Lemma~\ref{lem1}(b).

\paragraph*{Coordinate normalization.} The $Y$-strong factor coordinates are defined only up to an invertible $k_{y_s}\times k_{y_s}$ transformation: replacing $(F_{y,t}^{S},\Lambda_{y_s}^{(A)})$ by $(RF_{y,t}^{S},\Lambda_{y_s}^{(A)}R^{-1})$ leaves the panel, the common component $C_{y,i,t}$, and every feasible statistic unchanged, transports $\Sigma_{F,y}^{(B)}$, $\Sigma_{\Lambda,y_s}^{(A)}$, $\Omega_{y,i}^{(A,B)}$, and $\Xi_{y_s,t}^{(A,B)}$ by the corresponding congruences, and leaves the variances $\sigma^2_{C,it,F}$ and $\sigma^2_{C,it,\Lambda}$ invariant. We fix the coordinates by the within-block normalization
\begin{equation}\label{eq:ys_normalization}
\frac{1}{N_{y,\mathrm{eff}}}\,\Lambda_{y_s}^{(A)\top}\Lambda_{y_s}^{(A)} = I_{k_{y_s}},
\end{equation}
obtained at $R=(N_{y,\mathrm{eff}}^{-1}\Lambda_{y_s}^{(A)\top}\Lambda_{y_s}^{(A)})^{1/2}$, which converges by Assumption~B.1, so that all limit objects exist in the normalized coordinates and $\Sigma_{\Lambda_{y_s}}^{(A),Y}=I_{k_{y_s}}$. The block normalization $\mathbb E\lbrack F_{y,t}^{S}F_t^{R\top}\rbrack=0$ of Section~\ref{Sec:inferential_theory} is unaffected, and the residual indeterminacy is exactly the orthogonal rotation of Lemma~\ref{lem1}(b). Under \eqref{eq:ys_normalization} the matrix $\Gamma\equiv N_{y,\mathrm{eff}}^{-1/2}\Lambda_{y_s}^{(A)}$ has exactly orthonormal columns and spans $\mathcal S_Y$; fixing this basis selects $H_{y_s}^{(A)}$ uniquely within the orthogonal family of Lemma~\ref{lem1}(b).

\paragraph*{Construction.} The estimators are built in two feasible steps.
\begin{itemize}
\item[(C1)] PCA on the attended panel yields the loadings $\widehat\Lambda^{(A)}$ of Theorem~\ref{thm1} and the pooled OLS factor estimator $\widehat F_t^{(B)}$ of \eqref{thm2b_ols} below; with $\widehat\Lambda_y^{(A)}$ collecting the $Y$-block rows of $\widehat\Lambda^{(A)}$, the fitted $Y$-block common component is $\widehat\chi_{y,t}\equiv \widehat\Lambda_y^{(A)}\widehat F_t^{(B)}$, which is exactly invariant to the rotation of this first step.
\item[(C2)] With $\widehat\Gamma$ the orthonormal eigenvectors of the $k_{y_s}$ largest eigenvalues of $\widehat\Sigma_{YY}^{(A)}=\frac{1}{T}\sum_{t=1}^T Y_t^{(A)}Y_t^{(A)\top}$, as in the proof of Lemma~\ref{lem1}(b), set
\begin{equation}\label{eq:construction}
\widehat F_{y,t}^{S} \equiv \frac{1}{\sqrt{N_{y,\mathrm{eff}}}}\,\widehat\Gamma^\top\widehat\chi_{y,t},
\end{equation}
and let $\widehat\Lambda_{y_s,i}^{(A)}$ be the time-series OLS coefficient of $Z_{i,t}^{(A,B)}$ on $\widehat F_{y,t}^{S}$, as in \eqref{thm2_a2}.
\end{itemize}
Both steps use only $\widetilde Z$ and the fixed operators ($N_{y,\mathrm{eff}}=\operatorname{tr}(A_z^\top P_yA_z)$ is known), so the estimators are feasible. The projection in (C2) is applied to the fitted common component rather than to the raw observations $Y_t^{(A)}$: the fitted component aggregates the full panel through $\widehat F_t^{(B)}$, which is what produces the $\sqrt{N_{\mathrm{eff}}}$ rate in Theorem~\ref{thm2}(b), whereas projecting $Y_t^{(A)}$ itself would use the $Y$ block alone and converge at the $\sqrt{N_{y,\mathrm{eff}}}$ rate of the $Y$-only estimator in Remark~\ref{rem:efficiency}.

Write $\widehat G\equiv\widehat\Gamma^\top\Gamma$ for the sample alignment matrix, and let $\widehat G=H_{y_s}^{(A)}P$ be its polar decomposition, with $H_{y_s}^{(A)}$ orthogonal and $P$ symmetric positive definite with probability approaching one: this is the construction in the proof of Lemma~\ref{lem1}(b), at the basis $\Gamma$. The singular values of $\widehat G$ are the cosines of the principal angles between the estimated and population eigenspaces, so $\|\widehat G\|_{\mathrm{op}}\le1$.

\begin{lem}[Feasible $Y$-strong estimators and rotation alignment]\label{lem_bridge}
Assume A.1--A.7 and B.1--B.2 hold, together with the growth conditions of Theorem~\ref{thm2} and the asymptotic exclusion condition $\ell_N\equiv\|\Lambda_{y,R}^{(A)}\|_{\mathrm{op}}\to0$. Let $\widetilde F_t^{(B)}\equiv H^{(A)\top}\widehat F_t^{(B)}$ and $\widetilde\Lambda_i^{(A)}\equiv (H^{(A)})^{-1}\widehat\Lambda_i^{(A)}$ denote the globally rotated estimators, as in the proof of Theorem~\ref{thm2}(b) below, with $\widetilde\Lambda_y^{(A)}$ collecting the $Y$-block rows. Then:
\begin{itemize}
\item[(a)] (\textit{Alignment}) $\|P-I_{k_{y_s}}\|=\mathcal O_p\big(T^{-1}+N_{y,\mathrm{eff}}^{-2}\big)$, so that $\sqrt{T}\,\|\widehat G-H_{y_s}^{(A)}\|=o_p(1)$ and $\sqrt{N_{\mathrm{eff}}}\,\|\widehat G-H_{y_s}^{(A)}\|=o_p(1)$; the same rates hold for $\widehat G^{-\top}-H_{y_s}^{(A)}$, and $\|\widehat G^{-1}\|=\mathcal O_p(1)$.
\item[(b)] (\textit{Exact decomposition})
\[
\widehat F_{y,t}^{S} \;=\; \widehat G\,\Big[F_{y,t}^{S}+S_y\big(\widetilde F_t^{(B)}-F_t^{(B)}\big)\Big]\;+\;\lambda_t\;+\;\rho_t,
\qquad
\lambda_t \equiv \frac{1}{\sqrt{N_{y,\mathrm{eff}}}}\,\widehat\Gamma^\top\Lambda_{y,R}^{(A)}S_R\widetilde F_t^{(B)},
\quad
\rho_t \equiv \frac{1}{\sqrt{N_{y,\mathrm{eff}}}}\,\widehat\Gamma^\top\big(\widetilde\Lambda_y^{(A)}-\Lambda_y^{(A)}\big)\widetilde F_t^{(B)},
\]
with $S_R$ the selection matrix complementary to $S_y$, as in Assumption~B.5(i), $\frac{1}{T}\sum_{t=1}^T\big(\|\lambda_t\|^2+\|\rho_t\|^2\big)=\mathcal O_p(\bar\alpha)$ and, at each fixed $t$, $\sqrt{N_{\mathrm{eff}}}\,(\lambda_t+\rho_t)=o_p(1)$; under exact exclusion, $\Lambda_{y,R}^{(A)}=0$, the leakage term $\lambda_t$ vanishes identically.
\item[(c)] (\textit{Mean-square bridge}) $\displaystyle\frac{1}{T}\sum_{t=1}^T\big\|\widehat G^{-1}\widehat F_{y,t}^{S}-F_{y,t}^{S}\big\|^2=\mathcal O_p(\bar\alpha)$.
\end{itemize}
\end{lem}

\begin{proof}
\textit{Part (a).} Since $\Gamma^\top\Gamma=I_{k_{y_s}}$ exactly under \eqref{eq:ys_normalization},
\[
P^2=\widehat G^\top\widehat G=\Gamma^\top\widehat\Gamma\widehat\Gamma^\top\Gamma
= I_{k_{y_s}}-\Gamma^\top\big(I-\widehat\Gamma\widehat\Gamma^\top\big)\Gamma,
\qquad
\big\|\Gamma^\top\big(I-\widehat\Gamma\widehat\Gamma^\top\big)\Gamma\big\|
=\big\|\big(I-\widehat\Gamma\widehat\Gamma^\top\big)\Gamma\big\|^2
\le\big\|\widehat\Gamma\widehat\Gamma^\top-\Gamma\Gamma^\top\big\|_{\mathrm{op}}^2,
\]
using $(I-\widehat\Gamma\widehat\Gamma^\top)\Gamma\Gamma^\top=(I-\widehat\Gamma\widehat\Gamma^\top)(\Gamma\Gamma^\top-\widehat\Gamma\widehat\Gamma^\top)$ in the last step. The $Y$ block is $Y_t^{(A)}=\Lambda_{y_s}^{(A)}F_{y,t}^{S}+\Lambda_{y,R}^{(A)}F_t^{R}+e_{y,t}^{(A,B)}$, so
\[
\widehat\Sigma_{YY}^{(A)}
=\Lambda_{y_s}^{(A)}\Big(\frac{1}{T}\sum_t F_{y,t}^{S}F_{y,t}^{S\top}\Big)\Lambda_{y_s}^{(A)\top}
+\Lambda_{y_s}^{(A)}\Big(\frac{1}{T}\sum_t F_{y,t}^{S}e_{y,t}^{(A,B)\top}\Big)
+(\cdot)^\top
+\frac{1}{T}\sum_t e_{y,t}^{(A,B)}e_{y,t}^{(A,B)\top}
+\widehat L_T,
\]
where $\widehat L_T$ collects the terms involving $\Lambda_{y,R}^{(A)}$. Relative to the reference operator $\Lambda_{y_s}^{(A)}\Sigma_{F,y}^{(B)}\Lambda_{y_s}^{(A)\top}$, whose nonzero eigenvalues equal $N_{y,\mathrm{eff}}$ times the eigenvalues of $\Sigma_{F,y}^{(B)}$ under \eqref{eq:ys_normalization} and whose leading eigenspace is $\mathcal S_Y$, the perturbation has operator norm $\mathcal O_p\big(N_{y,\mathrm{eff}}T^{-1/2}+1\big)$: the factor term by the mixing and moment conditions of Assumptions~A.1--A.3, the cross terms by the factor--error score bound of Assumption~A.5 summed over the $Y$ block, $\sum_{j\in Y}(A_z^\top A_z)_{jj}=N_{y,\mathrm{eff}}$, and the idiosyncratic term by $\|\Sigma_{e,Y}^{(A,B)}\|_{\mathrm{op}}=\mathcal O(1)$ together with the quadratic-form variance bound of Assumption~A.5, as in Lemma~\ref{lem_a1}. The leakage block satisfies $\|\widehat L_T\|_{\mathrm{op}}=\mathcal O_p\big(\sqrt{N_{y,\mathrm{eff}}}\,T^{-1/2}\ell_N+\ell_N^{2}+\ell_N\sqrt{N_{y,\mathrm{eff}}/T}\big)$: the cross terms with $\Lambda_{y_s}^{(A)}$ because $\frac{1}{T}\sum_t F_{y,t}^{S}F_t^{R\top}=\mathcal O_p(T^{-1/2})$ under the block-orthogonality normalization and the mixing and moment conditions of Assumptions~A.2--A.3, the quadratic term because $\frac{1}{T}\sum_t F_t^{R}F_t^{R\top}=\mathcal O_p(1)$, and the cross terms with the errors by the factor--error score bound of Assumption~A.5; since $\ell_N=\mathcal O(1)$, every leakage term is dominated by the displayed perturbation order. By Weyl's inequality and the Davis--Kahan $\sin\Theta$ theorem, as in the proof of Lemma~\ref{lem1}(b),
\[
\big\|\widehat\Gamma\widehat\Gamma^\top-\Gamma\Gamma^\top\big\|_{\mathrm{op}}
=\mathcal O_p\big(T^{-1/2}+N_{y,\mathrm{eff}}^{-1}\big),
\]
hence $\|P^2-I_{k_{y_s}}\|=\mathcal O_p(T^{-1}+N_{y,\mathrm{eff}}^{-2})$. The eigenvalues of $P$ are the singular values of $\widehat G$ and lie in $\lbrack0,1\rbrack$, so $\|P-I_{k_{y_s}}\|\le\|P^2-I_{k_{y_s}}\|$, which gives the stated rate. Then $\|\widehat G-H_{y_s}^{(A)}\|=\|H_{y_s}^{(A)}(P-I_{k_{y_s}})\|=\|P-I_{k_{y_s}}\|$, and
\[
\sqrt{T}\,\|P-I_{k_{y_s}}\|=\mathcal O_p\big(T^{-1/2}+\sqrt{T}\,N_{y,\mathrm{eff}}^{-2}\big)=o_p(1),
\qquad
\sqrt{N_{\mathrm{eff}}}\,\|P-I_{k_{y_s}}\|=\mathcal O_p\big(\sqrt{N_{\mathrm{eff}}}\,T^{-1}+N_{y,\mathrm{eff}}^{-3/2}\big)=o_p(1),
\]
since $\sqrt{T}/N_{y,\mathrm{eff}}=\mathcal O(\sqrt{T}\,\bar\alpha)$ and $\sqrt{N_{\mathrm{eff}}}/T=\mathcal O(\sqrt{N_{\mathrm{eff}}}\,\bar\alpha)$ by Assumptions~A.7 and B.2, as in the proof of Theorem~\ref{thm2}(a), and both vanish under the growth conditions. The bounds for $\widehat G^{-\top}=H_{y_s}^{(A)}P^{-1}$ follow from $\|P^{-1}-I_{k_{y_s}}\|\le\|P^{-1}\|\,\|P-I_{k_{y_s}}\|$ with $\|P^{-1}\|=\mathcal O_p(1)$.

\textit{Part (b).} By exact invariance of the fitted values, $\widehat\chi_{y,t}=\widehat\Lambda_y^{(A)}\widehat F_t^{(B)}=\widetilde\Lambda_y^{(A)}\widetilde F_t^{(B)}$. Writing $\widetilde\Lambda_y^{(A)}=\Lambda_y^{(A)}+(\widetilde\Lambda_y^{(A)}-\Lambda_y^{(A)})$ and, by the partition, $\Lambda_y^{(A)}=\Lambda_{y_s}^{(A)}S_y+\Lambda_{y,R}^{(A)}S_R$, we obtain
\[
\widehat F_{y,t}^{S}
=\frac{1}{\sqrt{N_{y,\mathrm{eff}}}}\,\widehat\Gamma^\top\Lambda_{y_s}^{(A)}\,S_y\widetilde F_t^{(B)}+\lambda_t+\rho_t
=\widehat G\,S_y\widetilde F_t^{(B)}+\lambda_t+\rho_t,
\]
which is the displayed identity, since $N_{y,\mathrm{eff}}^{-1/2}\widehat\Gamma^\top\Lambda_{y_s}^{(A)}=\widehat\Gamma^\top\Gamma=\widehat G$ and $S_y\widetilde F_t^{(B)}=F_{y,t}^{S}+S_y(\widetilde F_t^{(B)}-F_t^{(B)})$. For the mean-square bound, $\|\widehat\Gamma\|_{\mathrm{op}}=1$ and the Cauchy--Schwarz inequality give
\[
\frac{1}{T}\sum_t\|\rho_t\|^2
\le\frac{\|\widetilde\Lambda_y^{(A)}-\Lambda_y^{(A)}\|_F^2}{N_{y,\mathrm{eff}}}\cdot\frac{1}{T}\sum_t\big\|\widetilde F_t^{(B)}\big\|^2
=\mathcal O_p(\bar\alpha)\cdot\mathcal O_p(1),
\]
using $\|\widetilde\Lambda_y^{(A)}-\Lambda_y^{(A)}\|_F^2\le\|(H^{(A)})^{-1}\|_{\mathrm{op}}^2\sum_{i=1}^{N_x+N_y}\|\widehat\Lambda_i^{(A)}-H^{(A)}\Lambda_i^{(A)}\|^2=\mathcal O_p\big((N_x+N_y)\,\bar\alpha\big)$ by Theorem~\ref{thm1}, $N_{y,\mathrm{eff}}\asymp N_x+N_y$ by Assumption~B.2, and $\frac{1}{T}\sum_t\|\widetilde F_t^{(B)}\|^2=\mathcal O_p(1)$ by Assumption~A.3 and Theorem~\ref{thm1}. At a fixed $t$, the sharper bound uses the structure of the average: $N_{y,\mathrm{eff}}^{-1/2}\widehat\Gamma^\top(\widetilde\Lambda_y^{(A)}-\Lambda_y^{(A)})$ is a cross-sectionally weighted average of the loading estimation errors with delocalized weights, and such averages converge at the rate $\bar\alpha$ rather than $\bar\alpha^{1/2}$, by identity \eqref{a1} and Lemma~\ref{lem_a1}, the analogue of Lemmas~B.2--B.3 of \cite{bai2003inferential} for the attended panel. Since $\|\widetilde F_t^{(B)}\|=\mathcal O_p(1)$ at fixed $t$ by the moment conditions, $\rho_t=\mathcal O_p(\bar\alpha)$ and $\sqrt{N_{\mathrm{eff}}}\,\rho_t=\mathcal O_p(\sqrt{N_{\mathrm{eff}}}\,\bar\alpha)=o_p(1)$ by the growth condition. For the leakage term, $\|\widehat\Gamma\|_{\mathrm{op}}=1$ and $\|\Lambda_{y,R}^{(A)}S_R\widetilde F_t^{(B)}\|\le\ell_N\|\widetilde F_t^{(B)}\|$ give
\[
\frac{1}{T}\sum_t\|\lambda_t\|^2
\le\frac{\ell_N^2}{N_{y,\mathrm{eff}}}\cdot\frac{1}{T}\sum_t\big\|\widetilde F_t^{(B)}\big\|^2
=\mathcal O_p\big(\ell_N^2\,N_{y,\mathrm{eff}}^{-1}\big)
=o_p(\bar\alpha),
\]
since $N_{y,\mathrm{eff}}^{-1}=\mathcal O(\bar\alpha)$ by Assumptions~A.7 and B.2 and $\ell_N\to0$. At a fixed $t$, $\|\lambda_t\|\le N_{y,\mathrm{eff}}^{-1/2}\,\ell_N\,\|\widetilde F_t^{(B)}\|=\mathcal O_p\big(N_{y,\mathrm{eff}}^{-1/2}\,\ell_N\big)$, so $\sqrt{N_{\mathrm{eff}}}\,\lambda_t=\mathcal O_p\big(\sqrt{N_{\mathrm{eff}}/N_{y,\mathrm{eff}}}\;\ell_N\big)=\mathcal O_p(\ell_N)=o_p(1)$ by Assumption~B.2.

\textit{Part (c).} By part (b), $\widehat G^{-1}\widehat F_{y,t}^{S}-F_{y,t}^{S}=S_y(\widetilde F_t^{(B)}-F_t^{(B)})+\widehat G^{-1}(\lambda_t+\rho_t)$, so
\[
\frac{1}{T}\sum_t\big\|\widehat G^{-1}\widehat F_{y,t}^{S}-F_{y,t}^{S}\big\|^2
\le\frac{2}{T}\sum_t\big\|\widetilde F_t^{(B)}-F_t^{(B)}\big\|^2
+\frac{4\,\|\widehat G^{-1}\|^2}{T}\sum_t\big(\|\lambda_t\|^2+\|\rho_t\|^2\big)
=\mathcal O_p(\bar\alpha),
\]
by Theorem~\ref{thm1} and parts (a) and (b).
\end{proof}

The asymptotic exclusion condition sharpens the second clause of Assumption~B.1. Identification (Lemma~\ref{lem1}) needs only the remaining $Y$-block directions to be of smaller order than $N_{y,\mathrm{eff}}$, but a feasible $k_{y_s}$-dimensional extraction places any remaining $Y$-block loading mass into the extracted factors at the central limit scale: the leakage term $\lambda_t$ of part (b) is of order $\ell_N$ after $\sqrt{N_{\mathrm{eff}}}$ scaling, so the limit theory requires $\ell_N\to0$, with exact exclusion $\Lambda_{y,R}^{(A)}=0$ as the special case $\lambda_t\equiv0$. In the theory-validation design of Subsection~\ref{Sec:sim_theory} the exact form holds at the identity operator, where the $Y$ block loads only on the $Y$-strong factors. Since $\Lambda^{(A)}=A_z^\top\Lambda$, cross-sectional attention that mixes the blocks transfers $X$-block loading mass on the remaining factors into the $Y$ rows, so operators with cross-block mixing of fixed strength keep $\ell_N$ bounded away from zero and sit outside these sufficient conditions; shrinking the mixing toward the identity, as in the blend configuration of Subsection~\ref{Sec:sim_theory}, reduces $\ell_N$ along with the concentration that Assumption~A.7 restricts. The near-nominal coverage of the blend configurations at fixed shrinkage (Table~\ref{tab:e2_gate}) indicates that the condition is sufficient rather than necessary.

  \subsection{Proof of Theorem 2}
 \begin{proof}
  \textit{Proof of Part (a):}
  Fix a $Y$-unit $i \in \{1,\cdots,N_y \} $. By definition of the transformed model, we can write
  \begin{equation}\label{thm2_a1}
  Z_{i,t}^{(A,B)} = \Lambda_{y_s,i}^{(A)\top} F_{y,t}^S + e_{i,t}^{(A,B)}, \ t=1,\cdots,T.
 \end{equation}
 Here, with a slight abuse of notation, $e_{i,t}^{(A,B)}$ collects the transformed idiosyncratic error together with the leakage contribution $\Lambda_{R,i}^{(A)\top}F_t^{R}$ of the remaining factors, where $\Lambda_{R,i}^{(A)}=S_R\Lambda_i^{(A)}$ is the unit's remaining-factor loading vector. A row norm is bounded by the operator norm, so $\|\Lambda_{R,i}^{(A)}\|\le\ell_N\to0$ under the asymptotic exclusion condition of Theorem~\ref{thm2}; the leakage contribution is shown below to be negligible at the $\sqrt{T}$ scale, so the long-run covariance $\Omega_{y,i}^{(A,B)}$ in Theorem~\ref{thm2}(a) is that of the transformed idiosyncratic error, and under exact exclusion the contribution vanishes identically. By Assumption A.4 and the block-orthogonality normalization $\mathbb E\lbrack F_{y,t}^{S} F_{t}^{R\top}\rbrack = 0$ imposed in Section~\ref{Sec:inferential_theory}, we have
 \[
 \mathbb{E}(F_{y,t}^S e_{i,t}^{(A,B)}) =0.
 \]
 The estimator $\widehat{\Lambda}_{y_s,i}^{(A)}$ is obtained by regressing the transformed series $Z_{i,t}^{(A,B)}$ on the estimated $Y$-strong factor block $\widehat{F}_{y,t}^S$:
 \begin{equation}\label{thm2_a2}
     \widehat{\Lambda}_{y_s,i}^{(A)} = \Big(\frac{1}{T}\sum_{t=1}^T  \widehat{F}_{y,t}^S \widehat{F}_{y,t}^{S\top} \Big)^{-1} \Big( \frac{1}{T}\sum_{t=1}^T \widehat{F}_{y,t}^S  Z_{i,t}^{(A,B)} \Big).
 \end{equation}
 Define the rotated factor estimator
 \[
 \widetilde{F}_{y,t}^S \equiv \widehat G^{-1} \widehat{F}_{y,t}^S,
 \]
 with $\widehat G$ the sample alignment matrix of Lemma~\ref{lem_bridge}, invertible with probability approaching one, and the correspondingly rotated loading estimator
 \[
  \widetilde{\Lambda}_{y_s,i}^{(A)} \equiv \widehat G^{\top} \widehat{\Lambda}_{y_s,i}^{(A)}.
 \]
 Since OLS is invariant to non-singular linear transformations of the regressors, \eqref{thm2_a2} is equivalent to
 \begin{equation} \label{thm2_a3}
   \widetilde{\Lambda}_{y_s,i}^{(A)} = \Big(\frac{1}{T}\sum_{t=1}^T  \widetilde{F}_{y,t}^S \widetilde{F}_{y,t}^{S\top} \Big)^{-1} \Big( \frac{1}{T}\sum_{t=1}^T \widetilde{F}_{y,t}^S  Z_{i,t}^{(A,B)} \Big).
 \end{equation}
 By the mean-square bridge of Lemma~\ref{lem_bridge}(c), which combines the global consistency of Theorem~\ref{thm1} with the feasible construction \eqref{eq:construction},
 \begin{equation} \label{thm2_a4}
    \frac{1}{T}\sum_{t=1}^T  \norm{  \widetilde{F}_{y,t}^S  -  F_{y,t}^S  }^2 = \mathcal{O}_P(\bar{\alpha}), \ \bar{\alpha} \rightarrow 0,
 \end{equation}
that is, $\widetilde F_{y,t}^S = F_{y,t}^S + \Delta_{t}$ with
$\frac{1}{T}\sum_{t=1}^T \|\Delta_t\|^2 = \mathcal{O}_p(\bar\alpha)$.
We will only use this mean-square rate (rather than a uniform bound).

 Under Assumptions A.1-A.6, we have the usual law of large numbers (LLN) for the factors:
 \begin{equation} \label{thm2_a5}
    \frac{1}{T}\sum_{t=1}^T   F_{y,t}^S  F_{y,t}^{S \top} \xrightarrow[]{p} \Sigma_{F,y}^{(B)},
 \end{equation}
 with $\Sigma_{F,y}^{(B)}$ positive definite.

 Substituting \eqref{thm2_a1} into \eqref{thm2_a3} yields:
 \begin{equation} \label{thm2_a6}
   \widetilde{\Lambda}_{y_s,i}^{(A)}  = \Big(\frac{1}{T}\sum_{t=1}^T \widetilde{F}_{y,t}^S \widetilde{F}_{y,t}^{S \top}  \Big)^{-1} \Big( \frac{1}{T}\sum_{t=1}^T \widetilde{F}_{y,t}^S  F_{y,t}^{S \top} \Lambda_{y_s,i}^{(A)} + \frac{1}{T}\sum_{t=1}^T \widetilde{F}_{y,t}^S e_{i,t}^{(A,B)} \Big).
 \end{equation}
 Let
 \[
 \widehat\Sigma_{F,y}^{(T)} \equiv \frac{1}{T}\sum_{t=1}^T \widetilde{F}_{y,t}^S  \widetilde{F}_{y,t}^{S \top}, \ \widehat{C}_i^{(T)} \equiv \frac{1}{T}\sum_{t=1}^T \widetilde{F}_{y,t}^S e_{i,t}^{(A,B)}.
 \]
 Then
 \[
 \widetilde{\Lambda}_{y_s,i}^{(A)} = \widehat\Sigma_{F,y}^{(T)-1} \Bigg( \frac{1}{T}\sum_{t=1}^T   \widetilde{F}_{y,t}^S  F_{y,t}^{S \top} \Lambda_{y_s,i}^{(A)} + \widehat{C}_i^{(T)} \Bigg).
 \]
 Add and subtract $\Sigma_{F,y}^{(B)}\Lambda_{y_s,i}^{(A)}$:
 \begin{equation} \label{thm2_a7}
  \widetilde{\Lambda}_{y_s,i}^{(A)} -  \Lambda_{y_s,i}^{(A)} = \widehat\Sigma_{F,y}^{(T)-1} \widehat{C}_i^{(T)} + \Bigg[ \widehat\Sigma_{F,y}^{(T)-1} \Big( \frac{1}{T}\sum_{t=1}^T \widetilde{F}_{y,t}^S  F_{y,t}^{S \top} - \Sigma_{F,y}^{(B)} \Big) + \Big( \widehat\Sigma_{F,y}^{(T)-1} - (\Sigma_{F,y}^{(B)})^{-1}  \Big) \Sigma_{F,y}^{(B)} \Bigg] \Lambda_{y_s,i}^{(A)}.
  \end{equation}
 % Multiplying by $\sqrt{T}$ yields:
 % \begin{equation} \label{thm2_a8}
 %     \sqrt{T} ( \widetilde{\Lambda}_{y_s,i}^{(A)} -  \Lambda_{y_s,i}^{(A)}) = (\Sigma_{F,y}^{(B)})^{-1} \Big( \frac{1}{\sqrt{T}}\sum_{t=1}^T  F_{y,t}^{S} e_{i,t}^{(A,B)}  \Big) + r_{i,T},
 % \end{equation}
 % where $r_{i,T}$ is the remainder term.
 Write $\widetilde F_{y,t}^S = F_{y,t}^S+\Delta_t$. Then
\[
\widehat C_i^{(T)} = \frac{1}{T}\sum_{t=1}^T F_{y,t}^S e_{i,t}^{(A,B)}
+ \frac{1}{T}\sum_{t=1}^T \Delta_t e_{i,t}^{(A,B)}.
\]
Similarly,
\[
\widehat\Sigma_{F,y}^{(T)} = \frac{1}{T}\sum_{t=1}^T F_{y,t}^S F_{y,t}^{S\top}
+ \frac{1}{T}\sum_{t=1}^T\left(F_{y,t}^S\Delta_t^\top+\Delta_t F_{y,t}^{S\top}+\Delta_t\Delta_t^\top\right).
\]
Under Assumptions A.1--A.6 and Theorem~\ref{thm1}, the perturbation terms are
$o_p(1)$ in operator norm, hence $\widehat\Sigma_{F,y}^{(T)-1} =
\Sigma_{F,y}^{(B)-1}+o_p(1)$.
Moreover, the factor estimation error $\Delta_t = \widetilde F_{y,t}^S - F_{y,t}^S$ has, at leading order, the cross-sectional average representation
\[
\Delta_t = \bigl(\Sigma_{\Lambda,y_s}^{(A)}\bigr)^{-1}
\frac{1}{N_{\mathrm{eff}}}\sum_{j=1}^N \Lambda_{y_s,j}^{(A)}\, e_{j,t}^{(A,B)}
+ \mathcal{O}_P(\bar\alpha),
\]
which follows from the OLS expansion underlying Theorem~\ref{thm1}, together with the exact decomposition of Lemma~\ref{lem_bridge}(b), whose remainder $\widehat G^{-1}\rho_t$ is absorbed into the $\mathcal O_P(\bar\alpha)$ term. Substituting,
\[
\frac{1}{\sqrt{T}}\sum_{t=1}^T \Delta_t\, e_{i,t}^{(A,B)}
= \bigl(\Sigma_{\Lambda,y_s}^{(A)}\bigr)^{-1}
  \frac{1}{N_{\mathrm{eff}}}\sum_{j=1}^N \Lambda_{y_s,j}^{(A)}
  \underbrace{\frac{1}{\sqrt{T}}\sum_{t=1}^T e_{j,t}^{(A,B)}\,e_{i,t}^{(A,B)}}_{=\,\mathcal{O}_P(1)}
+ \mathcal{O}_P\!\bigl(\sqrt{T}\,\bar\alpha\bigr).
\]
The $j=i$ term in the cross-sectional sum contributes $\frac{1}{N_{\mathrm{eff}}}\cdot\mathcal{O}_P(\sqrt{T}) = \mathcal{O}_P(\sqrt{T}/N_{\mathrm{eff}}) = \mathcal{O}_P(\sqrt{T}\,\bar\alpha)$.
For $j\neq i$, cross-sectional weak dependence (Assumption~A.5) gives $\mathbb{E}\!\left[\frac{1}{\sqrt{T}}\sum_t e_{j,t}^{(A,B)}e_{i,t}^{(A,B)}\right] = \mathcal{O}(\sqrt{T}\,\mathrm{Cov}(e_{j,t},e_{i,t}))$; averaging over $j$ and using $\frac{1}{N_{\mathrm{eff}}}\sum_j|\mathrm{Cov}(e_{j,t},e_{i,t})| = \mathcal{O}(N_{\mathrm{eff}}^{-1})$ yields mean $\mathcal{O}(\sqrt{T}/N_{\mathrm{eff}}) = \mathcal{O}(\sqrt{T}\,\bar\alpha)$ and variance $\mathcal{O}(N_{\mathrm{eff}}^{-1}) = o(1)$.
The leakage component of $e_{i,t}^{(A,B)}$ contributes in addition $\frac{1}{N_{\mathrm{eff}}}\sum_j \Lambda_{y_s,j}^{(A)}\,\frac{1}{\sqrt{T}}\sum_t e_{j,t}^{(A,B)}F_t^{R\top}\Lambda_{R,i}^{(A)}$; each inner score has mean zero by Assumption~A.4 and is $\mathcal O_p(\ell_N)$ by the factor--error score bound of Assumption~A.5 and $\|\Lambda_{R,i}^{(A)}\|\le\ell_N$, so this term is $\mathcal O_p(\ell_N)=o_p(1)$.
Hence,
\[
\frac{1}{\sqrt{T}}\sum_{t=1}^T \Delta_t\, e_{i,t}^{(A,B)} = \mathcal{O}_P\!\bigl(\sqrt{T}\,\bar\alpha\bigr) = o_p(1)
\qquad\text{under } \sqrt{T}\,\bar\alpha\to 0.
\]
Therefore, we have the key stochastic expansion:
 \begin{equation} \label{thm2_a9}
    \sqrt{T} ( \widetilde{\Lambda}_{y_s,i}^{(A)} -  \Lambda_{y_s,i}^{(A)}) =  (\Sigma_{F,y}^{(B)})^{-1} \Big( \frac{1}{\sqrt{T}}\sum_{t=1}^T  F_{y,t}^{S} e_{i,t}^{(A,B)}  \Big) + o_P(1).
 \end{equation}
 In the score on the right-hand side of \eqref{thm2_a9}, the leakage component of $e_{i,t}^{(A,B)}$ contributes $\frac{1}{\sqrt{T}}\sum_t F_{y,t}^{S}F_t^{R\top}\Lambda_{R,i}^{(A)}$, which has mean zero by the block-orthogonality normalization $\mathbb E\lbrack F_{y,t}^{S}F_t^{R\top}\rbrack=0$ and variance $\mathcal O(\ell_N^2)$ by the mixing and moment conditions of Assumptions~A.2--A.3, hence is $\mathcal O_p(\ell_N)=o_p(1)$; the score therefore reduces to the transformed idiosyncratic score up to $o_p(1)$. By the time-series score central limit theorem in Assumption~B.5(ii), applied to that idiosyncratic score,
 \[
 \frac{1}{\sqrt{T}}\sum_{t=1}^T  F_{y,t}^{S} e_{i,t}^{(A,B)} \xrightarrow[]{d} \mathcal{N} (0, \Omega_{y,i}^{(A,B)}),
 \]
 where $\Omega_{y,i}^{(A,B)} = \lim_{T \rightarrow \infty} \text{Var} \Big(  \frac{1}{\sqrt{T}}\sum_{t=1}^T  F_{y,t}^{S} e_{i,t}^{(A,B)}  \Big)$. Combining this with the linearization \eqref{thm2_a9} yields
 \begin{equation} \label{thm2_a10}
     \sqrt{T} ( \widetilde{\Lambda}_{y_s,i}^{(A)} -  \Lambda_{y_s,i}^{(A)})  \xrightarrow[]{d} \mathcal{N} (0, (\Sigma_{F,y}^{(B)})^{-1}\Omega_{y,i}^{(A,B)}(\Sigma_{F,y}^{(B)})^{-1}).
 \end{equation}
 Recall $\widetilde{\Lambda}_{y_s,i}^{(A)} \equiv \widehat G^{\top} \widehat{\Lambda}_{y_s,i}^{(A)}$, so that
 \[
 \sqrt{T} ( \widehat{\Lambda}_{y_s,i}^{(A)} - H_{y_s}^{(A)} \Lambda_{y_s,i}^{(A)})  = \widehat G^{-\top} \sqrt{T} ( \widetilde{\Lambda}_{y_s,i}^{(A)} -  \Lambda_{y_s,i}^{(A)}) + \sqrt{T}\,\big(\widehat G^{-\top}-H_{y_s}^{(A)}\big)\Lambda_{y_s,i}^{(A)},
 \]
 where the second term is $o_p(1)$ and $\widehat G^{-\top}=H_{y_s}^{(A)}+o_p(1)$ by Lemma~\ref{lem_bridge}(a), so the left-hand side equals $H_{y_s}^{(A)} \sqrt{T} ( \widetilde{\Lambda}_{y_s,i}^{(A)} -  \Lambda_{y_s,i}^{(A)}) + o_p(1)$.
 Although $H_{y_s}^{(A)}$ is constructed from the sample, Lemma~\ref{lem1}(b) guarantees that it is orthogonal, so it is full rank with $\norm{H_{y_s}^{(A)}}=1$. As in \cite{bai2003inferential}, where the rotation matrix is likewise data-dependent, the asymptotic distribution is stated with the covariance evaluated at the rotation: the studentized quantity $(V_{\Lambda,y,i}^{(A,B)})^{-1/2}\sqrt{T}\big(\widehat{\Lambda}_{y_s,i}^{(A)} - H_{y_s}^{(A)}\Lambda_{y_s,i}^{(A)}\big)$ converges in distribution to $\mathcal{N}(0,I_{k_{y_s}})$. Equivalently,
 \[
 \sqrt{T} \Big( \widehat{\Lambda}_{y_s,i}^{(A)} - H_{y_s}^{(A)} \Lambda_{y_s,i}^{(A)}\Big) \xrightarrow[]{d} \mathcal{N} \!\Big(0,\, H_{y_s}^{(A)}\,(\Sigma_{F,y}^{(B)})^{-1}\Omega_{y,i}^{(A,B)}(\Sigma_{F,y}^{(B)})^{-1}H_{y_s}^{(A)\top}\Big) = \mathcal{N}\!\big(0,V_{\Lambda,y,i}^{(A,B)}\big).
 \]
 This concludes the proof of Part (a).

\medskip
\noindent\textit{Proof of Part (b):}
Fix $t$. Consider the pooled OLS estimator of the factor vector based on the
transformed panel $\{Z_{i,t}^{(A,B)}\}_{i=1}^N$:
\begin{equation}\label{thm2b_ols}
\widehat F_t^{(B)}
=
\Bigg(
\frac{1}{N_{\mathrm{eff}}}
\sum_{i=1}^N
\widehat\Lambda_i^{(A)}\widehat\Lambda_i^{(A)\top}
\Bigg)^{-1}
\Bigg(
\frac{1}{N_{\mathrm{eff}}}
\sum_{i=1}^N
\widehat\Lambda_i^{(A)} Z_{i,t}^{(A,B)}
\Bigg).
\end{equation}

Define the rotated estimators
\[
\widetilde F_t^{(B)} \equiv H^{(A)\top}\widehat F_t^{(B)},
\qquad
\widetilde\Lambda_i^{(A)} \equiv (H^{(A)})^{-1}\widehat\Lambda_i^{(A)}.
\]
By invariance of OLS to nonsingular linear transformations of the regressors,
\eqref{thm2b_ols} is equivalent to
\begin{equation}\label{thm2b_rot}
\widetilde F_t^{(B)}
=
\Bigg(
\frac{1}{N_{\mathrm{eff}}}
\sum_{i=1}^N
\widetilde\Lambda_i^{(A)}\widetilde\Lambda_i^{(A)\top}
\Bigg)^{-1}
\Bigg(
\frac{1}{N_{\mathrm{eff}}}
\sum_{i=1}^N
\widetilde\Lambda_i^{(A)} Z_{i,t}^{(A,B)}
\Bigg).
\end{equation}

Using the transformed factor representation
\[
Z_{i,t}^{(A,B)} = \Lambda_i^{(A)\top} F_t^{(B)} + e_{i,t}^{(A,B)},
\]
substituting into \eqref{thm2b_rot} yields
\begin{equation}\label{thm2b_expand}
\widetilde F_t^{(B)}
=
\widehat\Sigma_{\Lambda}^{(N)-1}
\Bigg(
\frac{1}{N_{\mathrm{eff}}}
\sum_{i=1}^N
\widetilde\Lambda_i^{(A)}\Lambda_i^{(A)\top}F_t^{(B)}
+
\frac{1}{N_{\mathrm{eff}}}
\sum_{i=1}^N
\widetilde\Lambda_i^{(A)} e_{i,t}^{(A,B)}
\Bigg),
\end{equation}
where
\[
\widehat\Sigma_{\Lambda}^{(N)}
\equiv
\frac{1}{N_{\mathrm{eff}}}
\sum_{i=1}^N
\widetilde\Lambda_i^{(A)}\widetilde\Lambda_i^{(A)\top}.
\]

By Theorem~\ref{thm1} and the cross-sectional law of large numbers, and using
$N/N_{\mathrm{eff}}\to c_A^{-1}$ from Assumption~A.7,
\[
\widehat\Sigma_{\Lambda}^{(N)} \xrightarrow{p} c_A^{-1}\,\Sigma_{\Lambda}^{(A)} \succ 0,
\]
so $\widehat\Sigma_{\Lambda}^{(N)-1}$ converges in probability to a finite positive definite limit.
Assumption~B.1 guarantees identification of the $Y$-strong factor subspace
from the $Y$ block, and $\Sigma_{\Lambda,y_s}^{(A)}$, which by Assumption~A.7 is the $(y_s,y_s)$ block of $c_A^{-1}\Sigma_{\Lambda}^{(A)}$, is positive definite by Assumption~A.6; the $Y$-block loadings alone suffice for this under Assumptions~B.1--B.2, so no auxiliary signal is required, while $Y$-strong signal from the $X$ block, when present, adds positive semidefinite mass to $\Sigma_{\Lambda,y_s}^{(A)}$ and lowers the variance below, the efficiency mechanism of Remark~\ref{rem:efficiency}.

Let $S_y$ denote the selection matrix extracting the $Y$-strong block, and define
\[
\widetilde F_{y,t}^S = S_y \widetilde F_t^{(B)}, \qquad
F_{y,t}^S = S_y F_t^{(B)}, \qquad
\Lambda_{y_s,i}^{(A)} = S_y \Lambda_i^{(A)}.
\]
Premultiplying \eqref{thm2b_expand} by $S_y$, using the block-diagonality of $\Sigma_{\Lambda}^{(A)}$ across the $(y_s,R)$ partition from Assumption~B.5(i), which gives $S_y\big(c_A^{-1}\Sigma_{\Lambda}^{(A)}\big)^{-1} = \big(\Sigma_{\Lambda,y_s}^{(A)}\big)^{-1}S_y$, and adding and subtracting
$\Sigma_{\Lambda,y_s}^{(A)} F_{y,t}^S$, where
\[
\Sigma_{\Lambda,y_s}^{(A)}
=
\lim_{N_{\mathrm{eff}}\to\infty}
\frac{1}{N_{\mathrm{eff}}}
\sum_{i=1}^N
\Lambda_{y_s,i}^{(A)}\Lambda_{y_s,i}^{(A)\top},
\]
we obtain the stochastic expansion
\begin{equation}\label{thm2b_expansion}
\sqrt{N_{\mathrm{eff}}}
\bigl(
\widetilde F_{y,t}^S - F_{y,t}^S
\bigr)
=
(\Sigma_{\Lambda,y_s}^{(A)})^{-1}
\Bigg(
\frac{1}{\sqrt{N_{\mathrm{eff}}}}
\sum_{i=1}^N
\Lambda_{y_s,i}^{(A)} e_{i,t}^{(A,B)}
\Bigg)
+ o_p(1),
\end{equation}
under the growth condition $\sqrt{N_{\mathrm{eff}}}\,\bar\alpha \to 0$.

By the cross-sectional score central limit theorem in Assumption~B.5(iii),
\[
\frac{1}{\sqrt{N_{\mathrm{eff}}}}
\sum_{i=1}^N
\Lambda_{y_s,i}^{(A)} e_{i,t}^{(A,B)}
\xrightarrow{d}
\mathcal N\!\bigl(0,\Xi_{y_s,t}^{(A,B)}\bigr).
\]
By the exact decomposition of Lemma~\ref{lem_bridge}(b), the feasible estimator \eqref{eq:construction} satisfies
$\widehat F_{y,t}^{S} = \widehat G\big[F_{y,t}^{S}+(\widetilde F_{y,t}^S - F_{y,t}^S)\big]+\lambda_t+\rho_t$ with $\sqrt{N_{\mathrm{eff}}}\,(\lambda_t+\rho_t)=o_p(1)$, and by Lemma~\ref{lem_bridge}(a), $\sqrt{N_{\mathrm{eff}}}\,(\widehat G-H_{y_s}^{(A)})F_{y,t}^{S}=o_p(1)$ and $\widehat G=H_{y_s}^{(A)}+o_p(1)$. Combining these with \eqref{thm2b_expansion} yields
$\widehat F_{y,t}^S - H_{y_s}^{(A)}F_{y,t}^S = H_{y_s}^{(A)}(\widetilde F_{y,t}^S - F_{y,t}^S) + o_p(N_{\mathrm{eff}}^{-1/2})$.
As in Part (a), $H_{y_s}^{(A)}$ is data-dependent but orthogonal with unit operator norm, and the asymptotic covariance is evaluated at the rotation, so
\[
\sqrt{N_{\mathrm{eff}}}
\Big(
\widehat F_{y,t}^{S}-H_{y_s}^{(A)}F_{y,t}^{S}
\Big)
\xrightarrow{d}
\mathcal N\!\bigl(0,V_{F,t}^{(A,B)}\bigr),
\]
where
\[
V_{F,t}^{(A,B)}
=
H_{y_s}^{(A)}\,
(\Sigma_{\Lambda,y_s}^{(A)})^{-1}
\Xi_{y_s,t}^{(A,B)}
(\Sigma_{\Lambda,y_s}^{(A)})^{-1}
\,H_{y_s}^{(A)\top}.
\]
This completes the proof.
\end{proof}

   \subsection{Proof of Theorem 3}
\begin{proof}
Fix a $Y$-unit $i\in\{1,\ldots,N_y\}$ and a time index $t$.
Recall the definitions
\[
C_{y,i,t}\equiv \Lambda_{y_s,i}^{(A)\top}F_{y,t}^{S},
\qquad
\widehat C_{y,i,t}\equiv \widehat\Lambda_{y_s,i}^{(A)\top}\widehat F_{y,t}^{S}.
\]
By the usual add-and-subtract argument and the rotation alignment of Lemmas~\ref{lem1}(b) and \ref{lem_bridge}, we obtain the linear expansion
\begin{equation}\label{thm3_pf1}
\widehat C_{y,i,t}-C_{y,i,t}=\Delta_{F,it}+\Delta_{\Lambda,it}+r_{i,t},
\end{equation}
where
\begin{align}
\Delta_{F,it}
&\equiv
\Big(H_{y_s}^{(A)}\Lambda_{y_s,i}^{(A)}\Big)^\top\Big(\widehat F_{y,t}^{S}-H_{y_s}^{(A)}F_{y,t}^{S}\Big),
\label{thm3_pf2}\\
\Delta_{\Lambda,it}
&\equiv
\Big(\widehat\Lambda_{y_s,i}^{(A)}-H_{y_s}^{(A)}\Lambda_{y_s,i}^{(A)}\Big)^\top
H_{y_s}^{(A)}F_{y,t}^{S},
\label{thm3_pf3}
\end{align}
and the remainder term collects the product of the two estimation errors:
\[
r_{i,t}
=
\Big(\widehat\Lambda_{y_s,i}^{(A)}-H_{y_s}^{(A)}\Lambda_{y_s,i}^{(A)}\Big)^\top
\Big(\widehat F_{y,t}^{S}-H_{y_s}^{(A)}F_{y,t}^{S}\Big).
\]
Using Theorem~\ref{thm2} and bounded moments of $\Lambda_{y_s,i}^{(A)}$ and
$F_{y,t}^S$, we have
\[
\Delta_{F,it}=\mathcal O_p(N_{\mathrm{eff}}^{-1/2}),
\qquad
\Delta_{\Lambda,it}=\mathcal O_p(T^{-1/2}),
\]
and therefore
\begin{equation}\label{thm3_pf4}
r_{i,t}
=
\mathcal O_p(T^{-1/2})\cdot \mathcal O_p(N_{\mathrm{eff}}^{-1/2})
=
o_p(T^{-1/2}+N_{\mathrm{eff}}^{-1/2}).
\end{equation}

\medskip
\noindent\textit{Limits of the two leading terms.}
From Theorem~\ref{thm2}(b) and linearity,
\begin{equation}\label{thm3_pf5}
\sqrt{N_{\mathrm{eff}}}\,\Delta_{F,it}
=
\Big(H_{y_s}^{(A)}\Lambda_{y_s,i}^{(A)}\Big)^\top\sqrt{N_{\mathrm{eff}}}\Big(\widehat F_{y,t}^{S}-H_{y_s}^{(A)}F_{y,t}^{S}\Big)
\xrightarrow{d}
\mathcal N\big(0,\sigma^2_{C,it,F}\big),
\end{equation}
where $\sigma^2_{C,it,F}\equiv (H_{y_s}^{(A)}\Lambda_{y_s,i}^{(A)})^\top V_{F,t}^{(A,B)}(H_{y_s}^{(A)}\Lambda_{y_s,i}^{(A)}) = \Lambda_{y_s,i}^{(A)\top}(\Sigma_{\Lambda,y_s}^{(A)})^{-1}\Xi_{y_s,t}^{(A,B)}(\Sigma_{\Lambda,y_s}^{(A)})^{-1}\Lambda_{y_s,i}^{(A)}$.
Similarly, Theorem~\ref{thm2}(a) yields
\begin{equation}\label{thm3_pf6}
\frac{\sqrt{T}\,\Delta_{\Lambda,it}}{\sigma_{C,it,\Lambda}}
=
\frac{1}{\sigma_{C,it,\Lambda}}
\Big(H_{y_s}^{(A)}F_{y,t}^{S}\Big)^\top
\sqrt{T}\Big(\widehat\Lambda_{y_s,i}^{(A)}-H_{y_s}^{(A)}\Lambda_{y_s,i}^{(A)}\Big)
\xrightarrow{d}
\mathcal N(0,1),
\end{equation}
where $\sigma^2_{C,it,\Lambda}\equiv F_{y,t}^{S\top}(\Sigma_{F,y}^{(B)})^{-1}\Omega_{y,i}^{(A,B)}(\Sigma_{F,y}^{(B)})^{-1}F_{y,t}^{S}$, using the $H$-cancellation established above. Unlike $\sigma^2_{C,it,F}$, this variance depends on the factor realization $F_{y,t}^{S}$ at the fixed time index and is therefore random, so the convergence is stated in studentized form. It holds conditionally on $F_{y,t}^{S}$: the time-$t$ observation contributes a single, asymptotically negligible summand to the time-averaged score $T^{-1/2}\sum_s F_{y,s}^{S}e_{i,s}^{(A,B)}$, so under the mixing of Assumption~A.2 the score limit of Assumption~B.5(ii) is unaffected by the conditioning; since the conditional limit $\mathcal N(0,1)$ does not depend on the conditioning value, the convergence also holds unconditionally. Here $\sigma_{C,it,\Lambda}>0$ whenever $F_{y,t}^{S}\neq0$, since $\Omega_{y,i}^{(A,B)}$ is positive definite by Assumption~B.5(ii).

\medskip
\noindent\textit{Proof of Part (i) ($F$-dominant regime).}
Assume $N_{\mathrm{eff}}/T\to 0$. Multiply \eqref{thm3_pf1} by $\sqrt{N_{\mathrm{eff}}}$:
\[
\sqrt{N_{\mathrm{eff}}}\,(\widehat C_{y,i,t}-C_{y,i,t})
=
\sqrt{N_{\mathrm{eff}}}\Delta_{F,it}
+
\sqrt{N_{\mathrm{eff}}}\Delta_{\Lambda,it}
+
\sqrt{N_{\mathrm{eff}}}r_{i,t}.
\]
Since $\Delta_{\Lambda,it}=\mathcal O_p(T^{-1/2})$, we have
$\sqrt{N_{\mathrm{eff}}}\Delta_{\Lambda,it}=\mathcal O_p(\sqrt{N_{\mathrm{eff}}/T})=o_p(1)$.
Moreover, by \eqref{thm3_pf4},
\[
\sqrt{N_{\mathrm{eff}}}r_{i,t}
=o_p\!\Big(1+\sqrt{N_{\mathrm{eff}}/T}\Big)=o_p(1).
\]
Therefore,
\[
\sqrt{N_{\mathrm{eff}}}\,(\widehat C_{y,i,t}-C_{y,i,t})
=
\sqrt{N_{\mathrm{eff}}}\Delta_{F,it}+o_p(1),
\]
and the conclusion follows from \eqref{thm3_pf5} and Slutsky's theorem.

\medskip
\noindent\textit{Proof of Part (ii) ($\Lambda$-dominant regime).}
Assume $T/N_{\mathrm{eff}}\to 0$. Multiply \eqref{thm3_pf1} by $\sqrt{T}$:
\[
\sqrt{T}\,(\widehat C_{y,i,t}-C_{y,i,t})
=
\sqrt{T}\Delta_{F,it}
+
\sqrt{T}\Delta_{\Lambda,it}
+
\sqrt{T}r_{i,t}.
\]
Since $\Delta_{F,it}=\mathcal O_p(N_{\mathrm{eff}}^{-1/2})$, we have
$\sqrt{T}\Delta_{F,it}=\mathcal O_p(\sqrt{T/N_{\mathrm{eff}}})=o_p(1)$.
Also, by \eqref{thm3_pf4},
\[
\sqrt{T}r_{i,t}=o_p\!\Big(\sqrt{T/N_{\mathrm{eff}}}+1\Big)=o_p(1).
\]
Hence,
\[
\sqrt{T}\,(\widehat C_{y,i,t}-C_{y,i,t})
=
\sqrt{T}\Delta_{\Lambda,it}+o_p(1).
\]
Dividing by $\sigma_{C,it,\Lambda}$, which does not depend on $T$ and satisfies $\sigma_{C,it,\Lambda}^{-1}=\mathcal O_p(1)$, the conclusion follows from \eqref{thm3_pf6} and Slutsky's theorem.

\medskip
\noindent\textit{Proof of Part (iii) (mixed regime).}
Assume $T/N_{\mathrm{eff}}\to c\in(0,\infty)$ and impose Assumption~B.4, which provides joint asymptotic normality of the two estimation scores with vanishing cross-covariance.
Multiplying \eqref{thm3_pf1} by $\sqrt{T}$ gives
\[
\sqrt{T}\,(\widehat C_{y,i,t}-C_{y,i,t})
=
\sqrt{T}\Delta_{\Lambda,it}
+
\sqrt{T}\Delta_{F,it}
+
\sqrt{T}r_{i,t},
\]
with $\sqrt{T}r_{i,t}=o_p(1)$ by \eqref{thm3_pf4}. From the proofs of Theorem~\ref{thm2}(a) to (b), the two leading terms are, up to $o_p(1)$, fixed linear transformations of the two estimation scores: using the exact orthogonality $H_{y_s}^{(A)\top}H_{y_s}^{(A)}=H_{y_s}^{(A)}H_{y_s}^{(A)\top}=I_{k_{y_s}}$, as in \eqref{thm3_pf5}--\eqref{thm3_pf6},
\[
\sqrt{N_{\mathrm{eff}}}\,\Delta_{F,it}
=\Lambda_{y_s,i}^{(A)\top}\big(\Sigma_{\Lambda,y_s}^{(A)}\big)^{-1}\,\frac{1}{\sqrt{N_{\mathrm{eff}}}}\sum_{j=1}^N\Lambda_{y_s,j}^{(A)}e_{j,t}^{(A,B)}+o_p(1),
\qquad
\sqrt{T}\,\Delta_{\Lambda,it}
=F_{y,t}^{S\top}\big(\Sigma_{F,y}^{(B)}\big)^{-1}\,\frac{1}{\sqrt{T}}\sum_{s=1}^T F_{y,s}^{S}e_{i,s}^{(A,B)}+o_p(1).
\]
Under $T/N_{\mathrm{eff}}\to c$, we have $\sqrt{T/N_{\mathrm{eff}}}\to\sqrt{c}$, so
\[
\sqrt{T}\,(\widehat C_{y,i,t}-C_{y,i,t})
=
\sqrt{c}\;\Lambda_{y_s,i}^{(A)\top}\big(\Sigma_{\Lambda,y_s}^{(A)}\big)^{-1}\,\frac{1}{\sqrt{N_{\mathrm{eff}}}}\sum_{j=1}^N\Lambda_{y_s,j}^{(A)}e_{j,t}^{(A,B)}
+F_{y,t}^{S\top}\big(\Sigma_{F,y}^{(B)}\big)^{-1}\,\frac{1}{\sqrt{T}}\sum_{s=1}^T F_{y,s}^{S}e_{i,s}^{(A,B)}
+o_p(1).
\]
Conditionally on $F_{y,t}^{S}$, the right-hand side is, up to $o_p(1)$, a fixed continuous linear functional of the stacked score vector of Assumption~B.4, whose joint convergence holds conditionally on $F_{y,t}^{S}$ as stated there; the continuous mapping theorem then yields conditional asymptotic normality of the sum, and the vanishing cross-covariance makes the two components independent in the limit, so the conditional variances add:
\[
\frac{\sqrt{T}\,(\widehat C_{y,i,t}-C_{y,i,t})}{\sigma_{C,it}}
\Rightarrow
\mathcal N(0,1),
\qquad
\sigma^2_{C,it}=\sigma^2_{C,it,\Lambda}+c\,\sigma^2_{C,it,F}.
\]
Since the studentized conditional limit does not depend on the conditioning value, the convergence holds unconditionally, which is the claim in Part (iii).
\end{proof}

\subsection{Efficiency gains from transfer learning (Remark~\ref{rem:efficiency})}
This appendix provides the full derivation of the efficiency comparison summarized in Remark~\ref{rem:efficiency}.
Consider the $Y$-strong common component
$
C_{y,i,t}=\Lambda_{y_s,i}^{(A)\top}F_{y,t}^S
$
for a fixed $Y$-unit $i$.
In the factor-dominant and mixed regimes of Theorem~\ref{thm3}, its leading
variance contribution is $\sigma^2_{C,it,F}$, as defined in Section~\ref{Sec:inferential_theory}.
To assess the role of auxiliary information, compare the joint estimator based on
$(X,Y)$ with a $Y$-only analogue obtained by applying the same estimation
procedure to the $Y$ block alone. The two estimators converge at different rates,
$\sqrt{N_{\mathrm{eff}}}$ and $\sqrt{N_{y,\mathrm{eff}}}$ respectively, so their
asymptotic covariance matrices live on different scales and are not directly
comparable; the meaningful comparison is between the approximate finite-sample
variances $N_{\mathrm{eff}}^{-1}V_{F,t}^{\mathrm{joint}}$ and
$N_{y,\mathrm{eff}}^{-1}V_{F,t}^{Y}$. Here $V_{F,t}^{\mathrm{joint}}$ is the
sandwich covariance of Theorem~\ref{thm2}(b), with the orthogonal alignment
matrices omitted since they cancel in $\sigma^2_{C,it,F}$, and $V_{F,t}^{Y}$ is
its $Y$-only analogue, with sums restricted to the $Y$ block and normalization
$N_{y,\mathrm{eff}}$. On the per-sample scale the normalizations cancel, and
only unnormalized signal and noise remain. Define the signal matrices
\[
\Pi^{Y} \equiv \sum_{i\in Y}\Lambda_{y_s,i}^{(A)}\Lambda_{y_s,i}^{(A)\top},
\qquad
\Pi^{X} \equiv \sum_{i\in X}\Lambda_{y_s,i}^{(A)}\Lambda_{y_s,i}^{(A)\top},
\qquad
\Pi^{\mathrm{joint}} = \Pi^{Y}+\Pi^{X} \succeq \Pi^{Y},
\]
where the inequality holds automatically: concatenating the panels can only add
signal. By Assumption~B.3,
$\Pi^{X}/N_{x,\mathrm{eff}}\to\Sigma_{y_s,x}^{(A)}\neq0$, whose range
$
\mathcal S_X \equiv \mathrm{Range}(\Sigma_{y_s,x}^{(A)})
\subsetneq \mathbb R^{k_{y_s}}
$
is a proper subspace, and the auxiliary block has non-negligible effective size
by Assumption~B.2, $N_{x,\mathrm{eff}}/N_{y,\mathrm{eff}}\to c_{xy}\in(0,\infty)$.

Suppose the attention-weighted idiosyncratic scores of the two blocks are
asymptotically uncorrelated and of a common scale,
\[
\operatorname{Var}\!\Big(
\sum_{i\in G} \Lambda_{y_s,i}^{(A)} e_{i,t}^{(A,B)}
\Big)
=
\sigma_t^2\,\Pi^{G}\,\big(1+o(1)\big),
\qquad G\in\{Y,X\},
\]
the analogue of the homogeneity conditions underlying the efficiency comparison
in \cite{pelger2024target_pca}. Then the per-sample variances reduce to
\[
N_{\mathrm{eff}}^{-1}V_{F,t}^{\mathrm{joint}}
=
\sigma_t^2\,(\Pi^{\mathrm{joint}})^{-1}\big(1+o(1)\big),
\qquad
N_{y,\mathrm{eff}}^{-1}V_{F,t}^{Y}
=
\sigma_t^2\,(\Pi^{Y})^{-1}\big(1+o(1)\big),
\]
and $\Pi^{\mathrm{joint}}\succeq\Pi^{Y}$ delivers
\[
N_{\mathrm{eff}}^{-1}V_{F,t}^{\mathrm{joint}}
\;\preceq\;
N_{y,\mathrm{eff}}^{-1}V_{F,t}^{Y},
\]
so every linear combination of the $Y$-strong factors is estimated weakly more
efficiently by the joint procedure. In precision form the gain is explicit:
$\sigma_t^{-2}\Pi^{\mathrm{joint}} - \sigma_t^{-2}\Pi^{Y} =
\sigma_t^{-2}\Pi^{X}$, which is of order $N_{x,\mathrm{eff}}$ and strictly
positive on $\mathcal S_X$. For the common component,
\[
N_{\mathrm{eff}}^{-1}\,\sigma^2_{C,it,F}(\text{joint})
\;\le\;
N_{y,\mathrm{eff}}^{-1}\,\sigma^2_{C,it,F}(\text{$Y$-only}),
\]
with strict inequality whenever $(\Pi^{Y})^{-1}\Lambda_{y_s,i}^{(A)}$ has a
nonzero projection onto $\mathcal S_X$.

The gain thus decomposes into a rate effect and a signal effect. In directions
outside $\mathcal S_X$ the auxiliary block adds no signal, so the normalized
second moment is diluted while the convergence rate improves from
$\sqrt{N_{y,\mathrm{eff}}}$ to $\sqrt{N_{\mathrm{eff}}}$; the two effects
offset, and uninformative auxiliary units neither help nor hurt. Directions in
$\mathcal S_X$ receive a genuine signal contribution of order
$N_{x,\mathrm{eff}}$. Thus, even when the auxiliary block does not fully span
the $Y$-strong factor space, transfer learning delivers strict efficiency gains
for the components of the target signal that overlap with auxiliary
information, while leaving the remaining components unaffected and preserving
identification of the $Y$-strong factors.

%%%%%%%%%%%%%%%%%
\cleardoublepage
\section{Implementation details}\label{appendix_implementation}
This appendix first formalizes the equivalence between the one-layer linear autoencoder and attention-weighted PCA and records the layer recursion generating the nonlinear encoding map \eqref{encoder_map}; it then collects the preprocessing, embedding, and encoder equations for the mixed-frequency implementation of Subsection~\ref{Sec:transformer_implementation}.

\subsection{Linear autoencoder and attention-weighted PCA}\label{appendix_autoencoder_pca}

Recall the signal-plus-noise structures \eqref{target_mat} and \eqref{aux_mat}, the attended inputs $\widetilde{Z} = B \lbrack X \quad Y \rbrack A_z$ from \eqref{z_attn}, and the linear factor model $\widetilde Z = F^{(B)} (\Lambda^{(A)})^\top + \widetilde e$, with $\widetilde e \equiv B\,[e_x \ \ e_y]\,A_z$, from \eqref{factor_concat}.

Equivalently to the PCA formulation \eqref{pca_obj}, a rank-$k$ singular value decomposition of the attended panel yields
\begin{equation}\label{svd}
\widetilde Z = \widehat P \, S \, \widehat Q^\top,
\end{equation}
where $\widehat P \in \mathbb{R}^{T\times k}$ and $\widehat Q \in \mathbb{R}^{(N_x+N_y)\times k}$ collect the top-$k$ left and right singular vectors and $S=\mathrm{diag}(s_1,\ldots,s_k)$ the singular values, so that the $k$ largest eigenvalues of $\widetilde Z^\top\widetilde Z/(N_x+N_y)$ are $\widehat D^{(A)}=S^2/(N_x+N_y)$ and capture the strength of the attention-weighted common components.

This representation is equivalent to a one-layer linear autoencoder with $k$ latent units.
Specifically, let $\mathcal E_\theta(\widetilde Z_t) = b^{(0)} + W^{(0)} \widetilde Z_t \in \mathbb{R}^k$ denote the affine ($g(x)=x$, $M=1$) case of the encoding map \eqref{encoder_map}, extracting
a \(k\)-dimensional representation of each attended observation \(\widetilde Z_t\).
The reconstruction is then given by
\begin{equation}\label{encoder_linear}
\widetilde Z_t
= b^{(1)} + W^{(1)} \mathcal E_\theta(\widetilde Z_t) + \widetilde e_t
= b^{(1)} + W^{(1)}\bigl(b^{(0)} + W^{(0)}\widetilde Z_t\bigr) + \widetilde e_t,
\end{equation}
where
$W^{(0)} \in \mathbb{R}^{k\times (N_x+N_y)}$ and
$W^{(1)} \in \mathbb{R}^{(N_x+N_y)\times k}$ are weight matrices, and $b^{(0)} \in \mathbb{R}^k$ and $b^{(1)} \in \mathbb{R}^{N_x+N_y}$ are bias vectors.

The parameters $(b^{(0)}, b^{(1)}, W^{(0)}, W^{(1)})$ are estimated by solving
\begin{equation} \label{encoder_linear_obj}
    \begin{aligned}
        &\min_{\theta = \{b,W\}}
        \sum_{t=1}^T
        \left\|
        \widetilde{Z}_t
        -
        \left(
        b^{(1)} + W^{(1)} \mathcal E_\theta(\widetilde Z_t)
        \right)
        \right\|^2 \\
        &=
        \min_{b,W}
        \left\|
        \widetilde{Z}
        -
        \left(
        \iota b^{(1)\top}
        +
        \left(
        \iota b^{(0)\top}
        +
        \widetilde Z W^{(0)\top}
        \right)
        W^{(1)\top}
        \right)
        \right\|_F^2 ,
    \end{aligned}
\end{equation}
where $\iota$ is a $T\times1$ vector of ones.

The next proposition, adapted from \cite{kelly2021autoencoder}, establishes the equivalence
between the linear encoder in \eqref{encoder_linear} and the PCA estimator
introduced in Section~\ref{Sec:methodology}.

\begin{prop}[Linear autoencoder--PCA equivalence]\label{prop3}
Consider the optimization problem \eqref{encoder_linear_obj}, and let $\widetilde Z = \widehat P S \widehat Q^\top$ be the rank-$k$ singular value decomposition of $\widetilde Z$. The full set of global minimizers is
\[
\widehat W^{(1)} = \widehat Q R,
\qquad
\widehat W^{(0)} = R^{-1}\widehat Q^\top,
\]
for any nonsingular $R\in\mathbb{R}^{k\times k}$, with $\widehat b^{(0)}\in\mathbb R^k$ arbitrary and the remaining bias $\widehat b^{(1)}=\bar{\widetilde Z}-\widehat W^{(1)}(\widehat b^{(0)}+\widehat W^{(0)}\bar{\widetilde Z})$ pinned down by centering, where $\bar{\widetilde Z}=T^{-1}\sum_{t=1}^T \widetilde Z_t$.
\end{prop}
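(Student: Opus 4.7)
My plan is to reduce the autoencoder optimization to a classical rank-constrained matrix approximation problem and then invoke Eckart--Young--Mirsky. First, I would observe that the bias vectors enter the loss only through the combined intercept $c \equiv b^{(1)} + W^{(1)} b^{(0)}$, since the reconstruction can be rewritten as $\iota c^\top + \widetilde Z W^{(0)\top} W^{(1)\top}$. Only $c$ is identified from the loss, so $b^{(0)}$ may be chosen freely and $b^{(1)}$ is then determined by $c - W^{(1)} b^{(0)}$. Taking the first-order condition in $c$ yields $c = \bar{\widetilde Z} - W^{(1)} W^{(0)} \bar{\widetilde Z}$, which produces the formula for $\widehat b^{(1)}$ stated in the proposition.

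Substituting the optimal intercept back, the objective collapses to $\|\widetilde Z_c (I - M)\|_F^2$, where $\widetilde Z_c = \widetilde Z - \iota \bar{\widetilde Z}^\top$ is the column-centered panel and $M \equiv W^{(0)\top} W^{(1)\top} \in \mathbb R^{(N_x+N_y)\times(N_x+N_y)}$ has rank at most $k$. Equivalently, the problem reduces to
\[
\min_{A : \mathrm{rank}(A)\le k}\ \|\widetilde Z_c - A\|_F^2,
\]
with $A = \widetilde Z_c M$ ranging over rank-$\le k$ matrices in the row space of $\widetilde Z_c$. By Eckart--Young--Mirsky, the minimizer is the truncated rank-$k$ SVD of $\widetilde Z_c$, attained precisely when $M$ equals the orthogonal projector $\widehat Q \widehat Q^\top$ onto the top-$k$ right singular subspace; I read the SVD $\widetilde Z = \widehat P S \widehat Q^\top$ in the proposition as referring to this centered version, consistent with the use of $\bar{\widetilde Z}$ in the formula for $\widehat b^{(1)}$. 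The same argument goes through verbatim with $\widetilde Z$ in place of $\widetilde Z_c$ if one instead adopts the uncentered PCA convention of Section~\ref{Sec:methodology} and sets $\widehat b^{(1)}=0$.

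The final step is to characterize all factorizations $W^{(1)} W^{(0)} = \widehat Q \widehat Q^\top$ that achieve the minimum. Because $\widehat Q \in \mathbb R^{(N_x+N_y)\times k}$ has orthonormal columns spanning the optimal $k$-dimensional subspace, any rank-$k$ factorization of this projector must have $\mathrm{Range}(W^{(1)}) = \mathrm{Range}(\widehat Q)$, which forces $W^{(1)} = \widehat Q R$ for some nonsingular $R \in \mathbb R^{k\times k}$; the projector identity $W^{(1)} W^{(0)} = \widehat Q \widehat Q^\top$ together with $\widehat Q^\top \widehat Q = I_k$ then forces $W^{(0)} = R^{-1} \widehat Q^\top$. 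Conversely, every such pair attains the Eckart--Young bound, so the complete set of global minimizers is the $GL_k$-orbit described in the proposition.

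The main obstacle I anticipate is this factorization-uniqueness argument, which requires showing that no non-rank-$k$ solution achieves the bound and that any rank-$k$ solution lies in the stated orbit. A secondary subtlety is the case in which the $k$-th and $(k+1)$-th singular values of $\widetilde Z_c$ coincide, where $\widehat Q$ is itself defined only up to an orthogonal rotation within the ambiguous block; this additional ambiguity is absorbed into the choice of $R$ and therefore does not enlarge the solution set beyond the stated parameterization.
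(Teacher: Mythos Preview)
Your proposal is correct and follows the standard Eckart--Young--Mirsky reduction that underlies the linear-autoencoder/PCA equivalence; this is precisely the argument the paper defers to by citing \cite{kelly2021autoencoder} and omitting the details. Your handling of the bias terms, the concentration to a rank-constrained approximation problem, and the $GL_k$-orbit characterization of the minimizers are all in line with that reference, and you have already flagged the two genuine subtleties (centered versus uncentered SVD convention, and the degenerate-singular-value case).
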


\begin{proof}
The result follows from the characterization of global minimizers of linear
autoencoders and their equivalence to PCA.
The argument parallels that in \cite{kelly2021autoencoder} and is therefore omitted.
\end{proof}

Proposition~\ref{prop3} shows that a one-layer linear autoencoder with $k$ hidden units is equivalent to a linear factor model with $k$ factors: up to an invertible transformation, its estimated factors and loadings coincide with those from attention-weighted PCA. In particular, choosing $R$ orthogonal, the loadings $\widehat \Lambda^{(A)} = (N_x+N_y)^{1/2}\,\widehat Q R$ satisfy the normalization $\widehat\Lambda^{(A)\top}\widehat\Lambda^{(A)}/(N_x+N_y) = I_k$ of Section~\ref{Sec:methodology}, and the factors $\widehat F^{(B)} = (N_x+N_y)^{-1/2}\,\widehat P S R$ satisfy $\widehat F^{(B)}\widehat\Lambda^{(A)\top}=\widehat P S\widehat Q^\top$. Thus the affine encoding map $\mathcal E_\theta$ coincides, up to rotation and scaling, with PCA factor extraction.

Autoencoder models are therefore more general than linear factor models, as they permit nonlinear signal extraction through compositions of nonlinear transformations of $\widetilde Z$. In particular, autoencoder architectures with $M$ hidden layers can be written recursively as follows. Let \(n^{(m)}\) denote the number of neurons in layer \(m=0,1,\ldots,M\), with input layer \(n^{(0)}=N_x+N_y\) and \(n^{(M)}=k\). Let \(\widetilde Z^{(m)}\in\mathbb R^{T\times n^{(m)}}\) denote the matrix of layer-\(m\) outputs, whose \(t\)-th row is associated with the attended observation \(\widetilde Z_t\), with \(\widetilde Z^{(0)}\equiv \widetilde Z\). For \(m\ge 1\), the output of layer \(m\) is generated recursively according to
\begin{equation}\label{layer_l}
    \widetilde Z^{(m)}
    =
    g\!\left(
        \iota b^{(m-1)\top}
        +
        \widetilde Z^{(m-1)} W^{(m-1)\top}
    \right),
\end{equation}
where \(g(\cdot)\) is applied elementwise, \(W^{(m-1)}\in\mathbb R^{n^{(m)}\times n^{(m-1)}}\), and \(b^{(m-1)}\in\mathbb R^{n^{(m)}}\).

Composing the layer maps gives the nonlinear encoding map \eqref{encoder_map} of the main text, $\mathcal E_{\theta}(\widetilde Z_t)=\widetilde Z_t^{(M)}=(\phi_{M,\theta}\circ\cdots\circ\phi_{1,\theta})(\widetilde Z_t)$ with $\phi_{m,\theta}(u)=g(b^{(m-1)}+W^{(m-1)}u)$. Thus $\mathcal E_{\theta}:\mathbb R^{N_x+N_y}\to\mathbb R^k$ maps each attended observation to a $k$-dimensional representation, and $\widetilde Z^{(M)}\in\mathbb R^{T\times k}$ stacks these over the panel. Figure~\ref{fig:linear_vs_nonlinear} summarizes this nesting: the linear specification ($g(x)=x$, $M=1$, equivalent to attention-weighted PCA up to rotation by Proposition~\ref{prop3}) and the nonlinear specification differ only in the transformation $g(\cdot)$ applied to the attended panel $\widetilde Z$.

\subsection{Mixed-frequency sequence representation}\label{appendix_sequence_repr}

The target panel $Y$ and auxiliary panel $X$ are represented as a single ordered sequence $\{(v_\ell,t_\ell)\}_{\ell=1}^{L}$ of variable-time pairs, where $v_\ell\in\{1,\ldots,N_x+N_y\}$ identifies the variable, $t_\ell$ is the calendar timestamp, $t_1\le\cdots\le t_L$, and the ordering among observations sharing a timestamp is arbitrary but fixed.

\begin{figure}[!t]
\centering
\scalebox{0.82}{%
\begin{tikzpicture}[
  font=\small,
  >=Latex,
  box/.style={
    draw, rounded corners, align=center,
    inner sep=6pt, text width=0.85\textwidth
  },
  smallbox/.style={
    draw, rounded corners, align=center,
    inner sep=4pt, text width=0.62\textwidth
  },
  arrow/.style={->, line width=0.7pt},
  dashedbox/.style={draw, rounded corners, dashed, inner sep=10pt}
]

% ----------------------
% Top: inputs + attention
% ----------------------
\node[box] (inputs) {%
\textbf{Inputs and attention-weighted panel}\\[3pt]
$X\in\mathbb{R}^{T\times N_x},\quad
Y\in\mathbb{R}^{T\times N_y}$ \ ($T_x=T_y=T$)\\[3pt]
Attention weighting via temporal matrix $B$ and cross-sectional matrix $A_z$\\[3pt]
$\widetilde Z = B[X\ \ Y]A_z$
};

% ----------------------
% Middle: dashed container
% ----------------------
\node[dashedbox, below=12mm of inputs] (method) {};

% Header INSIDE dashed box (with extra height via padding)
\node[
  draw, rounded corners, fill=white,
  align=center,
  text width=0.78\textwidth,
  inner sep=7pt
] (header) at ([yshift=-6mm]method.north)
{\textbf{Same representation-learning problem}\\[2pt]
\small only the encoding map $\mathcal E_{\theta}$ changes via $g(\cdot)$};

% Linear block
\node[box, below=9mm of header] (lin) {%
\textbf{Linear encoder (theory)}\\[3pt]
Affine encoding map $\mathcal E_{\theta}:\mathbb{R}^{N_x+N_y}\to\mathbb{R}^{k}$\\[2pt]
\quad $\mathcal E_{\theta}(\widetilde Z_t)= b^{(0)} + W^{(0)}\widetilde Z_t$,
\ \ $g(x)=x$, single layer ($M=1$)\\[3pt]
\scriptsize Equivalent (up to rotation) to attention-weighted PCA / SVD
};

% Nonlinear block
\node[box, below=10mm of lin] (nonlin) {%
\textbf{Nonlinear encoder (implementation)}\\[3pt]
Nonlinear encoding map $\mathcal E_{\theta}:\mathbb{R}^{N_x+N_y}\to\mathbb{R}^{k}$\\[2pt]
Layer recursion (Eq.~\eqref{layer_l}), $m=1,\ldots,M$:\\
$\displaystyle
\widetilde Z^{(m)}
    =
    g\!\left(
        \iota b^{(m-1)\top}
        +
        \widetilde Z^{(m-1)} W^{(m-1)\top}
    \right),
\qquad
\widetilde Z^{(0)}\equiv \widetilde Z$\\[2pt]
$\mathcal E_{\theta}(\widetilde Z_t)\equiv \widetilde Z_t^{(M)}\in\mathbb{R}^{k}$\\[3pt]
\scriptsize e.g., Transformer-based parameterization of $g(\cdot)$
};

% Resize dashed container to fit header + blocks nicely
\node[dashedbox, fit=(header) (lin) (nonlin), inner sep=10pt] (methodfit) {};

% ----------------------
% Arrows (clean: terminate on box borders)
% ----------------------
\draw[arrow] (inputs.south) -- (header.north);
\draw[arrow] (header.south) -- (lin.north);
\draw[arrow] (lin.south) -- (nonlin.north);

% ----------------------
% Bottom: learned representation (now explicitly E_theta output)
% ----------------------
\node[smallbox, below=12mm of nonlin] (rep) {%
\textbf{Learned representation (embedding)}\\[2pt]
$\mathcal E_{\theta}(\widetilde Z_t)\in\mathbb{R}^{k}$
};

\draw[arrow] (nonlin.south) -- (rep.north);

\end{tikzpicture}%
}

\caption{Schematic unifying linear and nonlinear signal extraction.
Both specifications operate on the attended panel $\widetilde Z=B[X\ \ Y]A_z$.
The object of interest is the $k$-dimensional embedding $\mathcal E_{\theta}(\widetilde Z_t)$:
in the linear case $\mathcal E_{\theta}$ is affine (PCA/SVD up to rotation), while in the
nonlinear case $\mathcal E_{\theta}$ is defined by compositions of $g(\cdot)$ as in \eqref{layer_l}.}
\label{fig:linear_vs_nonlinear}
\end{figure}

\paragraph*{Standardization.} For each entry $\ell$, let $r_{v_\ell,t_\ell}$ be the raw value of variable $v_\ell$ at time $t_\ell$. Each variable is standardized using its sample moments,
\begin{equation}
r_{v_\ell,t_\ell}^{*} \;\equiv\; \frac{r_{v_\ell,t_\ell} - \mu_{v_\ell}}{\sigma_{v_\ell}},
\end{equation}
with $\mu_{v_\ell}$ and $\sigma_{v_\ell}$ computed only on the in-sample portion and applied unchanged out of sample to avoid information leakage. Standardization improves numerical stability and prevents scale differences from dominating attention \citep{goodfellow2016deep}.

\paragraph*{Embedding map.} Each standardized observation is mapped into a representation jointly encoding its value, variable identity, and sampling frequency,
\begin{equation}
\phi(r_{v_\ell,t_\ell}^{*}, v_\ell, f_\ell)
\;\equiv\;
\begin{bmatrix}
r_{v_\ell,t_\ell}^{*} \\
e^{(\mathrm{var})}_{v_\ell} \\
e^{(\mathrm{freq})}_{f_\ell}
\end{bmatrix}
\;\in\;
\mathbb{R}^{d_{\mathrm{in}}},
\qquad d_{\mathrm{in}} = 1 + d_{\mathrm{var}} + d_{\mathrm{freq}},
\end{equation}
where $f_\ell$ is the sampling frequency (e.g., monthly or quarterly), and $e^{(\mathrm{var})}_{v_\ell}\in\mathbb{R}^{d_{\mathrm{var}}}$, $e^{(\mathrm{freq})}_{f_\ell}\in\mathbb{R}^{d_{\mathrm{freq}}}$ are drawn from learnable embedding tables updated jointly with all other parameters. Each $\phi(\cdot)$ forms one row of the input sequence.

\paragraph*{Projection and temporal encoding.} We project into the model dimension and add a temporal encoding,
\begin{equation}
h_\ell = W_{\mathrm{proj}}\,\phi(r_{v_\ell,t_\ell}^{*}, v_\ell, f_\ell),\quad W_{\mathrm{proj}}\in\mathbb{R}^{d_{\mathrm{model}}\times d_{\mathrm{in}}},
\qquad
z_\ell = h_\ell + \mathrm{TE}(t_\ell),
\end{equation}
where $W_{\mathrm{proj}}$ is estimated jointly with the encoder and $\mathrm{TE}(t_\ell)\in\mathbb{R}^{d_{\mathrm{model}}}$ is the deterministic sinusoidal encoding of \citet{vaswani2017attention},
\begin{equation}
\mathrm{TE}_{2j}(t_\ell)=\sin\!\left(\frac{t_\ell}{10000^{2j/d_{\mathrm{model}}}}\right),\quad
\mathrm{TE}_{2j+1}(t_\ell)=\cos\!\left(\frac{t_\ell}{10000^{2j/d_{\mathrm{model}}}}\right),\quad
j=0,\ldots,\Big\lfloor\tfrac{d_{\mathrm{model}}}{2}\Big\rfloor-1.
\end{equation}
The encoding is added rather than concatenated so that time enters as a position index governing relative relationships across entries. Stacking yields the encoder input $Z\equiv(z_1,\ldots,z_L)^\top\in\mathbb{R}^{L\times d_{\mathrm{model}}}$, which replaces, at the implementation level, the wide stacked panel $[X\ Y]$ of the theory.

\paragraph*{Encoder layers.} Setting $\widetilde Z^{(0)}\equiv Z$, encoder layer $m=1,\ldots,M$ applies a data-dependent attention operator followed by a feedforward map with elementwise nonlinearity $g(\cdot)$,
\begin{equation}
\widetilde Z^{\mathrm{att},(m)} = \mathcal{A}_{\theta_A}\big(\widetilde Z^{(m-1)}\big),
\qquad
\widetilde Z^{(m)} = g\!\left(\iota b^{(m-1)\top} + \widetilde Z^{\mathrm{att},(m)} W^{(m-1)\top}\right),
\end{equation}
the analogue of the layerwise transformation~\eqref{layer_l} with $\widetilde Z^{\mathrm{att},(m)}$ replacing the previous-layer output (the two coincide when attention is the identity). The attention operator $\mathcal{A}_{\theta_A}(\cdot)$ has learnable query, key, and value weights as in Section~\ref{Sec:methodology} and induces both temporal and cross-sectional aggregation across sequence entries. The final output $\mathcal E_{\theta}(Z)=\widetilde Z^{(M)}$ is the low-dimensional representation that feeds the prediction head.

%%%%%%%%%%%%%%%%%
\cleardoublepage
\section{Wide-panel versus long-sequence representation}\label{appendix_wide_vs_long}

This supplemental appendix illustrates the distinction between the wide-panel representation used in the theoretical analysis and the long-sequence representation used in the implementation for mixed-frequency data (Subsection~\ref{Sec:transformer_implementation}).

% ====== Figure: Wide panel (theory) vs Long sequence (implementation) ======
\begin{figure}[h!]
\centering
\scalebox{0.8}{%
\begin{tikzpicture}[
  font=\small,
  >=Latex,
  box/.style={draw, rounded corners, align=center, inner sep=6pt, text width=0.92\textwidth},
  panel/.style={draw, rounded corners, inner sep=8pt},
  title/.style={font=\small\bfseries},
  arrow/.style={->, line width=0.7pt},
  note/.style={draw, rounded corners, inner sep=6pt, font=\scriptsize, align=left, text width=0.92\textwidth}
]

% ----------------------
% Panel A: Theory (wide)
% ----------------------
\node[panel, text width=0.92\textwidth] (A) {%
\begin{minipage}{0.92\textwidth}
\centering
{\title (a) Theory: wide concatenated panel (assumes $T_x=T_y=T$)}\\[6pt]
\begin{tikzpicture}[>=Latex, font=\scriptsize]
  % Draw a wide matrix with two blocks X and Y
  \draw (0,0) rectangle (11,2.4);
  \draw (7.0,0) -- (7.0,2.4); % separator between X and Y

  % Labels
  \node at (3.5,2.65) {$X\in\mathbb{R}^{T\times N_x}$};
  \node at (9.0,2.65) {$Y\in\mathbb{R}^{T\times N_y}$};
  \node[rotate=90] at (-0.55,1.2) {$t=1,\ldots,T$};

  % A few row ticks to suggest time index
  \foreach \y in {0.4,0.8,1.2,1.6,2.0} {
    \draw (-0.15,\y) -- (0,\y);
  }

  % Matrix name under
  \node at (5.5,-0.35) {$Z=[X\ \ Y]\in\mathbb{R}^{T\times(N_x+N_y)}$};

  % Attention transform (theory)
  \node at (5.5,-0.85) {$\widetilde Z = B[X\ \ Y]A_z$ \quad (linear attention operators)};
\end{tikzpicture}
\end{minipage}
};

% ----------------------
% Middle note (NO OVERLAP)
% ----------------------
\node[note, below=6mm of A] (midnote) {%
\textbf{Key distinction.} Theory stacks across variables at a common time index (wide panel),
whereas implementation stacks observed variable-time pairs into an ordered sequence (long format),
enabling mixed-frequency inputs without resampling.
};

% ----------------------
% Panel B: Implementation (long sequence)
% ----------------------
\node[panel, text width=0.92\textwidth, below=6mm of midnote] (B) {%
\begin{minipage}{0.92\textwidth}
\centering
{\title (b) Implementation: long sequence for mixed frequencies ($T_x\neq T_y$)}\\[6pt]

% Schematic long-format token table (with body padding so rows don't sit on borders)
% Schematic long-format token table (single header rule, visually centered cells)
\begin{tikzpicture}[>=Latex, font=\scriptsize]
  % ---- geometry ----
  \def\W{11}
  \def\Ht{3.0}
  \def\head{0.70}      % header row height
  \pgfmathsetmacro{\ysep}{\Ht-\head}
  \pgfmathsetmacro{\rowh}{\ysep/4}  % 4 body rows

  % ---- styles: force consistent box metrics (fixes "looks off") ----
  \tikzset{
    cell/.style={inner sep=0pt, outer sep=0pt, anchor=center,
                 text height=1.6ex, text depth=0.3ex},
    hcell/.style={cell, font=\scriptsize\itshape} % optional: header italic
  }

  % ---- frame ----
  \draw (0,0) rectangle (\W,\Ht);

  % ---- column separators ----
  \draw (1.0,0) -- (1.0,\Ht);
  \draw (3.2,0) -- (3.2,\Ht);
  \draw (6.1,0) -- (6.1,\Ht);
  \draw (8.0,0) -- (8.0,\Ht);

  % ---- single header separator ----
  \draw (0,\ysep) -- (\W,\ysep);

  % ---- centers ----
  \pgfmathsetmacro{\rH}{\ysep + 0.5*\head}
  \pgfmathsetmacro{\rOne}{\ysep - 0.5*\rowh}
  \pgfmathsetmacro{\rTwo}{\ysep - 1.5*\rowh}
  \pgfmathsetmacro{\rDots}{\ysep - 2.5*\rowh}
  \pgfmathsetmacro{\rLast}{\ysep - 3.5*\rowh}

  % ---- header row (FIRST ROW) ----
  \node[hcell] at (0.5,\rH) {$\ell$};
  \node[hcell] at (2.1,\rH) {$t_\ell$};
  \node[hcell] at (4.65,\rH) {$v_\ell$};
  \node[hcell] at (7.05,\rH) {$f_\ell$};
  \node[hcell] at (9.5,\rH) {$r^{*}_{v_\ell,t_\ell}$};

  % ---- body rows ----
  \node[cell] at (0.5,\rOne) {$1$};
  \node[cell] at (2.1,\rOne) {$Q_1$};
  \node[cell] at (4.65,\rOne) {GDP};
  \node[cell] at (7.05,\rOne) {Q};
  \node[cell] at (9.5,\rOne) {$r^{*}_{\text{GDP},Q_1}$};

  \node[cell] at (0.5,\rTwo) {$2$};
  \node[cell] at (2.1,\rTwo) {$M_1$};
  \node[cell] at (4.65,\rTwo) {CPI};
  \node[cell] at (7.05,\rTwo) {M};
  \node[cell] at (9.5,\rTwo) {$r^{*}_{\text{CPI},M_1}$};

  \node[cell] at (0.5,\rDots) {$\vdots$};
  \node[cell] at (2.1,\rDots) {$\vdots$};
  \node[cell] at (4.65,\rDots) {$\vdots$};
  \node[cell] at (7.05,\rDots) {$\vdots$};
  \node[cell] at (9.5,\rDots) {$\vdots$};

  \node[cell] at (0.5,\rLast) {$L$};
  \node[cell] at (2.1,\rLast) {$ M_T$};
  \node[cell] at (4.65,\rLast) {PAYEMS};
  \node[cell] at (7.05,\rLast) {M};
  \node[cell] at (9.5,\rLast) {$r^{*}_{\text{PAYEMS},M_T}$};
\end{tikzpicture}

\vspace{6pt}

% Token -> embedding -> projection -> time encoding -> z_l
\begin{tikzpicture}[>=Latex, font=\scriptsize, node distance=7mm]
  \node (raw) {$r^{*}_{v_\ell,t_\ell}$};
  \node[draw, rounded corners, inner sep=3pt, right=10mm of raw] (phi)
    {$\phi(r^{*}_{v_\ell,t_\ell},v_\ell,f_\ell)$};
  \node[right=10mm of phi] (proj) {$h_\ell=W_{\mathrm{proj}}\phi(\cdot)$};
  \node[right=10mm of proj] (te) {$z_\ell=h_\ell+\mathrm{TE}(t_\ell)$};

  \draw[arrow] (raw) -- node[above] {$\phi(\cdot)$} (phi);
  \draw[arrow] (phi) -- node[above] {$W_{\mathrm{proj}}$} (proj);
  \draw[arrow] (proj) -- node[above] {$+\mathrm{TE}(t_\ell)$} (te);

  \node (Zline) [below=5mm of proj]
    {$Z=(z_1,\ldots,z_L)^\top\in\mathbb{R}^{L\times d_{\mathrm{model}}}$};

  \node[below=1.2mm of Zline]
    {$\widetilde Z^{\mathrm{att}} \;=\; \mathcal{A}_{\theta_A}(Z)$ \quad (data-dependent attention operator)};
\end{tikzpicture}

\end{minipage}
};

\end{tikzpicture}%
}

\caption{Wide-panel representation used in the theoretical analysis versus long sequence
representation used in implementation for mixed-frequency data.
Panel (a) assumes $T_x=T_y=T$ and forms $Z=[X\ \ Y]$.
Panel (b) constructs an ordered sequence of observed variable-time pairs
$\{(v_\ell,t_\ell)\}_{\ell=1}^L$, embeds each entry, and stacks the resulting tokens
into $Z\in\mathbb{R}^{L\times d_{\mathrm{model}}}$ for attention-based aggregation.}
\label{fig:wide_vs_long}
\end{figure}
\FloatBarrier

%%%%%%%%%%%%%%%%%
\cleardoublepage
\section{Simulation design}\label{appendix_simulation_design}
This appendix gives the full data-generating processes for the two simulation studies of Section~\ref{Sec:simulations}: the linear theory-validation design of Subsection~\ref{Sec:sim_theory} and the forecasting design of Subsection~\ref{Sec:sim_forecasting}.

\subsection{Theory-validation design}\label{appendix_theory_design}

\paragraph*{Data-generating process.} The panel is the union-factor model of Section~\ref{Sec:methodology}: $Z=[X \;\; Y]$ with $Z_t = \Lambda F_t + e_t$ and $k=4$ common factors, partitioned into $k_{y_s}=2$ $Y$-strong factors and $k_R=2$ remaining factors. The factors follow two independent bivariate VAR(1) blocks, one per partition, each with autoregressive matrix $0.5\,I_2$ and Gaussian innovations with variance $0.75$, so that each factor has unit unconditional variance. The $Y$ loadings are drawn i.i.d.\ $\mathcal N(0,1)$ on the two $Y$-strong factors and are zero on the remaining factors, so Assumption~B.1 holds by construction, and at the identity operator the asymptotic exclusion condition of Theorem~\ref{thm2} (Lemma~\ref{lem_bridge}) holds in its exact form, $\Lambda_{y,R}^{(A)}=0$. The $X$ loadings are drawn i.i.d.\ $\mathcal N(0,1)$ on the first $Y$-strong factor and on the two remaining factors, and are zero on the second $Y$-strong factor, so the two blocks overlap through a single shared direction. Idiosyncratic noise is i.i.d.\ $\mathcal N(0,2)$ across variables and time, giving a signal share of roughly one half. The baseline dimensions are $N_x=100$, $N_y=50$, $T=200$, maintaining the $2{:}1$ ratio $N_x{:}N_y$ wherever the total cross-section varies. The experiment-specific grids are: $N=T\in\{80,160,320,640,1000\}$ for the consistency experiment; $(N,T)\in\{(50,400),(400,50),(200,200)\}$ for the coverage experiment; $N_y=50$, $T=200$, and $N_x\in\{25,50,100,200,400\}$ for the efficiency experiment; and $N=T=300$ for the nonlinear-bridge experiment. All quantities average over $2{,}000$ Monte Carlo replications.
% CONFIRM with code: VAR coefficient matrices exactly 0.5*I, innovation variance 0.75, loading zero patterns, replication counts per experiment.

\paragraph*{Operator construction.} The learned operators are obtained with \emph{axial} attention. At the panel sizes needed to trace asymptotic rates, attention over the flattened panel is infeasible: a $T\times N$ panel flattened to $T\!\cdot\!N$ tokens generates an attention map with $(T\!\cdot\!N)^2$ entries. Axial attention instead computes one attention over the $N$ variable columns, yielding the cross-sectional operator $A_z$, and one attention over the $T$ time rows, yielding the temporal operator $B$, at cost $N^2+T^2$. These are exactly the two reduced-form operators of \eqref{z_attn}, obtained directly rather than by averaging a flattened attention map. The value map is fixed to the identity ($W^v=I$), so the attended panel is $\widetilde Z = BZA_z$ itself. The attention parameters are estimated on an independent panel drawn from the same DGP by minimizing the forecast objective, then frozen, following Remark~\ref{rem:sample_splitting}; the frozen maps are scaled by the constants $c_N$, $c_T$ of Section~\ref{Sec:methodology}, with $c_N$ chosen so that $\operatorname{tr}(A_z^\top A_z)/N$ is fixed across panel sizes. The parameter-free configuration applies the same axial construction with identity query and key maps, $Q=K=Z$ in \eqref{z_tilde_attn}, so its operators carry no trained parameters beyond the same scaling.

\paragraph*{Projected operator configurations.} The coverage analysis additionally uses three deterministic transformations of the frozen learned operators, each preserving the sample-splitting convention of Remark~\ref{rem:sample_splitting}. The \emph{block-restricted} configuration zeroes the $X$-to-$Y$ cross-blocks of $A_z$. The \emph{clipped} configuration caps the singular values at $\kappa_0=3$ and re-applies the trace scaling. The \emph{blend} configuration clips at $\kappa_0=3$, adds $\delta I$ with $\delta=1$, and applies the trace scaling once; it shrinks the learned attention toward the identity, giving every variable a baseline self-weight. The clipped and blend transformations are applied to both operators, $A_z$ and $B$. The boundary diagnostics of Supplemental Appendix~\ref{appendix_theory_validation} additionally sweep $N=T$ under the same constructions and sweep the blend shrinkage over $\delta\in[0,5]$.

\subsection{Forecasting design}\label{appendix_forecasting_design}

The observable-panel equations and the latent transformation $h(\cdot)$ in \eqref{Eq:rbf} are stated in the main text.

\paragraph*{Latent factors.} We model the latent factors with a linear VAR(2) process generated at the high frequency,
\begin{equation} \label{dgp_factors}
    F_t = \Phi_1 F_{t-1} + \Phi_2 F_{t-2} + \varepsilon_t,
\qquad
\varepsilon_t \sim \mathcal{N}(0,\,\Sigma_\varepsilon),
\quad F_t \in \mathbb{R}^{q}.
\end{equation}
Each low-frequency index $t'$ is associated with the high-frequency time $t = r t'$, corresponding to the end of the $t'$-th low-frequency period, where $r$ denotes the ratio between the high- and low-frequency sampling intervals (e.g.\ $r=3$ for monthly and quarterly frequencies). We rescale $(\Phi_1,\Phi_2)$ to ensure stability of the latent VAR(2) process, with the spectral radius of the associated companion matrix bounded away from unity, set the innovation covariance to $\Sigma_\varepsilon = 0.5\, I_q$, and discard an initial burn-in period.

\paragraph*{Observable panels.} Conditional on $\{F_t\}$, the high- and low-frequency panels are generated from the VAR equations of Subsection~\ref{Sec:sim_forecasting}. We rescale the autoregressive coefficient matrices $\{A_\ell\}_{\ell=1}^{L_x}$ and $\{C_\ell\}_{\ell=1}^{L_y}$ so that the spectral radius of the associated companion matrices does not exceed pre-specified block-specific thresholds, allowing for different degrees of persistence across high- and low-frequency variables. We generate the factor loading matrices $\Lambda_{x,j}$ and $\Lambda_{y,j}$ using Almon polynomial lag weights to induce smoothly decaying factor effects across lags, and we scale the noise variances to ensure comparable signal-to-noise ratios across simulation designs.

\paragraph*{Latent transformation.} In the RBF specification \eqref{Eq:rbf}, the centers $\{c_j\}_{j=1}^J \subset \mathbb{R}^q$ are drawn i.i.d.\ from $\mathcal{N}(0,I_q)$, the bandwidth $\rho$ is set by a median-distance heuristic based on the centers, and $\sigma_j$ is the sample standard deviation of the $j$-th RBF component over time, standardizing each feature to unit variance. The linear design uses $h(F_t)=F_t$; the nonlinear designs use the RBF map with $J=6$ (mildly nonlinear) and $J=12$ (highly nonlinear).

%%%%%%%%%%%%%%%%%
\cleardoublepage
\section{Additional theory-validation results}\label{appendix_theory_validation}
This appendix collects the detailed results behind Subsection~\ref{Sec:sim_theory}: the fitted convergence rates and the error read against the realized rate bound, the coverage analysis at the boundary of Assumption~A.7, the transfer-efficiency table, the block-wise factor recovery of the nonlinear bridge, and the learned attention operators themselves. Throughout, $N=N_x+N_y$ and the design is that of Supplemental Appendix~\ref{appendix_theory_design}.

\subsection{Consistency: fitted rates and the realized bound}
Table~\ref{tab:e1_slope} reports the fitted log-log slopes of the relative estimation errors against $N=T$ for the common component, factors, and loadings, by operator configuration, together with the cross-sectional operator norm at the largest panel size. The common-component slopes are close to the parametric benchmark $-1$ for every configuration ($-0.86$ to $-1.08$) even though the operator norm of the data-driven configurations grows with $N$; the factor and loading slopes deviate individually ($-0.69$ to $-1.53$), in offsetting directions consistent with the rotation indeterminacy of the factor--loading split, while their rotation-free product tracks the benchmark.

Figure~\ref{fig:e1_alpha} reads the same errors against each configuration's \emph{realized} rate bound $\bar\alpha$, computed from that configuration's operators at each grid point; it equals $1/T+2/N$ only at the oracle. Theorem~\ref{thm1} states an upper bound, error $=\mathcal{O}_P(\bar\alpha)$, so points on or below the slope-one reference are what it permits. The bound is sharp at the oracle, where Assumption~A.7 holds, and conservative for the data-driven configurations: their realized $\bar\alpha$ carries the growing operator norm through its middle term, which aggregates operator mass regardless of whether that mass is aligned with the loading space, while the realized error is driven only by the aligned part, the finite-sample counterpart of Remark~\ref{rem:weak_A7}.

\begin{table}[htbp]
\centering
{\scriptsize\setlength{\tabcolsep}{4pt}
\begin{tabular}{l c c c c}
\toprule
Operator configuration & slope $C$ & slope $F$ & slope $\Lambda$ & $\|A_z\|_{\mathrm{op}}$ at $N_{\max}$ \\
\midrule
oracle ($A_z{=}I$) & -1.03 & -1.01 & -1.10 & 1.0 \\
parameter-free ($Q{=}K{=}Z$) & -0.86 & -0.69 & -1.53 & 5.6 \\
learned, frozen & -1.08 & -1.30 & -0.85 & 13.8 \\
\bottomrule
\end{tabular}}

\caption{Fitted slope of the relative estimation error against $N=T$ on log-log axes (parametric benchmark $-1$) for the common component, factors, and loadings, by operator configuration, with the cross-sectional operator norm $\|A_z\|_{\mathrm{op}}$ at the largest panel size $N=T=1000$.}
\label{tab:e1_slope}
\end{table}

\begin{figure}[htbp]
\centering
\includegraphics[width=\linewidth]{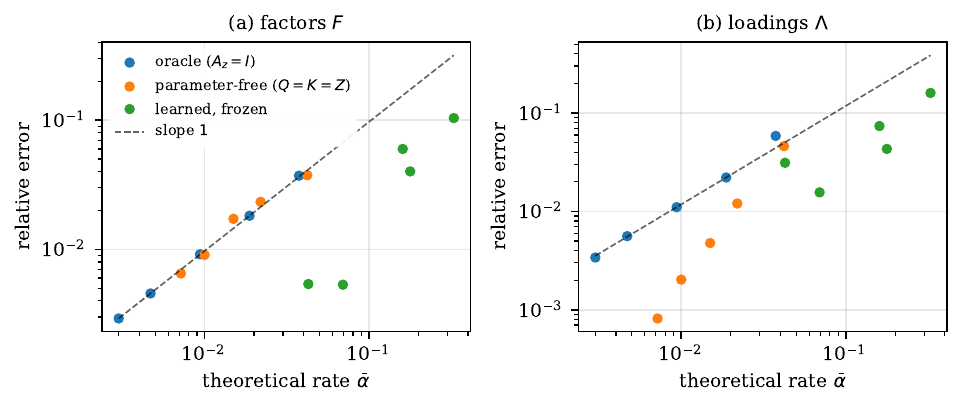}
\caption{Relative estimation error of (a) the factors and (b) the loadings against each configuration's realized rate bound $\bar\alpha$ on log-log axes, with a slope-one reference line. Theorem~\ref{thm1} is an upper bound, so points on or below the line are consistent with it: the bound is sharp at the oracle and conservative for the data-driven configurations, whose realized $\bar\alpha$ carries the growing operator norm.}
\label{fig:e1_alpha}
\end{figure}

\subsection{Coverage and the boundary of Assumption A.7}\label{appendix_coverage_boundary}
This subsection documents where the distributional theory of Theorems~\ref{thm2} and \ref{thm3} is calibrated and where, and why, it degrades. Coverage refers to confidence intervals for the $Y$-strong common component $C_{y,i,t}$, studentized with the Theorem~\ref{thm3}(iii) variance, in the three $(N,T)$ regimes of Subsection~\ref{Sec:sim_theory}.

\paragraph*{Oracle operators.} Under the oracle configuration the assumptions hold exactly. For the i.i.d.\ design the long-run covariances collapse to $\Omega=\sigma^2\Sigma_{F,y}$ and $\Xi=\sigma^2\Sigma_{\Lambda,y_s}$, and the resulting plug-in delivers near-nominal coverage in all three regimes (oracle rows of Table~\ref{tab:e2_gate}), with the studentized statistic close to standard normal (Figure~\ref{fig:e2_qq}).

\paragraph*{Learned operators and the i.i.d.\ collapse.} Under the learned, frozen operators the same i.i.d.-collapse plug-in undercovers (i.i.d.\ column of Table~\ref{tab:e2_gate}). The cause is the interval width, not the estimator: a concentrated $A_z$ makes the attended noise $\widetilde e = B e A_z$ far from i.i.d., so the collapse used to build the plug-in is misspecified.

\paragraph*{A feasible general plug-in.} The attended noise has a known covariance structure up to one scalar. With i.i.d.\ raw noise and fixed operators,
\[
\operatorname{Cov}\big(\widetilde e_{ti}, \widetilde e_{sj}\big) = \sigma^2\,(BB^\top)_{ts}\,(A_z^\top A_z)_{ij},
\qquad
\widehat\sigma^2 = \frac{\|\widehat{\widetilde e}\|_F^2}{\operatorname{tr}(BB^\top)\operatorname{tr}(A_z^\top A_z)},
\]
so the two score covariances of Theorem~\ref{thm3} can be rebuilt with $(A_z^\top A_z)$ and $(BB^\top)$ inserted where the i.i.d.\ collapse used identities. This general plug-in reduces to the i.i.d.\ one exactly at the oracle. Table~\ref{tab:e2_gate} reports coverage by regime and operator configuration under three variance channels, the i.i.d.\ plug-in, the feasible general plug-in, and the Monte Carlo standard deviation of the estimates themselves, together with the Assumption~A.7 diagnostics: the operator norm $\|A_z\|_{\mathrm{op}}$ and the normalized participation ratio $\mathrm{PR}/N$, where $\mathrm{PR}=\operatorname{tr}(A_z^\top A_z)^2/\|A_z^\top A_z\|_F^2$ is the effective rank of Remark~\ref{rem:weak_A7} and the trace scaling fixes $\operatorname{tr}(A_z^\top A_z)/N=c_A=1$ across configurations. On the learned configuration the general plug-in closes most of the gap the collapse leaves, less so in the mixed regime, and stops short of nominal; two checks localize what remains. Re-studentizing with the Monte Carlo standard deviation restores the distributional shape (Figure~\ref{fig:e2_learned_qq}), yet coverage still falls short, unevenly across regimes, so the residual is a location term, a finite-sample bias of the common component, rather than width. And a block-restricted variant that zeroes the $X$-to-$Y$ cross-blocks of $A_z$ does not restore coverage in any channel, ruling out identification leakage from the $X$ block as the cause.

\paragraph*{The binding constraint is concentration, not width.} Figure~\ref{fig:e2_bias} traces the mechanism as $N=T$ grows. The learned attention concentrates its weight on a bounded set of hub variables, so $\mathrm{PR}/N$ keeps shrinking and the effective cross-sectional dimension stays bounded as the panel grows. Since the middle term of $\bar\alpha$ equals $1/\mathrm{PR}$, the rate bound stalls and the growth product $\sqrt{N}\,\bar\alpha$ grows rather than vanishes, from about $1.3$ to $3.3$ along the sweep, so the growth conditions of Theorems~\ref{thm2} and \ref{thm3}, $\sqrt{T}\,\bar\alpha\to0$ and $\sqrt{N_{\mathrm{eff}}}\,\bar\alpha\to0$, fail along this path, and the $o_P(1)$ remainders of the expansions surface as the non-vanishing standardized bias in panel (b). This does not contradict the consistency evidence, where the same configuration converges at the full rate (Table~\ref{tab:e1_slope}): the Frobenius-norm error is carried by the high-signal hub cells the attention favors, while the $Y$-block target cells keep a small relative bias that nonetheless does not shrink faster than the standard error; this is the division of labor described in Remark~\ref{rem:weak_A7}.

\paragraph*{The projected configurations: the boundary from the concentrated side.} The two halves of Assumption~A.7 are jointly an anti-concentration condition: if $\|A_z\|_{\mathrm{op}}\le\kappa$ and $\operatorname{tr}(A_z^\top A_z)/N=c_A$, then $\mathrm{PR}/N\ge c_A/\kappa^2$, while each half alone can be satisfied by a concentrated operator; the raw learned configuration keeps the trace half by construction and fails the norm half. Two further configurations, defined in Supplemental Appendix~\ref{appendix_theory_design}, apply deterministic projections to both frozen operators, since the temporal $B$ is concentrated on the same footing as $A_z$ (operator norm between $4.7$ and $7.4$ and $\mathrm{PR}/T$ between $0.05$ and $0.12$ across regimes) and drives the loading term of the limiting variance. The \emph{clipped} configuration caps the singular values at $\kappa_0=3$ and re-applies the trace scaling. For an effectively low-rank spectrum this has no fixed point: clipping removes top spectral mass but cannot create the tail mass the trace half needs, so the rescale pushes the operator norm back above $\kappa_0$ and the configuration lands near the boundary of A.7, with $\mathrm{PR}/N$ between $0.120$ and $0.193$ against the threshold $c_A/\kappa_0^2\approx0.11$; coverage improves but stops short of nominal (Table~\ref{tab:e2_gate_full}). The \emph{blend} configuration adds the missing tail mass by shrinking the clipped operator toward the identity. It lands inside the domain, with $\mathrm{PR}/N\approx0.35$ and operator norm close to $\kappa_0$, and under the general plug-in it is near nominal in all three regimes, with Monte Carlo coverage at the nominal level. Which operator binds is regime-dependent in exactly the way the factor and loading terms of the Theorem~\ref{thm3}(iii) variance predict: blending $A_z$ alone nearly closes the F-dominant regime, where small $N$ makes the factor term, which carries $A_z$, dominant; blending $B$ alone closes the $\Lambda$-dominant regime, where small $T$ makes the loading term, which carries $B$, dominant; only blending both closes the mixed regime.

\paragraph*{The boundary in theorem-native units.} Figure~\ref{fig:e2_delta} summarizes the coverage analysis in the theory's own quantities. Sweeping the blend shrinkage $\delta$ from $0$, where the blend reduces to the clip, toward $5$, where it approaches the identity, moves the realized effective rank $\mathrm{PR}/N$ through the A.7 threshold $c_A/\kappa_0^2$, and general-plug-in coverage climbs toward nominal in every regime as the threshold is crossed: inference switches on as the operators re-enter Assumption~A.7's domain.

\begin{figure}[htbp]
\centering
\includegraphics[width=0.55\linewidth]{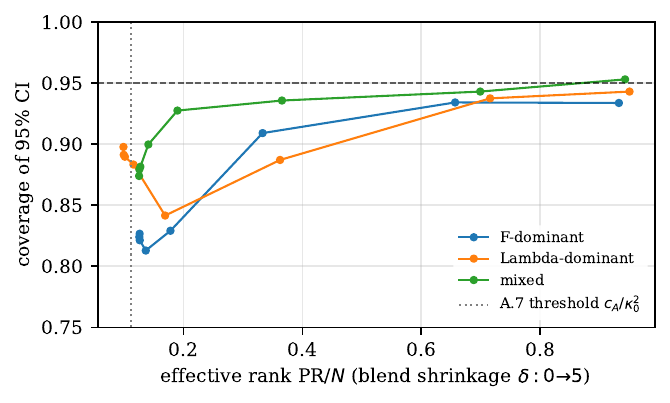}
\caption{Coverage of the $95\%$ confidence interval for the $Y$-strong common component under the feasible general plug-in, as the learned operators are shrunk toward the identity (blend shrinkage $\delta\in\lbrack0,5\rbrack$; larger $\delta$ is closer to the identity), against their normalized effective rank $\mathrm{PR}/N$ (Remark~\ref{rem:weak_A7}), one curve per $(N,T)$ regime, with the threshold implied by Assumption~A.7 marked.}
\label{fig:e2_delta}
\end{figure}

\begin{table}[htbp]
\centering
{\scriptsize\setlength{\tabcolsep}{4pt}
\begin{tabular}{l l c c c c c}
\toprule
Regime & Operator configuration & $\|A_z\|_{\mathrm{op}}$ & PR/$N$ & iid & general & MC \\
\midrule
F-dominant & oracle ($A_z{=}I$) & 1.0 & 1.000 & 0.936 & \best{0.936} & -- \\
F-dominant & learned, raw & 4.5 & 0.104 & 0.241 & 0.879 & 0.949 \\
F-dominant & learned, block-restricted & 3.8 & 0.140 & 0.285 & 0.860 & 0.949 \\
F-dominant & learned, clipped & 3.1 & 0.193 & 0.369 & 0.878 & 0.939 \\
F-dominant & learned, blend & 2.7 & 0.334 & 0.542 & \best{0.932} & 0.954 \\
\midrule
Lambda-dominant & oracle ($A_z{=}I$) & 1.0 & 1.000 & 0.941 & \best{0.941} & -- \\
Lambda-dominant & learned, raw & 10.1 & 0.029 & 0.515 & 0.825 & 0.939 \\
Lambda-dominant & learned, block-restricted & 8.7 & 0.035 & 0.606 & 0.787 & 0.922 \\
Lambda-dominant & learned, clipped & 3.4 & 0.133 & 0.790 & 0.898 & 0.950 \\
Lambda-dominant & learned, blend & 3.1 & 0.363 & 0.834 & \best{0.930} & 0.948 \\
\midrule
mixed & oracle ($A_z{=}I$) & 1.0 & 1.000 & 0.947 & \best{0.947} & -- \\
mixed & learned, raw & 8.8 & 0.028 & 0.170 & 0.433 & 0.697 \\
mixed & learned, block-restricted & 8.5 & 0.032 & 0.132 & 0.405 & 0.643 \\
mixed & learned, clipped & 3.4 & 0.120 & 0.482 & 0.908 & 0.940 \\
mixed & learned, blend & 2.8 & 0.359 & 0.718 & \best{0.935} & 0.945 \\
\bottomrule
\end{tabular}}

\caption{Complete version of Table~\ref{tab:e2_gate}, including the clipped intermediate configuration (Supplemental Appendix~\ref{appendix_theory_design}). The clipped and blend configurations apply the same projection to the temporal operator $B$; the general-plug-in cells of the configurations inside Assumption~A.7's domain (oracle and blend) are highlighted. Monte Carlo standard errors for the coverage estimates are approximately $0.5$ percentage points ($2{,}000$ replications).}
\label{tab:e2_gate_full}
\end{table}

\begin{figure}[htbp]
\centering
\includegraphics[width=0.6\linewidth]{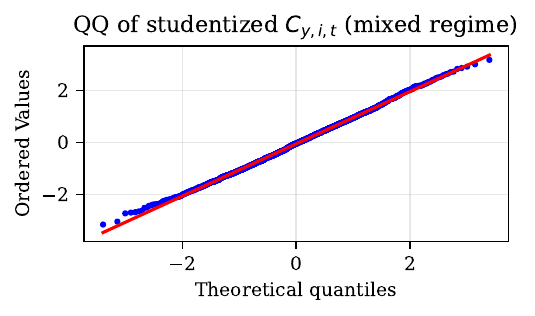}
\caption{QQ plot of the studentized $Y$-strong common component against $\mathcal N(0,1)$ under the oracle operators (mixed regime, $N=200$, $T=200$).}
\label{fig:e2_qq}
\end{figure}

\begin{figure}[htbp]
\centering
\includegraphics[width=\linewidth]{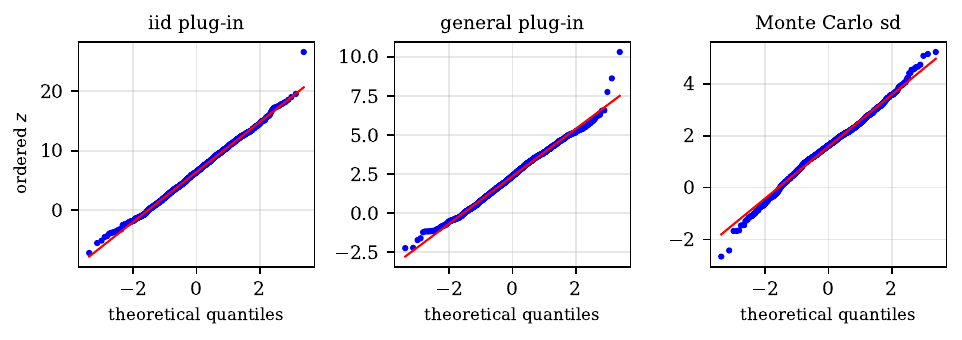}
\caption{QQ plots of the learned-configuration studentized statistic (mixed regime) under the i.i.d.\ plug-in, the feasible general plug-in, and the Monte Carlo standard deviation. The i.i.d.\ tails fan out; the corrected widths pull the bulk onto the line but leave a residual off-center shift from the finite-sample bias.}
\label{fig:e2_learned_qq}
\end{figure}

\begin{figure}[htbp]
\centering
\includegraphics[width=\linewidth]{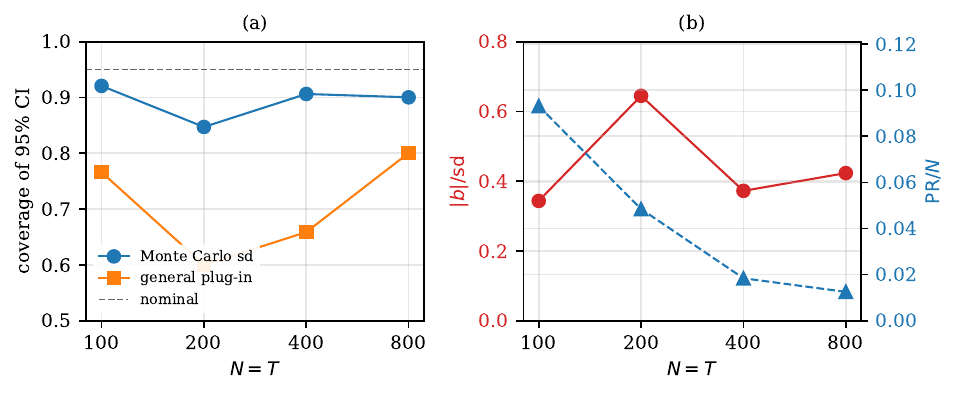}
\caption{The boundary mechanism as $N=T$ grows under the learned operators: (a) coverage under the Monte Carlo and general-plug-in widths, both plateauing below the nominal $0.95$; (b) the standardized bias of the common component, which does not vanish, and the participation ratio $\mathrm{PR}/N$, which keeps shrinking, so the learned operator remains concentrated as the panel grows.}
\label{fig:e2_bias}
\end{figure}

\subsection{Efficiency gains from transfer}
Table~\ref{tab:e3_ratio} and Figure~\ref{fig:e3_efficiency} report the mean squared error of the joint $(X,Y)$ and $Y$-only estimators of the $Y$-strong common component, and their ratio, as the auxiliary cross-section $N_x$ grows. The ratio rises with $N_x$ and saturates, reflecting the single shared factor direction of the design, and dips mildly below one at the smallest auxiliary sizes, where the joint estimator's burden of fitting the $X$-only factors outweighs the shared-direction information; see the discussion in Subsection~\ref{Sec:sim_theory}.

\begin{table}[htbp]
\centering
{\scriptsize\setlength{\tabcolsep}{4pt}
\begin{tabular}{c c c c}
\toprule
$N_x$ & MSE joint & MSE Y-only & ratio \\
\midrule
25 & 0.1119 & 0.1016 & 0.908 \\
50 & 0.1032 & 0.1018 & 0.986 \\
100 & 0.0955 & 0.1015 & 1.062 \\
200 & 0.0905 & 0.1018 & 1.125 \\
400 & 0.0884 & 0.1015 & 1.149 \\
\bottomrule
\end{tabular}}

\caption{Mean squared error of the $Y$-strong common component for the joint $(X,Y)$ and $Y$-only estimators and their ratio ($Y$-only over joint), as the auxiliary cross-section $N_x$ varies with $N_y=50$ and $T=200$ fixed.}
\label{tab:e3_ratio}
\end{table}

\begin{figure}[htbp]
\centering
\includegraphics[width=0.6\linewidth]{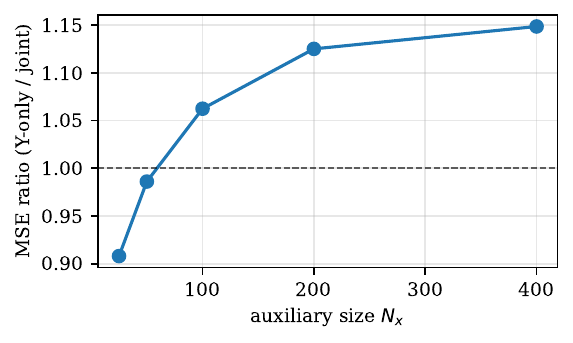}
\caption{MSE ratio ($Y$-only over joint) for the $Y$-strong common component against the auxiliary cross-section size $N_x$.}
\label{fig:e3_efficiency}
\end{figure}

\subsection{Nonlinear bridge: block-wise factor recovery}
Table~\ref{tab:e4_block} reports the block-wise canonical correlations behind the nonlinear-bridge experiment of Subsection~\ref{Sec:sim_theory}: the nonlinear model's $k$-dimensional latent state and the linear attention-weighted PCA factors, each measured against the $Y$-strong block $F^{(B)}_S$ and the rest block $F^{(B)}_R$ of the true attended factors on linear-truth data. The linear estimator recovers both blocks, while the nonlinear latent recovers the leading target-relevant direction, the one-dimensional identified set of its single-target objective; see the discussion in Subsection~\ref{Sec:sim_theory}. The latent also carries correlation with the rest block, which the objective neither requires nor forbids, since the encoder reads out the entire attended panel.

\begin{table}[htbp]
\centering
{\scriptsize\setlength{\tabcolsep}{4pt}
\begin{tabular}{l c c c c}
\toprule
Model & \multicolumn{2}{c}{Y-strong block $F^{(B)}_S$} & \multicolumn{2}{c}{rest block $F^{(B)}_R$} \\
\cmidrule(lr){2-3}\cmidrule(lr){4-5}
& cc$_1$ & cc$_2$ & cc$_1$ & cc$_2$ \\
\midrule
nonlinear latent & 0.980 & 0.585 & 0.922 & 0.616 \\
linear PCA & 0.991 & 0.896 & 0.990 & 0.970 \\
\bottomrule
\end{tabular}}

\caption{Block-wise canonical correlations of two $k$-dimensional representations with the true attended factors on linear-truth data at $N=T=300$: against the $Y$-strong block $F^{(B)}_S$ and the rest block $F^{(B)}_R$, for the nonlinear model's latent state and the linear attention-weighted PCA factors. Averages over $2{,}000$ out-of-sample panels.}
\label{tab:e4_block}
\end{table}

\subsection{Learned operators}
Figure~\ref{fig:sim_operators} displays the frozen cross-sectional operator $A_z$ and temporal operator $B$ read off the trained model on one panel. The attention concentrates on a small set of variables and time points, visible as bright stripes; this is the structure behind the operator norm that grows with $N$ in Figure~\ref{fig:e1_rate} and the shrinking participation ratio in Figure~\ref{fig:e2_bias}, and it mirrors the aggregation patterns of the empirical attention maps in Subsection~\ref{Sec:attention_patterns}.

\begin{figure}[htbp]
\centering
\includegraphics[width=\linewidth]{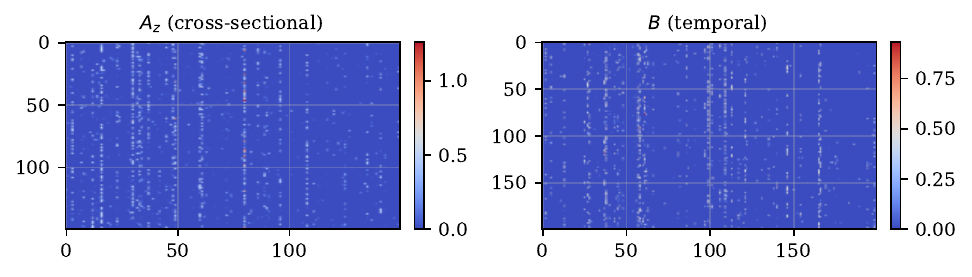}
\caption{Learned, frozen cross-sectional operator $A_z$ ($N\times N$) and temporal operator $B$ ($T\times T$) from the independent-sample model, on one simulated panel.}
\label{fig:sim_operators}
\end{figure}
\FloatBarrier

\section{Additional forecasting simulation results}\label{appendix_sim_ablations}
This appendix defines the full set of ablation variants for the forecasting simulations of Subsection~\ref{Sec:sim_forecasting}, reports the complete results table, and discusses the ablations not covered in the main text.

Five ablation variants of MPTE isolate individual architectural components. \textbf{AB1} removes the nonlinear transformations, leaving feedforward layers without activation functions. \textbf{AB2} removes the attention mechanism, reducing the encoder to a stack of feedforward layers. \textbf{AB3} retains only the low-frequency block, excluding high-frequency inputs. \textbf{AB4} removes both the nonlinear transformations and the attention mechanism. \textbf{AB5} removes the temporal encoding. Table~\ref{Tab:evals_simulation_full} reports the complete results; AB1, AB2, and AB3 are discussed in the main text.

Removing both attention and nonlinearity (AB4) raises errors beyond removing either component alone in every design, so the two contributions add up, and no single component is uniformly decisive. Temporal encoding matters throughout, as its removal (AB5) degrades performance in every design.

\begin{table}[htbp]
\centering
{\scriptsize\setlength{\tabcolsep}{4pt}\begin{tabular}{lccccccccc}
\toprule
& \multicolumn{3}{c}{\textbf{Linear}} & \multicolumn{3}{c}{\textbf{Mildly Nonlinear}} & \multicolumn{3}{c}{\textbf{Highly Nonlinear}} \\
\midrule
& RMSE & MAE & DA & RMSE & MAE & DA & RMSE & MAE & DA \\
\midrule
MPTE
& \best{1.1778} & \best{0.9415} & 0.6008 & \best{1.2226} & \best{0.9750} & \best{0.6069} & \best{1.3682} & \best{1.0952} & 0.6179 \\
& {\scriptsize (0.0903)} & {\scriptsize (0.0716)} & {\scriptsize (0.1152)} & {\scriptsize (0.1379)} & {\scriptsize (0.1107)} & {\scriptsize (0.0836)} & {\scriptsize (0.2474)} & {\scriptsize (0.2026)} & {\scriptsize (0.0858)} \\
\addlinespace
AR
& 1.3669 & 1.0892 & 0.0959 & 1.3413 & 1.0727 & 0.0805 & 1.5430 & 1.2385 & 0.0892 \\
& {\scriptsize (0.3908)} & {\scriptsize (0.3004)} & {\scriptsize (0.1388)} & {\scriptsize (0.1951)} & {\scriptsize (0.1643)} & {\scriptsize (0.1136)} & {\scriptsize (0.3025)} & {\scriptsize (0.2537)} & {\scriptsize (0.1228)} \\
\addlinespace
MIDAS
& 1.2238 & 0.9777 & 0.5980 & 1.3020 & 1.0367 & 0.5609 & 1.4212 & 1.1348 & 0.6015 \\
& {\scriptsize (0.0798)} & {\scriptsize (0.0649)} & {\scriptsize (0.0871)} & {\scriptsize (0.1292)} & {\scriptsize (0.1034)} & {\scriptsize (0.0578)} & {\scriptsize (0.2290)} & {\scriptsize (0.1869)} & {\scriptsize (0.0610)} \\
\addlinespace
AB1
& 1.1924 & 0.9516 & 0.5941 & 1.2402 & 0.9884 & 0.5967 & 1.4011 & \second{1.1165} & \second{0.6192} \\
& {\scriptsize (0.0912)} & {\scriptsize (0.0728)} & {\scriptsize (0.1137)} & {\scriptsize (0.1345)} & {\scriptsize (0.1088)} & {\scriptsize (0.0800)} & {\scriptsize (0.2407)} & {\scriptsize (0.1947)} & {\scriptsize (0.0639)} \\
\addlinespace
AB2
& 1.1902 & 0.9507 & \second{0.6087} & 1.2834 & 1.0250 & 0.5467 & \second{1.3974} & 1.1179 & \best{0.6210} \\
& {\scriptsize (0.0965)} & {\scriptsize (0.0767)} & {\scriptsize (0.1048)} & {\scriptsize (0.1684)} & {\scriptsize (0.1373)} & {\scriptsize (0.0798)} & {\scriptsize (0.2854)} & {\scriptsize (0.2366)} & {\scriptsize (0.0740)} \\
\addlinespace
AB3
& \second{1.1803} & \second{0.9416} & \best{0.6267} & \second{1.2328} & \second{0.9839} & \second{0.5976} & 1.4334 & 1.1498 & 0.5986 \\
& {\scriptsize (0.1026)} & {\scriptsize (0.0806)} & {\scriptsize (0.1007)} & {\scriptsize (0.1475)} & {\scriptsize (0.1223)} & {\scriptsize (0.0860)} & {\scriptsize (0.3092)} & {\scriptsize (0.2573)} & {\scriptsize (0.0763)} \\
\addlinespace
AB4
& 1.2214 & 0.9758 & 0.5904 & 1.2818 & 1.0227 & 0.5613 & 1.4581 & 1.1659 & 0.5857 \\
& {\scriptsize (0.1003)} & {\scriptsize (0.0810)} & {\scriptsize (0.0928)} & {\scriptsize (0.1452)} & {\scriptsize (0.1164)} & {\scriptsize (0.0585)} & {\scriptsize (0.2848)} & {\scriptsize (0.2323)} & {\scriptsize (0.0685)} \\
\addlinespace
AB5
& 1.2962 & 1.0357 & 0.5547 & 1.3235 & 1.0564 & 0.5286 & 1.5479 & 1.2429 & 0.5158 \\
& {\scriptsize (0.2046)} & {\scriptsize (0.1625)} & {\scriptsize (0.0949)} & {\scriptsize (0.1628)} & {\scriptsize (0.1338)} & {\scriptsize (0.0635)} & {\scriptsize (0.3283)} & {\scriptsize (0.2767)} & {\scriptsize (0.0535)} \\
\addlinespace
\bottomrule
\end{tabular}}
\caption{Complete version of Table~\ref{Tab:evals_simulation}, including all five ablation variants. Forecasting accuracy for the first low-frequency target series $Y_1$ across linear, mildly nonlinear, and highly nonlinear simulation designs. Each entry is the average over 100 replications of the data-generating process, with the standard deviation across replications in parentheses below. Dark green indicates the best-performing method and light green the second-best within each column.}
\label{Tab:evals_simulation_full}
\end{table}
\FloatBarrier

\section{Additional empirical results}\label{appendix_additional_empirical}

This supplemental appendix collects additional out-of-sample forecasting results that complement the analysis in Section~\ref{Sec:empirical}. Figure~\ref{Fig:preds} shows representative out-of-sample forecast paths for GDPC1 and OUTNFB. Table~\ref{Tab:empirical2} reports per-series forecasting performance for the target variables on which MPTE does not achieve the lowest RMSE over the full sample, complementing Table~\ref{Tab:empirical1} in the main text. Table~\ref{Tab:empirical3} summarizes, for each model, the number of target series for which it delivers the best performance under each metric. Tables~\ref{Tab:empirical_abl1} and \ref{Tab:empirical_abl2} report the per-series ablation results discussed in Section~\ref{Sec:empirical}, for the targets on which MPTE does and does not achieve the lowest RMSE, respectively. Subsection~\ref{appendix_ablation_discussion} then provides a detailed per-series discussion of these ablation results.

\FloatBarrier
\begin{figure}[h]
    \centering
    %--------------------------------------------------
    % Row 1
    %--------------------------------------------------
    \begin{minipage}{0.48\textwidth}
        \centering
        \includegraphics[width=\textwidth]{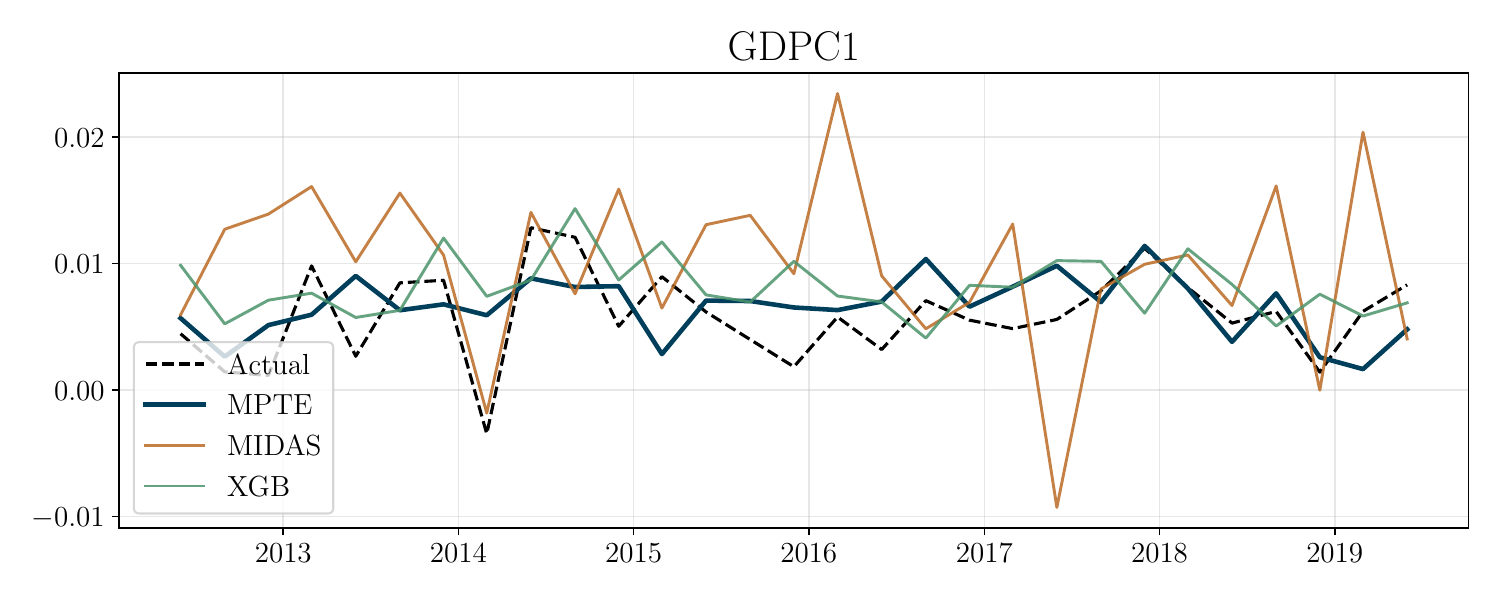}
    \end{minipage}\hfill
    \begin{minipage}{0.48\textwidth}
        \centering
        \includegraphics[width=\textwidth]{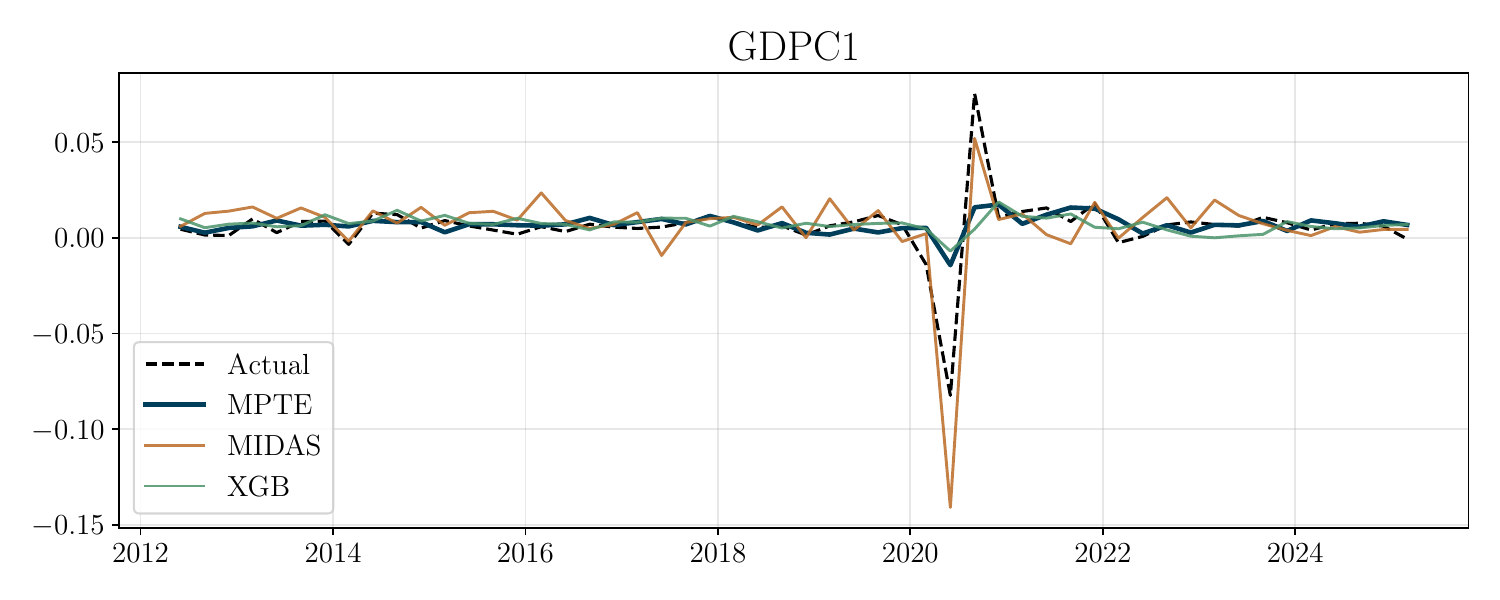}
    \end{minipage}

    \vspace{0.5em}

    %--------------------------------------------------
    % Row 2
    %--------------------------------------------------
    \begin{minipage}{0.48\textwidth}
        \centering
        \includegraphics[width=\textwidth]{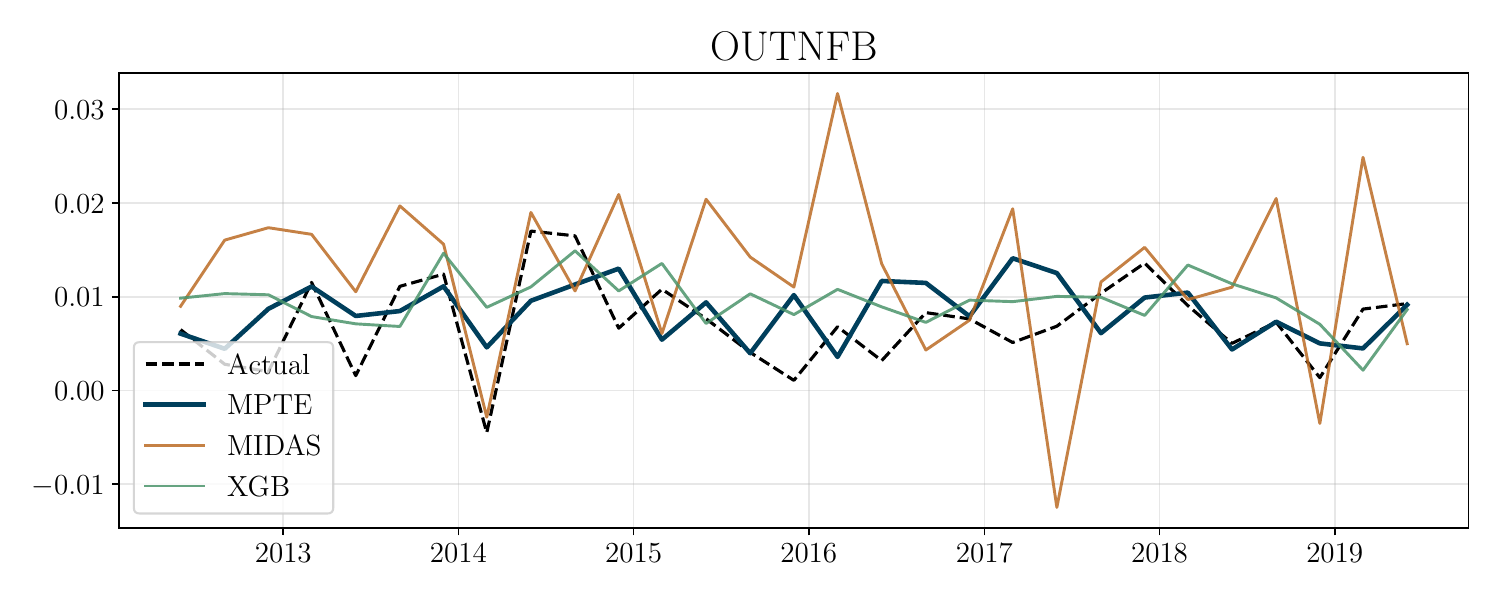}
    \end{minipage}\hfill
    \begin{minipage}{0.48\textwidth}
        \centering
        \includegraphics[width=\textwidth]{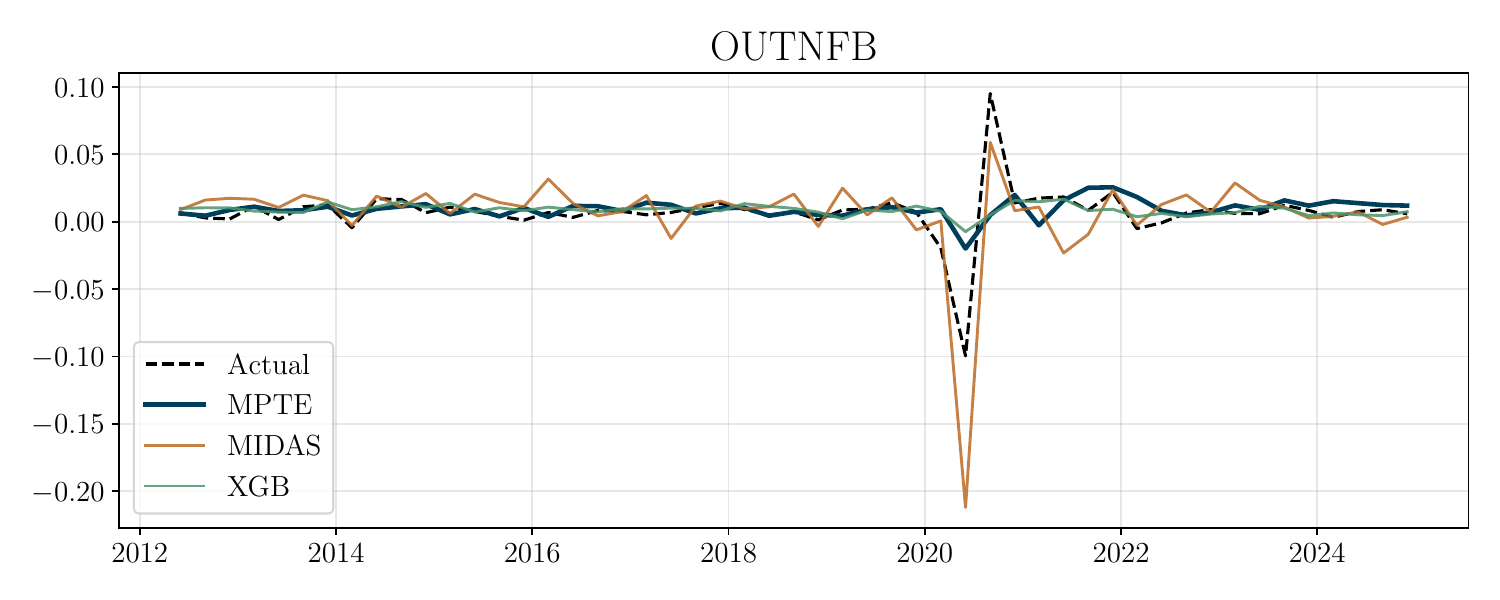}
    \end{minipage}
\caption{Out-of-sample forecasts for GDPC1 (top row) and OUTNFB (bottom row). Left panels report forecasts over the pre-COVID evaluation window, while right panels show forecasts over the full out-of-sample period.}
\label{Fig:preds}
\end{figure}

\FloatBarrier

% Auto-generated by build_empirical_tables.py -- do not edit by hand.
\begin{table}[!ht]
\centering
{\scriptsize\setlength{\tabcolsep}{4pt}\begin{tabular}{l ccc ccc ccc}
\toprule
& \multicolumn{3}{c}{\textbf{Full}} & \multicolumn{3}{c}{\textbf{Pre-COVID}} & \multicolumn{3}{c}{\textbf{Post-COVID}} \\
\midrule
& RMSE & MAE & DA & RMSE & MAE & DA & RMSE & MAE & DA \\
\midrule
\textbf{GDPC1} & & & & & & & & & \\ \midrule
MPTE & \second{0.0135} & \best{0.0060} & 0.5098 & \best{0.0036} & \best{0.0029} & \best{0.6071} & \second{0.0199} & \second{0.0099} & 0.4091 \\
AR & 0.0165 & \second{0.0066} & 0.2745 & \second{0.0040} & \second{0.0032} & 0.5000 & 0.0244 & 0.0109 & 0.0000 \\
MIDAS & \best{0.0116} & 0.0075 & \best{0.5882} & 0.0078 & 0.0062 & 0.5357 & \best{0.0151} & \best{0.0091} & \best{0.6364} \\
OLS & 0.1240 & 0.0627 & \second{0.5686} & 0.0483 & 0.0417 & \best{0.6071} & 0.1784 & 0.0891 & 0.5000 \\
XGB & 0.0153 & 0.0071 & 0.5098 & 0.0042 & 0.0036 & \second{0.5714} & 0.0226 & 0.0114 & 0.4091 \\
NN & 0.1575 & 0.0726 & 0.4902 & 0.0342 & 0.0255 & 0.4286 & 0.2336 & 0.1320 & \second{0.5455} \\
\midrule
\textbf{PCECC96} & & & & & & & & & \\ \midrule
MPTE & \second{0.0140} & \second{0.0046} & \best{0.8824} & \best{0.0014} & \best{0.0012} & \best{0.8929} & \second{0.0210} & \second{0.0090} & \second{0.8636} \\
AR & 0.0186 & 0.0071 & 0.5294 & \second{0.0029} & 0.0025 & 0.5357 & 0.0277 & 0.0129 & 0.5000 \\
MIDAS & \best{0.0046} & \best{0.0030} & \second{0.7843} & 0.0031 & \second{0.0024} & \second{0.6429} & \best{0.0060} & \best{0.0036} & \best{0.9545} \\
OLS & 0.0990 & 0.0443 & 0.5098 & 0.0269 & 0.0231 & 0.5357 & 0.1458 & 0.0711 & 0.5000 \\
XGB & 0.0181 & 0.0071 & 0.6275 & 0.0033 & 0.0029 & 0.6071 & 0.0270 & 0.0124 & 0.6364 \\
NN & 0.1512 & 0.0706 & 0.4314 & 0.0318 & 0.0236 & 0.5357 & 0.2246 & 0.1298 & 0.3182 \\
\midrule
\textbf{DPIC96} & & & & & & & & & \\ \midrule
MPTE & \second{0.0262} & \second{0.0120} & \second{0.6078} & \second{0.0085} & \best{0.0051} & \best{0.7143} & 0.0382 & \second{0.0207} & \second{0.5000} \\
AR & \second{0.0262} & 0.0130 & 0.0000 & 0.0109 & 0.0059 & 0.0000 & \second{0.0375} & 0.0219 & 0.0000 \\
MIDAS & \best{0.0138} & \best{0.0087} & \best{0.6471} & \best{0.0077} & 0.0059 & \second{0.6429} & \best{0.0189} & \best{0.0122} & \best{0.6818} \\
OLS & 0.1470 & 0.0799 & 0.4510 & 0.0462 & 0.0398 & 0.3929 & 0.2149 & 0.1305 & \second{0.5000} \\
XGB & 0.0266 & 0.0130 & 0.4902 & 0.0109 & \second{0.0058} & 0.5357 & 0.0381 & 0.0222 & 0.4545 \\
NN & 0.1186 & 0.0662 & 0.5294 & 0.0383 & 0.0313 & 0.6071 & 0.1731 & 0.1102 & 0.4545 \\
\midrule
\textbf{UNRATE} & & & & & & & & & \\ \midrule
MPTE & 1.4188 & \second{0.4190} & \second{0.6078} & \second{0.1296} & \second{0.1046} & \second{0.6786} & 2.1283 & \second{0.8155} & 0.5455 \\
AR & 1.4490 & 0.4735 & 0.5098 & 0.2015 & 0.1681 & 0.3929 & 2.1670 & 0.8586 & \second{0.6364} \\
MIDAS & \best{0.1527} & \best{0.0961} & \best{0.8039} & \best{0.0859} & \best{0.0740} & \best{0.8214} & \best{0.2084} & \best{0.1240} & \best{0.7727} \\
OLS & 2.6965 & 1.6072 & 0.4706 & 1.0848 & 0.8364 & 0.4286 & 3.8673 & 2.5791 & 0.5000 \\
XGB & \second{1.4040} & 0.4305 & 0.5098 & 0.1474 & 0.1146 & 0.5000 & \second{2.1046} & 0.8289 & 0.5000 \\
NN & 1.4529 & 0.4477 & 0.3529 & 0.1682 & 0.1304 & 0.3214 & 2.1764 & 0.8478 & 0.3636 \\
\midrule
\textbf{FPIx} & & & & & & & & & \\ \midrule
MPTE & 0.0188 & 0.0136 & 0.4902 & 0.0118 & 0.0095 & \second{0.5714} & \second{0.0250} & 0.0188 & 0.4091 \\
AR & \best{0.0171} & \best{0.0096} & \second{0.5882} & \best{0.0073} & \best{0.0058} & 0.5357 & \best{0.0245} & \best{0.0145} & \best{0.6364} \\
MIDAS & 0.0499 & 0.0237 & \best{0.6078} & 0.0131 & 0.0113 & \best{0.6786} & 0.0735 & 0.0393 & 0.5455 \\
OLS & 0.3206 & 0.1554 & \second{0.5882} & 0.1058 & 0.0881 & \second{0.5714} & 0.4673 & 0.2403 & \second{0.5909} \\
XGB & \second{0.0177} & \second{0.0110} & 0.4706 & \second{0.0078} & \second{0.0065} & 0.4286 & 0.0252 & \second{0.0166} & 0.5000 \\
NN & 0.1734 & 0.1069 & 0.4902 & 0.0935 & 0.0711 & 0.4643 & 0.2387 & 0.1520 & 0.5000 \\
\midrule
\textbf{EXPGSC1} & & & & & & & & & \\ \midrule
MPTE & \second{0.0386} & \second{0.0210} & 0.4510 & \best{0.0139} & \best{0.0124} & \best{0.5357} & \second{0.0559} & \second{0.0319} & 0.3182 \\
AR & 0.0418 & \second{0.0210} & 0.0000 & \second{0.0140} & \best{0.0124} & 0.0000 & 0.0609 & \second{0.0319} & 0.0000 \\
MIDAS & 0.0792 & 0.0444 & \second{0.5294} & 0.0345 & 0.0283 & \best{0.5357} & 0.1126 & 0.0649 & \second{0.5000} \\
OLS & 0.3296 & 0.2165 & \best{0.5490} & 0.1892 & 0.1536 & \second{0.5000} & 0.4477 & 0.2959 & \best{0.5909} \\
XGB & \best{0.0339} & \best{0.0196} & \best{0.5490} & 0.0190 & \second{0.0169} & \second{0.5000} & \best{0.0462} & \best{0.0231} & \best{0.5909} \\
NN & 0.2105 & 0.1455 & 0.4510 & 0.1363 & 0.1067 & \second{0.5000} & 0.2771 & 0.1944 & 0.4091 \\
\midrule
\textbf{IMPGSC1} & & & & & & & & & \\ \midrule
MPTE & 0.0431 & 0.0277 & 0.4902 & 0.0225 & 0.0188 & \second{0.5000} & 0.0597 & 0.0389 & \second{0.5000} \\
AR & \second{0.0392} & \best{0.0197} & 0.0000 & \best{0.0102} & \best{0.0086} & 0.0000 & \second{0.0579} & \second{0.0337} & 0.0000 \\
MIDAS & 0.1346 & 0.0518 & \best{0.6078} & 0.0411 & 0.0337 & \second{0.5000} & 0.1971 & 0.0746 & \best{0.7273} \\
OLS & 0.4452 & 0.2757 & 0.3922 & 0.1672 & 0.1339 & 0.3929 & 0.6426 & 0.4544 & 0.4091 \\
XGB & \best{0.0382} & \second{0.0203} & \second{0.5686} & \second{0.0133} & \second{0.0112} & \best{0.6429} & \best{0.0554} & \best{0.0318} & \second{0.5000} \\
NN & 0.2723 & 0.2034 & 0.4706 & 0.2436 & 0.1790 & \second{0.5000} & 0.3046 & 0.2341 & 0.4091 \\
\bottomrule
\end{tabular}}
\caption{Out-of-sample forecasting performance for target series where MPTE does not achieve the lowest RMSE over the full sample. The table reports RMSE, MAE, and DA for MPTE and competing models over the full evaluation period, as well as the pre-COVID and post-COVID subsamples. Dark green indicates the best-performing method and light green the second-best within each column.}
\label{Tab:empirical2}
\end{table}

\FloatBarrier

Table~\ref{Tab:empirical2} shows that MPTE does not uniformly outperform competing models across all target series. In particular, for several income and trade-related variables, such as DPIC96, EXPGSC1, FPIx, and IMPGSC1, as well as for aggregate output (GDPC1) and real consumption (PCECC96), MIDAS, AR, or XGB achieve lower RMSE. These series tend to exhibit relatively smooth quarterly dynamics and limited incremental variation at the monthly frequency, which is consistent with more limited gains from explicitly modeling mixed-frequency interactions. The unemployment rate (UNRATE) is a notable special case: only MIDAS forecasts it accurately, while MPTE and the remaining competing models attain comparable but substantially larger errors, indicating that its predictable monthly signal is captured by the parametric MIDAS lag structure rather than by the more flexible specifications.

%%%TABLE 3 (best-performance counts, moved from main text)
% Auto-generated win counts
\begin{table}[!ht]
\centering
\begin{tabular}{l ccc ccc ccc}
\toprule
& \multicolumn{3}{c}{\textbf{Full}} & \multicolumn{3}{c}{\textbf{Pre-COVID}} & \multicolumn{3}{c}{\textbf{Post-COVID}} \\
\midrule
& RMSE & MAE & DA & RMSE & MAE & DA & RMSE & MAE & DA \\
\midrule
MPTE & 6 & 4 & 6 & 6 & 6 & 9 & 6 & 5 & 3 \\
MIDAS & 4 & 3 & 6 & 2 & 1 & 2 & 4 & 4 & 8 \\
AR & 1 & 3 & 0 & 4 & 5 & 0 & 1 & 1 & 1 \\
XGB & 2 & 3 & 0 & 1 & 1 & 2 & 2 & 3 & 0 \\
OLS & 0 & 0 & 1 & 0 & 0 & 0 & 0 & 0 & 1 \\
\bottomrule
\end{tabular}
\caption{Number of target series for which each model achieves the best forecasting performance, for RMSE, MAE, and DA over the full evaluation period and the pre- and post-COVID subsamples. Counts are computed across all six competing models (MPTE, AR, MIDAS, OLS, XGB, and NN); models with no best finishes (here NN) are omitted.}
\label{Tab:empirical3}
\end{table}

\FloatBarrier

Differences across models are less clear-cut for DA. As summarized in Table~\ref{Tab:empirical3}, MIDAS attains the highest DA for a larger number of target series, particularly post-COVID, although Tables~\ref{Tab:empirical1} and \ref{Tab:empirical2} show that the series-level DA gaps between MIDAS and MPTE are typically modest: both approaches capture directional movements well, with MPTE's advantages emerging mainly in level accuracy as measured by RMSE.

Finally, models that use monthly variables only through their end-of-quarter values, such as OLS, XGB, and NN, show that strong forecasting performance can sometimes be achieved without explicitly modeling mixed-frequency inputs; XGB in particular is competitive across several targets and subsamples. MPTE nonetheless remains competitive in RMSE and is superior for a subset of targets, indicating that high-frequency information is beneficial when it carries incremental predictive content. The relative counts in Table~\ref{Tab:empirical3} are sensitive to the set of competing models; in particular, excluding XGB increases the number of target series for which MPTE achieves the lowest RMSE, consistent with XGB being a particularly strong benchmark in this empirical setting. Overall, the gains from mixed-frequency modeling are context dependent and vary across target series and data regimes.

\subsection{Per-series ablation discussion}\label{appendix_ablation_discussion}
This subsection examines the per-series ablation results reported in Tables~\ref{Tab:empirical_abl1} and~\ref{Tab:empirical_abl2}.

Across target series for which MPTE achieves the lowest RMSE relative to competing models, the impact of removing individual model components varies substantially across targets. For OUTNFB, the full MPTE specification attains the lowest full-sample RMSE, although it is closely matched by the variants that remove nonlinearity (AB1) or attention (AB2), which tie it on the full sample and edge it slightly in individual subsamples. The joint modeling of nonlinearities, attention, and mixed-frequency inputs is therefore advantageous for this series, but the margin over the leading ablations is small.

For price-related series such as PCECTPI and CPIAUCSL, the full MPTE specification attains the lowest RMSE by a clear margin, and the individual ablations cluster well above it without any single removed component standing out as uniformly most costly; for PCECTPI, if anything, removing attention is somewhat more damaging than removing nonlinearity. Overall, the ablation results highlight that the full model is most accurate for these smoother price series, while the relative importance of individual components is target specific and depends on the underlying data characteristics, rather than yielding a uniform ordering across series.

For CPILFESL, the full MPTE specification attains the lowest full-sample RMSE, though the spread across ablations is modest. The relatively small differences across specifications are consistent with the pronounced smoothness of the series, for which the additional model components provide only limited incremental benefit.

Table~\ref{Tab:empirical_abl2} reports ablation results for target series where MPTE does not achieve the lowest RMSE relative to competing models. Also in these cases, the relative performance of MPTE and its ablation variants varies across targets, and no single architectural component systematically accounts for underperformance.

For GDPC1, the full MPTE specification attains the lowest RMSE among the ablation variants over the full evaluation period, ahead of the low-frequency-only specification (AB3) and the variants that remove attention or nonlinearity. With the within-quarter high-frequency information included, the full model extracts predictive content for this target that the restricted specifications do not, even though MIDAS remains marginally more accurate in the comparison with competing models.

For some series, such as EXPGSC1, the full MPTE specification outperforms all ablation variants, indicating that removing individual components leads to a deterioration in RMSE performance. In contrast, for targets such as FPIx, several ablations attain lower RMSE than the full model, with the variant that removes attention (AB2) the most accurate, consistent with limited gains from the additional model components for this comparatively smooth series.

%Table 1 Ablation color-coded
% Auto-generated by build_empirical_tables.py -- do not edit by hand.
\begin{table}[!ht]
\centering
{\scriptsize\setlength{\tabcolsep}{4pt}\begin{tabular}{l ccc ccc ccc}
\toprule
& \multicolumn{3}{c}{\textbf{Full}} & \multicolumn{3}{c}{\textbf{Pre-COVID}} & \multicolumn{3}{c}{\textbf{Post-COVID}} \\
\midrule
& RMSE & MAE & DA & RMSE & MAE & DA & RMSE & MAE & DA \\
\midrule
\textbf{GPDIC1} & & & & & & & & & \\ \midrule
MPTE & \best{0.0286} & \best{0.0212} & \best{0.7059} & 0.0199 & 0.0160 & \best{0.7857} & \best{0.0367} & \best{0.0278} & \best{0.6364} \\
AB1 & 0.0453 & 0.0311 & 0.4118 & 0.0179 & 0.0141 & 0.5357 & 0.0651 & 0.0526 & 0.2727 \\
AB2 & \second{0.0359} & \second{0.0222} & \second{0.5686} & \best{0.0139} & \best{0.0119} & 0.5714 & \second{0.0516} & \second{0.0352} & \second{0.5909} \\
AB3 & 0.0523 & 0.0301 & 0.5400 & 0.0166 & 0.0146 & 0.5556 & 0.0758 & 0.0489 & 0.5455 \\
AB4 & 0.0411 & 0.0272 & 0.5098 & 0.0231 & 0.0196 & 0.4286 & 0.0562 & 0.0367 & \best{0.6364} \\
AB5 & 0.0405 & 0.0258 & \second{0.5686} & \second{0.0154} & \second{0.0133} & \second{0.6786} & 0.0584 & 0.0414 & 0.4091 \\
\midrule
\textbf{OUTNFB} & & & & & & & & & \\ \midrule
MPTE & \best{0.0186} & \second{0.0087} & 0.5200 & 0.0049 & 0.0039 & 0.5000 & \second{0.0278} & 0.0151 & 0.5238 \\
AB1 & \best{0.0186} & \best{0.0086} & \second{0.5600} & 0.0053 & 0.0044 & \second{0.6429} & \best{0.0277} & \best{0.0141} & 0.4286 \\
AB2 & \best{0.0186} & \best{0.0086} & \best{0.5800} & \best{0.0044} & \best{0.0034} & \best{0.7143} & 0.0279 & 0.0155 & 0.3810 \\
AB3 & 0.0206 & 0.0091 & 0.4490 & 0.0059 & 0.0046 & 0.4815 & 0.0303 & 0.0150 & 0.3810 \\
AB4 & \second{0.0205} & \best{0.0086} & 0.5400 & 0.0050 & 0.0040 & 0.4286 & 0.0307 & \second{0.0147} & \best{0.7143} \\
AB5 & 0.0212 & 0.0102 & \best{0.5800} & \second{0.0048} & \second{0.0038} & 0.6071 & 0.0319 & 0.0185 & \second{0.5714} \\
\midrule
\textbf{PCECTPI} & & & & & & & & & \\ \midrule
MPTE & \best{0.0017} & \best{0.0012} & \best{0.8824} & \best{0.0011} & \best{0.0009} & \best{0.9286} & \best{0.0022} & \best{0.0015} & \second{0.8182} \\
AB1 & \second{0.0026} & \second{0.0019} & \second{0.8431} & \second{0.0019} & \second{0.0016} & \second{0.8214} & \second{0.0033} & \second{0.0023} & \best{0.8636} \\
AB2 & 0.0036 & 0.0028 & 0.5294 & 0.0031 & 0.0025 & 0.5357 & 0.0041 & 0.0032 & 0.5455 \\
AB3 & 0.0035 & 0.0028 & 0.5800 & 0.0028 & 0.0023 & 0.6296 & 0.0042 & 0.0034 & 0.5000 \\
AB4 & 0.0036 & 0.0028 & 0.6078 & 0.0031 & 0.0025 & 0.6071 & 0.0042 & 0.0031 & 0.6364 \\
AB5 & 0.0036 & 0.0028 & 0.5490 & 0.0031 & 0.0025 & 0.6071 & 0.0042 & 0.0032 & 0.4545 \\
\midrule
\textbf{PCEPILFE} & & & & & & & & & \\ \midrule
MPTE & \best{0.0022} & \best{0.0015} & \best{0.7451} & \best{0.0013} & \best{0.0011} & \best{0.7500} & \best{0.0030} & \best{0.0021} & \second{0.7273} \\
AB1 & 0.0028 & 0.0020 & 0.5098 & 0.0015 & 0.0013 & 0.5714 & 0.0039 & 0.0029 & 0.4545 \\
AB2 & 0.0025 & 0.0018 & 0.5294 & \second{0.0014} & \second{0.0012} & 0.5357 & 0.0034 & 0.0026 & 0.5000 \\
AB3 & \second{0.0024} & \second{0.0017} & \second{0.7400} & \second{0.0014} & \second{0.0012} & \second{0.6667} & \second{0.0033} & \second{0.0024} & \best{0.8182} \\
AB4 & 0.0025 & 0.0018 & 0.5490 & \second{0.0014} & \second{0.0012} & 0.5000 & 0.0034 & 0.0026 & 0.5909 \\
AB5 & \second{0.0024} & 0.0018 & 0.4902 & \best{0.0013} & \second{0.0012} & 0.4643 & 0.0034 & 0.0025 & 0.5000 \\
\midrule
\textbf{CPIAUCSL} & & & & & & & & & \\ \midrule
MPTE & \best{0.0046} & \best{0.0037} & \best{0.7843} & \best{0.0038} & \best{0.0032} & \best{0.8214} & \best{0.0056} & \second{0.0043} & \second{0.7273} \\
AB1 & 0.0054 & 0.0039 & 0.4706 & \second{0.0043} & 0.0034 & 0.4643 & 0.0065 & 0.0045 & 0.5000 \\
AB2 & 0.0054 & \second{0.0038} & 0.4902 & 0.0044 & \second{0.0033} & 0.5000 & 0.0065 & 0.0045 & 0.5000 \\
AB3 & \second{0.0053} & \second{0.0038} & \second{0.7200} & 0.0046 & 0.0036 & \second{0.6296} & \second{0.0060} & \best{0.0042} & \best{0.8182} \\
AB4 & 0.0054 & \second{0.0038} & 0.4706 & 0.0044 & \second{0.0033} & 0.4643 & 0.0065 & 0.0045 & 0.5000 \\
AB5 & 0.0054 & \second{0.0038} & 0.5294 & 0.0044 & \second{0.0033} & 0.5357 & 0.0065 & 0.0045 & 0.5455 \\
\midrule
\textbf{CPILFESL} & & & & & & & & & \\ \midrule
MPTE & \best{0.0033} & \best{0.0020} & \best{0.6471} & \second{0.0017} & \second{0.0013} & 0.6071 & \best{0.0045} & \best{0.0030} & \best{0.7273} \\
AB1 & 0.0040 & 0.0025 & 0.4510 & \second{0.0017} & 0.0014 & 0.3571 & 0.0057 & 0.0039 & \second{0.5455} \\
AB2 & \second{0.0036} & \second{0.0021} & \second{0.5882} & \best{0.0015} & \best{0.0012} & \best{0.6786} & \second{0.0052} & \second{0.0033} & 0.5000 \\
AB3 & 0.0038 & 0.0022 & 0.5600 & \best{0.0015} & \second{0.0013} & \second{0.6296} & 0.0055 & \second{0.0033} & 0.5000 \\
AB4 & \second{0.0036} & \second{0.0021} & 0.5686 & \best{0.0015} & \best{0.0012} & \best{0.6786} & \second{0.0052} & \second{0.0033} & 0.4545 \\
AB5 & 0.0037 & 0.0023 & \second{0.5882} & \best{0.0015} & \best{0.0012} & \best{0.6786} & \second{0.0052} & 0.0036 & 0.5000 \\
\bottomrule
\end{tabular}}
\caption{Out-of-sample forecasting performance for target series where MPTE achieves the lowest RMSE over the full sample. The table reports RMSE, MAE, and DA for MPTE and its ablation variants over the full evaluation period, as well as the pre-COVID and post-COVID subsamples. Dark green indicates the best-performing method and light green the second-best within each column.}
\label{Tab:empirical_abl1}
\end{table}

\FloatBarrier

% Table 2 Ablations color-coded
% Auto-generated by build_empirical_tables.py -- do not edit by hand.
\begin{table}[!ht]
\centering
{\scriptsize\setlength{\tabcolsep}{4pt}\begin{tabular}{l ccc ccc ccc}
\toprule
& \multicolumn{3}{c}{\textbf{Full}} & \multicolumn{3}{c}{\textbf{Pre-COVID}} & \multicolumn{3}{c}{\textbf{Post-COVID}} \\
\midrule
& RMSE & MAE & DA & RMSE & MAE & DA & RMSE & MAE & DA \\
\midrule
\textbf{GDPC1} & & & & & & & & & \\ \midrule
MPTE & \best{0.0135} & \best{0.0060} & \second{0.5098} & \best{0.0036} & \second{0.0029} & \best{0.6071} & \best{0.0199} & \best{0.0099} & 0.4091 \\
AB1 & \second{0.0152} & 0.0075 & 0.4902 & 0.0054 & 0.0044 & 0.4643 & \second{0.0220} & \second{0.0115} & \best{0.5000} \\
AB2 & 0.0164 & \second{0.0067} & 0.4510 & \best{0.0036} & \best{0.0028} & 0.4286 & 0.0243 & 0.0117 & \best{0.5000} \\
AB3 & 0.0160 & 0.0072 & 0.5000 & \second{0.0042} & 0.0034 & 0.5185 & 0.0233 & 0.0117 & \second{0.4545} \\
AB4 & 0.0162 & 0.0072 & 0.3529 & \second{0.0042} & 0.0034 & 0.3929 & 0.0239 & 0.0120 & 0.3182 \\
AB5 & 0.0169 & 0.0080 & \best{0.5294} & 0.0047 & 0.0040 & \second{0.5357} & 0.0248 & 0.0132 & \best{0.5000} \\
\midrule
\textbf{PCECC96} & & & & & & & & & \\ \midrule
MPTE & \best{0.0140} & \best{0.0046} & \best{0.8824} & \best{0.0014} & \best{0.0012} & \best{0.8929} & \best{0.0210} & \best{0.0090} & \best{0.8636} \\
AB1 & \second{0.0156} & \second{0.0055} & \second{0.8039} & \second{0.0022} & \second{0.0018} & \second{0.8214} & \second{0.0233} & \second{0.0102} & \second{0.7727} \\
AB2 & 0.0186 & 0.0070 & 0.5294 & 0.0026 & 0.0022 & 0.5357 & 0.0278 & 0.0129 & 0.5000 \\
AB3 & 0.0191 & 0.0079 & 0.5400 & 0.0033 & 0.0028 & 0.4815 & 0.0282 & 0.0141 & 0.6364 \\
AB4 & 0.0186 & 0.0070 & 0.4706 & 0.0026 & 0.0023 & 0.4286 & 0.0278 & 0.0130 & 0.5455 \\
AB5 & 0.0187 & 0.0074 & 0.5294 & 0.0031 & 0.0025 & 0.5000 & 0.0279 & 0.0135 & 0.5455 \\
\midrule
\textbf{DPIC96} & & & & & & & & & \\ \midrule
MPTE & \second{0.0262} & \second{0.0120} & \best{0.6078} & \best{0.0085} & \second{0.0051} & \best{0.7143} & 0.0382 & \second{0.0207} & 0.5000 \\
AB1 & \best{0.0241} & \best{0.0118} & \second{0.5490} & \second{0.0090} & \best{0.0050} & 0.5000 & \best{0.0349} & \best{0.0203} & \best{0.6364} \\
AB2 & 0.0271 & 0.0134 & 0.3529 & 0.0108 & 0.0057 & 0.3929 & 0.0390 & 0.0230 & 0.3182 \\
AB3 & 0.0270 & 0.0138 & 0.5400 & 0.0105 & 0.0063 & \second{0.6296} & 0.0385 & 0.0230 & 0.4091 \\
AB4 & \second{0.0262} & 0.0133 & \second{0.5490} & 0.0108 & 0.0068 & 0.5357 & \second{0.0374} & 0.0215 & \second{0.5909} \\
AB5 & \second{0.0262} & 0.0131 & \second{0.5490} & 0.0108 & 0.0059 & 0.6071 & 0.0375 & 0.0223 & 0.5000 \\
\midrule
\textbf{UNRATE} & & & & & & & & & \\ \midrule
MPTE & 1.4188 & 0.4190 & \best{0.6078} & \second{0.1296} & \second{0.1046} & \best{0.6786} & 2.1283 & 0.8155 & \best{0.5455} \\
AB1 & \second{1.3947} & \best{0.3968} & 0.4902 & 0.1352 & 0.1088 & 0.5000 & \second{2.0916} & \second{0.7600} & \second{0.5000} \\
AB2 & \best{1.3885} & \second{0.3975} & 0.4118 & 0.1459 & 0.1123 & \second{0.5357} & \best{2.0813} & \best{0.7570} & 0.2727 \\
AB3 & 1.4413 & 0.4747 & 0.4400 & 0.1625 & 0.1189 & 0.3704 & 2.1388 & 0.9079 & \best{0.5455} \\
AB4 & 1.4342 & 0.4598 & \second{0.5294} & 0.1812 & 0.1418 & \second{0.5357} & 2.1469 & 0.8609 & \second{0.5000} \\
AB5 & 1.4396 & 0.4385 & 0.4510 & \best{0.1274} & \best{0.1036} & 0.4643 & 2.1599 & 0.8607 & 0.4545 \\
\midrule
\textbf{FPIx} & & & & & & & & & \\ \midrule
MPTE & 0.0188 & 0.0136 & 0.4902 & 0.0118 & 0.0095 & 0.5714 & 0.0250 & 0.0188 & 0.4091 \\
AB1 & 0.0165 & 0.0116 & \best{0.6667} & 0.0097 & 0.0082 & \best{0.7857} & \second{0.0223} & 0.0159 & \second{0.5455} \\
AB2 & \best{0.0141} & \best{0.0099} & \second{0.6275} & \second{0.0085} & \second{0.0076} & \second{0.6786} & \best{0.0190} & \best{0.0127} & \best{0.5909} \\
AB3 & 0.0282 & 0.0175 & 0.3400 & 0.0107 & 0.0088 & 0.2963 & 0.0403 & 0.0281 & 0.4091 \\
AB4 & 0.0193 & 0.0119 & 0.4510 & 0.0097 & 0.0078 & 0.4643 & 0.0268 & 0.0170 & 0.4545 \\
AB5 & \second{0.0164} & \second{0.0102} & 0.3529 & \best{0.0080} & \best{0.0066} & 0.2857 & 0.0230 & \second{0.0148} & 0.4545 \\
\midrule
\textbf{EXPGSC1} & & & & & & & & & \\ \midrule
MPTE & \best{0.0386} & 0.0210 & 0.4510 & 0.0139 & 0.0124 & \second{0.5357} & \best{0.0559} & 0.0319 & 0.3182 \\
AB1 & \second{0.0402} & 0.0214 & \best{0.5490} & 0.0166 & 0.0145 & \second{0.5357} & \second{0.0576} & \best{0.0302} & \second{0.5909} \\
AB2 & 0.0411 & \second{0.0195} & 0.5294 & \second{0.0124} & \second{0.0106} & 0.4643 & 0.0602 & 0.0307 & \best{0.6364} \\
AB3 & 0.0415 & 0.0218 & \second{0.5400} & 0.0165 & 0.0131 & \best{0.5556} & 0.0591 & 0.0323 & 0.5000 \\
AB4 & 0.0410 & \best{0.0193} & 0.5098 & \best{0.0122} & \best{0.0104} & 0.4643 & 0.0602 & \second{0.0306} & \second{0.5909} \\
AB5 & 0.0413 & 0.0200 & 0.4902 & 0.0129 & 0.0112 & 0.4643 & 0.0604 & 0.0311 & 0.5455 \\
\midrule
\textbf{IMPGSC1} & & & & & & & & & \\ \midrule
MPTE & 0.0431 & 0.0277 & 0.4902 & 0.0225 & 0.0188 & 0.5000 & 0.0597 & 0.0389 & \second{0.5000} \\
AB1 & \best{0.0378} & \second{0.0215} & \best{0.6667} & 0.0170 & 0.0142 & \best{0.6786} & \best{0.0535} & \best{0.0307} & \best{0.6818} \\
AB2 & 0.0394 & 0.0223 & \second{0.5490} & 0.0159 & 0.0140 & \second{0.6071} & \second{0.0565} & \second{0.0328} & \second{0.5000} \\
AB3 & 0.0476 & 0.0237 & 0.4600 & \second{0.0110} & \second{0.0089} & 0.5556 & 0.0699 & 0.0417 & 0.3636 \\
AB4 & \second{0.0391} & \best{0.0189} & 0.4706 & \best{0.0085} & \best{0.0063} & 0.5000 & 0.0579 & 0.0347 & 0.4545 \\
AB5 & 0.0432 & 0.0278 & 0.4118 & 0.0214 & 0.0198 & 0.4286 & 0.0604 & 0.0380 & 0.4091 \\
\bottomrule
\end{tabular}}
\caption{Out-of-sample forecasting performance for target series where MPTE does not achieve the lowest RMSE over the full sample. The table reports RMSE, MAE, and DA for MPTE and its ablation variants over the full evaluation period, as well as the pre-COVID and post-COVID subsamples. Dark green indicates the best-performing method and light green the second-best within each column.}
\label{Tab:empirical_abl2}
\end{table}

\FloatBarrier

\subsection{Cross-sectional attention patterns for GDPC1}\label{appendix_Az_GDPC1}

Figure~\ref{Fig:Az_heatmaps_GDPC1} reports the cross-sectional attention heatmaps for GDPC1, complementing the OUTNFB analysis of Subsection~\ref{Sec:attention_patterns}. GDPC1 is a target on which MPTE forecasts well, attaining the best full-sample MAE and the best pre-COVID RMSE, MAE, and directional accuracy among the competing models, with MIDAS marginally ahead on full-sample RMSE (Table~\ref{Tab:empirical2}). For GDPC1, the full model concentrates attention on a small set of indicators, most prominently nonfarm payroll employment (PAYEMS) and equity prices (S\&P 500). Removing nonlinear transformations (AB1) diffuses this structure and shifts weight toward interest-rate and liquidity variables, with the one-year Treasury rate (GS1) becoming the single most attended series and secondary weight on capacity utilization (CUMFNS) and the unemployment rate (UNRATE). This reallocation is economically intuitive: with nonlinear transformations the model can exploit state-dependent effects in real-activity and asset-price indicators, whereas without them it leans on financial-conditions variables whose influence on output is more nearly linear.

\begin{figure}[!htbp]
    \centering
    \begin{minipage}{0.48\textwidth}
        \centering
        \includegraphics[width=0.77\textwidth]{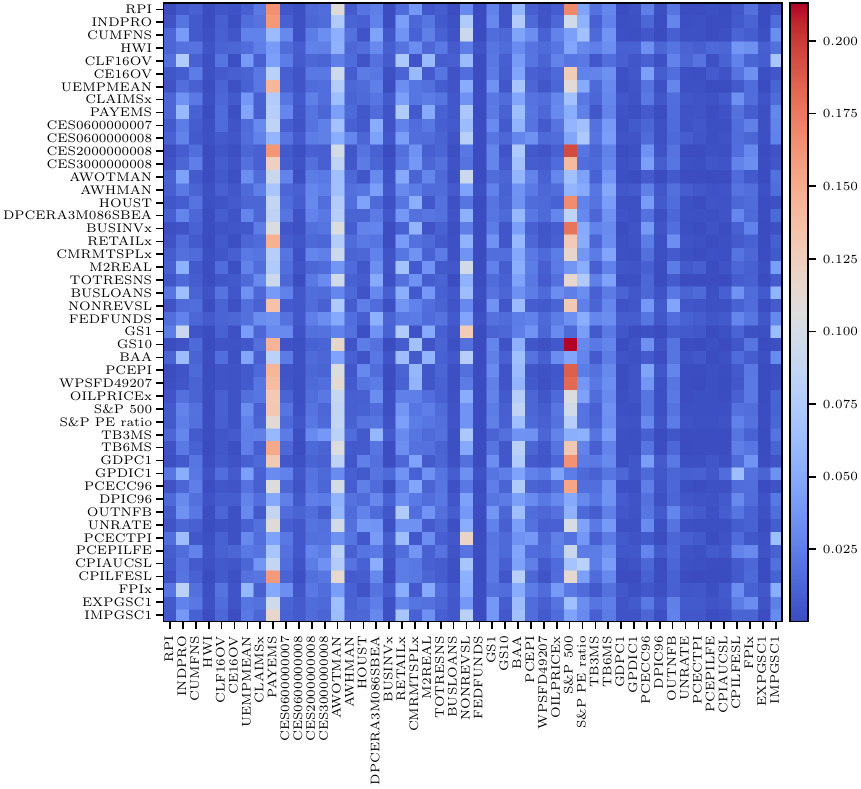}
    \end{minipage}\hfill
    \begin{minipage}{0.48\textwidth}
        \centering
        \includegraphics[width=0.77\textwidth]{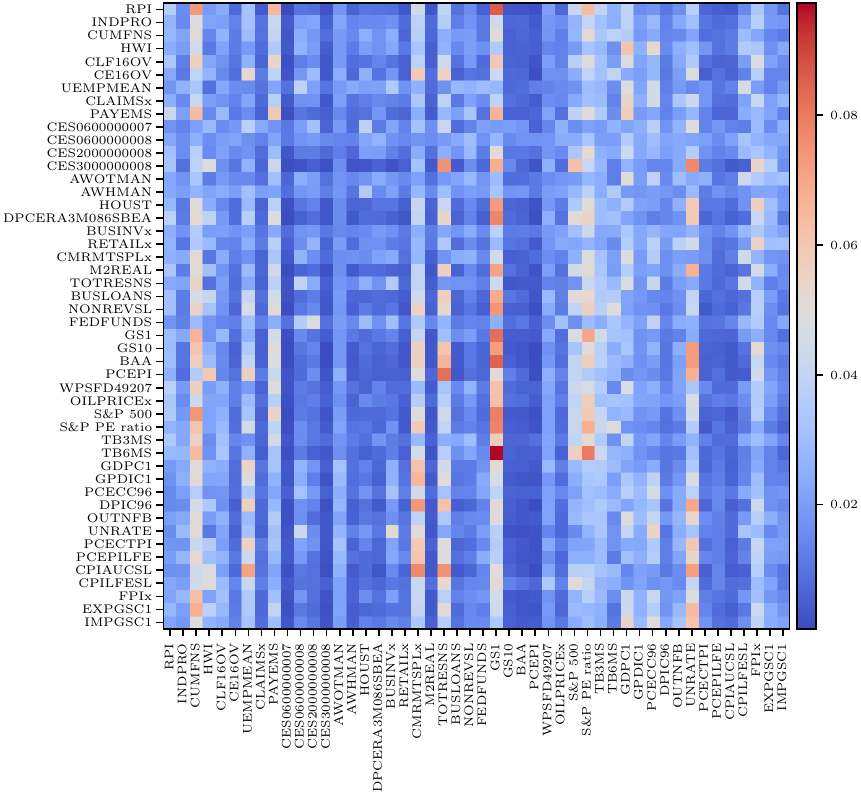}
    \end{minipage}
    \caption{Cross-sectional attention ($A_z$) heatmaps for GDPC1: the left panel reports MPTE and the right panel the AB1 ablation. The horizontal axis indexes attending variables, while the vertical axis indexes attended variables.}
    \label{Fig:Az_heatmaps_GDPC1}
\end{figure}
\FloatBarrier

\subsection{Temporal attention patterns}\label{appendix_temporal_attention}
We examine the temporal attention matrices $B$, which govern how the model aggregates information across time within the input sequence, complementing the cross-sectional ($A_z$) patterns reported in Section~\ref{Sec:empirical}. Figure~\ref{Fig:B_heatmaps_GDP_OUT} reports the estimated $B$ matrices for GDPC1 and OUTNFB for the full MPTE and the AB1 ablation, which removes nonlinear transformations in the encoder. Each matrix summarizes attention weights across the context window, corresponding to 26 monthly lags that include the two within-quarter months made available by the high-frequency lead, where attention flows from earlier to later time steps.
Across both targets, temporal attention exhibits clear target-specific structure, and the presence of nonlinearities substantially alters how information from different lags is weighted. For GDPC1, the full MPTE concentrates attention on the most recent lags, within roughly the last quarter of the context window, indicating that the model exploits the timeliest within-quarter information when nonlinear transformations are available. Under the AB1 ablation this concentration weakens and attention spreads toward intermediate lags, suggesting that, without nonlinear feature extraction, the model draws on a broader but less sharply timed set of past observations.

For OUTNFB the concentration on recent information is even more pronounced: under the full specification almost all temporal weight falls on the single most recent lag, the timeliest monthly observation within the target quarter. Removing nonlinear transformations (AB1) lowers this peak and spreads weight over the most recent few lags, so that the linear encoder still emphasizes recent data but less sharply. The sharp recency of the full model has a natural interpretation in light of the high-frequency lead: by admitting the within-quarter monthly observations, MPTE can place its temporal attention precisely on the most up-to-date signal about current-quarter activity, which is where the incremental predictive value of the high-frequency information is concentrated. The smaller, more diffuse weight that the model still assigns to more distant lags captures the slower-moving component of macroeconomic dynamics, which evolves gradually and is therefore less informative about the current quarter than the latest monthly readings.

Removing temporal encoding altogether (the AB5 ablation) makes temporal attention substantially more diffuse for both GDPC1 and OUTNFB: the model assigns smaller, more uniform weights across lags, with little discrimination among time positions, reflecting its inability to differentiate past observations by their position within the sequence once explicit temporal information is removed.

These results indicate that MPTE endogenously adapts the effective memory of the forecasting model in a target-specific and economically interpretable manner: with the high-frequency lead included, nonlinear transformations sharpen temporal attention onto the most recent within-quarter observations, removing them spreads attention over a broader set of intermediate lags, and removing temporal encoding renders temporal attention diffuse and largely uninformative. Unlike MIDAS regressions, which impose parametric lag polynomials or require selecting a finite lag truncation ex ante, MPTE learns both the relevant temporal horizon and the weighting of individual lags endogenously from the data, allowing the effective memory length to vary across targets and model specifications.

\begin{figure}[!ht]
    \centering
    \begin{minipage}{0.48\textwidth}
        \centering
        \includegraphics[width=0.77\textwidth]{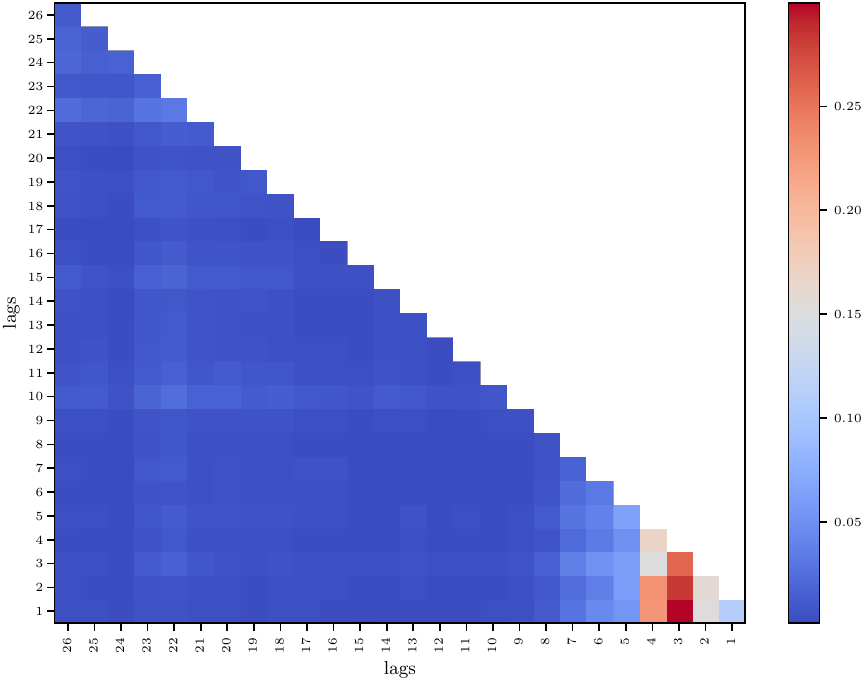}
    \end{minipage}\hfill
    \begin{minipage}{0.48\textwidth}
        \centering
        \includegraphics[width=0.77\textwidth]{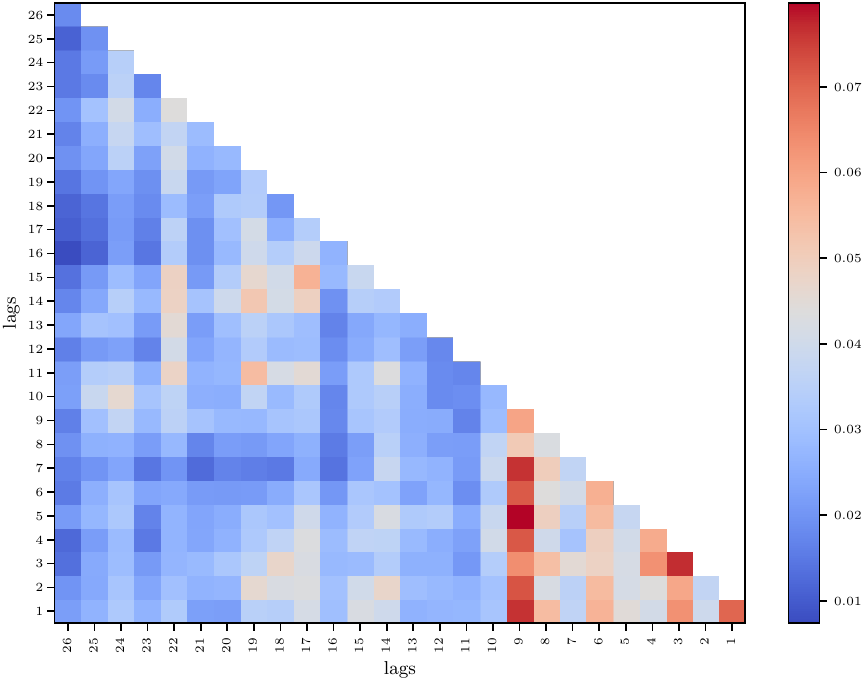}
    \end{minipage}

    \vspace{0.1em}

    \begin{minipage}{0.48\textwidth}
        \centering
        \includegraphics[width=0.77\textwidth]{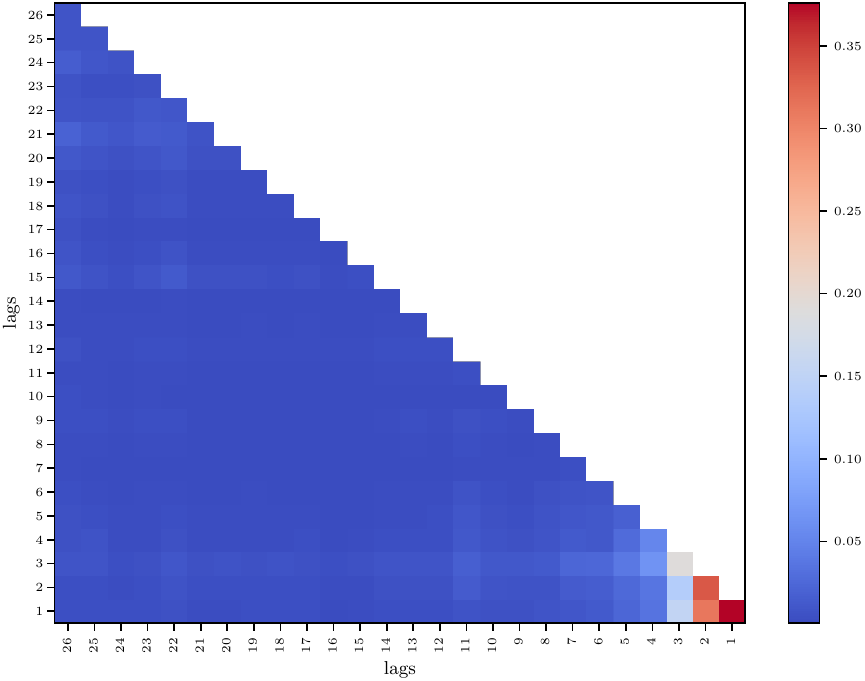}
    \end{minipage}\hfill
    \begin{minipage}{0.48\textwidth}
        \centering
        \includegraphics[width=0.77\textwidth]{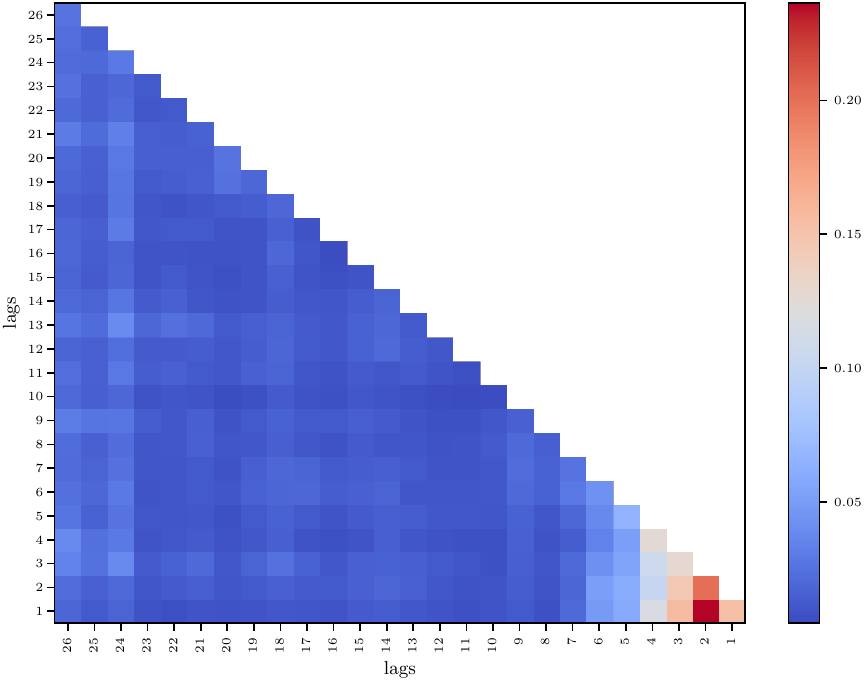}
    \end{minipage}

    \caption{Temporal attention ($B$) heatmaps. The top row shows GDPC1 and the bottom row OUTNFB. For each target, the left panel reports MPTE and the right panel reports the AB1 ablation. Both axes index time positions in the input sequence, illustrating how attention weights are distributed across temporal lags.}
    \label{Fig:B_heatmaps_GDP_OUT}
\end{figure}
\FloatBarrier

%%%%%%%%%%%%%%%%%
\cleardoublepage
\section{Diebold–Mariano tests and Model Confidence Set results}\label{appendixB_DM_MCS}

% Auto-generated DM table
\begin{table}[h!]
\centering
\begin{tabular}{@{}lccccc@{}}
\toprule
Target & MIDAS & AR & OLS & XGB & NN \\
\midrule
GDPC1 & 0.74 & -1.29 & -1.43 & -1.54 & -1.91 \\
GPDIC1 & -1.24 & -0.89 & -1.82 & -0.95 & -1.98 \\
PCECC96 & 1.42 & -1.53 & -1.40 & -1.52 & -1.70 \\
DPIC96 & 1.65 & -0.01 & -2.20 & -0.22 & -2.38 \\
OUTNFB & -0.33 & -0.68 & -1.66 & -0.40 & -3.20 \\
UNRATE & 1.27 & -1.37 & -2.27 & 0.31 & -0.20 \\
PCECTPI & -3.62 & -3.15 & -2.56 & -2.56 & -3.43 \\
PCEPILFE & -1.39 & -0.81 & -1.69 & -0.95 & -2.99 \\
CPIAUCSL & -2.07 & -1.01 & -2.09 & -2.30 & -2.29 \\
CPILFESL & -1.41 & -1.43 & -1.82 & -0.65 & -3.76 \\
FPIx & -1.60 & 0.82 & -1.43 & 0.72 & -2.02 \\
EXPGSC1 & -1.61 & -0.59 & -2.24 & 1.04 & -2.63 \\
IMPGSC1 & -1.03 & 1.01 & -2.65 & 1.23 & -4.70 \\
\bottomrule
\end{tabular}
\caption{Diebold--Mariano test statistics comparing MPTE against competing models. Each entry reports the Diebold--Mariano statistic for the null hypothesis of equal predictive accuracy between MPTE and the corresponding competing model for the given target series. Negative values indicate lower forecast loss for MPTE relative to the competing model, while positive values indicate superior performance of the competing model.}
\label{Tab:DM_competing}
\end{table}

% Auto-generated MCS table
\begin{table}[h!]
\centering
\begin{tabular}{lcccccc}
\toprule
target & MPTE & MIDAS & AR & OLS & XGB & NN \\
\midrule
GDPC1 & \checkmark & \checkmark & -- & \checkmark & \checkmark & \checkmark \\
GPDIC1 & \checkmark & \checkmark & -- & \checkmark & \checkmark & \checkmark \\
PCECC96 & \checkmark & \checkmark & -- & \checkmark & \checkmark & \checkmark \\
DPIC96 & \checkmark & \checkmark & -- & \checkmark & \checkmark & \checkmark \\
OUTNFB & \checkmark & \checkmark & -- & \checkmark & \checkmark & -- \\
UNRATE & \checkmark & \checkmark & -- & \checkmark & \checkmark & \checkmark \\
PCECTPI & \checkmark & -- & -- & -- & -- & -- \\
PCEPILFE & \checkmark & \checkmark & -- & \checkmark & \checkmark & \checkmark \\
CPIAUCSL & \checkmark & -- & -- & -- & -- & -- \\
CPILFESL & \checkmark & \checkmark & -- & \checkmark & \checkmark & -- \\
FPIx & \checkmark & \checkmark & -- & \checkmark & \checkmark & \checkmark \\
EXPGSC1 & \checkmark & \checkmark & -- & \checkmark & \checkmark & -- \\
IMPGSC1 & \checkmark & \checkmark & -- & \checkmark & \checkmark & -- \\
\bottomrule
\end{tabular}
\caption{Model Confidence Set (MCS) inclusion for MPTE and competing models across target series at the 95\% confidence level. A checkmark indicates that the corresponding model is included in the MCS for the given target, while ``--'' denotes exclusion.}
\label{Tab:MCS_competing}
\end{table}

\clearpage

% Auto-generated DM table
\begin{table}[h!]
\centering
\begin{tabular}{@{}lccccc@{}}
\toprule
Target & AB1 & AB2 & AB3 & AB4 & AB5 \\
\midrule
GDPC1 & -1.86 & -1.37 & -1.50 & -1.29 & -1.54 \\
GPDIC1 & -2.96 & -1.23 & -1.47 & -1.38 & -1.22 \\
PCECC96 & -1.21 & -1.57 & -1.79 & -1.58 & -1.64 \\
DPIC96 & 0.99 & -0.52 & -0.42 & 0.02 & -0.03 \\
OUTNFB & 0.02 & 0.01 & -0.89 & -0.79 & -1.11 \\
UNRATE & 0.74 & 1.85 & -0.09 & -0.97 & -0.65 \\
PCECTPI & -3.11 & -3.01 & -3.84 & -2.96 & -2.98 \\
PCEPILFE & -2.14 & -0.80 & -0.66 & -0.78 & -0.77 \\
CPIAUCSL & -0.98 & -0.99 & -0.90 & -1.00 & -0.99 \\
CPILFESL & -2.17 & -1.49 & -1.67 & -1.47 & -2.35 \\
FPIx & 1.01 & 3.19 & -1.82 & -0.16 & 1.81 \\
EXPGSC1 & -0.41 & -0.50 & -0.64 & -0.49 & -0.52 \\
IMPGSC1 & 1.81 & 0.95 & -0.40 & 1.16 & -0.03 \\
\bottomrule
\end{tabular}
\caption{Diebold--Mariano test statistics comparing MPTE against ablation models. Each entry reports the Diebold--Mariano statistic for the null hypothesis of equal predictive accuracy between MPTE and the corresponding ablation for the given target series. Negative values indicate lower forecast loss for MPTE relative to the ablation, while positive values indicate superior performance of the ablation.}
\label{Tab:DM_ablations}
\end{table}

% Auto-generated MCS table
\begin{table}[h!]
\centering
\begin{tabular}{lcccccc}
\toprule
target & MPTE & AB1 & AB2 & AB3 & AB4 & AB5 \\
\midrule
GDPC1 & \checkmark & \checkmark & \checkmark & \checkmark & \checkmark & -- \\
GPDIC1 & \checkmark & \checkmark & \checkmark & \checkmark & \checkmark & \checkmark \\
PCECC96 & \checkmark & \checkmark & \checkmark & \checkmark & \checkmark & \checkmark \\
DPIC96 & \checkmark & \checkmark & \checkmark & \checkmark & \checkmark & \checkmark \\
OUTNFB & \checkmark & \checkmark & \checkmark & \checkmark & \checkmark & \checkmark \\
UNRATE & \checkmark & \checkmark & \checkmark & \checkmark & \checkmark & \checkmark \\
PCECTPI & \checkmark & -- & -- & -- & -- & -- \\
PCEPILFE & \checkmark & -- & \checkmark & \checkmark & \checkmark & \checkmark \\
CPIAUCSL & \checkmark & \checkmark & \checkmark & \checkmark & \checkmark & \checkmark \\
CPILFESL & \checkmark & \checkmark & \checkmark & \checkmark & \checkmark & \checkmark \\
FPIx & -- & \checkmark & \checkmark & \checkmark & \checkmark & \checkmark \\
EXPGSC1 & \checkmark & \checkmark & \checkmark & \checkmark & \checkmark & \checkmark \\
IMPGSC1 & \checkmark & \checkmark & \checkmark & \checkmark & \checkmark & -- \\
\bottomrule
\end{tabular}
\caption{Model Confidence Set (MCS) inclusion for MPTE and its ablation variants across target series at the 95\% confidence level. A checkmark indicates that the corresponding specification is included in the MCS for the given target, while ``--'' denotes exclusion.}
\label{Tab:MCS_ablations}
\end{table}

\clearpage
\section{Hyperparameter selection}\label{appendixC_hyperparams}
We select hyperparameters for MPTE using automated hyperparameter optimization implemented through the Optuna framework.\footnote{\url{https://optuna.org/}} Optuna employs a tree-structured Parzen estimator to explore the hyperparameter space efficiently by iteratively proposing hyperparameters combinations that are more likely to improve validation performance. We perform hyperparameter tuning separately for each forecast target and, when applicable, for each ablation variant. In all cases, the optimization objective is the mean squared error computed on a validation subsample extracted from the training set using a temporal split, and we apply early stopping to mitigate overfitting.

The set of hyperparameters subject to optimization is the same across experimental settings, simulated and empirical, while the admissible ranges and the number of optimization trials differ. For the simulation exercises in Subsection~\ref{Sec:sim_forecasting}, we adopt a broad search space that spans both low- and high-capacity architectures in order to assess model performance across a wide range of hyperparameters combinations. In this setting, we optimize the model dimension, the number of attention heads, the number of Transformer layers, the dropout rate, the learning rate, the embedding dimensions for variables and frequencies, the feedforward network size, and the activation function. The search summarized in Table~\ref{Tab:hyperparams_simulation} (20 trials) is conducted once to fix the model's hyperparameters combination. The results in Table~\ref{Tab:evals_simulation} are then averaged over 100 replications, each drawing an independent data path from the data-generating process; within each replication we retrain the model from five random initializations at the fixed hyperparameters and retain the one with the lowest validation loss, so that no re-tuning occurs across replications and the reported variation reflects the sampling of the data path and the initialization rather than the hyperparameter search. For the empirical analysis, we restrict the search space to higher-capacity architectures and increase the number of optimization trials to ensure adequate exploration of the parameter space. We make informed design choices in defining this search space. In particular, we limit the range of dropout rates, as excessive regularization can be detrimental in relatively short macroeconomic samples, and we do not optimize the embedding dimensions for variables and frequencies. Instead, we set them deterministically as a logarithmic function of the corresponding vocabulary sizes, ensuring that embedding dimensions remain comparable across targets and ablation variants while avoiding additional tuning parameters that could confound the interpretation of attention patterns.

\begin{table}[h]
\centering
\footnotesize
\begin{tabular}{ll}
\toprule
\textbf{Hyperparameter} & \textbf{Search Space} \\
\midrule
$d_{\text{model}}$ & \{16, 24, 32, 48, 64, 96, 128, 192, 256, 384, 512\} \\
Number of heads ($n_{\text{head}}$) & \{1, 2, 4, 8\} \\
Number of layers & \{1, 2, 3, 4\} \\
Dropout & \{0.0, 0.05, 0.1, 0.2, 0.3, 0.4, 0.5\} \\
Learning rate & \{1e\text{-}5, 3e\text{-}5, 1e\text{-}4, 3e\text{-}4, 5e\text{-}4, 1e\text{-}3\} \\
$d_{\text{freq}}$ & \{2, 4, 8, 16, 32, 128\} \\
$d_{\text{var}}$ & \{4, 8, 16, 32, 64, 128\} \\
Feedforward dimension & \{32, 64, 128, 256\} \\
Activation function & \{\texttt{relu}, \texttt{gelu}\} \\
Number of trials & 20 \\
\bottomrule
\end{tabular}
\caption{Hyperparameter search space used in the simulation exercises.}
\label{Tab:hyperparams_simulation}
\end{table}
\FloatBarrier

\begin{table}[h]
\centering
\footnotesize
\begin{tabular}{ll}
\toprule
\textbf{Hyperparameter} & \textbf{Search Space} \\
\midrule
$d_{\text{model}}$ & \{128, 192, 256, 384, 512, 1024\} \\
Number of heads ($n_{\text{head}}$) & \{1, 2, 4, 8, 16\} \\
Number of layers & \{1, 2, 3\} \\
Dropout & \{0.0, 0.05, 0.1, 0.15\} \\
Learning rate & \{1e\text{-}5, 3e\text{-}5, 1e\text{-}4, 3e\text{-}4, 5e\text{-}4\} \\
Feedforward dimension & \{8, 16, 32, 64, 128, 256, 512, 1024\} \\
Activation function & \{\texttt{relu}, \texttt{gelu}\} \\
Number of trials & 500 \\
\bottomrule
\end{tabular}
\caption{Hyperparameter search space used in the empirical forecasting exercises (full MPTE).}
\label{Tab:hyperparams_empirical}
\end{table}
\FloatBarrier

For the ablation experiments, we deliberately restrict hyperparameter optimization to preserve architectural comparability across model variants. We re-optimize only training-related hyperparameters, namely the learning rate and the dropout rate, using the same candidate values considered in the empirical analysis, while fixing all remaining architectural parameters at the values selected for the corresponding full MPTE. We limit the number of optimization trials to a small budget to avoid introducing additional variability across ablations. This design choice ensures that differences in performance and attention patterns across ablations reflect the removal of specific model components rather than changes in overall model capacity. In particular, fixing the embedding dimensions, attention hyperparameters, and network depth preserves the scale and structure of the attention matrices, allowing for a meaningful comparison of attention-based aggregation mechanisms across model variants, which we do in Section~\ref{Sec:empirical}.

For the competing benchmark models, we adopt standard and model-specific tuning procedures that reflect common practice in the forecasting literature. We estimate the AR model separately for each target series, with the lag order selected in-sample using the BIC. The MIDAS specification is estimated without hyperparameter optimization: we fix the lag structure ex ante (4 lags for each regressor), and we obtain the model parameters by nonlinear least squares using the default estimation routine in the \texttt{midasr} R package. Increasing the number of lags in a data-scarce setting, such as macroeconomics, would quickly undermine estimation reliability, as the resulting proliferation of parameters would exceed the effective information content of the available sample.
For the single-frequency machine-learning benchmarks, we apply limited validation-based hyperparameter selection on a per-target basis. In particular, for XGB we tune the number of trees, tree depth, and learning rate over a small predefined search space using a temporally ordered validation split. For the feedforward neural network, we analogously tune the learning rate and hidden-layer widths over a small set of candidate hyperparameters combinations, again selecting the specification that minimizes validation loss. 

\newpage 

\section{List of monthly and quarterly variables}\label{appendixD_regressors}

\begin{table}[h!]
\centering
\scriptsize
\begin{tabular}{llp{9cm}}
\toprule
\textbf{Mnemonic} & \textbf{Category} & \textbf{Description} \\
\midrule
RPI & Output & Real Personal Income \\
INDPRO & Output & Industrial Production Index \\
CUMFNS & Output & Capacity Utilization: Manufacturing \\
HWI & Labor & Help-Wanted Index \\
CLF16OV & Labor & Civilian Labor Force \\
CE16OV & Labor & Civilian Employment \\
UEMPMEAN & Labor & Average Duration of Unemployment \\
CLAIMSx & Labor & Initial Unemployment Claims \\
PAYEMS & Labor & Total Nonfarm Payroll Employment \\
CES0600000007 & Labor & Avg. Weekly Hours, Goods-Producing \\
CES0600000008 & Labor & Avg. Hourly Earnings, Goods-Producing \\
CES2000000008 & Labor & Avg. Hourly Earnings, Construction \\
CES3000000008 & Labor & Avg. Hourly Earnings, Manufacturing \\
AWOTMAN & Labor & Avg. Weekly Overtime Hours, Manufacturing \\
AWHMAN & Labor & Avg. Weekly Hours, Manufacturing \\
HOUST & Housing & Housing Starts \\
DPCERA3M086SBEA & Consumption & Real Personal Consumption Expenditures \\
BUSINVx & Inventories & Total Business Inventories \\
RETAILx & Consumption & Retail and Food Services Sales \\
CMRMTSPLx & Output & Real Manufacturing and Trade Sales \\
M2REAL & Money & Real M2 Money Stock \\
TOTRESNS & Money & Total Reserves of Depository Institutions \\
BUSLOANS & Credit & Commercial and Industrial Loans \\
NONREVSL & Credit & Nonrevolving Consumer Credit \\
FEDFUNDS & Rates & Effective Federal Funds Rate \\
GS1 & Rates & 1-Year Treasury Yield \\
GS10 & Rates & 10-Year Treasury Yield \\
BAA & Rates & Moody’s Baa Corporate Bond Yield \\
PCEPI & Prices & PCE Price Index \\
WPSFD49207 & Prices & PPI: Finished Goods \\
OILPRICEx & Prices & Crude Oil Price (WTI, Spliced) \\
S\&P 500 & Financial & S\&P 500 Stock Index \\
S\&P PE ratio & Financial & S\&P 500 Price--Earnings Ratio \\
TB3MS & Rates & 3-Month Treasury Bill Rate \\
TB6MS & Rates & 6-Month Treasury Bill Rate \\
\bottomrule
\end{tabular}
\caption{Monthly macroeconomic regressors used in the empirical analysis. Variable definitions follow standard FRED descriptions.}
\label{Tab:monthly_regressors}

\vspace{1.5em}

\begin{tabular}{llp{9cm}}
\toprule
\textbf{Mnemonic} & \textbf{Category} & \textbf{Description} \\
\midrule
GDPC1 & Output & Real Gross Domestic Product \\
GPDIC1 & Investment & Real Gross Private Domestic Investment \\
PCECC96 & Consumption & Real Personal Consumption Expenditures \\
DPIC96 & Income & Real Disposable Personal Income \\
OUTNFB & Output & Nonfarm Business Sector Output \\
UNRATE & Labor & Civilian Unemployment Rate \\
PCECTPI & Prices & PCE Chain-Type Price Index \\
PCEPILFE & Prices & PCE Price Index Less Food and Energy \\
CPIAUCSL & Prices & CPI for All Urban Consumers \\
CPILFESL & Prices & CPI Less Food and Energy \\
FPIx & Prices & Fixed Investment Price Index \\
EXPGSC1 & Trade & Real Exports of Goods and Services \\
IMPGSC1 & Trade & Real Imports of Goods and Services \\
\bottomrule
\end{tabular}
\caption{Quarterly macroeconomic variables used as targets and predictors. Variable definitions follow standard FRED descriptions.}
\label{Tab:quarterly_regressors}
\end{table}

\end{appendices}

\end{document}